\newcommand{\itm}[1]{\ensuremath{\mathit{#1}}}
\newcommand{\sfm}[1]{\ensuremath{\mathsf{#1}}}
\newcommand\mhyphen{\mathop{\mbox{-}}}
\newcommand{\idealsize}{\sfm{ideal\mhyphen{}size}}
\title{Parameterized Cast Calculi and \\
       Reusable Meta-theory for \\
       Gradually Typed Lambda Calculi}
\author{Jeremy G. Siek\\
        Indiana University, USA\\
        $\mathtt{jsiek@indiana.edu}$
        \and        
        Tianyu Chen\\
        Indiana University, USA\\
        $\mathtt{chen512@indiana.edu}$}
\date{}
\newtheorem{theorem}{Theorem}
\newtheorem{lemma}[theorem]{Lemma}
\newtheorem{corollary}[theorem]{Corollary}
\newtheorem{proposition}[theorem]{Proposition}
\newtheorem{definition}[theorem]{Definition}
\newcommand{\CC}{\itm{CC}}
\newcommand{\SC}{\itm{SC}}
\newcommand{\key}[1]{\ensuremath{\mathtt{#1}}}
\newcommand{\agda}[1]{\ensuremath{\mathsf{#1}}}
\newcommand{\Types}{\mathcal{T}}
\newcommand{\Base}{b}
\newcommand{\Bases}{\mathcal{B}}
\newcommand{\Atomics}{\mathcal{A}}
\newcommand{\Nat}{\key{Nat}}
\newcommand{\Int}{\key{Int}}
\newcommand{\Unit}{\key{Unit}}
\newcommand{\Bool}{\key{Bool}}
\newcommand{\Zero}{\key{Z}}
\newcommand{\Suc}{\key{S}}
\newcommand{\Unk}{\ensuremath{\mathop{?}}}
\newcommand{\dotspace}{.\,}
\newcommand{\dblam}{\lambda \dotspace}
\newcommand{\app}{\;}
\newcommand{\tu}{{\to}}
\newcommand{\compile}[2]{\mathcal{C}\llbracket #2 \rrbracket}
\newcommand{\compilenew}[2]{\mathcal{C}'\llbracket #2 \rrbracket}
\newcommand{\coerce}[3]{\llparenthesis #1 \Rightarrow #2 \rrparenthesis^{#3}}
\newcommand{\sval}[1]{\mathsf{Simple}\,#1}
\newcommand{\val}[1]{\mathsf{Value}\,#1}
\newcommand{\var}[2]{\mathbb{V}\,#1\,#2}
\newcommand{\inert}[1]{\mathsf{Inert}\app #1}
\newcommand{\act}[1]{\mathsf{Active}\app #1}
\newcommand{\cross}[1]{\mathsf{Cross}\app #1}
\newcommand{\cast}[2]{#1 \langle #2 \rangle}
\newcommand{\castInert}[2]{#1 \langle\!\langle #2 \rangle\!\rangle}
\newcommand{\CastOp}[0]{\Rightarrow}
\newcommand{\Cast}[2]{#1 \CastOp #2}
\newcommand{\ICast}[2]{#1 \CastOp_i #2}
\newcommand{\GCast}[2]{#1 \CastOp_g #2}
\newcommand{\Proj}[2]{#1 \CastOp_p #2}
\newcommand{\Middle}[2]{#1 \CastOp_m #2}
\newcommand{\Inj}[2]{#1 \CastOp_i #2}
\newcommand{\Frame}[2]{#1 \rightarrowtail #2}
\newcommand{\blame}[2]{\key{blame}\,#1}
\newcommand{\consis}[2]{#1 \sim #2}
\newcommand{\leprecop}{\ensuremath{\sqsubseteq}}
\newcommand{\leprecgtlc}{\ensuremath{\sqsubseteq^G}}
\newcommand{\lepreccc}{\ensuremath{\sqsubseteq^C}}
\newcommand{\leprecii}[2]{\langle\!\langle #1 \rangle\!\rangle \sqsubseteq \langle\!\langle #2 \rangle\!\rangle}
\newcommand{\leprecit}[2]{\langle\!\langle #1 \rangle\!\rangle \sqsubseteq #2}
\newcommand{\leprecti}[2]{#1 \sqsubseteq \langle\!\langle #2 \rangle\!\rangle}
\newcommand{\leprec}[2]{#1 \leprecop #2}
\newcommand{\matchfun}[3]{#1 \triangleright #2 \tu #3}
\newcommand{\matchpair}[3]{#1 \triangleright #2 \times #3}
\newcommand{\cseq}[2]{#1 \mathop{;} #2}
\newcommand{\cfail}[1]{\bot^{#1}}
\newcommand{\ext}[1]{\mathsf{ext}\,#1}
\newcommand{\exts}[1]{\mathsf{exts}\,#1}
\newcommand{\rename}[2]{\mathsf{rename}\,#1\,#2}
\newcommand{\subst}[2]{\mathsf{subst}\,#1\,#2}
\newcommand{\substz}[1]{\mathsf{substZero}\,#1}
\newcommand{\dom}[1]{\agda{dom}\app #1}
\newcommand{\cod}[1]{\agda{cod}\app #1}
\newcommand{\rep}[1]{\llbracket #1 \rrbracket}
\mathchardef\mhyphen="2D
\newcommand{\sizeok}[3]{#1 \mid #2 \vdash #3 \; \mathrm{ok}}
\newcommand{\safefor}[2]{#1 \text{ safe for } #2}
\begin{document}
\maketitle

\begin{abstract}

  The research on gradual typing has led to many variations on the
  Gradually Typed Lambda Calculus (GTLC) of Siek and Taha (2006) and
  its underlying cast calculus. For example, Wadler and Findler (2009)
  added blame tracking, Siek et al. (2009) investigated alternate cast
  evaluation strategies, and Herman et al. (2010) replaced casts with
  coercions for space-efficiency. The meta-theory for the GTLC has
  also expanded beyond type safety to include blame safety
  (Tobin-Hochstadt and Felleisen 2006), space consumption (Herman et
  al. 2010), and the gradual guarantees (Siek et al. 2015). These
  results have been proven for some variations of the GTLC but not
  others.  Furthermore, researchers continue to develop variations on
  the GTLC but establishing all of the meta-theory for new variations
  is time consuming.
  
  This article identifies abstractions that capture similarities
  between many cast calculi in the form of two parameterized cast
  calculi, one for the purposes of language specification and the
  other to guide space-efficient implementations. The article then
  develops reusable meta-theory for these two calculi, proving type
  safety, blame safety, the gradual guarantees, and space consumption.
  Finally, the article instantiates this meta-theory for eight cast
  calculi including five from the literature and three new calculi.
  All of these definitions and theorems, including the two
  parameterized calculi, the reusable meta-theory, and the eight
  instantiations, are mechanized in Agda making extensive use of
  module parameters and dependent records to define the abstractions.

\end{abstract}

\pagebreak
\tableofcontents
\pagebreak

\section{Introduction}

The theory of gradual typing has grown at a fast pace since the idea
crystallized in the mid
2000's~\citep{Siek:2006bh,Tobin-Hochstadt:2006fk,Matthews:2007zr,Gronski:2006uq}.
Researchers have discovered many choices regarding the design and
formalization of gradually typed languages.
For example, a language designer can choose between runtime casts that
provide lazy, eager, or even partially-eager
semantics~\citep{Siek:2009rt,Garcia-Perez:2014aa}.  Alternatively, the
designer might apply the methodology of Abstracting Gradual Typing
(AGT) to derive the semantics~\citep{Garcia:2016aa}.
When a runtime casts fails, there is the question of who to blame,
using either the D or UD blame-tracking
approaches~\citep{Siek:2009rt}.  Furthermore, with the need to address
the problems of space efficiency \citep{Herman:2010aa}, one might
choose to use threesomes \citep{Siek:2010ya}, supercoercions
\citep{Garcia:2013fk}, or coercions in one of several normal forms
\citep{Siek:2012uq,Siek:2015ab}.

The last decade has also seen tremendous progress in the mechanization
of programming language theory \citep{Aydemir:2005fk}.  It has become
routine for researchers to use proof assistants such as
Coq~\citep{The-Coq-Development-Team:2004kf},
Isabelle~\citep{Nipkow:2002jl}, or Agda~\citep{Bove:2009aa} to verify
the proofs of the meta-theory for a programming language.
From the beginning, researchers in gradual typing used proof
assistants to verify type
safety~\citep{Siek:2006hh,Siek:2006fk,Tobin-Hochstadt:2008lr}. They
continue to mechanize the type soundness of new
designs~\citep{Siek:2013aa,Chung:2018aa} and to mechanize proofs of
new properties such as open world soundness and the gradual
guarantee~\citep{Siek:2015ac,Vitousek:2016aa,Xie:2018aa}.

While machine-checked proofs provide the ultimate degree of trust,
they come at a high development cost. With this in mind it would be
useful to reduce the cost by reusing definitions, theorems, and proofs
about gradually typed languages.
Agda provides particularly nice support for reuse: its combination of
parameterized modules, dependent types, and more specifically,
dependent records, provide a high degree of flexibility at a
relatively low cost in complexity. For this reason we choose to
develop a new mechanization in Agda instead of building on prior
mechanizations of gradual typing in other proof assistants.

\paragraph{The Parameterized Cast Calculus}

The first part of the article designs the Parameterized Cast Calculus.
It is parameterized with respect to the cast representation and
operations on casts. It can therefore model a range of cast calculi,
from ones that represent a cast with a source type, target type, and a
blame label such as the Blame Calculus~\citep{Wadler:2009qv}, to other
calculi that represent casts using the Coercion Calculus of
\citep{Henglein:1994nz}. The article proves the following theorems
about the Parameterized Cast Calculus, with mechanizations in Agda.
\begin{itemize}
\item type safety,
\item blame-subtyping, and 
\item the dynamic gradual guarantee.
\end{itemize}

This article instantiates the Parameterized Cast Calculus to produce
definitions and results for cast calculi in the literature as well as
new cast calculi. We produce the type systems, reduction relations,
and proofs of type safety and blame safety for all of the following
systems.  Our parameterized proof of the dynamic gradual guarantee was
completed only recently; we have instantiated it on two variants of
$\lambda\textup{\textsf{B}}$.
\begin{enumerate}
\item The cast calculus of \citet{Siek:2006bh}
   (Section~\ref{sec:EDA}).
\item A variant that classifies a function cast applied to a value
  as a value (Section~\ref{sec:EDI}).
\item The blame calculus $\lambda\textup{\textsf{B}}$ of \citet{Siek:2015ab}
  (Section~\ref{sec:lambda-B}).
\item A coercion-based version of Siek and Taha's cast calculus
  (Section~\ref{sec:EDC}).
\item A lazy D coercion-based calculus \citep{Siek:2009rt}
  (Section~\ref{sec:LazyCoercions}).
\item The $\lambda\textup{\textsf{C}}$ calculus of
  \citet{Siek:2015ab} (Section~\ref{sec:lambda-C}).
\end{enumerate}

\paragraph{Scope of the Parameterized Cast Calculus}

How many of the designs in the literature on gradually-typed languages
are instances of the Parameterized Cast Calculus? While we do not yet
have a complete answer to this question, we can make some informal
statements about our intent to complement the constructive examples
above. We conjecture that the Parameterized Cast Calculus, with some
minor modifications, could model the intrinsic calculi used in the AGT
methodology~\citep{Toro:2020aa}. The Parameterized Cast Calculus is
designed to model sound gradual languages, and not unsound ones such
as optional type systems \citep{Bracha:2004wa} that compile gradual
programs to dynamically typed programs via type
erasure~\citep{Bracha:1993sn,Chaudhuri:2014aa,Maidl:2014aa,Verlaguet:aa,Bierman:2014aa,Greenman:2018aa}. The
Parameterized Cast Calculus is only applicable to languages that use
casts for runtime enforcement, and not other mechanisms such as the
checks in transient
semantics~\citep{Vitousek:2016ab,Vitousek:2017aa,Greenman:2018aa,Greenman:2018ab,Vitousek:2019aa,Greenman:2020aa}. We
expect that several, different, parameterized cast calculi are needed
to capture the large design space present in the literature. We
encourage other researchers to join us in this project of developing
reusable meta-theory for gradually typed languages.

An important consideration regarding the design of an abstraction such
as the Parameterized Cast Calculus is where to draw the line between
the generic versus the specific. The Parameterized Cast Calculus makes
a natural choice regarding where to draw the line: at the
representation of casts and the operations on them.  However, there
are two competing principles that influence the design:
\begin{enumerate}
\item The more generic the design, the more instances can be
  accommodated, which increases reuse.
\item The more generic the design, the less meta-theory can be carried
  out generically, which reduces reuse.
\end{enumerate}
More experimentation with these design choices is needed to better
understand the tradeoffs. The Parameterized Cast Calculus of this
article is one data point.

\paragraph{Interactions with other language features}

The research on gradual typing has revealed subtle and complex
interactions between gradual typing and other language features such
as
\begin{itemize}
\item mutable state~\citep{Herman:2010aa,Siek:2015aa,Toro:2020aa},
\item subtyping~\citep{Siek:2007qy,Garcia:2016aa,Chung:2018aa,Banados-Schwerter:2021aa},
\item classes and
  objects~\citep{Takikawa:2012ly,Allende:2013ab,Vitek:2016aa,Takikawa:2016ab,Muehlboeck:2017aa,Chung:2018aa},
\item parametric polymorphism~\citep{Ahmed:2011fk,Ahmed:2017aa,Toro:2019aa,New:2019ab},
\item type inference~\citep{Siek:2008sf,Garcia:2015aa,Castagna:2019aa},
\item set-theoretic types~\citep{Castagna:2017aa},
\item dependent types~\citep{Eremondi:2019aa,Lennon-Bertrand:2020to},
\item and many more.
\end{itemize}
The Parameterized Cast Calculus of this article starts at the
beginning, focusing on a language with first-class functions and
constants~\citep{Siek:2006bh} with the addition of pairs and sums
whose presence helps to reveal some patterns that require
abstraction. We look forward to future work that extends the
Parameterized Cast Calculus to more language features.

\paragraph{Space Efficient Gradual Typing}

The second part of the article develops the Space Efficient
Parameterized Cast Calculus, which compresses sequences of casts to
obtain space efficiency. To support this compression, the cast
representation is required to provide a compose operator.  We
instantiate this calculus to produce results for a specific cast
calculi in the literature as well as a new cast calculi.  In
particular, we produce definitions and proofs of space efficiency in
Agda for
\begin{enumerate}
\item the $\lambda\textup{\textsf{S}}$ calculus of \citet{Siek:2015ab}
  (Section~\ref{sec:lambda-S}), and
\item a new cast calculus based on hypercoercions \citep{Lu:2019aa}
  (Section~\ref{sec:hypercoercions}).
\end{enumerate}


\paragraph{Correctness of Space Efficient Calculi}

Concurrently to the development of this article, and with an
overlapping set of authors, \citet{Lu:2019aa} investigate the
semantics of parameterized cast calculi using abstract machines. They
define two parameterized CEK machines~\citep{Felleisen:kx}, one to
define the semantics and the other to implement casts in a
space-efficient manner. The main result is a parameterized proof in
Agda of a weak bisimulation between the two machines, under the
assumptions that both cast representations satisfy certain laws
specific to the lazy D semantics~\citep{Lu:2020aa}. They instantiate
this generic theorem to prove the correctness of hypercoercions.
Generalizing this correctness result to include cast semantics other
than lazy D is future work.

\paragraph{Mechanization in Agda}

The language definitions and proofs are mechanized in Agda version
2.6.1.3 following the style in the textbook Programming Language
Foundations in Agda~\citep{Wadler:2019aa} and is publicly available in
the following github repository.
\begin{center}
  \url{https://github.com/jsiek/gradual-typing-in-agda}
\end{center}

\section{Gradually Typed Lambda Calculus}
\label{sec:gtlc}


In this section we formalize the static semantics of the Gradually
Typed Lambda Calculus (GTLC) of \citet{Siek:2006bh} with the addition
of products and sums. We present the GTLC as an intrinsically-typed
calculus with de Bruijn representation of variables, analogous to the
formalization of the Simply-Typed Lambda Calculus in the
\emph{DeBruijn} Chapter of PLFA.  As such, the terms of the GTLC are
derivations of well-typedness judgments. Readers familiar with the
GTLC may still find it beneficial to skim this section because this
style of formalization uses non-standard notation.

The dynamic semantics of the GTLC is defined by translation to a cast
calculus in Section~\ref{sec:gtlc-to-cc}.

\paragraph{Types}

Figure~\ref{fig:gradual-types} defines the set of types $\Types$ of
the GTLC, which includes function types, product types, sum types, and
atomic types.  The atomic types include the base types (natural
numbers, Booleans, etc.) and the unknown type, written $\Unk$ (aka. the
dynamic type). As usual, the unknown type represents the absence of
static type information. Figure~\ref{fig:gradual-types} defines typing
contexts $\Gamma$, which are sequences of types, that is, they map
variables (represented by de Bruijn indices) to types.

\begin{figure}[tbp]
\[
  \begin{array}{rcll}
    \Base \in \Bases & ::= & \Nat \mid \Bool \mid \Int \mid \Unit \mid \bot
                & \text{base types} \\
    a \in \Atomics & ::= & \Base \mid \Unk & \text{atomic types}\\
    A,B,C,D \in \Types & ::= & a \mid A \to B \mid A \times B
                            \mid A + B
                & \text{types}  \\
    \Gamma,\Delta & ::= & \emptyset \mid \Gamma \cdot A & \text{typing contexts}
  \end{array}
  \]

  \fbox{$\consis{A}{B}$}
  \begin{gather*}
    \mathtt{UnkR}{\sim}[A]:
    \inference{}
              {\consis{A}{\Unk}}
    \quad
    \mathtt{UnkL}{\sim}[B]: \inference{}{\consis{\Unk}{B}}
    \quad
    \mathtt{Base}{\sim}[\Base]: \inference{}{\consis{\Base}{\Base}}
    \\[2ex]
    \mathtt{Fun}{\sim}:
    \inference{\consis{A'}{A} & \consis{B}{B'}}
              {\consis{A \tu B}{A' \tu B'}}
    \\[2ex]
    \mathtt{Pair}{\sim}:
    \inference{\consis{A}{A'} & \consis{B}{B'}}
              {\consis{A \times B}{A' \times B'}}
    \quad
    \mathtt{Sum}{\sim}:
    \inference{\consis{A}{A'} & \consis{B}{B'}}
              {\consis{A + B}{A' + B'}}
  \end{gather*}

  \fbox{$\bigsqcup : \consis{A}{B} \to \Types$}
  \begin{align*}
    \bigsqcup(\mathtt{UnkR}{\sim}[A]) &= A \\
    \bigsqcup(\mathtt{UnkL}{\sim}[B]) &= B \\
    \bigsqcup(\mathtt{Base}{\sim}[\Base]) &= \Base \\
    \bigsqcup(\mathtt{Fun}{\sim}\app d_1\app d_2) &=
        \bigsqcup(d_1)\to\bigsqcup(d_2)\\
    \bigsqcup(\mathtt{Pair}{\sim}\app d_1\app d_2)&=
        \bigsqcup(d_1)\times\bigsqcup(d_2)\\
    \bigsqcup(\mathtt{Sum}{\sim} \app d_1\app d_2)&=
        \bigsqcup(d_1)+\bigsqcup(d_2)
  \end{align*}

  \fbox{$A \triangleright B$}
  \begin{gather*}
    \inference{}{A \to B \;\triangleright\; A \to B} \qquad
    \inference{}{\Unk \;\triangleright\; \Unk \to \Unk} \\[1ex]
    \inference{}{A \times B \;\triangleright\; A \times B} \qquad
    \inference{}{\Unk \;\triangleright\; \Unk \times \Unk} \\[1ex]
    \inference{}{A + B \;\triangleright\; A + B} \qquad
    \inference{}{\Unk \;\triangleright\; \Unk + \Unk}
  \end{gather*}
  
  \caption{Gradual Types, Typing Contexts, Consistency, Join, and Matching}
  \label{fig:gradual-types}
\end{figure}

%
Figure~\ref{fig:gradual-types} defines the consistency relation at the
heart of gradual typing. We say that two types are \emph{consistent},
written $\consis{A}{B}$, if they are equal except in spots where
either type contains the unknown type. For example,
\[
    \Nat \to \Bool \;\;\sim\;\; \Unk \to \Bool
\]
Because $\Unk \sim \Nat$ and $\Bool \sim \Bool$.  The rules for
consistency in Figure~\ref{fig:gradual-types} are labeled with their
proof constructors. For example, the following is a proof of the above
example.
\[
  \mathtt{Fun}{\sim}\app
    (\mathtt{UnkL}{\sim}[\Nat])\app
    (\mathtt{Base}{\sim}[\Bool])
\]
We use a colon for Agda's ``proves'' relation, which can also be read
as a ``has-type'' relation thanks to the Curry-Howard
correspondence~\citep{Howard:1980aa}.  In the rule
$\mathtt{Fun}{\sim}$ the $A$ and $A'$ flip in the premise, which is
unusual but doesn't matter; it just makes some of the Agda proofs
easier.
    
Figure~\ref{fig:gradual-types} also defines a join function
$\bigsqcup$ that computes the least upper bound of two types with
respect to the precision relation $\sqsubseteq$ (with $\Unk$ at the
bottom)~\citep{Siek:2006bh,Siek:2008sf}. This function is typically
defined on a pair of types and is a partial function where it is
defined if and only if the two types are consistent.  Here we instead
make $\bigsqcup$ a total function over proofs of consistency.

Finally, Figure~\ref{fig:gradual-types} defines the matching relation
that is used in the typing rules for eliminators (such as function
application and accessing elements of a pair). The matching relation
handles the situation where the function or pair expression has type
$\Unk$.

\paragraph{Variables}

The function $\mathbb{V}$, defined in Figure~\ref{fig:gtlc-terms},
maps a typing context $\Gamma$ and type $A$ to the set of all
variables that have type $A$ in context $\Gamma$.  As stated above,
variables are de Bruijn indices, that is, natural numbers where the
number $x$ refers to the $x$th enclosing lambda abstraction.  There
are two constructors for variables: \Zero{} (zero) and \Suc{} (plus
one).  The two rules in Figure~\ref{fig:gtlc-terms} correspond to the
signatures of these two constructors, where premises (above the line)
are the parameter types and the conclusion (below the line) is the
result type. The variable \Zero{} refers to the first lexical position
in the enclosing context, so \Zero{} takes no parameters, and its
result type is $\var{\Gamma}{A}$ if $\Gamma$ is a non-empty typing
context where type $A$ is at the front.  A variable of the form $\Suc
\app x$ refers to one position further out than that of $x$.  So the
constructor \Suc{} has one parameter, a variable in $\var{\Gamma}{A}$
for some $\Gamma$ and $A$, and its result type is a variable in
$\var{(\Gamma \cdot B)}{A}$, for any type $B$. An expression formed by
combinations of the constructors \Zero{} and \Suc{} is a proof of a
proposition of the form $\var{\Gamma}{A}$.  For example,
$S\app S \app Z$
is a proof of
$\var{(\emptyset \cdot \Bool \cdot \Nat \cdot \Int)}{\Bool}$
because $\Bool$ is at position $2$ in the typing context $\emptyset
\cdot \Bool \cdot \Nat \cdot \Int$.

\paragraph{Constants}

We represent the constants of GTLC by a particular set of Agda
values. First, we carve out the subset of GTLC types that a constant
is allowed to have, which are the base types and n-ary functions over
base types. We call them \emph{primitive types} and inductively define
the predicate $\mathbb{P}\,A$ in Figure~\ref{fig:gtlc-terms} to
identify them. We then define a mapping $\rep{\cdot}$ from
$\mathbb{P}\,A$ to Agda types (elements of $\agda{Set}$), also in
Figure~\ref{fig:gtlc-terms}.

\paragraph{Terms}

The terms of the GTLC are defined at the bottom of
Figure~\ref{fig:gtlc-terms}. The terms are intrinsically typed and
they are represented by derivations of the typing judgment.  The
judgment has the form $\Gamma \vdash_G A$, which says that type $A$
can be inhabited in context $\Gamma$. So terms, ranged over by $L,M,N$
are proofs of propositions of the form $\Gamma \vdash_G A$. A typing
judgment normally has the form $\Gamma \vdash_G M : A$, but here the
equivalent is written $M : \Gamma \vdash_G A$.

The constant $\$ k$ has type $P$ in context $\Gamma$ provided that the
Agda value $k$ has type $\rep{P}$ and $P$ proves that $A$ is a
primitive type.
A variable $` x$ has type $A$ in context $\Gamma$ if $x$ is a de
Bruijn index in $\var{\Gamma}{A}$ (explained above).

As usual, a lambda abstraction $(\lambda[A] \app M)$ has type $A \to B$
in context $\Gamma$ provided that $M$ has type $B$ in context $\Gamma
\cdot A$.
Lambda abstraction does not include a parameter name because we
represent variables as de Bruijn indices.
An application $(L \app M)_\ell$ has type $B$ if
$L$ has some type $A$ that matches a function type $A_1 \to A_2$,
$M$ has some type $B$, and $B$ is consistent with $A_1$.
The blame label $\ell$ is a unique identifier for this location in the
source code. Each blame label has a complement $\overline{\ell}$ and
the complement operation is involutive, that is,
$\overline{\overline{\ell}} = \ell$.

The term $\key{if}_\ell \app L \app M \app N$ requires that the type
of $L$ is consistent with $\Bool$, $M$ and $N$ have consistent types,
and the type of the $\key{if}$ as a whole is the join of the types of
$M$ and $N$.
The rules for pairs and projection are straightforward.  Regarding
sums, in $\key{case}_\ell \app L \app M \app N$, the type of $L$
matches a sum type $A_1 + A_2$. The terms $M$ and $N$ have function
type and one of them is called depending on whether $L$ evaluates to
$\key{inl}$ or $\key{inr}$. So the inputs $B_1$ and $C_1$ must be
consistent with $A_1$ and $A_2$, respectively, and the outputs $B_2$
and $C_2$ must be consistent. The type of the $\key{case}$ is the join
of $B_2$ and $C_2$.

\begin{figure}[tbp]
  \fbox{$\var{\Gamma}{A}$}\vspace{-10pt}
  \begin{gather*}
    \Zero : \inference{}{ \var{(\Gamma \cdot A)}{A} }
    \qquad
    \Suc : \inference{ \var{\Gamma}{A} }
                     { \var{(\Gamma \cdot B)}{A} }
  \end{gather*}  
  \fbox{$\mathbb{P}\,A$}\vspace{-10pt}
  \begin{gather*}
    \key{PBase}[\Base]:
    \inference{}
              {\mathbb{P}\,\Base}
    \qquad
    \key{PFun}[\Base] :          
    \inference{\mathbb{P}\,A}
              {\mathbb{P}\,(\Base \to A)}
  \end{gather*}
  \fbox{$\rep{-} : \Bases \to \agda{Set}$} \hfill
  \fbox{$\rep{-} : \mathbb{P}\,A \to \agda{Set}$}\\
  \begin{minipage}{0.45\textwidth}
  \begin{align*}
    \rep{\Bool} &= \mathbb{B} \\
    \rep{\Nat} &= \mathbb{N} \\
    \rep{\Int} &= \mathbb{Z} \\
    \rep{\Unit} &= \top \\
    \rep{\bot} &= \bot 
  \end{align*}
  \end{minipage}
  \begin{minipage}{0.45\textwidth}
  \begin{align*}
    \rep{\key{PBase}[\Base]} &= \rep{\Base} \\
    \rep{\key{PFun}[\Base]\app P} &= \rep{\Base} \to \rep{P} \\
  \end{align*}
  \end{minipage}

  \fbox{$\Gamma \vdash_G A$}
  \begin{gather*}
    \$ k : \inference{}
              {\Gamma \vdash_G A} ~ k : \rep{P}, P : \mathbb{P}\,A
    \qquad
    ` x : \inference{}
                    {\Gamma \vdash_G A} ~ x : \var{\Gamma}{A}
    \\[1ex]
    \lambda[A] : \inference{\Gamma \cdot A  \vdash_G B}
                 {\Gamma \vdash_G A \tu B}
    \qquad
    (- \app -)_\ell :
    \inference{\Gamma \vdash_G A & \Gamma \vdash_G B}
              {\Gamma \vdash_G A_2} ~ \matchfun{A}{A_1}{A_2}, A_1 \sim B
    \\[1ex]
    \key{if}_\ell : 
    \inference{\Gamma \vdash_G A & \Gamma \vdash_G B & \Gamma \vdash_G C}
              {\Gamma \vdash_G \bigsqcup(cn)}
              ~ A \sim \Bool,
              cn : B \sim C
    \\[1ex]
    \key{cons} :
    \inference{\Gamma \vdash_G A & \Gamma \vdash_G B}
              {\Gamma \vdash_G A \times B}
    \quad
    \pi_i^\ell :
    \inference{\Gamma \vdash_G A}
              {\Gamma \vdash_G A_i}\matchpair{A}{A_1}{A_2}
    \\[1ex]
    \key{inl}[B] :
    \inference{\Gamma \vdash_G A}
              {\Gamma \vdash_G A + B}
    \quad
    \key{inr}[A] :
    \inference{\Gamma \vdash_G B}
              {\Gamma \vdash_G A + B}
    \\[1ex]
    {\key{case}_\ell[B_1,C_1]}:
    \inference{\Gamma \vdash_G A & \Gamma \cdot B_1 \vdash_G B_2 & \Gamma \cdot C_1 \vdash_G C_2}
              {\Gamma \vdash_G \bigsqcup(bc)}
              \begin{array}{l}
                A_1 \sim (B_1 + C_1), \\
                bc : B_2 \sim C_2
              \end{array}
  \end{gather*}
  \caption{Term constructors of the Gradually Typed Lambda Calculus (GTLC)}
  \label{fig:gtlc-terms}
\end{figure}

\paragraph{Examples}

The following are a few example terms in the GTLC.
\[
\begin{array}{ccc}
  \key{cons}\app \$ 2 \app \$ 3 &:& \emptyset \vdash \Nat \times \Nat\\[1ex]
  ((\lambda[\Unk]\, ` \Zero) \app \$ 4)_{\ell_1} &:& \emptyset \vdash \Nat \\[1ex]
  \key{case}_{\ell_2}\app (\key{inr}[\Bool] \app \$\key{true})
  \app (\lambda[\Bool]\, ` \Zero)
  \app (\lambda[\Unk]\, (\$\neg \app ` \Zero)_{\ell_3}) &:& \emptyset \vdash \Bool
\end{array}
\]

\subsection{Static Gradual Guarantee}

The precision relation $\leprecop^\ast$ between typing contexts is straightforward.
We require that each variable has types that are related by
$\leprecop$ in the respective typing contexts.
We say term $M'$ in GTLC is \textit{more precise} than $M$ if the former
has more type annotations than the latter, written $M \leprecgtlc M'$.

For example,

\begin{equation*}\begin{split}
    \emptyset \cdot \Unk & \leprecop^\ast \emptyset \cdot \Int \\
    ((\lambda [\Unk] \, ` \Zero) \app \$ 42)_{\ell} & \leprecgtlc ((\lambda [\Nat] \, ` \Zero) \app \$ 42)_{\ell'}
\end{split}\end{equation*}

Figure~\ref{fig:gtlc-prec} presents the definitions of the precision
relations on GTLC types, terms, and typing contexts.

\begin{figure}[tp]
  \fbox{$\leprec{A}{A'}$}
  \begin{gather*}
    \inference{}{\leprec{\Unk}{A} }
    \quad
    \inference{}{\leprec{b}{b}}
    \quad
    \inference{\leprec{A}{A'} & \leprec{B}{B'}}{\leprec{A \to B}{A' \to B'}}
    \\[1ex]
    \inference{\leprec{A}{A'} & \leprec{B}{B'}}{\leprec{A \times B}{A' \times B'}}
    \quad
    \inference{\leprec{A}{A'} & \leprec{B}{B'}}{\leprec{A + B}{A' + B'}}    
  \end{gather*}
  
  \fbox{$M \leprecgtlc M'$}
  \begin{gather*}
    \inference{}{\$ k \leprecgtlc \$ k}
    \quad
    \inference{}{` x \leprecgtlc ` x}
    \\[1ex]
    \inference{A \leprecop A' & N \leprecgtlc N'}{\lambda [A] \, N \leprecgtlc \lambda [A'] \, N'}
    \quad
    \inference{L \leprecgtlc L' & M \leprecgtlc M'}{(L \app M)_{\ell} \leprecgtlc (L' \app M')_{\ell'}}
    \\[1ex]
    \inference{L \leprecgtlc L' & M \leprecgtlc M' & N \leprecgtlc N'}
              {\key{if}_{\ell} \, L \, M \, N \leprecgtlc \key{if}_{\ell'} \, L' \, M' \, N'}
    \\[1ex]
    \inference{M \leprecgtlc M' & N \leprecgtlc N'}{\key{cons} \, M \, N \leprecgtlc \key{cons} \, M' \, N'}
    \quad
    \inference{M \leprecgtlc M'}{\pi_1 \, M \leprecgtlc \pi_1 \, M'}
    \quad
    \inference{M \leprecgtlc M'}{\pi_2 \, M \leprecgtlc \pi_2 \, M'}
    \\[1ex]
    \inference{B \leprecop B' & M \leprecgtlc M'}{\key{inl} [B] \, M \leprecgtlc \key{inl} [B'] \, M'}
    \quad
    \inference{A \leprecop A' & M \leprecgtlc M'}{\key{inr} [A] \, M \leprecgtlc \key{inr} [A'] \, M'}
    \\[1ex]
    \inference{B_1 \leprecop B_1' & C_1 \leprecop C_1' & L \leprecgtlc L' & M \leprecgtlc M' & N \leprecgtlc N'}
              {\key{case}_{\ell}[B_1,C_1] \, L \, M \, N \leprecgtlc \key{case}_{\ell'}[B_1',C_1'] \, L' \, M' \, N'}
  \end{gather*}
  \fbox{$\Gamma \leprecop^\ast \Gamma'$}
  \begin{gather*}
    \inference{}{\emptyset \leprecop^\ast \emptyset}
    \qquad
    \inference{A \leprecop A' & \Gamma \leprecop^\ast \Gamma'}{\Gamma \cdot A \leprecop^\ast \Gamma' \cdot A'}
  \end{gather*}
  \caption{Precision on GTLC types, terms and typing contexts.}
  \label{fig:gtlc-prec}
\end{figure}

An important property of a gradual type system is that removing type
annotations from a well-typed term should yield a well-typed term, a
property known as the static gradual guarantee~\citep{Siek:2015ac}.
Here we extend this result for the GTLC to include products and sums.

\begin{lemma}
  \label{lem:ctx-prec-var}
  Suppose $\Gamma \leprecop^\ast \Gamma'$.
  If $x : \var{\Gamma'}{A'}$, there exists type $A$ such that $x : \var{\Gamma}{A}$ and $A \leprecop A'$.
\end{lemma}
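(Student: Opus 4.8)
The plan is to proceed by induction on the de Bruijn index $x$, i.e.\ on the derivation of $x : \var{\Gamma'}{A'}$, doing an inversion on the derivation of $\Gamma \leprecop^\ast \Gamma'$ in each case. The key observation is that $\var{\Gamma'}{A'}$ is inhabited only when $\Gamma'$ is non-empty, so in both cases the context-precision derivation cannot be the $\emptyset \leprecop^\ast \emptyset$ rule; it must be the extension rule, giving $\Gamma = \Gamma_0 \cdot B$ and $\Gamma' = \Gamma_0' \cdot B'$ with $\leprec{B}{B'}$ and $\Gamma_0 \leprecop^\ast \Gamma_0'$. In Agda this simultaneous inversion is simply a dependent pattern match: once the variable derivation forces $\Gamma'$ into the shape $\Gamma_0' \cdot B'$, the $\emptyset \leprecop^\ast \emptyset$ clause is ruled out by unification and only the extension clause remains.

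In the base case $x = \Zero$, the derivation $\Zero : \var{(\Gamma_0' \cdot B')}{B'}$ forces $A' = B'$; take $A := B$, so that $\Zero : \var{(\Gamma_0 \cdot B)}{B}$ witnesses $x : \var{\Gamma}{A}$, and $\leprec{B}{B'}$ is exactly $\leprec{A}{A'}$. In the inductive case $x = \Suc\app y$, inverting $\Suc\app y : \var{(\Gamma_0' \cdot B')}{A'}$ yields $y : \var{\Gamma_0'}{A'}$. Apply the induction hypothesis to this derivation together with $\Gamma_0 \leprecop^\ast \Gamma_0'$ to obtain a type $A$ with $y : \var{\Gamma_0}{A}$ and $\leprec{A}{A'}$. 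Then $\Suc\app y : \var{(\Gamma_0 \cdot B)}{A}$ gives $x : \var{\Gamma}{A}$, and the precision witness $\leprec{A}{A'}$ is carried over unchanged; the $\leprec{B}{B'}$ component of the context-precision derivation is simply discarded.

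The argument is essentially bookkeeping, so I do not expect a real obstacle. The only point needing a moment's care is the simultaneous inversion above — that the non-emptiness of $\Gamma'$ (learned from the variable derivation) pins down the shape of $\Gamma$ (learned from the context-precision derivation) — and that the recursion threads the $A \leprecop A'$ witness outward while ignoring the per-level head-type witnesses. This lemma is auxiliary: its role is to discharge the variable case when later proving the static gradual guarantee for the GTLC typing judgment.
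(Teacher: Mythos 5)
Your proof is correct and follows exactly the paper's approach: the paper's proof sketch is simply ``by induction on $x$,'' and your write-up fills in the same induction (base case $\Zero$, step case $\Suc\app y$) with the expected inversion of the context-precision derivation at each step. No gaps.
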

\begin{proof}[Proof sketch]
  By induction on $x$.
\end{proof}

\begin{theorem}[Static Gradual Guarantee]\label{thm:static-gradual}
  Suppose $M'$ is well-typed $M' : \Gamma' \vdash_G A'$.
  If $\Gamma \leprecop^\ast \Gamma'$ and $M \leprecgtlc M'$,
  there exists type $A$ such that $M : \Gamma \vdash_G A$ and $A \leprecop A'$.
\end{theorem}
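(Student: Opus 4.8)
The plan is to proceed by induction on the derivation of $M \leprecgtlc M'$, reading off from the typing derivation of $M'$ the side conditions (matching, consistency, join) that must be reestablished for $M$ in the less precise context $\Gamma$. In every case $M$ and $M'$ share the same top-level constructor, so each case is driven by one precision rule together with the corresponding typing rule for $M'$; the work is to build $M : \Gamma \vdash_G A$ and exhibit $\leprec{A}{A'}$. For the constant case $\$k \leprecgtlc \$k$ the type is unchanged and we only need reflexivity of $\leprecop$ (a routine induction on types). For the variable case $`x \leprecgtlc `x$ we invoke Lemma~\ref{lem:ctx-prec-var} to obtain a less precise type for $x$ in $\Gamma$. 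The introduction-form congruences ($\lambda[A]\,N$, $\key{cons}$, $\key{inl}[B]$, $\key{inr}[A]$) are handled by recursion: for the $\lambda$ case we first extend the context precision to $\Gamma \cdot A \leprecop^\ast \Gamma' \cdot A'$ using the given $\leprec{A}{A'}$, apply the induction hypothesis to $N \leprecgtlc N'$, and recombine the component precisions with the structural rules for $\leprecop$; the other introduction forms are similar but simpler.

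The elimination forms --- application, $\key{if}$, $\pi_i$, and $\key{case}$ --- are where the real content lies, and they rely on three auxiliary lemmas, each by a short case analysis or induction on types:
\begin{enumerate}
\item (matching vs.\ precision) if $\leprec{A}{A'}$ and $\matchfun{A'}{A_1'}{A_2'}$, then there are $A_1,A_2$ with $\matchfun{A}{A_1}{A_2}$, $\leprec{A_1}{A_1'}$ and $\leprec{A_2}{A_2'}$, and likewise for $\triangleright$ at product and sum types --- by cases on whether $A$ is $\Unk$ (take $\matchfun{\Unk}{\Unk}{\Unk}$ and use that $\Unk$ is below everything) or structured (then $A'$ is structured too, since $\leprecop$ preserves the head shape of non-$\Unk$ types), the base/product/sum shapes of $A$ being vacuous;
\item (widening of consistency) if $\leprec{A}{A'}$, $\leprec{B}{B'}$ and $\consis{A'}{B'}$, then $\consis{A}{B}$ --- by induction, using that $\Unk$ is consistent with everything and that below non-$\Unk$ types the heads of $A,A'$ and of $B,B'$ agree, with the domain flip in $\mathtt{Fun}{\sim}$ absorbed by instantiating the induction hypothesis with swapped arguments;
\item (monotonicity of join) if $\leprec{B}{B'}$, $\leprec{C}{C'}$ and $cn' : \consis{B'}{C'}$, then there is $cn : \consis{B}{C}$ with $\leprec{\bigsqcup(cn)}{\bigsqcup(cn')}$ --- by induction on $cn'$ with a case split on whether $B$ and $C$ are $\Unk$, recursing into the sub-derivations of $cn'$ and appealing to (2).
\end{enumerate}
Given these, the application case runs as follows: the induction hypothesis on $L \leprecgtlc L'$ yields $L : \Gamma \vdash_G A$ with $\leprec{A}{A'}$; lemma (1) turns $\matchfun{A'}{A_1'}{A_2'}$ into $\matchfun{A}{A_1}{A_2}$ with components related by $\leprecop$; the induction hypothesis on $M \leprecgtlc M'$ yields $M : \Gamma \vdash_G B$ with $\leprec{B}{B'}$; lemma (2) turns $\consis{A_1'}{B'}$ into $\consis{A_1}{B}$; hence $(L\,M)_\ell : \Gamma \vdash_G A_2$ with $\leprec{A_2}{A_2'}$. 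The $\pi_i$ case is the same without the consistency step. The $\key{if}$ and $\key{case}$ cases are analogous but additionally use lemma (3) to supply the consistency proof whose join is the result type and to certify that this join is $\leprecop$ the result type of $M'$; for $\key{case}$ one also extends the context with the pattern-variable types $\leprec{B_1}{B_1'}$ and $\leprec{C_1}{C_1'}$ before recursing into the branches.

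The main obstacle is lemma (3), monotonicity of join: because the calculus is intrinsically typed, the result type of $\key{if}$ and $\key{case}$ is literally $\bigsqcup$ applied to a specific consistency \emph{proof}, so the lemma must simultaneously manufacture a consistency proof for the less precise operands and a proof that its join is less precise, threading these proof objects coherently through the induction (including through the contravariant premise of $\mathtt{Fun}{\sim}$). Once lemmas (1)--(3) and reflexivity of $\leprecop$ are in place, the remaining cases are routine structural induction.
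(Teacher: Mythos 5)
Your plan is correct and takes essentially the same route as the paper: the paper's proof is also a structural induction (on the typing derivation of $M'$, which runs in lockstep with your induction on $M \leprecgtlc M'$), uses Lemma~\ref{lem:ctx-prec-var} for variables, and handles the eliminators by casing on whether the less precise operand type is $\Unk$ or structured, which is exactly the content of your lemma (1). Your lemmas (2) and (3) simply make explicit the consistency-widening and join-monotonicity facts that the paper's sketch leaves implicit in ``casing on whether the left side of a precision relation can be $\Unk$ whenever necessary.''
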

\begin{proof}[Proof sketch]
  By induction on the typing derivation $M' : \Gamma' \vdash_G A'$.
  We briefly describe the main idea of the interesting cases, since the full proof
  is mechanized in Agda.
  \begin{description}
  \item[Case $` x$] By Lemma~\ref{lem:ctx-prec-var} and rule $` x$ of $\Gamma \vdash_G A$.
  \item[Case $\lambda \, N'$] By induction hypothesis about $N'$ and the extended typing context.
  \item[Case $L' \app M'$] The induction hypothesis about $L'$ produces two sub-cases:
    The term $L$ in $L \app M \leprecgtlc L' \app M'$ can be either of $\Unk$ or of function
    type. If $L$ is of $\Unk$, the theorem is proved by type matching and $\Unk$
    being less precise than any type. On the other hand, if $L$ is of function type,
    it is proved by the fact that the codomain types satisfy the precision relation.
  \end{description}
  Other cases follow the same structure as function application:
  by the induction hypotheses about sub-terms and casing on whether
  the left side of a precision relation can be $\Unk$ whenever necessary.
\end{proof}

\section{Parameterized Cast Calculus}
\label{sec:param-cast-calculus}

The term constructors for the Parameterized Cast Calculus
$\CC(\CastOp)$ are defined in Figure~\ref{fig:param-cc-terms}.
Again the terms are intrinsically typed.  Like most cast calculi,
$\CC(\CastOp)$ extends the Simply-Typed Lambda Calculus with the
unknown type \Unk{} and explicit run-time casts.  Unlike other cast
calculi, the $\CC(\CastOp)$ calculus is parameterized over the
representation of casts, that is, the parameter $\CastOp$ is a
function that, given a source and target type, returns the
representation type for casts from $A$ to $B$. So $c : \Cast{A}{B}$
says that $c$ is a cast from $A$ to $B$.

The types and variables of the Parameterized Cast Calculus are the
same as those of the GTLC (Section~\ref{sec:gtlc}).
The intrinsically-typed terms of the Parameterized Cast Calculus are
defined in Figure~\ref{fig:param-cc-terms}.  Cast application is
written $M \langle c \rangle$ where the cast representation $c$ is not
concrete but is instead specified by the parameter $\CastOp$. In some
cast calculi, cast application carries a blame label. In others, such
as coercion-based calculi or threesomes, blame labels instead appear
inside the cast $c$. For purposes of uniformity, we place the
responsibility for blame tracking inside the cast representation, so
they do not appear directly in the cast application form of the
Parameterized Cast Calculus.
As usual there is an uncatchable exception $\blame{\ell}{}$.

\begin{figure}[tb]
  \fbox{$\Gamma \vdash A$}
  \begin{gather*}
    \$ k : \inference{}
              {\Gamma \vdash P} ~k : \rep{P}, P : \mathbb{P}\,A
    \qquad
    ` x : \inference{}
                    {\Gamma \vdash A}~x : \var{\Gamma}{A}
    \\[1ex]
    \lambda : \inference{\Gamma \cdot A  \vdash B}
                 {\Gamma \vdash A \tu B}
    \qquad
    (- \app -) :
    \inference{\Gamma \vdash A_1 \to A_2 & \Gamma \vdash A_1}
              {\Gamma \vdash A_2}
    \\[1ex]
    \key{if} : 
    \inference{\Gamma \vdash \Bool & \Gamma \vdash B & \Gamma \vdash B}
              {\Gamma \vdash B}
    \\[1ex]
    \key{cons} :
    \inference{\Gamma \vdash A & \Gamma \vdash B}
              {\Gamma \vdash A \times B}
    \quad
    \pi_i :
    \inference{\Gamma \vdash A_1 \times A_2}
              {\Gamma \vdash A_i}
    \\[1ex]
    \key{inl}[B] :
    \inference{\Gamma \vdash A}
              {\Gamma \vdash A + B}
    \quad
    \key{inr}[A] :
    \inference{\Gamma \vdash B}
              {\Gamma \vdash A + B}
    \\[1ex]
    \key{case}:
    \inference{\Gamma \vdash A_1 + A_2 & \Gamma \vdash A_1 \to B & \Gamma \vdash A_2 \to B}
              {\Gamma \vdash B}
    \\[1ex]
    \key{\cast{-}{c}}:
    \inference{\Gamma \vdash A}
              {\Gamma \vdash B}~c : \Cast{A}{B}
    \qquad
    \blame{\ell}{A}:
    \inference{}{\Gamma \vdash A}
  \end{gather*}
  \caption{Term constructors for the Parameterized
      Cast Calculus $\CC(\CastOp)$.}
  \label{fig:param-cc-terms}
\end{figure}

\subsection{The \agda{PreCastStruct} Structure}
\label{sec:precaststruct}

We introduce the first of several structures (as in algebraic
structures) that group together parameters needed to define the notion
of values, frames, and the reduction relation for the Parametric Cast
Calculus. The structures are represented in Agda as dependent records.

The \agda{PreCastStruct} structure includes the operations and
predicates that do not depend on the terms of the cast calculus, so
that this structure can be reused for different cast calculi, such as
the space-efficient cast calculus in
Section~\ref{sec:EfficientParamCasts}.  The \agda{CastStruct}
structure extends \agda{PreCastStruct} with the one addition operation
that depends on terms, which is \agda{applyCast}.

One of the main responsibilities of the \agda{PreCastStruct} structure
is categorizing casts as either \emph{active} or \emph{inert}.  An
active cast is one that needs to be reduced by invoking
\agda{applyCast} (see reduction rule \eqref{eq:cast} in
Figure~\ref{fig:param-cast-reduction}). An inert cast is one that does
not need to be reduced, which means that a value with an inert cast
around it forms a larger value (see the definition of \agda{Value} in
Figure~\ref{fig:cc-values-frames}). Different cast calculi make
different choices regarding which casts are active and inert, so the
\agda{PreCastStruct} structure includes the two predicates
\agda{Active} and \agda{Inert} to parameterize these differences. The
proof of Theorem~\ref{thm:cc-progress} (Progress) needs to know that
every cast can be categorized as either active or inert,
\agda{PreCastStruct} also includes the \agda{ActiveOrInert} field
which is a (total) function from casts to their categorization as
active or inert.

The reduction semantics must also identify casts whose source and
target type have the same head type constructor, such as a cast from
$A \times B$ to $C \times D$. We refer to such casts as \emph{cross}
casts and include the \agda{Cross} predicate in the
\agda{PreCastStruct} structure.  When a cross cast is inert, the
reduction semantics must decompose the cast when reducing the
elimination form for that type. For example, when accessing the
\key{fst} element of a pair $V$ that is wrapped in an inert cast $c$,
we decompose the cast using an operator, also called \agda{fst},
provided in the \agda{PreCastStruct} structure.  Here is the relevant
reduction rule, which is \eqref{eq:fst-cast} in
Figure~\ref{fig:param-cast-reduction}.
\[
  \key{fst}\app (\cast{V}{c})
  \longrightarrow
  \cast{(\key{fst}\app V)}{ \agda{fst}\app c \app x}
\]

Finally, the proof of Theorem~\ref{thm:cc-progress} (Progress) also
needs to know that when the target type of an inert cast is a
function, product, or sum type, that the cast is in fact a cross cast.
Thus, the \agda{PreCastStruct} includes the fields
$\agda{InsertCross}{\to}$, $\agda{InsertCross}{\times}$, and
$\agda{InsertCross}{+}$. The target type of an inert cast may not be a
base type (field \agda{baseNotInert}) to ensure that the canonical
forms at base type are just constants.

The fields of the \agda{PreCastStruct} record are:
\begin{description}
\item[$-\CastOp- : \Types \to \Types \to \agda{Set}$] \ \\
  Given the source and target type, this returns the Agda type for casts.

\item[$\agda{Inert} : \forall A B.\, \Cast{A}{B} \to \agda{Set}$]\ \\
  This predicate categorizes the inert casts, that is, casts that
  when combined with a value, form a value without requiring further
  reduction.

\item[$\agda{Active} : \forall A B.\, \Cast{A}{B} \to \agda{Set}$]\ \\
  This predicate categorizes the active casts, that is, casts that
  require a reduction rule to specify what happens when they are
  applied to a value.

\item[$\agda{ActiveOrInert} : \forall A B.\, (c : \Cast{A}{B}) \to \agda{Active}\app c \uplus \agda{Inert} \app c$] \ \\
  All casts must be active or inert, which is used in the proof of
  Progress.

\item[$\agda{Cross} : \forall A B.\, \Cast{A}{B} \to \agda{Set}$]\ \\
  This predicate categorizes the cross casts, that is, casts
  from one type constructor to the same type constructor, such as
  $A \to B \Rightarrow C \to D$. This categorization is needed
  to define other fields below, such as $\agda{dom}$.

\item[$\agda{InertCross}{\to} : \forall A B C.\, (c : \Cast{A}{B \tu C}) \to \agda{Inert}\app c \to \agda{Cross}\app c \times \Sigma A_1 A_2.\, A \equiv A_1 \tu A_2$]
  An inert cast whose target is a function type must be a cross cast.
  This field and the following two fields are used in the proof of
  Progress.

\item[$\agda{InertCross}{\times} : \forall A B C.\, (c : \Cast{A}{ B \times C}) \to \agda{Inert}\app c \to \agda{Cross}\app c \times \Sigma A_1 A_2.\, A \equiv A_1 \times A_2$]
  An inert cast whose target is a pair type must be a cross cast.

\item[$\agda{InertCross}{+} : \forall A B C.\, (c : \Cast{A}{ B + C}) \to \agda{Inert}\app c \to \agda{Cross}\app c \times \Sigma A_1 A_2.\, A \equiv A_1 + A_2$]
  An inert cast whose target is a sum type must be a cross cast.

\item[$\agda{baseNotInert} : \forall A \Base.\, (c : \Cast{A}{\Base}) \to \neg \agda{Inert}\app c$]\ \\
  A cast whose target is a base type must never be inert.
  This field is used in the proof of Progress. 
  
\item[$\agda{dom} : \forall A_1 A_2 B_1 B_2.\, (c : \Cast{(A_1
  \to A_2)}{(B_1 \to B_2)}) \to \agda{Cross}\app c \to \Cast{B_1}{A_1}$]\ \\
  Given a cross cast between function types, $\agda{dom}$ returns
  the part of the cast between their domain types. As usual, domains
  are treated contravariantly, so the result is a cast from $B_1$ to
  $A_1$.

\item[$\agda{cod} : \forall A_1 A_2 B_1 B_2.\,
  (c : \Cast{(A_1 \to A_2)}{(B_1 \to B_2)})
  \to \agda{Cross}\app c \to \Cast{A_2}{B_2}$]\ \\
  Given a cross cast between function types, $\agda{cod}$ returns
  the part of the cast between the codomain types.

\item[$\agda{fst} : \forall A_1 A_2 B_1 B_2.\, (c : \Cast{(A_1\times A_2)}{(B_1 \times B_2)}) \to \agda{Cross}\app c \to \Cast{A_1}{B_1}$]\ \\
  Given a cross cast between pair types, $\agda{fst}$ returns
  the part of the cast between the first components of the pair.

\item[$\agda{snd} : \forall A_1 A_2 B_1 B_2.\, (c : \Cast{(A_1\times A_2)}{(B_1 \times B_2)}) \to \agda{Cross}\app c \to \Cast{A_2}{B_2}$]\ \\
  Given a cross cast between pair types, $\agda{snd}$ returns
  the part of the cast between the second components of the pair.

\item[$\agda{inl} : \forall A_1 A_2 B_1 B_2.\, (c : \Cast{(A_1+ A_2)}{(B_1 + B_2)}) \to \agda{Cross}\app c \to \Cast{A_1}{B_1}$]\ \\
  Given a cross cast between sum types, $\agda{inl}$ returns
  the part of the cast for the first branch.

\item[$\agda{inr} : \forall A_1 A_2 B_1 B_2.\, (c : \Cast{(A_1 + A_2)}{(B_1 + B_2)}) \to \agda{Cross}\app c \to \Cast{A_2}{B_2}$]\ \\
  Given a cross cast between sum types, $\agda{inr}$ returns
  the part of the cast for the second branch.

\end{description}

\subsection{Values and Frames of $\CC(\CastOp)$}
\label{sec:cc-values-frames}

This section is parameterized by a \agda{PreCastStruct}. So, for
example, when we refer to $\CastOp$ and $\agda{Inert}$, we mean
those fields of the \agda{PreCastStruct}.

The values (non-reducible terms) of the Parameterized Cast Calculus
are defined in Figure~\ref{fig:cc-values-frames}. The judgment
$\val{M}$ says that the term $M$ is a value. Let $V$ and $W$ range
over values.  The only rule specific to gradual typing is
($\mathtt{Vcast}$) which states that a cast application $\cast{V}{c}$ is
a value if $c$ is an inert cast.

A value of type $\Unk$ must be a cast application where the cast is
inert and its target type is $\Unk$.

\begin{lemma}[Canonical Form for type $\Unk$]
  \label{lem:canonical-star}
  If $M : \Gamma \vdash \Unk$ and $\val{M}$,
  then $M \equiv \cast{M'}{c}$ where $M' : \Gamma \vdash A$,
  $c : \Cast{A}{\Unk}$, and $\inert{c}$.
\end{lemma}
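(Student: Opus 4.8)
The plan is to prove this canonical-forms lemma by case analysis on the derivation of $\val{M}$, exploiting the fact that $\CC(\CastOp)$ is intrinsically typed: the term $M$ carries its typing index, so the hypothesis $M : \Gamma \vdash \Unk$ rules out most value constructors outright.

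First I would enumerate the value forms of Figure~\ref{fig:cc-values-frames}: constants $\$ k$, lambda abstractions $\lambda$, pairs $\key{cons}\app V\app W$, the injections $\key{inl}$ and $\key{inr}$, and cast applications $\cast{V}{c}$ with $\inert{c}$ (rule $\mathtt{Vcast}$); variables and $\blame{\ell}{}$ are deliberately not value forms. For the first five, the typing rule assigns a type whose head constructor is \emph{not} $\Unk$: a constant has a primitive type $P$ with $P : \mathbb{P}\,A$, and $\mathbb{P}\,\Unk$ is uninhabited because primitive types are base types or $n$-ary functions over base types; a lambda has a function type; a pair has a product type; and $\key{inl}$, $\key{inr}$ have sum types. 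Since $\Unk$, base types, function types, product types, and sum types are built from pairwise-distinct constructors, each of these cases is immediately contradictory (in Agda, dispatched by an absurd pattern).

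The only surviving case is $\mathtt{Vcast}$, so $M \equiv \cast{V}{c}$ for some $c : \Cast{A}{B}$ and some $V$ with $V : \Gamma \vdash A$ and $\val{V}$ and $\inert{c}$; the cast-application typing rule gives $\cast{V}{c} : \Gamma \vdash B$. Unifying this with the hypothesis $M : \Gamma \vdash \Unk$ forces $B \equiv \Unk$, so $c : \Cast{A}{\Unk}$, and taking $M' := V$ discharges the goal with $M' : \Gamma \vdash A$, $c : \Cast{A}{\Unk}$, and $\inert{c}$.

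There is no real obstacle here: the lemma is just the inversion principle for \agda{Value} specialized to target type $\Unk$, and its entire content is the disjointness of the type constructors together with the emptiness of $\mathbb{P}\,\Unk$. The only mild care needed is checking that the formalization does not count variables or $\blame{\ell}{}$ among the value forms, since otherwise the statement would fail.
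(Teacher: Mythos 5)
Your proof is correct and is essentially the argument the paper relies on (the paper states the lemma without prose proof, deferring to the Agda mechanization, where it is exactly this inversion on $\val{M}$ using disjointness of type constructors and the fact that no non-cast value form can have type $\Unk$). Nothing further is needed.
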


In the reduction semantics we use \emph{frames} (single-term
evaluation contexts) to collapse the many congruence rules that one
usually finds in a structural operational semantics into a single
congruence rule. Unlike regular evaluation contexts, frames are not
recursive. Instead that recursion is in the congruence reduction rule.

The definition of frames for the Parameterized Cast
Calculus is given in Figure~\ref{fig:cc-values-frames}.  The
definition is typical for a call-by-value calculus. We also define the
$\itm{plug}$ function at the bottom of
Figure~\ref{fig:cc-values-frames}, which replaces the hole in a frame
with a term, producing a term.

\begin{figure}[tbp]
  \fbox{$\val{} : (\Gamma \vdash A) \to \agda{Set}$}
  \begin{gather*}
    \mathtt{V}\lambda : 
    \inference{}
              {\val{ (\lambda M)}}
    \qquad
    \mathtt{Vconst} :
    \inference{}{\val{ (\$k)}}
    \\[1ex]
    \mathtt{Vpair} :
    \inference{\val{M} & \val{N}}
              {\val{(\key{cons}\, M\, N)}}
    \\[1ex]
    \mathtt{Vinl} :
    \inference{\val{M}}
              {\val{(\key{inl}[B]\, M)}}
    \qquad
    \mathtt{Vinr} :
    \inference{\val{M}}
              {\val{(\key{inr}[A]\, M)}}
    \\[1ex]
    \mathtt{Vcast} :
    \inference{\val{M}}
              { \val{(\cast{M}{c})} } ~\agda{Inert}\,c
  \end{gather*}
  \fbox{$\Gamma \vdash \Frame{A}{B}$}
  \begin{gather*}
    (\Box \app -):
    \inference{\Gamma \vdash A}
              {\Gamma \vdash \Frame{(A \to B) }{ B}}
    \qquad
    (- \app \Box):
    \inference{M : \Gamma \vdash (A \to B)}
              {\Gamma \vdash \Frame{A }{ B}} ~\val{M}
    \\[1ex]
    \key{if}\,\Box\,-\,- :
    \inference{\Gamma \vdash A & \Gamma \vdash A}
              {\Gamma \vdash \Frame{\Bool }{ A}}
    \\[1ex]
    \key{cons}\,{-}\,\Box :
    \inference{M : \Gamma \vdash A}
              {\Gamma \vdash \Frame{B }{ A \times B}}~\val{M}
   \qquad
    \key{cons}\,\Box\,- :
    \inference{\Gamma \vdash B}
              {\Gamma \vdash \Frame{A }{ A \times B}}
    \\[1ex]
    \pi_i\,\Box :
    \inference{}
              {\Gamma \vdash \Frame{A_1 \times A_2}{A_i}}
    \\[1ex]
    \key{inl}[B]\,\Box :
    \inference{}
              {\Gamma \vdash \Frame{A}{A \times B}}
    \qquad
    \key{inr}[A]\,\Box :
    \inference{}
              {\Gamma \vdash \Frame{B }{ A \times B}}
    \\[1ex]
    \key{case}\,\Box\,-\,-:
    \inference{\Gamma \vdash A \to C & \Gamma \vdash B \to C}
              {\Gamma \vdash \Frame{A + B}{C}}
    \qquad
    \cast{\Box}{c}:
    \inference{}
              {\Gamma \vdash \Frame{A }{ B}} ~c : \Cast{A}{B}
  \end{gather*}
  \fbox{$\itm{plug} : \forall \Gamma A B.\, (\Gamma \vdash A) \to (\Gamma \vdash \Frame{A }{ B}) \to (\Gamma \vdash B)$}
  \begin{align*}
    \itm{plug}\app L \app (\Box \app M) &= (L \app M) \\
    \itm{plug}\app M \app (L \app \Box) &= (L \app M) \\
    \itm{plug}\app L \app (\key{if}\,\Box\, M\,N) &= \key{if}\,L\,M\,N\\
    \itm{plug}\app N (\key{cons}\,M\,\Box) &= \key{cons}\,M\,N\\
    \itm{plug}\app M (\key{cons}\,\Box\,N) &= \key{cons}\,M\,N\\
    \itm{plug}\app M (\pi_i\,\Box) &= \pi_i\,M \\
    \itm{plug}\app M (\key{inl}[B]\,\Box) &= \key{inl}[B]\,M \\
    \itm{plug}\app M (\key{inr}[A]\,\Box) &= \key{inr}[A]\,M \\
    \itm{plug}\app L (\key{case}\,\Box\,M\,N) &= \key{case}\,L\,M\,N \\
    \itm{plug}\app M (\cast{\Box}{c}) &= \cast{M}{c} 
  \end{align*}
\caption{Values and frames of $\CC(\CastOp)$.}
  \label{fig:cc-values-frames}
\end{figure}

The $\itm{plug}$ function is type preserving.  This is proved in
Agda by embedding the statement of this lemma into the type of
$\itm{plug}$ (see Figure~\ref{fig:cc-values-frames}) and then
relying on Agda to check that the definition of $\itm{plug}$
satisfies its declared type.

\begin{lemma}[Frame Filling]
  If $\Gamma \vdash M : A$ and $\vdash F : \Frame{A }{ B}$,
  then $\Gamma \vdash \itm{plug}\app M\app F : B$.
\end{lemma}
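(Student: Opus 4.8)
The plan is to prove the lemma by a direct case analysis on the frame derivation $F$, exploiting the fact that frames, like terms, are intrinsically typed: the judgment $\Gamma \vdash \Frame{A}{B}$ records precisely the type $A$ of the term that fits in the hole and the type $B$ of the term that results from filling it. Following the style of the mechanization, the cleanest route is to give $\itm{plug}$ the dependent type displayed in Figure~\ref{fig:cc-values-frames}, namely $\forall \Gamma\, A\, B.\, (\Gamma \vdash A) \to (\Gamma \vdash \Frame{A}{B}) \to (\Gamma \vdash B)$; the lemma then holds by construction, since it is exactly the claim that the clauses defining $\itm{plug}$ inhabit this type.

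Concretely, I would go through the ten frame constructors one at a time. For each, I unfold the matching clause of $\itm{plug}$ and check that the term constructor it applies is the one whose typing rule in Figure~\ref{fig:param-cc-terms} turns the hole type into the result type. For instance, the frame $(\Box \app M)$ carries $M : \Gamma \vdash A$ and has frame type $\Frame{(A \to B)}{B}$; plugging $L : \Gamma \vdash A \to B$ produces $(L \app M)$, which by the application rule has type $\Gamma \vdash B$, as required. The frames $(- \app \Box)$, $\key{if}\,\Box\,-\,-$, $\pi_i\,\Box$, $\key{case}\,\Box\,-\,-$, and $\cast{\Box}{c}$ are equally immediate, each lining up with a single typing rule. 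For the two $\key{cons}$ frames and the two injection frames one additionally notes that the extra type recorded in the frame (the stored value's type, or the $B$ in $\key{inl}[B]\,\Box$) is precisely the side information that the corresponding term rule expects, so the result is again a one-step application of that rule; in particular the injection frames produce terms at the sum types of the form $A + B$.

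There is no real obstacle: the statement is true essentially by the way frame judgments were indexed, and the only thing to get right is the bookkeeping of which type appears where in each frame's typing rule. Because the argument is a finite case split with no induction and no side conditions, discharging it by letting Agda's type checker verify the definition of $\itm{plug}$ against the dependent type above is both the shortest and the most reliable approach, and it is what the mechanization does.
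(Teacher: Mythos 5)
Your proposal is correct and matches the paper's proof exactly: the lemma is discharged by giving $\itm{plug}$ the dependent type shown in Figure~\ref{fig:cc-values-frames} and letting Agda's type checker verify that each clause inhabits it, which is precisely the case-by-case check you describe.
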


\subsection{The Eta Cast Reduction Rules}
\label{sec:eta-cast-reduction}

This section is parameterized by a \agda{PreCastStruct}.

Some cast calculi include reduction rules that resemble
$\eta$-reduction~\citep{Flanagan:2006mn,Siek:2006bh}.  For example,
the following rule reduces a cast between two function types, applied
to a value $V$, by $\eta$-expanding $V$ and inserting the appropriate
casts.
\[
\cast{V}{A \tu B \Rightarrow C \tu D}
\longrightarrow
\lambda \cast{(V \app (\cast{`\Zero}{C \Rightarrow A}))}{B \Rightarrow D}
\]
Here we define three auxiliary functions that apply casts between two
function types, two pair types, and two sum types, respectively. Each
of these functions requires the cast $c$ to be a cross cast.  These
auxiliary functions are used by cast calculi that choose to
categorize these cross casts as active casts.

\[
\begin{array}{lcl}
  \agda{eta}{\to}&:& \forall \Gamma A B C D.\;
  (M : \Gamma \vdash A \tu B) \to (c : \Cast{(A \tu B)}{(C \tu D)}) \\
  &&\quad 
  \to \agda{Cross}\app c \to (\Gamma \vdash C \to D) \\
  \agda{eta}{\to}\,M\,c\,x&=&
  \lambda \app \cast{((\agda{rename}\,\Suc\,M) \app
        (\cast{`\Zero}{\agda{dom}\,c\,x}))}{\agda{cod}\,c\,x}
\\[2ex]
    \agda{eta}{\times}&:& \forall \Gamma A B C D.\;
  (M : \Gamma \vdash A \times B) \to (c : \Cast{(A \times B)}{(C \times D)}) \\
  &&\quad  \to \agda{Cross}\app c \to (\Gamma \vdash C \times D) \\
  \agda{eta}{\times}\,M\,c\,x&=&
    \key{cons}\app \cast{(\pi_1 M)}{\key{fst}\,c\,x}
              \app \cast{(\pi_2 M)}{\key{snd}\,c\,x}
\\[2ex]
  \agda{eta}{+}&:& \forall \Gamma A B C D.\;
  (M : \Gamma \vdash A + B) \to (c : \Cast{(A + B)}{(C + D)}) \\
  &&\quad \to \agda{Cross}\app c \to (\Gamma \vdash C + D) \\
  \agda{eta}{+}\,M\,c\,x&= & \key{case}\,M \,
      (\lambda \key{inl}[D] (\cast{\Zero}{\key{inl}\,c\,x})) \,
      (\lambda \key{inr}[C] (\cast{\Zero}{\key{inr}\,c\,x})) 
\end{array}
\]

\subsection{The \agda{CastStruct} Structure}
\label{sec:caststruct}

The \agda{CastStruct} record type extends \agda{PreCastStruct} with one
more field, for applying an active cast to a value. Thus, this
structure depends on terms of the cast calculus.

\begin{description}
\item[$\agda{applyCast} : \forall \Gamma A B.\, (M : \Gamma \vdash A) \to \agda{Value}\app M\to (c : \Cast{A}{B}) \to \agda{Active}\app c \to \Gamma \vdash B$]
\end{description}

\subsection{Substitution in $\CC(\CastOp)$}

We define substitution functions (Figure~\ref{fig:substitution}) for
$\CC(\CastOp)$ in the style of PLFA~\citep{Wadler:2019aa}, due to
Conor McBride.
A \emph{renaming} is a map $\rho$ from variables (natural numbers) to
variables.
A \emph{substitution} is a map $\sigma$ from variables to terms.
The notation $M[N]$ substitutes term $N$ for all occurrences of
variable $\Zero$ inside $M$.  It's definition relies on several
auxiliary functions.
Renaming extension, $\ext{\rho}$, transports $\rho$ under one lambda
abstraction. The result maps $\Zero$ to itself, because $\Zero$ is
bound by the lambda abstraction.  For any other variable $\ext{\rho}$
transports the variable above the lambda by subtracting one, looking it
up in $\rho$, and then transports it back under the lambda by adding
one.
Simultaneous renaming, $\rename{\rho}{M}$, applies $\rho$ to all the
free variables in $M$.
Substitution extension, $\exts{\sigma}$, transports $\sigma$ under one
lambda abstraction. The result maps $\Zero$ to itself. For any other
variable $\exts{\sigma}$ transports the variable above the lambda by
subtracting one, looking it up in $\sigma$, and then transporting the
resulting term under the lambda by incrementing every free variable,
using simultaneous renaming.
Simultaneous substitution, $\subst{\sigma}{M}$, applies $\sigma$ to
the free variables in $M$.
The notation $M[N]$ is meant to be used for $\beta$ reduction, where
$M$ is the body of the lambda abstraction and $N$ is the argument.
What $M[N]$ does is substitute $\Zero$ for $N$ in $M$ and also
transports $M$ above the lambda by incrementing the other free
variables. All this is accomplished by building a substitution
$\substz{N}$ (also defined in Figure~\ref{fig:substitution}) and then
applying it to $M$.

\begin{figure}\small
  \fbox{$\rename{\rho}{M}$}
  \begin{align*}
    \rename{\rho}{k} &= k \\
    \rename{\rho}{x} &= \rho(x) \\
    \rename{\rho}{(\lambda\app  M)} &= \lambda \app (\rename{(\ext{\rho})}{N})\\
    \rename{\rho}{(M\app N)} &= (\rename{\rho}{M}) \app (\rename{\rho}{N}) \\
    \rename{\rho}{(\key{if}\app L \app M \app N)} &=
      \key{if}\app(\rename{\rho}{L})\app(\rename{\rho}{M})\app(\rename{\rho}{N})\\
    \rename{\rho}{(\key{cons}\app M \app N)} &=
      \key{cons}\app (\rename{\rho}{M}) \app (\rename{\rho}{N})\\
    \rename{\rho}{(\pi_i \app M)} &= \pi_i \app (\rename{\rho}{M}) \\
    \rename{\rho}{(\key{inl}[B]\app M)} &= \key{inl}[B]\app(\rename{\rho}{M}) \\
    \rename{\rho}{(\key{inr}[A]\app M)} &= \key{inr}[A]\app(\rename{\rho}{M}) \\
    \rename{\rho}{(\key{case}\app L \app M \app N)} &=
       \key{case}\app(\rename{\rho}{L})\app(\rename{\rho}{M})\app(\rename{\rho}{N})\\
    \rename{\rho}{(\cast{M}{c})} &= \cast{(\rename{\rho}{M})}{c}\\
    \rename{\rho}{(\blame{\ell}{})} &= \blame{\ell}{}
  \end{align*}

  \begin{minipage}{0.45\textwidth}      
  \fbox{$\ext{\rho}$}
  \begin{align*}
    \ext{\rho}\,\Zero &= \Zero \\
    \ext{\rho}\,(\Suc \, x) &= \Suc \, \rho(x)
  \end{align*}
  \end{minipage}
  \begin{minipage}{0.45\textwidth}      
  \fbox{$\exts{\sigma}$}
  \begin{align*}
  \exts{\sigma}\, \Zero &= \Zero\\
  \exts{\sigma}\, (\Suc \, x) &= \rename{\Suc}{(\sigma x)}
  \end{align*}
  \end{minipage} \\[2ex] 
  
  \fbox{$\subst{\sigma}{M}$}
  \begin{align*}
    \subst{\sigma}{k} &= k \\
    \subst{\sigma}{x} &= \sigma(x)\\
    \subst{\sigma}{(\dblam M)} &= \dblam \subst{(\exts{\sigma})}{M}\\
    \subst{\sigma}{(M \app N)} &= \subst{\sigma}{M} \app \subst{\sigma}{N}\\
    \subst{\rho}{(\key{if}\app L \app M \app N)} &=
      \key{if}\app(\subst{\rho}{L})\app(\subst{\rho}{M})\app(\subst{\rho}{N})\\
    \subst{\rho}{(\key{cons}\app M \app N)} &=
      \key{cons}\app (\subst{\rho}{M}) \app (\subst{\rho}{N})\\
    \subst{\rho}{(\pi_i \app M)} &= \pi_i \app (\subst{\rho}{M}) \\
    \subst{\rho}{(\key{inl}[B]\app M)} &= \key{inl}[B]\app(\subst{\rho}{M}) \\
    \subst{\rho}{(\key{inr}[A]\app M)} &= \key{inr}[A]\app(\subst{\rho}{M}) \\
    \subst{\rho}{(\key{case}\app L \app M \app N)} &=
       \key{case}\app(\subst{\rho}{L})\app(\subst{\rho}{M})\app(\subst{\rho}{N})\\
    \subst{\sigma}{(\cast{M}{c})} &= \cast{\subst{\sigma}{M}}{c}\\
    \subst{\sigma}{(\blame{\ell}{})} &= \blame{\ell}{}
  \end{align*}

  \begin{minipage}{0.45\textwidth}    
  \fbox{$\substz{N}$}
  \begin{align*}
  \substz{N} \, \Zero &= N \\
  \substz{N} \, (\Suc \, x) &= x
  \end{align*}
  \end{minipage}
  \begin{minipage}{0.45\textwidth}      
  \fbox{$M[N]$}
  \begin{align*}
    M[N] &= \subst{(\substz{N})}{M}
  \end{align*}
  \end{minipage}
  \caption{Substitution and its auxiliary functions.}
  \label{fig:substitution}
\end{figure}

Substitution is type preserving, which is established by the following
sequences of lemmas. As usual, we prove one theorem per function. In
Agda, these theorems are proved by embedding their statements into the
types of the four functions.  Given a sequence $S$, we write $S!i$ to
access its $i$th element.  Recall that $\Gamma$ and $\Delta$ range
over typing contexts (which are sequences of types).

\begin{lemma}[Renaming Extension]\label{lem:ext}
  Suppose that for position $x$ in $\Gamma$, $\Gamma!x = \Delta!\rho(x)$.\\
  For any $y$ and $B$, $(\Gamma \cdot B)!y = (\Delta \cdot B)!(\ext{\rho})(y)$.
\end{lemma}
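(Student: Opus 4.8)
The plan is to prove this directly by case analysis on the variable $y$; no induction is required, because $\ext{\rho}$ is itself defined by cases on its argument (Figure~\ref{fig:substitution}). The only hypothesis available is that $\rho$ respects the contexts $\Gamma$ and $\Delta$, in the sense that $\Gamma!x = \Delta!\rho(x)$ for every position $x$ in $\Gamma$, and the goal is to lift this property to the extended contexts $\Gamma \cdot B$ and $\Delta \cdot B$ under the extended renaming $\ext{\rho}$.

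First I would dispatch the case $y = \Zero$. By definition $\ext{\rho}\,\Zero = \Zero$, so the right-hand side is $(\Delta \cdot B)!\Zero = B$, and the left-hand side is $(\Gamma \cdot B)!\Zero = B$; the two coincide by reflexivity. Then I would handle the case $y = \Suc\,x$. By definition $\ext{\rho}\,(\Suc\,x) = \Suc\,(\rho(x))$, so the right-hand side reduces to $(\Delta \cdot B)!(\Suc\,(\rho(x))) = \Delta!(\rho(x))$, while the left-hand side reduces to $(\Gamma \cdot B)!(\Suc\,x) = \Gamma!x$. These are equal precisely by the hypothesis instantiated at $x$.

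I do not expect any real obstacle here: the lemma amounts to unfolding the two clauses of $\ext{\rho}$ together with the computation rule for indexing into an extended context. In the Agda development this is a two-clause pattern match on $y$, each clause closed by reflexivity (in the $\Suc$ clause after appealing to the hypothesis), which is exactly why the statement is embedded into the type of $\ext{\cdot}$ rather than proved as a separate lemma.
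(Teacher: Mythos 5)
Your proposal is correct and matches the paper's treatment: the paper gives no separate written proof, since the statement is embedded in the Agda type of $\ext{\rho}$ and discharged by its two-clause definition, which is precisely the case split on $y = \Zero$ versus $y = \Suc\,x$ that you carry out.
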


\begin{lemma}[Renaming Variables]\label{lem:rename}
  Suppose that for any $x$, $\Gamma!x = \Delta!\rho(x)$.\\
  If $M : \Gamma \vdash A$, then $\rename{\rho}{M} : \Delta \vdash A$.
\end{lemma}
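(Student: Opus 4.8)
The plan is to prove this by structural induction on the term $M$ --- equivalently, on the derivation of $M : \Gamma \vdash A$ --- generalizing over $\Gamma$, $\Delta$, $A$, and $\rho$ so that the induction hypothesis is available at the extended contexts that arise under binders. Since the terms are intrinsically typed, each case inspects which constructor built $M$ and rebuilds the corresponding constructor applied to the renamed subterms; the only real work is checking that the typing indices line up.

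For the variable case $M = {`\,x}$ with $x : \var{\Gamma}{A}$, we have $\rename{\rho}{({`\,x})} = {`\,\rho(x)}$, which must have type $\var{\Delta}{A}$; this is exactly the hypothesis $\Gamma!x = \Delta!\rho(x)$, so $\rho(x) : \var{\Delta}{A}$. For the constant case $M = \$\,k$, nothing depends on the context, so $\rename{\rho}{(\$\,k)} = \$\,k$ keeps its type $P$ unchanged. The compound, non-binding cases --- application, $\key{if}$, $\key{cons}$, $\pi_i$, $\key{inl}$, $\key{inr}$, $\key{case}$, $\cast{-}{c}$, and $\blame{\ell}{}$ --- all follow the same pattern: the context is passed unchanged to every subterm, so the induction hypotheses (with the same $\rho$) give the renamed subterms at the same types, and reassembling with the original constructor (keeping the cast $c$ and blame label $\ell$ verbatim) yields a term at the original type $A$.

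The one case that requires Lemma~\ref{lem:ext} is the lambda abstraction $M = \lambda\,N$ with $N : \Gamma \cdot A \vdash B$. Here $\rename{\rho}{(\lambda\,N)} = \lambda\,(\rename{(\ext{\rho})}{N})$, so we must show $\rename{(\ext{\rho})}{N} : \Delta \cdot A \vdash B$. Applying the induction hypothesis to $N$ requires that $\ext{\rho}$ relate the extended contexts, i.e.\ that $(\Gamma \cdot A)!y = (\Delta \cdot A)!(\ext{\rho})(y)$ for every $y$; this is precisely Lemma~\ref{lem:ext} instantiated with the hypothesis about $\rho$ and with the fresh type taken to be $A$. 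With that in hand the induction hypothesis gives the renamed body at type $B$ in $\Delta \cdot A$, and $\lambda$ reassembles it at $A \tu B$.

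I expect the binder case to be the only point of friction, and it is entirely absorbed by Lemma~\ref{lem:ext}: all the de Bruijn bookkeeping --- shift the index out past the binder, look it up in $\rho$, shift it back in --- lives there, so once that lemma is available the renaming lemma is a routine traversal. In the Agda development this proof is not written as a separate argument at all: the statement is folded into the dependent type ascribed to $\rename{}{}$ in Figure~\ref{fig:substitution}, and Agda's typechecker verifies that the defining equations respect that type, with the type of $\ext{\rho}$ carrying Lemma~\ref{lem:ext} in the same way.
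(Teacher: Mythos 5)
Your proposal is correct and follows the same route as the paper: the proof is a structural induction on the intrinsically-typed term, with the variable case discharged by the hypothesis on $\rho$, the binder case discharged by Lemma~\ref{lem:ext} applied to $\ext{\rho}$, and all other cases handled by straightforward congruence; in the Agda development this is exactly realized by embedding the statement into the type of $\rename{}{}$ and letting the typechecker verify the defining equations. Your closing remark about the statement being folded into the type signature matches the paper's stated methodology precisely.
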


\begin{lemma}[Substitution Extension]\label{lem:exts}
  Suppose that for any $x$, $\sigma(x) : \Delta \vdash \Gamma!x$.\\
  For any $y$ and $B$, $\sigma(y) : (\Delta \cdot B) \vdash (\Gamma \cdot B)!y$.
\end{lemma}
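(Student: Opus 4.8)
The plan is to proceed by case analysis on the de Bruijn index $y$, which is either $\Zero$ or of the form $\Suc\, x$, following the same pattern as Lemma~\ref{lem:ext} (Renaming Extension) but now tracking a term --- the output of $\exts{\sigma}$ --- rather than a type. In each case I would unfold the two defining clauses of $\exts{\sigma}$ from Figure~\ref{fig:substitution} together with the computation rules for context lookup on an extension, namely $(\Gamma \cdot B)!\Zero = B$ and $(\Gamma \cdot B)!(\Suc\, x) = \Gamma!x$, and then reduce the goal to something already available.

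In the case $y = \Zero$ we have $(\exts{\sigma})\,\Zero = {}`\,\Zero$ and $(\Gamma \cdot B)!\Zero = B$, so the obligation reduces to $`\,\Zero : (\Delta \cdot B) \vdash B$. This is discharged directly by the variable rules reused from Figure~\ref{fig:gtlc-terms}: $\Zero : \var{(\Delta \cdot B)}{B}$ because $B$ sits at the front of the extended context, and hence the term $`\,\Zero$ has type $B$ there; the hypothesis on $\sigma$ is not needed for this case.

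In the case $y = \Suc\, x$ we have $(\exts{\sigma})\,(\Suc\, x) = \rename{\Suc}{(\sigma\, x)}$ and $(\Gamma \cdot B)!(\Suc\, x) = \Gamma!x$, so the obligation becomes $\rename{\Suc}{(\sigma\, x)} : (\Delta \cdot B) \vdash \Gamma!x$. By the hypothesis of the lemma, $\sigma(x) : \Delta \vdash \Gamma!x$. It therefore suffices to invoke Lemma~\ref{lem:rename} (Renaming Variables) with the shift renaming $\rho = \Suc$, instantiating its ``$\Gamma$'' with $\Delta$ and its ``$\Delta$'' with $\Delta \cdot B$; the side condition $\Delta!z = (\Delta \cdot B)!(\Suc\, z)$ for every $z$ holds definitionally by the lookup rule for $\Suc$. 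Applying that lemma to $\sigma(x)$ yields exactly $\rename{\Suc}{(\sigma\, x)} : (\Delta \cdot B) \vdash \Gamma!x$, which closes the case.

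I do not expect a genuine obstacle here; the only care required is to line up the contexts and the renaming so that Lemma~\ref{lem:rename} is instantiated with precisely the shift $\Suc$, and to observe that the needed lookup equalities hold by computation rather than by a fresh induction. As with the other lemmas in this sequence, in the Agda development this statement is obtained by assigning $\exts{}$ the dependent type that embeds the conclusion and letting the type checker verify the two defining clauses against it.
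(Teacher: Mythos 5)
Your proof is correct and is essentially the paper's proof: the paper discharges this lemma by embedding its statement in the Agda type of $\exts{}$ and letting the type checker verify the two defining clauses, which is exactly your case split on $y$ — the $\Zero$ clause checked by the variable rule, and the $\Suc\,x$ clause checked by invoking Lemma~\ref{lem:rename} at the shift renaming $\Suc$. The only cosmetic point is that the paper's statement writes $\sigma(y)$ where it means $(\exts{\sigma})(y)$, and you correctly read it as the latter.
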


\begin{proposition}[Simultaneous Substitution]\label{lem:subst}
  Suppose that for any $x$, $\sigma(x) : \Delta \vdash \Gamma!x$.\\
  If $M : \Gamma \vdash A$, then $\subst{\sigma}{M} : \Delta \vdash A$.
\end{proposition}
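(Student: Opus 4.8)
The plan is to prove Proposition~\ref{lem:subst} by structural induction on the term $M$ (equivalently, on the derivation of $M : \Gamma \vdash A$), following the defining clauses of $\subst{\sigma}{M}$ in Figure~\ref{fig:substitution}. Because the calculus is intrinsically typed, this amounts to checking that each defining equation of $\mathsf{subst}$ produces a term at the declared type once the statement of the proposition is embedded into the signature of $\mathsf{subst}$; Agda then discharges each clause by type checking. The induction hypothesis available at each recursive call is the proposition itself, possibly instantiated at a different context and substitution.

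First I would handle the leaves. For a constant $\$k$ with $k : \rep{P}$ and $P : \mathbb{P}\,A$, substitution is the identity ($\subst{\sigma}{k} = k$) and the type is unchanged, so there is nothing to do. For a variable $`x$ with $x : \var{\Gamma}{A}$ — which is precisely the statement $\Gamma!x = A$ — we have $\subst{\sigma}{x} = \sigma(x)$, and the hypothesis gives $\sigma(x) : \Delta \vdash \Gamma!x$, i.e. $\Delta \vdash A$, as required. For $\blame{\ell}{}$ substitution is again the identity and the result type is arbitrary, so it is immediate.

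Next the congruence cases — application, $\key{if}$, $\key{cons}$, $\pi_i$, $\key{inl}$, $\key{inr}$, $\key{case}$, and cast $\cast{-}{c}$. In each of these $\mathsf{subst}$ commutes with the constructor, pushing $\sigma$ into every immediate subterm (and leaving the cast $c$ untouched in the cast case, since casts do not mention term variables). Each subterm is typed in the same context $\Gamma$ as $M$, so the induction hypothesis applies directly with the same $\sigma$, and reassembling with the original constructor yields a term at type $A$. Note that in $\CC(\CastOp)$ the branches of $\key{case}$ are themselves function-typed terms rather than open bodies, so no binder is crossed here; these cases are entirely routine.

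The one case that carries real content is the $\lambda$ case, which I expect to be the main obstacle. Here $\subst{\sigma}{(\dblam M)} = \dblam \subst{(\exts{\sigma})}{M}$, and from $(\dblam M) : \Gamma \vdash A \tu B$ we get $M : \Gamma \cdot A \vdash B$. To invoke the induction hypothesis on $M$ under the binder I must first check that $\exts{\sigma}$ satisfies the required precondition in the extended contexts, namely that for every $y$, $(\exts{\sigma})(y) : (\Delta \cdot A) \vdash (\Gamma \cdot A)!y$ — and this is exactly Lemma~\ref{lem:exts} (Substitution Extension). Lemma~\ref{lem:exts} in turn relies on Lemma~\ref{lem:rename} (Renaming Variables) to retype $\rename{\Suc}{(\sigma\,x)}$ in the weakened context, and Lemma~\ref{lem:rename} on Lemma~\ref{lem:ext}; so the substantive work is carried by this chain of lemmas, which is why the proposition is stated last. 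Granting Lemma~\ref{lem:exts}, the induction hypothesis gives $\subst{(\exts{\sigma})}{M} : \Delta \cdot A \vdash B$, hence $\dblam \subst{(\exts{\sigma})}{M} : \Delta \vdash A \tu B$, completing the case and the proof.
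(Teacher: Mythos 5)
Your proposal is correct and matches the paper's proof: the paper establishes this proposition by embedding its statement into the Agda type of $\mathsf{subst}$ and letting type checking discharge each defining clause, which is precisely your structural induction with Lemma~\ref{lem:exts} (backed by Lemmas~\ref{lem:rename} and~\ref{lem:ext}) handling the binder in the $\lambda$ case. Your observation that the $\key{case}$ branches in $\CC(\CastOp)$ are function-typed and so cross no binder is also accurate.
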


\begin{corollary}[Substitution]\label{lem:substitution}
  If $M : \Gamma \cdot B \vdash A$ and $\Gamma \vdash N : B$,
  then $M[N] : \Gamma \vdash A$.
\end{corollary}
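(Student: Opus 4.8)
The plan is to obtain this corollary as an immediate specialization of Proposition~\ref{lem:subst} (Simultaneous Substitution), taking the substitution $\sigma$ to be the single-variable substitution $\substz{N}$. Recall from Figure~\ref{fig:substitution} that $M[N]$ is by definition $\subst{(\substz{N})}{M}$, so it suffices to show $\subst{(\substz{N})}{M} : \Gamma \vdash A$. I would instantiate Proposition~\ref{lem:subst} with source context $\Gamma \cdot B$, target context $\Gamma$, and $\sigma \mathrel{:=} \substz{N}$; its conclusion, applied to the hypothesis $M : \Gamma \cdot B \vdash A$, is then literally $\subst{(\substz{N})}{M} : \Gamma \vdash A$. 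The only thing left to supply is the premise of the proposition, namely that $\substz{N}$ is a well-typed substitution from $\Gamma \cdot B$ to $\Gamma$: for every variable $x$, $\substz{N}(x) : \Gamma \vdash (\Gamma \cdot B)!x$.

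First I would discharge this premise by case analysis on $x$, following the two defining clauses of $\substz{N}$. If $x = \Zero$, then $\substz{N}(\Zero) = N$ while $(\Gamma \cdot B)!\Zero = B$, so the required judgment is $N : \Gamma \vdash B$, which is exactly the second hypothesis of the corollary. If $x = \Suc\,y$, then $\substz{N}(\Suc\,y) = {`\,y}$ while $(\Gamma \cdot B)!(\Suc\,y) = \Gamma!y$, so the required judgment $ {`\,y} : \Gamma \vdash \Gamma!y$ follows from the variable typing rule ($`x$) of $\Gamma \vdash A$, since $y$ is a de Bruijn index in $\var{\Gamma}{(\Gamma!y)}$. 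This completes the verification that the premise of Proposition~\ref{lem:subst} holds, and the corollary follows.

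I do not expect any real obstacle here: the corollary is a one-line consequence of the general substitution proposition, and in the Agda development it amounts to noting that $\subst{(\substz{N})}{M}$ type-checks once the (trivial) typing of $\substz{N}$ has been provided. The only point that warrants a moment's attention is the de Bruijn bookkeeping — confirming that $\substz{N}$ really does send $\Zero$ to $N$ at type $B$ and each $\Suc\,y$ to the variable $y$ at the correspondingly shifted type $\Gamma!y$ — but this is read off directly from the definitions in Figure~\ref{fig:substitution}.
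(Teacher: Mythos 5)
Your proposal is correct and matches the paper's intended argument: Corollary~\ref{lem:substitution} is obtained by instantiating Proposition~\ref{lem:subst} with $\sigma = \substz{N}$, checking by cases on the variable that $\substz{N}$ sends $\Zero$ to $N : \Gamma \vdash B$ and $\Suc\,y$ to the variable $y$ at type $\Gamma!y$, exactly as the Agda development does by giving $\substz{N}$ its declared type. No gaps.
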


\subsection{Reduction Semantics of $\CC(\CastOp)$}
\label{sec:dynamic-semantics-CC}

This section is parameterized by \agda{CastStruct}.

Figure~\ref{fig:param-cast-reduction} defines the reduction relation
for $\CC(\CastOp)$. The last eight rules are typical of the Simply
Typed Lambda Calculus, including rules for function application,
conditional branching, projecting the first or second element of a
pair, and case analysis on a sum. The congruence rule \eqref{eq:xi}
says that reduction can happen underneath a single frame.  The rule
\eqref{eq:xi-blame} propagates an exception up one frame.
Perhaps the most important rule is \eqref{eq:cast}, for applying an
active cast to a value. This reduction rule simply delegates to the
\agda{applyCast} field of the \agda{CastStruct}.
The next four rules (\ref{eq:fun-cast}, \ref{eq:fst-cast},
\ref{eq:snd-cast}, and \ref{eq:case-cast}) handle the possibility that
the \agda{CastStruct} categorizes casts between functions, pairs, or
sums as inert casts. In such situations, we need reduction rules for
when cast-wrapped values flow into an elimination form.  First, recall
that the \agda{PreCastStruct} record includes a proof that every inert
cast between two function types is a cross cast. Also recall that the
\agda{PreCastStruct} record includes fields for decomposing a cross
cast between function types into a cast on the domain and codomain.
Putting these pieces together, the reduction rule \eqref{eq:fun-cast}
says that applying the cast-wrapped function $\cast{V}{c}$ to argument
$W$ reduces to an application of $V$ to $\cast{W}{\dom{c \app x}}$
followed by the cast $\cod{c \app x}$, where $x$ is the proof that $c$
is a cross cast. The story is similar for for pairs and sums.

\begin{figure}[tbp]
  \fbox{$M \longrightarrow N$}
  \begin{gather}
    \inference{M \longrightarrow M'}
              {\itm{plug}\app M \app F
                \longrightarrow \itm{plug}\app M' \app F}
      \tag{$\xi$}\label{eq:xi} \\[2ex]
    \inference{}
              {\itm{plug}\app (\blame{\ell}{A}) \app F
                \longrightarrow \blame{\ell}{B}}
      \tag{$\xi\mhyphen\key{blame}$} \label{eq:xi-blame}\\[2ex]
    \inference{}
      {\cast{V}{c} \longrightarrow \mathsf{applyCast}\app V \app c \app a}
      ~ a : \act{c} \tag{$\key{cast}$} \label{eq:cast} \\[2ex]
    \inference{}
      {\cast{V}{c} \app W \longrightarrow
      \cast{(V \app \cast{W}{\dom{c \app x}})}{\cod{c \app x}}}
      ~ x : \agda{Cross}\app c, \inert{c}
      \tag{$\key{fun \mhyphen cast}$} \label{eq:fun-cast} \\[2ex]
    \inference{}
              {\key{fst}\app (\cast{V}{c}) \longrightarrow
                \cast{(\key{fst}\app V)}{ \agda{fst}\app c \app x}}
              ~ x : \agda{Cross}\app c, \inert{c}
              \tag{$\key{fst \mhyphen cast}$}\label{eq:fst-cast}\\[2ex]
    \inference{}
              {\key{snd}\app (\cast{V}{c}) \longrightarrow
                \cast{(\key{snd}\app V)}{ \agda{snd}\app c \app x}}
              ~ x : \agda{Cross}\app c, \inert{c}
              \tag{$\key{snd \mhyphen cast}$}\label{eq:snd-cast}\\[2ex]
   \inference{}
             {\key{case}\app (\cast{V}{c}) \app W_1 \app W_2 \longrightarrow
               \key{case}\app V \app W'_1 \app W'_2}
             \tag{$\key{case \mhyphen cast}$}\label{eq:case-cast}\\
             \text{where }
             \begin{array}{l}
               x : \agda{Cross}\app c, \inert{c} \\
               W'_1 = \lambda (\agda{rename}\app \key{S} \app W_1) \app (\cast{\key{Z}}{ \agda{inl}\app c\app x})\\
               W'_2 = \lambda (\agda{rename}\app \key{S} \app W_2) \app (\cast{\key{Z}}{ \agda{inr}\app c\app x})
             \end{array} \notag \\[2ex]
    \inference{}
              {(\lambda M) \app V \longrightarrow M[V]}
              \tag{$\beta$}\label{eq:beta}  \\[2ex]
    \inference{}
              {\key{if}\app \$\key{true} \app M \app N \longrightarrow M}
              \tag{$\beta\mhyphen\key{true}$}\label{eq:beta-true}  \\[2ex]
    \inference{}
              {\key{if}\app \$\key{false} \app M \app N \longrightarrow N}
              \tag{$\beta\mhyphen\key{false}$}\label{eq:beta-false}  \\[2ex]
    \inference{}
              {\key{fst}\app (\key{cons}\app V \app W) \longrightarrow V}
              \tag{$\beta\mhyphen\key{fst}$}\label{eq:beta-fst}  \\[2ex]
    \inference{}
              {\key{snd}\app (\key{cons}\app V \app W) \longrightarrow W}
              \tag{$\beta\mhyphen\key{snd}$}\label{eq:beta-snd}  \\[2ex]
    \inference{}
              {\key{case}\app (\key{inl}\app V)\app L \app M \longrightarrow
                L \app V}
              \tag{$\beta\mhyphen\key{caseL}$}\label{eq:beta-caseL}\\[2ex]
    \inference{}
              {\key{case}\app (\key{inr}\app V)\app L \app M \longrightarrow
                M \app V} 
              \tag{$\beta\mhyphen\key{caseR}$}\label{eq:beta-caseR}\\[2ex]
    \inference{}
              {k \app k' \longrightarrow
              \llbracket k \rrbracket \app \llbracket k' \rrbracket}
              \tag{$\delta$}\label{eq:delta}
  \end{gather}
  
  \caption{Reduction for the Parameterized Cast Calculus $\CC(\CastOp)$,
    parameterized by the \agda{CastStruct} structure.}
\label{fig:param-cast-reduction}
\end{figure}

\subsection{Type Safety of $\CC(\CastOp)$}
\label{sec:type-safety-CC}

The Preservation theorem is a direct consequence of the type that we
give to the reduction relation and that it was checked by Agda.

\begin{theorem}[Preservation]
  If $\Gamma \vdash M : A$  and $M \longrightarrow M'$,
  then $\Gamma \vdash M' : A$.
\end{theorem}

We prove the Progress theorem by defining an Agda function named
\agda{progress} that takes a closed, well-typed term $M$ and either 1)
returns a redex inside $M$, 2) identifies $M$ as a value, or 3)
identifies $M$ as an exception.

\begin{theorem}[Progress]\label{thm:cc-progress}
  If $\emptyset \vdash M : A$, then 
  \begin{enumerate}
  \item $M \longrightarrow M'$ for some $M'$,
  \item $\val{M}$, or
  \item $M \equiv \blame{\ell}{}$.
  \end{enumerate}
\end{theorem}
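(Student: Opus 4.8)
The plan is to build an Agda function \agda{progress} by structural recursion on the intrinsically-typed term $M$, i.e.\ by induction on the derivation of $\emptyset \vdash M : A$. The base cases are immediate: $\$ k$ and $\lambda M$ are values (rules $\mathtt{Vconst}$ and $\mathtt{V}\lambda$), $\blame{\ell}{A}$ is conclusion~3, and the variable constructor is vacuous in the empty context. For each remaining form --- the eliminators $L\app N$, $\key{if}\,L\,M\,N$, $\pi_i\,M$, $\key{case}\,L\,M\,N$, the introducers $\key{cons}\,M\,N$, $\key{inl}[B]\,M$, $\key{inr}[A]\,M$, and the cast $\cast{M}{c}$ --- I recurse on the principal subterm and split three ways on the outcome: if it reduces, the whole term reduces by the congruence rule \eqref{eq:xi} under the matching frame of Figure~\ref{fig:cc-values-frames} (e.g.\ $(\Box\app N)$ for an application, $\cast{\Box}{c}$ for a cast); if it is $\blame{\ell}{}$, the whole term reduces by \eqref{eq:xi-blame}; and if it is a value, I either conclude that the whole term is a value by $\mathtt{Vpair}/\mathtt{Vinl}/\mathtt{Vinr}$ (for the introducers) or finish by canonical-forms reasoning (for the eliminators and the cast). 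When an inner subterm must already be a value to form a frame (as in $(L\app\Box)$ or $\key{cons}\,M\,\Box$), the previous recursion has supplied exactly that.

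The canonical-forms step is the substance of the proof and is carried out by pattern matching on the \agda{Value} derivation of the principal subterm, with the type forcing which constructors are possible. A value of function type is $\lambda L'$, a constant $\$ k$ whose primitive type is a function type, or $\cast{V}{c}$ with $\inert{c}$; the $\mathtt{Vpair}/\mathtt{Vinl}/\mathtt{Vinr}$ cases are absurd at function type and discharged automatically, and the $\mathtt{Vcast}$ case uses $\agda{InertCross}{\to}$ to learn that $c$ is a cross cast whose source is itself a function type, which licenses $\agda{dom}$ and $\agda{cod}$ in rule \eqref{eq:fun-cast}. The analogous reasoning at product and sum types uses $\agda{InertCross}{\times}$ and $\agda{InertCross}{+}$ together with \eqref{eq:fst-cast}, \eqref{eq:snd-cast}, \eqref{eq:case-cast}; at base type (needed for the $\key{if}$ scrutinee, and for the argument of a constant of function type) the field \agda{baseNotInert} kills the $\mathtt{Vcast}$ case, so the value is a constant $\$ k$ and we read off its Agda value $k : \rep{\Base}$ --- a Boolean for \eqref{eq:beta-true}/\eqref{eq:beta-false}, and (after recursing once more on the argument) the input to \eqref{eq:delta}. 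A $\lambda$ head in an eliminator gives the corresponding $\beta$-rule \eqref{eq:beta}, \eqref{eq:beta-fst}/\eqref{eq:beta-snd}, \eqref{eq:beta-caseL}/\eqref{eq:beta-caseR}. Lemma~\ref{lem:canonical-star} is the same style of fact specialized to $\Unk$.

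For the cast $\cast{M}{c}$ with $\val{M}$, I case on $\agda{ActiveOrInert}\,c$: if $c$ is active then $\cast{M}{c}$ reduces by \eqref{eq:cast} to $\agda{applyCast}\,M\,c\,a$ --- the one place the \agda{CastStruct} (rather than merely \agda{PreCastStruct}) is used --- and if $c$ is inert then $\cast{M}{c}$ is a value by rule $\mathtt{Vcast}$. I expect no genuinely hard step; the only real obstacle is bookkeeping, namely phrasing the structured-type canonical-forms facts so that the $\agda{InertCross}$ fields rewrite the source type into the needed shape, and constructing each congruence step with precisely the frame whose $\itm{plug}$ image is the term at hand. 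Once the \agda{Value} case analyses and the $\agda{PreCastStruct}$ fields are in place, every case is a single application of a reduction rule or a value rule.
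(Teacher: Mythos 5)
Your proposal is correct and follows essentially the same route as the paper's proof: induction on the (intrinsically-typed) derivation, the reduce/blame/value trichotomy discharged by \eqref{eq:xi} and \eqref{eq:xi-blame}, canonical-forms reasoning at eliminators using $\agda{InertCross}{\to}/{\times}/{+}$ and \agda{baseNotInert}, and the \agda{ActiveOrInert} split for $\cast{M}{c}$ yielding either \eqref{eq:cast} via \agda{applyCast} or a value by $\mathtt{Vcast}$. The only difference is that you spell out the routine congruence and introducer cases that the paper's sketch leaves to the Agda development.
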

\begin{proof}[Proof sketch] To convey the flavor of the proof,
  we detail the cases for function application and cast application.
  The reader may read the proofs of the other cases in the Agda
  development.

  \begin{description}
  \item[Case $M_1 \app M_2$]
    The induction hypothesis for $M_1$ yields the following sub cases.

    \begin{description}
    \item[Subcase $M_1 \longrightarrow M'_1$.]
      By rule \eqref{eq:xi}, we conclude that
      \[
      M_1 \app M_2 \longrightarrow M'_1 \app M_2
      \]
    \item[Subcase $M_1 \equiv \blame{\ell}{}$.]
      By rule \eqref{eq:xi-blame}, we conclude that
      \[
      (\blame{\ell}{}) \app M_2 \longrightarrow \blame{\ell}{}
      \]
    \item[Subcase $\val{M_1}$.]
      The induction hypothesis for $M_2$ yields three sub cases.
      \begin{description}
      \item[Subcase $M_2 \longrightarrow M'_2$.]  By rule
        \eqref{eq:xi}, using $\val{M_1}$, we conclude that
        \[
        M_1 \app M_2 \longrightarrow M_1 \app M'_2
        \]
      \item[Subcase $M_2 \equiv \blame{\ell}{}$.]  By rule
        \eqref{eq:xi-blame}, using $\val{M_1}$, we conclude
        that
        \[
        M_1 \app (\blame{\ell}{}) \longrightarrow \blame{\ell}{}
        \]
      \item[Subcase $\val{M_2}$.]
        We proceed by cases on $\val{M_1}$, noting it is of function type.
        \begin{description}
        \item[Subcase $M_1 \equiv \lambda M_{11}$.]
          By rule \eqref{eq:beta}, using $\val{M_2}$, we conclude that
          \[
          (\lambda M_{11}) \app M_2 \longrightarrow M_{11}[M_2]
          \]
        \item[Subcase $M_1 \equiv \cast{V}{c}$ and $i : \inert{c}$.]
          The field $\agda{InertCross}{\to}$ of record
          \agda{PreCastStruct} gives us a proof $x$ that $c$ is
          a cross cast. So by rule \eqref{eq:fun-cast} we have
          \[
          \cast{V}{c} \app M_2 \longrightarrow
              \cast{(V \app \cast{M_2}{\dom{c \app x}})}{\cod{c \app x}}
          \]
        \item[Subcase $M_1 \equiv k_1$.]  We proceed by cases on
          $\val{M_2}$, most of which lead to contradictions and are
          therefore vacuously true.
          Suppose $M_2 \equiv k_2$. By rule \eqref{eq:delta} we conclude
          that
          \[
          k_1 \app k_2 \longrightarrow \llbracket k_1 \rrbracket \app \llbracket k_2 \rrbracket
          \]
          Suppose $M_2 \equiv \cast{M_{21}}{c}$ and $c$ is inert.
          Then $c$ is a cast on base types, which contradicts
          that $c$ is inert thanks to the
          \agda{baseNotInert} field of
          \agda{PreCastStruct}.
        \end{description}
      \end{description}
    \end{description}

  \item[Case $\cast{M}{c}$]
    The induction hypothesis for $M$ yields three sub cases.
    \begin{description}
    \item[Subcase $M \longrightarrow M'$.]
      By rule \eqref{eq:xi} we conclude that
      \[
      \cast{M}{c} \longrightarrow \cast{M'}{c}
      \]
    \item[Subcase $M \equiv \blame{\ell}{}$.]
      By rule \eqref{eq:xi-blame} we conclude that
      \[
      \cast{(\blame{\ell}{})}{c} \longrightarrow \blame{\ell}{}
      \]
    \item[Subcase $\val{M}$.]
      Here we use the \agda{ActiveOrInert} field of the
      \agda{PreCastStruct} on the cast $c$.
      Suppose $c$ is active, so we have $a : \act{c}$.
      By rule \eqref{eq:cast}, using $\val{M}$, we conclude that
      \[
      \cast{M}{c} \longrightarrow \agda{applyCast} \app M \app c \app a
      \]
      Suppose $c$ is inert. Then we conclude that
      $\val{\cast{M}{c}}$.
    \end{description}
  \end{description}
\end{proof}

\subsection{The $\CC'(\CastOp)$ Variant}

\citet{Siek:2015ac}, in their Isabelle mechanization of the GTLC and
the Dynamic Gradual Guarantee, make a syntactic distinction between
cast applications as unevaluated expressions versus cast applications
that are part of values. (The paper glosses over this detail.)  In
particular, they introduce a term constructor named \agda{Wrap} for
casts between function types that are part of a value and another term
constructor named \agda{Inj} for casts into $\Unk$ that are part of a
value. The reason they make this distinction is to enable a smaller
simulation relation in their proof of the Dynamic Gradual
Guarantee.\footnote{\citet{Wadler:2009qv} make similar distinctions
for different reasons in the Blame Calculus.}
To aid the proof the Dynamic Gradual Guarantee in
Section~\ref{sec:dynamic-gradual-guarantee}, we introduce a variant of
$\CC(\CastOp)$, named $\CC'(\CastOp)$ (Figure~\ref{fig:cc-alt}), that
makes a syntactic distinction between arbitrary casts and inert
casts. We write inert cast application as $\castInert{M}{c}$. Instead
of making the further distinction as in \agda{Wrap} and \agda{Inj}, we
rely on the source and target types of the casts to distinguish those
cases.
With the addition of inert casts, we change the definition of
$\val{M}$, removing the ($\mathtt{Vcast}$) rule and replacing it with
($\mathtt{Vwrap}$), which states that $\castInert{V}{c}$ is a value if
$c$ is inert. We also change the reduction rules \eqref{eq:fun-cast},
\eqref{eq:fst-cast}, \eqref{eq:snd-cast}, and \eqref{eq:case-cast} to
eliminate values of the form $\castInert{V}{c}$ instead of
$\cast{V}{c}$.
This change regarding inert casts does not affect the observable
behavior of programs with respect to $\CC(\CastOp)$.

We also change the reduction rule \eqref{eq:case-cast} to the rule
\eqref{eq:case-cast-alt} which uses substitution instead of
introducing $\lambda$-abstractions.  This change allows us to avoid
relating terms to their $\eta$-expansions. We conjecture that this
second change is not necessary, but would require a more complex
definition of the simulation relation to take $\eta$-expansion into
account. This change to \eqref{eq:case-cast} is an observable
difference. Consider an expression of the form:
\[
\key{case}\app \castInert{(\agda{inl}\app V)}{c} \app M \app N
\]
Suppose variable $\Zero$ does not occur in $M$ and the cast
$\agda{inl}\app c$ always fails.  With \eqref{eq:case-cast} there is
an error but with \eqref{eq:case-cast-alt} there is not.

\begin{figure}[tp]
\fbox{$M : \Gamma \vdash' A$}
\begin{gather*}
\key{case}:
    \inference{\Gamma \vdash' A_1 + A_2 & \Gamma \cdot A_1 \vdash' B & \Gamma \cdot A_2 \vdash' B}
              {\Gamma \vdash' B}
\\[2ex]  
\castInert{-}{c}:
    \inference{\Gamma \vdash' A}{\Gamma \vdash' B}~ c : \Cast{A}{B}, \inert{c}
\end{gather*}

  \fbox{$\val{} : (\Gamma \vdash' A) \to \agda{Set}$}
  \begin{gather*}
    \mathtt{Vwrap} :
    \inference{\val{M}}
              { \val{(\castInert{M}{c})} } ~\agda{Inert}\,c 
  \end{gather*}

  \fbox{$\rename{\rho}{M}$}
  \begin{align*}
    \rename{\rho}{(\key{case}\app L \app M \app N)} &=
    \key{case}\app(\rename{\rho}{L})\app(\rename{(\ext{\rho})}{M})\app(\rename{(\ext{\rho})}{N}) \\
    \rename{\rho}{(\castInert{M}{c})} &= \castInert{(\rename{\rho}{M})}{c}
  \end{align*}
    \fbox{$\subst{\sigma}{M}$}
  \begin{align*}
    \subst{\rho}{(\key{case}\app L \app M \app N)} &=
    \key{case}\app(\subst{\rho}{L})\app(\subst{(\exts{\rho})}{M})\app(\subst{(\exts{\rho})}{N})\\
    \subst{\rho}{(\castInert{M}{c})} &= \castInert{(\subst{\rho}{M})}{c}
  \end{align*}
  
\fbox{$M \longrightarrow N$}
\begin{gather}
  \inference{}
            {\cast{V}{c} \longrightarrow \castInert{V}{c}}~\inert{c}
            \tag{\key{wrap}}\label{eq:wrap}
            \\[2ex]
    \inference{}
      {\castInert{V}{c} \app W \longrightarrow
      \cast{(V \app \cast{W}{\dom{c \app x}})}{\cod{c \app x}}}
      ~ x : \agda{Cross}\app c, \inert{c}
      \tag{$\key{fun \mhyphen cast \mhyphen alt}$} \label{eq:fun-cast-alt} \\[2ex]
    \inference{}
              {\key{fst}\app (\castInert{V}{c}) \longrightarrow
                \cast{(\key{fst}\app V)}{ \agda{fst}\app c \app x}}
              ~ x : \agda{Cross}\app c, \inert{c}
              \tag{$\key{fst \mhyphen cast \mhyphen alt}$}\label{eq:fst-cast-alt}\\[2ex]
    \inference{}
              {\key{snd}\app (\castInert{V}{c}) \longrightarrow
                \cast{(\key{snd}\app V)}{ \agda{snd}\app c \app x}}
              ~ x : \agda{Cross}\app c, \inert{c}
              \tag{$\key{snd \mhyphen cast \mhyphen alt}$}\label{eq:snd-cast-alt}\\[2ex]
   \inference{}
             {\key{case}\app (\castInert{V}{c}) \app M \app N \longrightarrow
               \key{case}\app V \app M' \app N'}
             \tag{$\key{case \mhyphen cast \mhyphen alt}$}\label{eq:case-cast-alt}\\
             \text{where }
             \begin{array}{l}
               x : \agda{Cross}\app c, \inert{c} \\
               M' = (\rename{(\ext{\Suc})}{M}) [\cast{`\Zero}{\agda{inl}\app c \app x}]\\
               N' = (\rename{(\ext{\Suc})}{N}) [\cast{`\Zero}{\agda{inr}\app c \app x}]
             \end{array} \notag
\end{gather}
\caption{The $\CC'(\CastOp)$ Variant.}
\label{fig:cc-alt}
\end{figure}

\subsection{Blame-Subtyping Theorem for $\CC'(\CastOp)$}
\label{sec:blame-CC}

Recall that the Blame-Subtyping Theorem~\citep{Siek:2015ac} states
that a cast from $A$ to $B$ cannot be blamed (cause a runtime error)
if the source type is a subtype of the target type, i.e., $A <: B$.
We identify a cast with a blame label $\ell$. Thanks to the
Blame-Subtyping Theorem, programmer or static analyzer can inspect the
source type, the target type, and the blame label of a cast and decide
whether the cast is entirely safe or whether it might cause a runtime
error.  During execution, casts with different labels can be composed
(see Section~\ref{sec:EfficientParamCasts}), producing casts that
contain multiple blame labels. So more generally, we say that a cast
is blame-safe w.r.t a label $\ell$ if the cast will not give rise to a
runtime error labeled $\ell$.

In this section we develop a parameterized proof of the
blame-subtyping theorem for the Parameterized Cast Calculus
$\CC'(\CastOp)$ following the approach of \citet{Wadler:2009qv}.
However, the appropriate subtyping relation and definition of
blame-safe for casts depends on the representation and semantics of
casts~\citep{Siek:2009rt}, which is a parameter of
$\CC'(\CastOp)$. Thus, the definition of blame-safe is a parameter of
the proof: the \agda{CastBlameSafe} predicate. The proof is not
parameterized on the subtyping relation, clients of the proof will
typically use subtyping to define whether a cast is blame-safe.

We require that applying a blame-safe cast to a value does not raise a
run-time error, that is, if $c$ is blame-safe for $\ell$, then
$\agda{applyCast} \app V \app c \app a \neq \blame{\ell}{}$.  However,
to handle the application of casts to higher-order values, we must
generalize this requirement. The output of \agda{applyCast} can be any
term, so in addition to ruling out $\blame{\ell}{}$ we also need to
rule out terms that contain casts that are not blame-safe for
$\ell$. Thus, we define a predicate, written
$\safefor{M}{\ell}$~\citep{Wadler:2009qv}, that holds if all the casts
in $M$ are blame-safe for $\ell$, defined in
Figure~\ref{fig:castsallsafe}.  Most importantly, $\blame{\ell'}{}$ is
safe for $\ell$ only when $\ell \neq \ell'$. The other important case
is the one for casts: $\cast{M}{c}$ is safe for $\ell$ if
$\agda{CastBlameSafe}\app c \app \ell$ and $M$ is recursively safe for
$\ell$.

\begin{figure}[tp]
  \fbox{$\safefor{M}{\ell}$}
  \begin{gather*}
    \inference[(\textsc{Cast})]
              {\agda{CastBlameSafe} \app c \app \ell & \safefor{M}{\ell}}
              {\safefor{(\cast{M}{c})}{\ell}}
              \quad
    \inference[(\textsc{Var})]
              {}
              {\safefor{(` x)}{\ell}}
              \\[1em]
    \inference[(\textsc{Lam})]
              {\safefor{N}{\ell}}
              {\safefor{(\lambda N)}{\ell}}
              \quad
    \inference[(\textsc{App})]
              {\safefor{L}{\ell} & \safefor{M}{\ell}}
              {\safefor{(L \app M)}{\ell}}
              \\[1em]
    \inference[(\textsc{Lit})]
              {}
              {\safefor{(\$ k)}{\ell}}
              \quad
    \inference[(\textsc{Cons})]
              {\safefor{M}{\ell} & \safefor{N}{\ell}}
              {\safefor{(\key{cons} \, M \, N)}{\ell}}
              \\[1em]
    \inference[(\textsc{If})]
              {\safefor{L}{\ell} & \safefor{M}{\ell} & \safefor{N}{\ell}}
              {\safefor{(\key{if} \, L \, M \, N)}{\ell}}
              \\[1em]
    \inference[(\textsc{Fst})]
              {\safefor{M}{\ell}}
              {\safefor{(\key{fst} \, M)}{\ell}}
              \quad
    \inference[(\textsc{Snd})]
              {\safefor{M}{\ell}}
              {\safefor{(\key{snd} \, M)}{\ell}}
              \\[1em]
    \inference[(\textsc{InL})]
              {\safefor{M}{\ell}}
              {\safefor{(\key{inl} [B] \, M)}{\ell}}
              \quad
    \inference[(\textsc{InR})]
              {\safefor{M}{\ell}}
              {\safefor{(\key{inr} [A] \, M)}{\ell}}
              \\[1em]
    \inference[(\textsc{Case})]
              {\safefor{L}{\ell} & \safefor{M}{\ell} & \safefor{N}{\ell}}
              {\safefor{(\key{case} \, L \, M \, N)}{\ell}}
              \\[1em]
    \inference[(\textsc{Blame})]
              {\ell \neq \ell'}
              {\safefor{(\key{blame} \, \ell')}{\ell}}
  \end{gather*}
  \caption{The predicate that a term contains only blame-safe casts.}
  \label{fig:castsallsafe}
\end{figure}

We also require that \agda{CastBlameSafe} be preserved under the
operations on casts: \agda{dom}, \agda{cod}, \agda{fst}, \agda{snd},
\agda{inl}, and \agda{inr}.

We capture these requirements in two structures:
\begin{enumerate}
\item \agda{PreBlameSafe} and
\item \agda{BlameSafe}.
\end{enumerate}
The first structure includes the \agda{CastBlameSafe} predicate and
the fields that preserve blame-safety under the casts operations
(Figure~\ref{fig:PreBlameSafe}).  The second structure
includes the requirement that \agda{applyCast} preserves blame safety.

\begin{figure}[tp]
\begin{description}
\item[$\agda{CastBlameSafe} : \forall A B . \, \Cast{A}{B} \to \agda{Label} \to \agda{Set}$]\ \\
  Predicate for blame-safe casts, i.e. casts that never cause runtime errors.
  
\item[$\agda{domBlameSafe} :
    \begin{array}{l}
      \forall A_1 A_2 B_1 B_2.\,  (c : \Cast{(A_1 \to A_2)}{(B_1 \to B_2)}) \to \agda{CastBlameSafe} \app c \app \ell \\
         \to (x : \agda{Cross} \app c) \to \agda{CastBlameSafe} \app (\agda{dom} \app c \app x) \app \ell
    \end{array}$] 
  Given a cross cast between function types that is blame-safe,
  $\agda{domBlameSafe}$ returns a proof that the cast between
  the domain types is blame-safe.

\item[$\agda{codBlameSafe} :
    \begin{array}{l}
      \forall A_1 A_2 B_1 B_2.\,  (c : \Cast{(A_1 \to A_2)}{(B_1 \to B_2)}) \to \agda{CastBlameSafe} \app c \app \ell \\
       \to (x : \agda{Cross} \app c) \to \agda{CastBlameSafe} \app (\agda{cod} \app c \app x) \app \ell
    \end{array}$]
  Similar to the above but for the codomain.
  
\item[$\agda{fstBlameSafe} :
    \begin{array}{l}
      \forall A_1 A_2 B_1 B_2.\, (c : \Cast{(A_1 \times A_2)}{(B_1 \times B_2)}) \to \agda{CastBlameSafe} \app c \app \ell \\
       \to (x : \agda{Cross} \app c) \to \agda{CastBlameSafe} \app (\agda{fst} \app c \app x) \app \ell
    \end{array}$]
  Given a cross cast between product types that is blame-safe,
  $\agda{fstBlameSafe}$ returns a proof that the cast between
  the first components of the pair is blame-safe.

\item[$\agda{sndBlameSafe} :
    \begin{array}{l}
      \forall A_1 A_2 B_1 B_2.\, (c : \Cast{(A_1 \times A_2)}{(B_1 \times B_2)}) \to \agda{CastBlameSafe} \app c \app \ell \\
      \to (x : \agda{Cross} \app c) \to \agda{CastBlameSafe} \app (\agda{snd} \app c \app x) \app \ell
    \end{array}$]
  Similar to the above but for the second component of the pair.
  
\item[$\agda{inlBlameSafe} :
    \begin{array}{l}
      \forall A_1 A_2 B_1 B_2.\, (c : \Cast{(A_1 + A_2)}{(B_1 + B_2)}) \to \agda{CastBlameSafe} \app c \app \ell \\
       \to (x : \agda{Cross} \app c) \to \agda{CastBlameSafe} \app (\agda{inl} \app c \app x) \app \ell
    \end{array}$]
  Given a cross cast between sum types that is blame-safe,
  $\agda{inlBlameSafe}$ returns a proof that the cast for the first
  branch is blame-safe.

\item[$\agda{inrBlameSafe} :
    \begin{array}{l}
      \forall A_1 A_2 B_1 B_2.\,  (c : \Cast{(A_1 + A_2)}{(B_1 + B_2)}) \to \agda{CastBlameSafe} \app c \app \ell \\
       \to (x : \agda{Cross} \app c) \to \agda{CastBlameSafe} \app (\agda{inr} \app c \app x) \app \ell
    \end{array}$]
  Similar to the above but for the second component of the sum.
\end{description}
\caption{\agda{PreBlameSafe}
   extends \agda{PreCastStruct} (Section~\ref{sec:precaststruct}).}
\label{fig:PreBlameSafe}
\end{figure}

The structure \agda{BlameSafe} extends
\agda{PreBlameSafe} and \agda{CastStruct}.  It also
includes the following field which, roughly speaking, requires that
applying a blame-safe cast to a blame-safe value $V$ results in a
blame-safe term.

\[
\begin{split}
\agda{applyCast\mhyphen{}pres\mhyphen{}allsafe} :
\forall \Gamma A B \ell . \, & (V : \Gamma \vdash A) \to \agda{Value} \app V \to (c : \Cast{A}{B}) \\
& \to (a : \agda{Active} \app c) \to \agda{CastBlameSafe} \app c \app \ell\\
& \to \safefor{V}{\ell} \\
& \to \safefor{(\agda{applyCast} \app V \app c \app a)}{\ell}
\end{split}
\]

We turn to the proof of blame safety. We first prove that ``safe for''
is preserved during reduction in the following
Lemma~\ref{lem:preserve-cas}.  The proof depends on a number of
technical lemmas about renaming and substitution.  They are omitted
here but can be found in the Agda formalization.

\begin{lemma}[Preservation of Blame Safety]
  \label{lem:preserve-cas}
  If $M : \Gamma \vdash A$, $M' : \Gamma \vdash A$,
  $\safefor{M}{\ell}$, and $M \longrightarrow M'$,
  then $\safefor{M'}{\ell}$.
\end{lemma}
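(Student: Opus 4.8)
The plan is to prove the lemma by induction on the derivation of $M \longrightarrow M'$, showing in every case that $\safefor{M}{\ell}$ implies $\safefor{M'}{\ell}$. Before the main induction I would establish three auxiliary facts, each by a straightforward structural induction. First, a frame decomposition principle: extend $\safefor{\cdot}{\ell}$ to frames so that a frame is safe for $\ell$ exactly when the terms it carries are, and prove that $\safefor{(\itm{plug}\app M\app F)}{\ell}$ holds iff $\safefor{M}{\ell}$ and the frame $F$ is safe for $\ell$, by cases on $F$. Second, renaming preserves safety: if $\safefor{M}{\ell}$ then $\safefor{(\rename{\rho}{M})}{\ell}$ — immediate since renaming only rewrites variables and leaves every cast untouched. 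Third, a substitution lemma: if $\safefor{M}{\ell}$ and $\safefor{(\sigma\app x)}{\ell}$ for every variable $x$, then $\safefor{(\subst{\sigma}{M})}{\ell}$; the special case of $M$ with $\Zero$ substituted by $N$ follows because $\substz{N}$ sends $\Zero$ to the safe term $N$ and every other variable to a variable.

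With these in hand, the routine cases go quickly. For the congruence rule \eqref{eq:xi}, decompose $\safefor{(\itm{plug}\app M\app F)}{\ell}$ via the frame principle into safety of $M$ and of $F$, apply the induction hypothesis to get $\safefor{M'}{\ell}$, and reassemble. For \eqref{eq:xi-blame} with $M = \itm{plug}\app(\blame{\ell'}{A})\app F$ and $M' = \blame{\ell'}{B}$: the frame principle forces $\safefor{(\blame{\ell'}{A})}{\ell}$, so by rule (\textsc{Blame}) we have $\ell \neq \ell'$, hence $\safefor{(\blame{\ell'}{B})}{\ell}$. The $\delta$ rule \eqref{eq:delta} is trivial because the reduct is a constant and constants contain no casts. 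The remaining $\beta$-style rules for the non-cast constructs (\eqref{eq:beta}, \eqref{eq:beta-true}, \eqref{eq:beta-false}, \eqref{eq:beta-fst}, \eqref{eq:beta-snd}, \eqref{eq:beta-caseL}, \eqref{eq:beta-caseR}, and their $\CC'(\CastOp)$ counterparts) only select subterms and, in the lambda- and case-elimination cases, also perform a substitution; each follows by inverting the relevant rule of $\safefor{\cdot}{\ell}$ — (\textsc{App}), (\textsc{Case}), (\textsc{Cons}), and so on — together with the substitution lemma. For the $\CC'(\CastOp)$ rule \eqref{eq:wrap}, inverting (\textsc{Cast}) on $\cast{V}{c}$ gives $\agda{CastBlameSafe}\app c\app\ell$ and $\safefor{V}{\ell}$, which is precisely what the inert-cast analogue of (\textsc{Cast}) requires for $\castInert{V}{c}$.

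The interesting cases are the ones that touch casts. For \eqref{eq:cast}, $M = \cast{V}{c}$ with $a : \act{c}$; inverting (\textsc{Cast}) yields $\agda{CastBlameSafe}\app c\app\ell$ and $\safefor{V}{\ell}$, and then the \agda{applyCast\mhyphen{}pres\mhyphen{}allsafe} field of the \agda{BlameSafe} structure delivers $\safefor{(\agda{applyCast}\app V\app c\app a)}{\ell}$ directly. For the cross-cast elimination rules \eqref{eq:fun-cast-alt}, \eqref{eq:fst-cast-alt}, \eqref{eq:snd-cast-alt}, and \eqref{eq:case-cast-alt}, inversion on the appropriate rule of $\safefor{\cdot}{\ell}$ gives $\agda{CastBlameSafe}\app c\app\ell$, and the \agda{PreBlameSafe} fields $\agda{domBlameSafe}$, $\agda{codBlameSafe}$, $\agda{fstBlameSafe}$, $\agda{sndBlameSafe}$, $\agda{inlBlameSafe}$, $\agda{inrBlameSafe}$ show that each decomposed cast ($\dom{c\app x}$, $\cod{c\app x}$, and so on) is again blame-safe; rebuilding the reduct with the (\textsc{App}), (\textsc{Cast}), (\textsc{Lam}), and (\textsc{Case}) rules then establishes its safety. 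The one case demanding real care is \eqref{eq:case-cast-alt}: its reduct substitutes the freshly inserted cast applications $\cast{`\Zero}{\agda{inl}\app c\app x}$ and $\cast{`\Zero}{\agda{inr}\app c\app x}$ into the renamed branch bodies $\rename{(\ext{\Suc})}{M}$ and $\rename{(\ext{\Suc})}{N}$. Here I would combine the renaming lemma, the substitution lemma, and $\agda{inlBlameSafe}$/$\agda{inrBlameSafe}$, the latter ensuring that the single-variable substitutions map $\Zero$ to a safe cast application and everything else to a variable. I expect the bookkeeping around \eqref{eq:case-cast-alt} — aligning the $\ext{\Suc}$ renaming with the subsequent substitution and checking that the inserted casts are blame-safe — to be the main technical obstacle, though it becomes routine once the renaming and substitution lemmas for $\safefor{\cdot}{\ell}$ are available.
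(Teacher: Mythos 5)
Your proposal is correct and follows essentially the same route the paper takes: an induction on the reduction derivation supported by technical lemmas about frames, renaming, and substitution (which the paper defers to the Agda formalization), with the cast cases discharged exactly as intended by the \agda{applyCast\mhyphen{}pres\mhyphen{}allsafe} field of \agda{BlameSafe} and the \agda{domBlameSafe}/\agda{codBlameSafe}/\agda{fstBlameSafe}/\agda{sndBlameSafe}/\agda{inlBlameSafe}/\agda{inrBlameSafe} fields of \agda{PreBlameSafe}. Your identification of \eqref{eq:case-cast-alt} (and the implicit wrap analogue of the (\textsc{Cast}) rule for $\safefor{\castInert{M}{c}}{\ell}$) as the main bookkeeping burden matches where the mechanized proof spends its effort.
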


Using the preservation Lemma~\ref{lem:preserve-cas}, we prove the
following Blame-Subtyping Theorem~\ref{thm:blame-subtyping}. The
theorem says that if every cast with label $\ell$ is blame-safe in
$M$, then these casts never fail and $\ell$ is guaranteed not to be
blamed.

\begin{theorem}[Blame-Subtyping Theorem]
  \label{thm:blame-subtyping}
  If $M : \Gamma \vdash A$ and $\safefor{M}{\ell}$,
  then $\neg (M \longrightarrow^{*} \key{blame} \, \ell)$.
\end{theorem}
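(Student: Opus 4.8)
The plan is to argue by contradiction. Suppose, toward a contradiction, that $M \longrightarrow^{*} \key{blame}\,\ell$ in $\CC'(\CastOp)$. I will show that this forces $\safefor{(\key{blame}\,\ell)}{\ell}$, which is impossible: inspecting Figure~\ref{fig:castsallsafe}, the only rule whose conclusion is a ``safe for'' judgement about a $\key{blame}$ term is (\textsc{Blame}), and its premise is $\ell \neq \ell'$ with $\ell' = \ell$ in our situation.

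The first step is to lift the two single-step preservation results — the Preservation theorem (which holds for $\CC'(\CastOp)$ by construction, since its reduction relation is intrinsically typed) and Lemma~\ref{lem:preserve-cas} (Preservation of Blame Safety) — to the reflexive--transitive closure $\longrightarrow^{*}$. Concretely, I would prove by induction on the length of a reduction sequence: if $M : \Gamma \vdash A$, $\safefor{M}{\ell}$, and $M \longrightarrow^{*} M'$, then $M' : \Gamma \vdash A$ and $\safefor{M'}{\ell}$. The zero-step case is immediate. For the inductive step, write the sequence as $M \longrightarrow M_{1} \longrightarrow^{*} M'$; Preservation gives $M_{1} : \Gamma \vdash A$, then Lemma~\ref{lem:preserve-cas}, using the well-typedness of both $M$ and $M_{1}$, gives $\safefor{M_{1}}{\ell}$, and the induction hypothesis applied to $M_{1} \longrightarrow^{*} M'$ closes the case. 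The only bookkeeping point is that the typing judgement must be threaded along the induction, precisely because Lemma~\ref{lem:preserve-cas} requires the contractum to be well typed.

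The second step simply instantiates this multi-step lemma with $M' = \key{blame}\,\ell$ — formally the term $\blame{\ell}{A}$, whose type annotation is determined by Preservation — obtaining a derivation of $\safefor{(\key{blame}\,\ell)}{\ell}$. Inverting this derivation leaves only rule (\textsc{Blame}), whose premise $\ell \neq \ell$ is the desired contradiction. Hence no reduction sequence from $M$ reaches $\key{blame}\,\ell$, i.e.\ $\neg (M \longrightarrow^{*} \key{blame}\,\ell)$, as required.

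I do not expect a real obstacle in the theorem itself; the substantive work has already been discharged in Lemma~\ref{lem:preserve-cas}, which I take as given. For context, that lemma is where each reduction rule of $\CC'(\CastOp)$ must be checked to preserve the $\safefor{\cdot}{\ell}$ invariant: rule \eqref{eq:cast} is handled by the \agda{applyCast} blame-safety field of \agda{BlameSafe}; the cross-cast rules \eqref{eq:fun-cast-alt}, \eqref{eq:fst-cast-alt}, \eqref{eq:snd-cast-alt}, and \eqref{eq:case-cast-alt} rely on the decomposition-preservation fields $\agda{domBlameSafe}$, $\agda{codBlameSafe}$, $\agda{fstBlameSafe}$, $\agda{sndBlameSafe}$, $\agda{inlBlameSafe}$, $\agda{inrBlameSafe}$ of \agda{PreBlameSafe} (Figure~\ref{fig:PreBlameSafe}) together with stability of $\safefor{\cdot}{\ell}$ under renaming and substitution; the congruence rules \eqref{eq:xi} and \eqref{eq:xi-blame} require that $\safefor{\cdot}{\ell}$ decomposes and recomposes through $\itm{plug}$; and the remaining $\beta$, $\delta$, and \eqref{eq:wrap} rules are routine. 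Relative to that lemma, the theorem above is a short induction with two cases.
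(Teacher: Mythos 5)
Your proposal is correct and follows essentially the same route as the paper: an induction on the reduction sequence threading $\safefor{\cdot}{\ell}$ via Lemma~\ref{lem:preserve-cas} (with typing preserved along the way), finishing by inversion of the $\safefor{\cdot}{\ell}$ predicate on $\key{blame}\,\ell$, whose only applicable rule (\textsc{Blame}) demands $\ell \neq \ell$. Phrasing it as a contradiction rather than directly deriving the negation is an immaterial difference.
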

\begin{proof}[Proof sketch]
  By induction on the reduction sequence and by inversion on the
  ``safe for'' predicate.
\end{proof}

\subsection{Dynamic Gradual Guarantee}
\label{sec:dynamic-gradual-guarantee}

Recall that the dynamic gradual guarantee~\citep{Siek:2015ac} states
that changing type annotations in a program to be more precise should
either result in the same observable behavior or, in the event of a
conflict between the new type annotation and some other part of the
program, the program could produce a runtime cast error. On the other
hand, changing the program to be less precise should always result in
the same observable behavior.

Our proof of the dynamic gradual guarantee follows the same structure
as that of \citet{Siek:2015ac}.  Their proof involves two main lemmas:
(1) compilation from GTLC to its cast calculus preserves precision and
(2) a more-precise program in the cast calculus simulates any
less-precise version of itself.  In this article, the proof that
compilation preserves precision is in
Section~\ref{sec:gtlc-to-cc}. Our proof of the simulation is in this
section and is parameterized with respect to the cast representation,
using two new structures, so that it can be applied to different cast
calculi.

\subsubsection{Precision Structures and Precision for $\CC'$}

The definition of term precision relies on a notion of precision for
the cast representation, so precision for casts must be a parameter of
the proof, that is, fields of a required structure.  In particular, we
define \agda{PreCastStructWithPrecision} as an extension of
\agda{PreCastStruct} in
Figure~\ref{fig:PreCastStructWithPrecision}. This structure includes
three fields that define the precision relation between two casts and
precision between a cast and a type (on the left or right). The record
also includes four requirements on those relations that correspond to
the precision rules in Figure 9 of \citet{Siek:2015ac}:
\agda{\mbox{$\leprecop\to$}lpit} corresponds to the forward direction
of the bottom right rule while \agda{lpii\mbox{$\to\leprecop$}},
\agda{lpit\mbox{$\to\leprecop$}}, and \agda{lpti\mbox{$\to\leprecop$}}
are inversion properties about the top left, bottom right, and bottom
left rules respectively.

\begin{figure}[tp]
\begin{description}
\item[$\agda{\leprecii{-}{-}} : \forall c \, c' . \, \agda{Inert} \, c \to \agda{Inert} \, c' \to \agda{Set}$]\  \\
  The precision relation between two inert casts.
\item[$\agda{\leprecit{-}{-}} : \forall c . \, \agda{Inert} \, c \to \Types \to \agda{Set}$]\  \\
  The precision relation between an inert cast and a type. 
\item[$\agda{\leprecti{-}{-}} : \forall c' . \, \Types \to \agda{Inert} \, c' \to \agda{Set}$]\  \\
  The precision relation between a type and an inert cast. 
\item[$\agda{\mbox{$\leprecop\to$}lpit} :$]\ \\
  \begin{equation*} \begin{split}
      \forall A B A' . \, (c : \Cast{A}{B}) \to (i : \agda{Inert} \app c) \to A \leprecop A' \to B \leprecop A' \to \leprecit{i}{A'}
  \end{split} \end{equation*}
  If the source and target of an inert cast are less precise than
  another type, then so is the inert cast.
\item[$\agda{lpii\mbox{$\to\leprecop$}} :$]\ \\
  \begin{equation*} \begin{split}
      \forall A B A' B' . \, & (c : \Cast{A}{B}) \to (c' : \Cast{A'}{B'}) \to (i : \agda{Inert} \app c) \to (i' : \agda{Inert} \app c') \\
      & \to \leprecii{i}{i'} \to (A \leprecop A') \times (B \leprecop B')
  \end{split} \end{equation*}
  If an inert cast is less precise than another, then its source and
  target are too.
\item[$\agda{lpit\mbox{$\to\leprecop$}} :$]\ \\
  \begin{equation*} \begin{split}
      \forall A A' B . \, & (c : \Cast{A}{B}) \to (i : \agda{Inert} \app c)
      \to \leprecit{i}{A'} \to (A \leprecop A') \times (B \leprecop A')
  \end{split} \end{equation*}
  If an inert cast is less precise than a type, then so are its source
  and target.
\item[$\agda{lpti\mbox{$\to\leprecop$}} :$]\ \\
  \begin{equation*} \begin{split}
      \forall A A' B' . \, & (c' : \Cast{A'}{B'}) \to (i' : \agda{Inert} \app c')
      \to \leprecti{A}{i'} \to (A \leprecop A') \times (A \leprecop B')
  \end{split} \end{equation*}
  If a type is less precise than an inert cast, then it is less
  precise than the source and target.
\end{description}
\caption{\agda{PreCastStructWithPrecision} extends \agda{PreCastStruct}.}
\label{fig:PreCastStructWithPrecision}
\end{figure}

We define the precision relation between $\CC'(\CastOp)$ terms in
Figure~\ref{fig:cc'-prec}.  The rules (\textsc{Cast}),
(\textsc{CastL}), and (\textsc{CastR}) correspond to the precision
rules for casts in Figure 9 of \citet{Siek:2015ac}.  The rule
(\textsc{Wrap}) corresponds to the rules \agda{p\_wrap\_wrap} and
\agda{p\_inj} in the Isabelle mechanization of \citet{Siek:2015ac}.
The side condition that $B = \Unk$ implies $B' = \Unk$ ensures that
our (\textsc{Wrap}) rule handles the same cases as \agda{p\_inj}
(where both $B = \Unk$ and $B' = \Unk$) and \agda{p\_wrap\_wrap}
(where neither $B = \Unk$ nor $B' = \Unk$).
The (\textsc{WrapL}) rule corresponds to the rules \agda{p\_wrap\_r}
and \agda{p\_injr}.\footnote{The precision ordering in
\citet{Siek:2015ac} is flipped with respect to our which explains the
switching of left versus right.}  The (\textsc{WrapR}) rule
corresponds to the rule \agda{p\_wrap\_l}.  The side condition $A \neq
\Unk$ prevents overlap with the (\textsc{Wrap}) rule which we explain
as follows. If $A = \Unk$, then by Lemma~\ref{lem:canonical-star} the
term $M$ would be an injection and therefore the (\textsc{Wrap}) rule
would apply. (For the same reason, there is no \agda{p\_injl} rule in
the mechanization of \citet{Siek:2015ac}.)
The term $\blame{\ell}{}$ is treated as more precise than any term,
just as in \citet{Siek:2015ac}.
The rest of the rules are equivalent to
the precision rules for GTLC (Figure~\ref{fig:gtlc-prec}).

\begin{figure}[tp]
  \fbox{$\sigma \leprecop^s \sigma'$}
  \begin{gather*}
    \inference[($\sigma_0$)]{M \lepreccc M'}{\agda{substZero} \app M \leprecop^s \agda{substZero} \app M'}
    \qquad
    \inference[(\textsc{Exts})]{\sigma \leprecop^s \sigma'}{\agda{exts} \app \sigma \leprecop^s \agda{exts} \app \sigma'}
  \end{gather*}

  \fbox{$M \lepreccc M'$}
  \begin{gather*}
    \inference[(\textsc{Lit})]{}{\$ k \lepreccc \$ k}
    \quad
    \inference[(\textsc{Var})]{}{` x \lepreccc ` x}
    \\[1ex]
    \inference[(\textsc{Lam})]{A \leprecop A' & N \lepreccc N'}{\lambda \, N \lepreccc \lambda \, N'}
    N : \Gamma \cdot A \vdash B, N' : \Gamma' \cdot A' \vdash B'
    \\[1ex]
    \inference[(\textsc{App})]{L \lepreccc L' & M \lepreccc M'}{L \app M \lepreccc L' \app M'}
    \quad
    \inference[(\textsc{If})]{L \lepreccc L' & M \lepreccc M' & N \lepreccc N'}
              {\key{if} \, L \, M \, N \lepreccc \key{if} \, L' \, M' \, N'}
    \\[1ex]
    \inference[(\textsc{Cons})]{M \lepreccc M' & N \lepreccc N'}{\key{cons} \, M \, N \lepreccc \key{cons} \, M' \, N'}
    \quad
    \inference[(\textsc{Fst})]{M \lepreccc M'}{\pi_1 \, M \lepreccc \pi_1 \, M'}
    \quad
    \inference[(\textsc{Snd})]{M \lepreccc M'}{\pi_2 \, M \lepreccc \pi_2 \, M'}
    \\[1ex]
    \inference[(\textsc{InL})]{B \leprecop B' & M \lepreccc M'}{\key{inl} [B] \, M \lepreccc \key{inl} [B'] \, M'}
    \quad
    \inference[(\textsc{InR})]{A \leprecop A' & M \lepreccc M'}{\key{inr} [A] \, M \lepreccc \key{inr} [A'] \, M'}
    \\[1ex]
    \inference[(\textsc{Case})]{A_1 \leprecop A_1' & A_2 \leprecop A_2' & L \lepreccc L' & M \lepreccc M' & N \lepreccc N'}
              {\key{case} \, L \, M \, N \lepreccc \key{case} \, L' \, M' \, N'} \\
              L : \Gamma \vdash A_1 + A_2 , L' : \Gamma' \vdash A_1' + A_2'
    \\[1ex]
    \inference[(\textsc{Cast})]
              {A \leprecop A' & B \leprecop B' & M \lepreccc M'}
              {\cast{M}{c} \lepreccc \cast{M'}{c'}}
              c : \Cast{A}{B}, c' : \Cast{A'}{B'}
    \\[1ex]
    \inference[(\textsc{CastL})]
              {A \leprecop A' & B \leprecop A' & M \lepreccc M'}
              {\cast{M}{c} \lepreccc M'}
              c : \Cast{A}{B}, M' : \Gamma' \vdash A'
    \\[1ex]
    \inference[(\textsc{CastR})]
              {A \leprecop A' & A \leprecop B' & M \lepreccc M'}
              {M \lepreccc \cast{M'}{c'}}
              M : \Gamma \vdash A, c' : \Cast{A'}{B'}
    \\[1ex]
    \inference[(\textsc{Wrap})]
              {\leprecii{i}{i'} & M \lepreccc M'}
              {\castInert{M}{i} \lepreccc \castInert{M'}{i'}}
              i : \Cast{A}{B}, i' : \Cast{A'}{B'}, (B = \Unk) \to (B' = \Unk)
    \\[1ex]
    \inference[(\textsc{WrapL})]
              {\leprecit{i}{A'} & M \lepreccc M'}
              {\castInert{M}{i} \lepreccc M'}
              M' : \Gamma' \vdash A'
    \\[1ex]
    \inference[(\textsc{WrapR})]
              {\leprecti{A}{i'} & M \lepreccc M'}
              {M \lepreccc \castInert{M'}{i'}}
              M : \Gamma \vdash A, A \neq \Unk
    \\[1ex]
    \inference[(\textsc{Blame})]
              {A \leprecop A'}
              {M \lepreccc \key{blame} \, \ell}
              M : \Gamma \vdash A, \key{blame} \, \ell : \Gamma' \vdash A'
  \end{gather*}
  \caption{Precision between $\CC'(\CastOp)$ terms and between substitutions.}
  \label{fig:cc'-prec}
\end{figure}

There are five technical lemmas of \citet{Siek:2015ac} used in the
simulation proof that directly involve the cast representation, so
those too become structure fields. We extend \agda{CastSruct} to
create the structure \agda{CastStructWithPrecision}
in Figure~\ref{fig:CastStructWithPrecision}

\begin{figure}[tp]
\begin{description}
\item[$\agda{applyCast\mbox{-}catchup} :$]\ \\
  \begin{equation*} \begin{split}
      \forall \Gamma \Gamma' A A' B . \, & (V : \Gamma \vdash A) \to (V' : \Gamma' \vdash A') \to (c : \Cast{A}{B}) \\
      & \to (a : \agda{Active} \app c) \to \agda{Value} \app V \to \agda{Value} \app V' \\
      & \to A \leprecop A' \to B \leprecop A' \to V \lepreccc V' \\
      & \to \exists W . (\agda{Value} \app W) \times (\agda{applyCast} \app V \app c \app a \longrightarrow^{*} W) \times (W \lepreccc V')
  \end{split} \end{equation*}
\item[$\agda{sim\mbox{-}cast} :$]\ \\
  \begin{equation*} \begin{split}
      \forall A A' B B' . \, & (V : \emptyset \vdash A) \to (V' : \emptyset \vdash A') \to (c : \Cast{A}{B}) \to (c' : \Cast{A'}{B'}) \\
      & \to \agda{Value} \app V \to \agda{Value} \app V' \to (a' : \agda{Active} \app c') \\
      & \to A \leprecop A' \to B \leprecop B' \to V \lepreccc V' \\
      & \to \exists N . (\cast{V}{c} \longrightarrow^{*} N) \times (N \lepreccc \agda{applyCast} \app V' \app c' \app a')
  \end{split} \end{equation*}
\item[$\agda{sim\mbox{-}wrap} :$]\ \\
  \begin{equation*} \begin{split}
      \forall A A' B B' . \, & (V : \emptyset \vdash A) \to (V' : \emptyset \vdash A') \to (c : \Cast{A}{B}) \to (c' : \Cast{A'}{B'}) \\
      & \to \agda{Value} \app V \to \agda{Value} \app V' \to (i' : \agda{Inert} \app c') \\
      & \to A \leprecop A' \to B \leprecop B' \to V \lepreccc V' \\
      & \to \exists N . (\cast{V}{c} \longrightarrow^{*} N) \times (N \lepreccc \castInert{V'}{i'})
  \end{split} \end{equation*}
\item[$\agda{castr\mbox{-}cast} :$]\ \\
  \begin{equation*} \begin{split}
      \forall A A' B . \, & (V : \emptyset \vdash A) \to (V' : \emptyset \vdash A') \to (c' : \Cast{A'}{B'}) \\
      & \to \agda{Value} \app V \to \agda{Value} \app V' \to (a' : \agda{Active} \app c') \\
      & \to A \leprecop A' \to A \leprecop B' \to V \lepreccc V' \\
      & \to V \lepreccc \agda{applyCast} \app V' \app c' \app a'
  \end{split} \end{equation*}
\item[$\agda{castr\mbox{-}wrap} :$]\ \\
  \begin{equation*} \begin{split}
      \forall A A' B . \, & (V : \emptyset \vdash A) \to (V' : \emptyset \vdash A') \to (c' : \Cast{A'}{B'}) \\
      & \to \agda{Value} \app V \to \agda{Value} \app V' \to (i' : \agda{Inert} \app c') \\
      & \to A \leprecop A' \to A \leprecop B' \to V \lepreccc V' \\
      & \to V \lepreccc \castInert{V'}{i'}
  \end{split} \end{equation*}
\end{description}
\caption{The \agda{CastStructWithPrecision} structure extends \agda{CastStruct}.}
\label{fig:CastStructWithPrecision}
\end{figure}





\subsubsection{Proof of the Simulation}

This section is parameterized by \agda{CastStructWithPrecision} and
contains two lemmas that lead up to the proof of the simulation.  The
first lemma states that a term that is less precise than a value will
catch up by reducing to a value.

\begin{lemma}[Catchup to Value]
  \label{prop:catchup}
  Suppose $M : \Gamma \vdash A$ and $V' : \Gamma' \vdash A'$.  If $M
  \lepreccc V'$, then $M \longrightarrow^{*} V$ and $V \lepreccc V'$
  for some $V$.
\end{lemma}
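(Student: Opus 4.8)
The plan is to induct on the derivation of $M \lepreccc V'$, using the hypothesis $\val{V'}$ at every step to determine which precision rule can be at the root. First I would dispatch all the rules whose right-hand side can never be a value of $\CC'(\CastOp)$: these are (\textsc{Var}), (\textsc{App}), (\textsc{If}), (\textsc{Fst}), (\textsc{Snd}) and (\textsc{Case}); also (\textsc{Cast}) and (\textsc{CastR}), since a non-inert cast application $\cast{M'}{c'}$ is never a value --- only $\castInert{-}{-}$ is, through $\mathtt{Vwrap}$; and (\textsc{Blame}), since $\blame{\ell}{}$ is an exception. Each of these contradicts $\val{V'}$, so the case is vacuous.

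For the structural rules (\textsc{Lit}), (\textsc{Lam}), (\textsc{Cons}), (\textsc{InL}) and (\textsc{InR}): in the first two $M$ is already a value and I take $V = M$ with the empty reduction; in the other three, inverting $\val{V'}$ gives values for the components of $V'$, so the induction hypothesis applies to each premise. In (\textsc{Cons}), once it yields $M_1 \longrightarrow^{*} V_1$ and $M_2 \longrightarrow^{*} V_2$, I reduce $\key{cons}\,M_1\,M_2 \longrightarrow^{*} \key{cons}\,V_1\,M_2 \longrightarrow^{*} \key{cons}\,V_1\,V_2$ by the congruence rule \eqref{eq:xi} lifted to $\longrightarrow^{*}$ --- first under the frame $\key{cons}\,\Box\,-$, then under $\key{cons}\,-\,\Box$, which becomes available once the first component is a value --- observe $\val{(\key{cons}\,V_1\,V_2)}$ by $\mathtt{Vpair}$, and re-apply (\textsc{Cons}); (\textsc{InL}) and (\textsc{InR}) are identical with a single frame. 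The rules (\textsc{Wrap}), (\textsc{WrapL}) and (\textsc{WrapR}) go the same way: run the induction hypothesis on the premise $M_1 \lepreccc M_1'$ (where $M_1'$ is a value, recovered by inverting $\val{V'}$ when the right-hand side has shape $\castInert{M_1'}{i'}$), reduce inside the inert cast via the congruence rule for $\castInert{-}{c}$ if the body is not already a value, and re-apply the same precision rule; the side conditions $(B = \Unk) \to (B' = \Unk)$ of (\textsc{Wrap}) and $A \neq \Unk$ of (\textsc{WrapR}) persist because the casts are fixed and reduction preserves the type $A$.

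The one case requiring real work is (\textsc{CastL}), where $M = \cast{M_1}{c}$ with $c : \Cast{A}{B}$ and the premises are $A \leprecop A'$, $B \leprecop A'$ and $M_1 \lepreccc V'$. The induction hypothesis gives $M_1 \longrightarrow^{*} V_1$ with $\val{V_1}$ and $V_1 \lepreccc V'$, so $\cast{M_1}{c} \longrightarrow^{*} \cast{V_1}{c}$ by congruence under the frame $\cast{\Box}{c}$. Now case on whether $c$ is active or inert, using the \agda{ActiveOrInert} field. If $c$ is inert with proof $i$, then \eqref{eq:wrap} gives $\cast{V_1}{c} \longrightarrow \castInert{V_1}{c}$, a value by $\mathtt{Vwrap}$, and $\castInert{V_1}{c} \lepreccc V'$ follows from (\textsc{WrapL}): the needed $\leprecit{i}{A'}$ is supplied by the \agda{PreCastStructWithPrecision} field that turns $A \leprecop A'$ and $B \leprecop A'$ into $\leprecit{i}{A'}$. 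If $c$ is active with proof $a$, then \eqref{eq:cast} gives $\cast{V_1}{c} \longrightarrow \agda{applyCast}\app V_1\app c\app a$, and I invoke the \agda{applyCast\mbox{-}catchup} field of \agda{CastStructWithPrecision} on $V_1$, $V'$, $c$, $a$ --- whose hypotheses $\val{V_1}$, $\val{V'}$, $A \leprecop A'$, $B \leprecop A'$, $V_1 \lepreccc V'$ are precisely what is in hand --- obtaining $W$ with $\val{W}$, $\agda{applyCast}\app V_1\app c\app a \longrightarrow^{*} W$ and $W \lepreccc V'$. Chaining the reductions gives $\cast{M_1}{c} \longrightarrow^{*} W$, so $V = W$ works.

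The induction is well-founded: every appeal to the induction hypothesis is on a strict sub-derivation of $M \lepreccc V'$, while \agda{applyCast\mbox{-}catchup} and the precision-structure fields are assumed. I expect the only real obstacle to be the bookkeeping in (\textsc{CastL}) and in the congruence-closure steps --- chaining multi-step reductions through the right frames and checking that re-applying a precision rule after reduction is legitimate, which it is because reduction preserves types and the cast structure does not change. Everything else is a mechanical re-application of the precision rule matching the shape of $V'$.
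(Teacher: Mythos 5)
Your proposal is correct and follows essentially the same route as the paper: induction on the derivation of $M \lepreccc V'$, with all cases routine except (\textsc{CastL}), which is handled exactly as in the paper by first catching up the subterm, then splitting on \agda{ActiveOrInert} — using rule \eqref{eq:wrap} (together with \agda{\mbox{$\leprecop\to$}lpit} and (\textsc{WrapL}) to re-establish precision) in the inert case, and the \agda{applyCast\mbox{-}catchup} field in the active case. Your write-up merely spells out details the paper's sketch leaves implicit (the vacuous cases, the congruence bookkeeping, and the precision re-establishment after wrapping), so there is nothing to fix.
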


\begin{proof}[Proof sketch]
  The proof is by induction on the precision relation $M \lepreccc
  V'$.  All other cases are trivial using induction hypotheses about
  subterms except the one for $\textsc{CastL}$.  Suppose $M \equiv
  \cast{N}{c}$.  By the induction hypothesis for $N$, we have $N
  \longrightarrow^{*} V$ for some $V$ and $V \lepreccc V'$.  For
  $\cast{V}{c}$, we cast on whether cast $c$ is inert or active: if
  inert, then $\cast{V}{c} \longrightarrow \castInert{V}{i}$ where $i
  : \agda{Inert} \app c$ by rule \key{wrap}; if active, then it is
  proved by the field \agda{applyCast\mbox{-}catchup} of
  \agda{CastStructWithPrecision}
  (Figure~\ref{fig:CastStructWithPrecision}).
\end{proof}

\noindent Next we prove that substitution preserves precision.

\begin{lemma}[Substitution Preserves $\lepreccc$]
  \label{lem:subst-pres-prec}
  If $\sigma \leprecop^s \sigma'$ and $N \lepreccc N'$, \\
  then $\agda{subst} \app \sigma \app N \lepreccc \agda{subst} \app
  \sigma' \app N'$.
\end{lemma}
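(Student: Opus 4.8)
The plan is to prove the lemma by induction on the derivation of $N \lepreccc N'$, after unfolding how $\subst{\sigma}{-}$ acts on each term former. Since simultaneous substitution is homomorphic on every construct that does not bind a variable, the cases \textsc{Lit}, \textsc{App}, \textsc{If}, \textsc{Cons}, \textsc{Fst}, \textsc{Snd}, \textsc{InL}, \textsc{InR}, \textsc{Cast}, \textsc{CastL}, \textsc{CastR}, \textsc{Wrap}, \textsc{WrapL}, \textsc{WrapR}, and \textsc{Blame} are all discharged the same way: push $\subst{\sigma}{-}$ and $\subst{\sigma'}{-}$ inside, invoke the induction hypotheses on the immediate subterms, and re-apply the very same precision rule. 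The only subtlety is the side conditions attached to these rules --- for instance the type-precision premises $A \leprecop A'$ and $B \leprecop B'$ of \textsc{Cast}, the $A \neq \Unk$ condition of \textsc{WrapR}, the $(B = \Unk) \to (B' = \Unk)$ condition of \textsc{Wrap}, and the annotation premise of \textsc{Blame}. All of these refer only to types occurring in the two derivations, and since simultaneous substitution preserves typing (Proposition~\ref{lem:subst}) and leaves those types unchanged, each such side condition transports verbatim.

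Two cases carry the actual content. The first is the variable case: here $N$ and $N'$ are the same de~Bruijn index $x$, and we must show $\sigma(x) \lepreccc \sigma'(x)$. This is an auxiliary lemma about related substitutions, proved by induction on $\sigma \leprecop^s \sigma'$: in the $(\sigma_0)$ case, $\sigma = \substz{M}$ and $\sigma' = \substz{M'}$ with $M \lepreccc M'$, so lookup at $\Zero$ yields $M \lepreccc M'$ and lookup at $\Suc\,y$ yields a variable related to itself by \textsc{Var}; in the \textsc{Exts} case, lookup at $\Zero$ again gives a variable related to itself, while lookup at $\Suc\,y$ gives $\rename{\Suc}{(\sigma\,y)} \lepreccc \rename{\Suc}{(\sigma'\,y)}$, which follows from the inner induction hypothesis $\sigma\,y \lepreccc \sigma'\,y$ together with a companion lemma that renaming preserves $\lepreccc$. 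The second is the pair of binding cases, \textsc{Lam} and \textsc{Case}: unfolding $\subst{\sigma}{-}$ there replaces the body substitution by $\exts{\sigma}$ (and $\exts{\sigma'}$) and leaves the binder's type annotation untouched; the \textsc{Exts} constructor of $\leprecop^s$ immediately gives $\exts{\sigma} \leprecop^s \exts{\sigma'}$, so the induction hypotheses on the bodies apply and re-applying \textsc{Lam} (resp.\ \textsc{Case}) closes the case. Note that this induction and that of the auxiliary variable-lookup lemma are independent --- neither invokes the main lemma --- so there is no circularity.

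I expect the main obstacle to be the companion fact that renaming preserves $\lepreccc$, which the \textsc{Exts} sub-case of the variable-lookup lemma forces us to prove as well. That fact is again a straightforward induction on the precision derivation, with its binding cases handled through a renaming-extension fact analogous to Lemma~\ref{lem:ext}; the only real care is in the intrinsically-typed setting, where one must thread the parallel context renamings/substitutions so that the typing side conditions of the precision rules line up --- but since $\agda{rename}$ and $\agda{subst}$ are already established to be type-preserving (Lemma~\ref{lem:rename}, Proposition~\ref{lem:subst}), nothing beyond routine bookkeeping is needed. All remaining cases are pure congruences.
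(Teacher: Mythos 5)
Your proposal is correct and follows the same route as the paper, whose proof is simply an induction on the precision derivation $N \lepreccc N'$ (with the variable and binder cases handled exactly as you describe, via a lookup lemma for $\sigma \leprecop^s \sigma'$ and a companion renaming-preserves-precision fact in the Agda development). The extra detail you supply about side conditions and the auxiliary lemmas is consistent with that mechanized proof.
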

\begin{proof}[Proof sketch]
  The proof is by induction on the precision relation $N \lepreccc
  N'$.
\end{proof}

\noindent We come to the proof of the main lemma, that a program
simulates any more-precise version of itself.

\begin{lemma}[Simulation of More Precise Programs]
  \label{lem:simulation-more-precise}
  Suppose $M_1 : \emptyset \vdash A$ and $M_1' : \emptyset \vdash A'$
  and $M_2' : \emptyset \vdash A'$. If $M_1 \lepreccc M_1'$ and $M_1'
  \longrightarrow M_2'$, then $M_1 \longrightarrow^{*} M_2$ and $M_2
  \lepreccc M_2'$ for some $M_2$.
\end{lemma}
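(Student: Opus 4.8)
The plan is to induct on the precision derivation $M_1 \lepreccc M_1'$ and, in each case, analyze how the more-precise term takes the step $M_1' \longrightarrow M_2'$.  Four precision rules are immediately vacuous: in (\textsc{Lit}), (\textsc{Var}), (\textsc{Lam}), and (\textsc{Blame}) the term $M_1'$ is a value or a bare $\blame{\ell}{}$ and does not reduce ((\textsc{Var}) cannot even arise, since $M_1'$ is closed).  The six rules carrying a single subterm premise $M \lepreccc M'$ --- (\textsc{CastL}), (\textsc{WrapL}), (\textsc{CastR}), (\textsc{WrapR}), (\textsc{Cast}), (\textsc{Wrap}) --- break into two situations.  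If the step of $M_1'$ happens inside $M'$ (for (\textsc{CastL}) and (\textsc{WrapL}) this covers every step of $M_1'$; for the other four it is the congruence step \eqref{eq:xi} or exception step \eqref{eq:xi-blame} through the outer cast or wrap), I recurse with the induction hypothesis on $M \lepreccc M'$ and re-apply the same precision rule, or (\textsc{Blame}) if $M'$ stepped to $\blame{\ell}{}$, using Preservation so the type-precision side conditions still hold.  Otherwise the step is a top-level firing of the outer cast, which occurs only for (\textsc{Cast}) and (\textsc{CastR}), with $M' \equiv V'$ a value and $c'$ firing by \eqref{eq:wrap} or \eqref{eq:cast}; here I first use Lemma~\ref{prop:catchup} (Catchup to Value) to reduce the less-precise subterm to a value $V \lepreccc V'$, then discharge the goal with the matching field of \agda{CastStructWithPrecision} (Figure~\ref{fig:CastStructWithPrecision}): \agda{sim\mbox{-}wrap}/\agda{sim\mbox{-}cast} for (\textsc{Cast}) and \agda{castr\mbox{-}wrap}/\agda{castr\mbox{-}cast} for (\textsc{CastR}), whose hypotheses are exactly the type-precision side conditions of the rule.

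The real work is in the congruence precision rules (\textsc{App}), (\textsc{If}), (\textsc{Cons}), (\textsc{Fst}), (\textsc{Snd}), (\textsc{Case}).  A congruence step of $M_1'$ into a subterm is again handled by the induction hypothesis, using Lemma~\ref{prop:catchup} to first bring any preceding sibling the frame requires to be a value down to a value; an \eqref{eq:xi-blame} step is answered by taking zero steps on the less-precise side, since $M_1 \lepreccc \blame{\ell}{}$ holds by (\textsc{Blame}).  The remaining steps --- $\beta$-reduction of applications, conditionals, projections, and case analysis, possibly preceded by an $\eta$-style cast elimination --- arise only for (\textsc{App}), (\textsc{If}), (\textsc{Fst}), (\textsc{Snd}), (\textsc{Case}).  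For these I use Lemma~\ref{prop:catchup} to reduce the relevant less-precise subterms to values and invert value-precision to learn their shapes.  If the caught-up value has the same outer shape as its more-precise counterpart --- a $\lambda$-abstraction, a \key{cons}, or an injection --- I take the matching step and close the goal using Lemma~\ref{lem:subst-pres-prec} (Substitution Preserves $\lepreccc$), via the $(\sigma_0)$ substitution-precision rule, for the substituting steps \eqref{eq:beta}, \eqref{eq:beta-caseL}, \eqref{eq:beta-caseR}, \eqref{eq:case-cast-alt}, and by inspection for \eqref{eq:delta}, \eqref{eq:beta-true}, \eqref{eq:beta-false}, \eqref{eq:beta-fst}, \eqref{eq:beta-snd}.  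If instead the caught-up value carries an extra inert-cast wrapper $\castInert{V_0}{c}$ --- ruled out at base type by \agda{baseNotInert}, but genuinely possible at function, product, and sum types --- the fields $\agda{InertCross}{\to}$, $\agda{InertCross}{\times}$, $\agda{InertCross}{+}$ of \agda{PreCastStruct} make $c$ a cross cast, so the appropriate $\eta$-style rule \eqref{eq:fun-cast-alt}, \eqref{eq:fst-cast-alt}, \eqref{eq:snd-cast-alt}, or \eqref{eq:case-cast-alt} fires and decomposes $c$ via \agda{dom}/\agda{cod} (resp.\ \agda{fst}/\agda{snd}, \agda{inl}/\agda{inr}); the precision-inversion fields \agda{lpii\mbox{$\to\leprecop$}}, \agda{lpit\mbox{$\to\leprecop$}}, \agda{lpti\mbox{$\to\leprecop$}} of \agda{PreCastStructWithPrecision} then supply the type-precision facts about the decomposed casts needed to re-introduce the resulting casts with (\textsc{Cast}), (\textsc{CastL}), (\textsc{CastR}), (\textsc{Wrap}), or (\textsc{WrapR}).

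I expect this last situation to be the main obstacle.  Eliminating one inert-cast wrapper leaves a fresh redex inside the contractum --- $\eta$-expanding a wrapped function re-exposes an application at the bound type, and likewise for pairs and sums --- so the less-precise side must be driven through several more reduction steps before it reaches a term related to $M_2'$.  Formalizing this needs a nested induction, on the structure of the caught-up value (equivalently, on its nesting depth of inert wrappers), amounting to an auxiliary ``simulation of elimination'' lemma for each eliminator; for \key{case} it again invokes Lemma~\ref{lem:subst-pres-prec}, and a companion fact that renaming preserves $\lepreccc$, because \eqref{eq:case-cast-alt} is defined by substitution.  Everything else is routine bookkeeping, threading reductions through the induction hypotheses and reassembling precision derivations with the same rule used on the more-precise side.
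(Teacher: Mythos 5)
Your plan matches the paper's proof: both induct on the precision derivation, case-split on the step taken by the more-precise term, use Lemma~\ref{prop:catchup} to catch the less-precise subterms up to values, close the substituting steps with Lemma~\ref{lem:subst-pres-prec}, discharge the top-level cast firings with \agda{sim\mbox{-}cast}/\agda{sim\mbox{-}wrap} (and \agda{castr\mbox{-}cast}/\agda{castr\mbox{-}wrap} for (\textsc{CastR})), and handle the extra inert-wrapper values via the cross-cast decomposition with a nested induction on the wrapped value's precision derivation --- exactly the structure of the paper's Agda proof sketch. No gaps of substance; the obstacle you flag (the fresh redex after eliminating a wrapper) is precisely where the paper also recurses.
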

\begin{proof}[Proof sketch]
  The proof is by induction on the precision relation $M_1 \lepreccc
  M_1'$. The full proof is in Agda; we briefly describe a few cases
  here.
  \begin{description}
  \item[Application ($M_1 \equiv L \app M$ and $M_1' \equiv L' \app M'$).]
    We case on the reduction $L' \app M' \longrightarrow M_2'$, which yields 7 sub cases:
    \begin{description}
    \item[Subcase $\xi$, where frame $F = (\Box \app \_)$.]
      By the induction hypothesis about the precision between $L$ and $L'$,
      there exists $L_2$ such that $L \longrightarrow^{*} L_2$ and $L_2$ satisfies the precision relation.
      Since multi-step reduction is a congruence, there exists $M_2 \equiv L_2 \app M$
      that satisfies $M_1 \longrightarrow^{*} M_2$ and $M_2 \lepreccc M_2'$.
    \item[Subcase $\xi$, where frame $F = (\_ \app \Box)$.]
      By Proposition~\ref{prop:catchup}, there exists $L_2$ such that:
      1) $L_2$ is a value;
      2) $L \longrightarrow^{*} L_2$;
      3) $L_2 \lepreccc L'$.
      By the induction hypothesis about $M$ and $M'$, there exists $N_2$ such that
      $M \longrightarrow^{*} N_2$ and $N_2$ satisfies the precision relation. Since multi-step reduction
      is a congruence and is transitive, there exists $M_2 \equiv L_2 \app N_2$ that satisfies
      $M_1 \longrightarrow^{*} M_2$ and $M_2 \lepreccc M_2'$.
    \item[Subcase $\xi\mhyphen\key{blame}$, where frame $F = (\Box \app \_)$.]
      $L \app M$ itself satisfies the precision relation since blame is more precise
      than any term, so there exists $M_2 \equiv L \app M$.
    \item[Subcase $\xi\mhyphen\key{blame}$, where frame $F = (\_ \app \Box)$.]
      Same as the previous case.
    \item[Subcase $\beta$.] Using Proposition~\ref{prop:catchup} twice,
      $L \app M \longrightarrow^{*} V \app W$, where $V$, $W$ are both
      values and $V \lepreccc L'$, $W \lepreccc M'$. We know that
      $L' \equiv \lambda \, N'$ for some $N'$ due to $\beta$.
      By induction on the precision relation $V \lepreccc \lambda \, N'$ and
      inversion on $\agda{Value} \app V$, it further generates two sub sub-cases:
      \begin{description}
      \item[Sub-subcase: $V$ is $\mathtt{V}\lambda$]
        Suppose $V \equiv \lambda \, N$, then there exists $M_2 \equiv N [ W ]$
        that satisfies both the reduction (from $V \app W$ by $\beta$) and the precision.
      \item[Sub-subcase: $V$ is $\mathtt{Vwrap}$]
        Suppose $V \equiv \castInert{V_1}{i}$ for some value $V_1$ and $i : \agda{Inert} \app c$:
        \[
        \castInert{V_1}{i} \app W \longrightarrow \cast{(V_1 \app \cast{W}{\agda{dom} \app c})}{\agda{cod} \app c}
        \text{, by } \key{fun\mbox{-}cast}.
        \]
        By using Proposition~\ref{prop:catchup} and congruence of reduction:
        \[
        \cast{(V_1 \app \cast{W}{\agda{dom} \app c})}{\agda{cod} \app c} \longrightarrow^{*} \cast{V_1 \app W_1}{\agda{cod} \app c}
        \]
        for some value $W_1$. Note that $V_1$, $W_1$ are now both values,
        about which by induction hypothesis there exists $N$ such that $V_1 \app W_1 \longrightarrow^{*} N$
        and $N$ satisfies the precision relation.
      \end{description}
    \item[Subcase $\delta$.] Similar to the previous case,
      by repeatedly using Proposition~\ref{prop:catchup} (``catch-up'') and the induction hypothesis.
    \item[Subcase $\key{fun{-}cast}$.] Similar to the previous case,
      by repeatedly using Proposition~\ref{prop:catchup} (``catch-up'') and the induction hypothesis.
    \end{description}
  \item[Cast ($M_1 \equiv \cast{M}{c}$ and $M_1' \equiv \cast{M'}{c'}$).]
    We case on the reduction $\cast{M'}{c'} \longrightarrow M_2'$, which yields 4 sub cases:
    \begin{description}
    \item[Subcase $\xi$.] By the induction hypothesis about subterms $M$ and $M'$ and that
      multi-step reduction is a congruence.
    \item[Subcase $\xi\mhyphen\key{blame}$.] $\cast{M}{c}$ itself satisfies the precision
      relation since blame is more precise than any term.
    \item[Subcase $\key{cast}$.] By Proposition~\ref{prop:catchup} and that $M'$ is value, there exists
      value $V$ such that $M \longrightarrow^{*} V$ and $V \lepreccc M'$. Then this case
      is directly proved by the field \agda{sim\mbox{-}cast} of \agda{CastStructWithPrecision}
      (Figure~\ref{fig:CastStructWithPrecision}).
    \item[Subcase $\key{wrap}$.] Similar to the previous case, first use Proposition~\ref{prop:catchup}
      and then use field \agda{sim\mbox{-}wrap} of \agda{CastStructWithPrecision}
      (Figure~\ref{fig:CastStructWithPrecision}).
    \end{description}
    Similarly, fields \agda{castr\mbox{-}cast} and \agda{castr\mbox{-}wrap} of \agda{CastStructWithPrecision}
    are used in the case for \key{castr}, which is omitted since the idea is the same as the
    \key{cast} case described above.
  \item[Wrap ($M_1 \equiv \castInert{M}{i}$ and $M_1' \equiv \cast{M'}{i'}$).]
    We case on the reduction $\castInert{M'}{i'} \longrightarrow M_2'$, which yields 2 sub cases:
    \begin{description}
    \item[Subcase $\xi$.] By the induction hypothesis about subterms $M$ and $M'$ and that
      multi-step reduction is a congruence.
    \item[Subcase $\xi\mhyphen\key{blame}$.] $\castInert{M}{i}$ itself satisfies the precision
      relation since blame is more precise than any term.
    \end{description}
  \end{description}
\end{proof}

\section{Compilation of GTLC to $\CC(\CastOp)$.}
\label{sec:gtlc-to-cc}

This section is parameterized by the cast representation $\CastOp$ and
by a cast constructor, written $\coerce{A}{B}{\ell}$, that builds a
cast from a source type $A$, target type $B$, blame label $\ell$, and
(implicitly) a proof that $A$ and $B$ are consistent.

The compilation functions $\compile{}{-}$ and $\compilenew{}{-}$ are
defined in Figure~\ref{fig:compile-gtlc-cc} and map a well-typed term
of the GTLC to a well-typed term of the Parameterized Cast Calculus
$\CC(\CastOp)$ or $\CC'(\CastOp)$, respectively. The two functions are
the same except in the equation for \key{case}, so we only show that
equation for $\compilenew{}{-}$. The Agda type signatures of the
compilation functions ensure that they are type preserving.

\begin{figure}[tbp]
  \fbox{$\compile{-}{-} : \forall \Gamma A. \, (\Gamma \vdash_G A) \to (\Gamma \vdash A)$}\\
  \fbox{$\compilenew{-}{-} : \forall \Gamma A. \, (\Gamma \vdash_G A) \to (\Gamma \vdash' A)$}
\begin{align*}
\compile{\Gamma}{\$ k} &= k \\
\compile{\Gamma}{x} &= x \\
\compile{\Gamma}{\lambda[A]\, M} &= \lambda \, \compile{}{M}\\
\compile{\Gamma}{(L \app M)_\ell} &=
    \cast{\compile{\Gamma}{L}}{c} \app \cast{\compile{\Gamma}{M}}{d}\\
  & \text{where } L : \Gamma \vdash_G A, M : \Gamma \vdash_G B,
   \matchfun{A}{A_1}{A_2}, \\
  & c = \coerce{A}{(A_1 \tu A_2)}{\ell},  \text{and } d = \coerce{B}{A_1}{\ell}\\
\compile{\Gamma}{\key{if}_\ell \app L \app M \app N }] &=
        \key{if}\app (\cast{\compile{}{L}}{c})
         \app (\cast{\compile{}{M}}{d_1})
         \app (\cast{\compile{}{N}}{d_2}) \\
      & \text{where } L : \Gamma \vdash_G A, 
          M : \Gamma \vdash_G B_1, N : \Gamma \vdash_G B_2, \\
          & c = \coerce{A}{\Bool}{\ell},\\
      &  d_1 = \coerce{B_1}{B_1 \sqcup B_2}{\ell}, 
          d_2 = \coerce{B_2}{B_1 \sqcup B_2}{\ell} \\
\compile{\Gamma}{\key{cons}\app M \app N}] &= \key{cons}\app \compile{}{M} \app \compile{}{N} \\
  \compile{\Gamma}{\pi_i^\ell \app M}] &= \pi_i \app (\cast{\compile{}{M}}{c})\\
    & \text{where } M : \Gamma \vdash_G A, \matchpair{A}{A_1}{A_2},
       c = \coerce{A}{A_1 \times A_2}{\ell} \\
\compile{\Gamma}{\key{inl}[B]\app M}] &= \key{inl} \app \compile{}{M} \\
\compile{\Gamma}{\key{inr}[A]\app M}] &= \key{inr} \app \compile{}{M} \\
\compile{\Gamma}{\key{case}_\ell[B_1,C_1] \app L \app M \app N}] &=
  \key{case}\app \cast{\compile{}{L}}{c} \app (\lambda \cast{\compile{}{M}}{d_1})
  \app (\lambda \cast{\compile{}{N}}{d_2}) \quad \text{(for $\CC(\CastOp)$)}\\
  & \text{where } L : \Gamma \vdash_G A, M : \Gamma \cdot B_1 \vdash_G B_2, N : \Gamma \cdot C_1\vdash_G C_2\\
  & c = \coerce{A}{B_1 + C_1}{\ell} \\
  & d_1 = \coerce{B_2}{B_2 \sqcup C_2}{\ell}, d_2 = \coerce{C_2}{B_2 \sqcup C_2}{\ell}\\
\compilenew{\Gamma}{\key{case}_\ell[B_1,C_1] \app L \app M \app N}] &=
  \key{case}\app \cast{\compile{}{L}}{c} \app \cast{\compile{}{M}}{d_1}
  \app \cast{\compile{}{N}}{d_2} \quad \text{(for $\CC'(\CastOp)$)}
\end{align*}
\caption{Compilation from GTLC to $\CC(\CastOp)$ and $\CC'(\CastOp)$.}
\label{fig:compile-gtlc-cc}
\end{figure}

\begin{lemma}[Compilation Preserves Types]\ 
  \begin{enumerate}
  \item If $M : \Gamma \vdash_G A$, then $\compile{}{M} : \Gamma \vdash A$.
  \item If $M : \Gamma \vdash_G A$, then $\compilenew{}{M} : \Gamma \vdash' A$.
  \end{enumerate}
\end{lemma}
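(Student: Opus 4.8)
The plan is to prove both parts by structural induction on the GTLC typing derivation $M : \Gamma \vdash_G A$, which is exactly the recursion that defines $\compile{}{-}$ and $\compilenew{}{-}$. Because both calculi are intrinsically typed, the type-preservation claim is the declared codomain of the compilation functions, so ``proving the lemma'' amounts to checking that each defining equation in Figure~\ref{fig:compile-gtlc-cc} really lands in the stated target type; in Agda this is discharged automatically once the definition elaborates. The cast-free cases — constants $\$ k$, variables $` x$, $\lambda$-abstractions, and the $\key{cons}$, $\key{inl}$, $\key{inr}$ constructors — go through immediately: each maps to the identically-shaped constructor of $\CC(\CastOp)$ (resp.\ $\CC'(\CastOp)$), and the variable and constant cases reuse the shared de~Bruijn and primitive-type machinery, so the induction hypotheses on the subterms suffice.

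The real work is in the eliminator cases — application, $\key{if}$, $\pi_i$, and $\key{case}$ — where the compiled term wraps subterms in cast applications $\cast{-}{c}$ with $c$ built by the cast constructor $\coerce{A}{B}{\ell}$, which demands a proof that $A$ and $B$ are consistent. I would therefore first establish three small facts about the relations of Figure~\ref{fig:gradual-types}: (i) consistency is symmetric; (ii) matching implies consistency with the matched type, i.e.\ $\matchfun{A}{A_1}{A_2}$ yields $\consis{A}{A_1 \tu A_2}$ and similarly for $\triangleright$ against product and sum types, proved most cleanly by casing on the matching derivation since $\Unk$ is consistent with everything and the remaining case is reflexivity; and (iii) each argument of a join is consistent with the join, i.e.\ from $cn : \consis{B_1}{B_2}$ one obtains $\consis{B_1}{\preclub{B_1}{B_2}}$ and $\consis{B_2}{\preclub{B_1}{B_2}}$, by induction on $cn$ following the clauses of $\bigsqcup$. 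Given these, each cast insertion type-checks: in $(L\app M)_\ell$ the function-position cast to $A_1 \tu A_2$ uses (ii) and the argument cast $\coerce{B}{A_1}{\ell}$ uses the rule's premise $A_1 \sim B$ through (i); in $\key{if}_\ell$ the head cast to $\Bool$ is the premise $A \sim \Bool$ and the two branch casts to $\preclub{B_1}{B_2}$ use (iii); in $\pi_i^\ell$ the cast to $A_1 \times A_2$ uses (ii); and in $\key{case}_\ell$ the head cast to $B_1 + C_1$ follows by inversion on the matching premise together with the $\mathtt{Sum}{\sim}$ rule and the rule's consistency premises, while the branch casts to $\preclub{B_2}{C_2}$ use (iii).

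For part (2) the argument is the same word for word except in the $\key{case}$ equation, where $\compilenew{}{-}$ compiles the branches $M$ and $N$ directly rather than wrapping them in $\lambda$-abstractions. The $\CC'(\CastOp)$ $\key{case}$ rule types its two branches in the contexts extended by the left and right components of the scrutinee's sum type — here $\Gamma \cdot B_1$ and $\Gamma \cdot C_1$ — which is precisely where the GTLC $\key{case}_\ell[B_1,C_1]$ rule already types $M$ and $N$; so the induction hypotheses give $\compile{}{M} : \Gamma \cdot B_1 \vdash' B_2$ and $\compile{}{N} : \Gamma \cdot C_1 \vdash' C_2$, and wrapping these in $\coerce{B_2}{\preclub{B_2}{C_2}}{\ell}$ and $\coerce{C_2}{\preclub{B_2}{C_2}}{\ell}$ (justified by (iii)) produces exactly the branch types the $\CC'$ rule requires for the result type $\bigsqcup(bc) = \preclub{B_2}{C_2}$.

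I do not expect a genuine obstacle; the lemma is essentially mechanical. The one point that needs care — and the only reason it is not entirely automatic — is threading the consistency witnesses through the cast constructor: in particular, since consistency is not transitive, for the $\key{case}$ head cast one must build $\consis{A}{B_1 + C_1}$ directly from the matching premise and the $\mathtt{Sum}{\sim}$ rule rather than trying to chain $\consis{A}{A_1 + A_2}$ with $\consis{A_1 + A_2}{B_1 + C_1}$. Everything else is bookkeeping that the Agda type-checker verifies for us.
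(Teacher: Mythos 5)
Your proposal is correct and matches the paper's approach: the paper discharges this lemma simply by embedding the statement in the Agda type signatures of $\compile{}{-}$ and $\compilenew{}{-}$ and letting the type-checker verify the definitions, which is exactly the induction-following-the-definition argument you describe. The consistency facts you spell out (symmetry, matching implies consistency, consistency of each side with the join) are precisely the witnesses the paper leaves implicit in the cast constructor's consistency argument.
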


The compilation function also preserves precision, which is an
important lemma in the proof of the dynamic gradual guarantee.  This
lemma is parameterized over the \agda{PreCastStructWithPrecision}
structure (Figure~\ref{fig:PreCastStructWithPrecision}).

\begin{lemma}[Compilation Preserves Precision]
  \label{lem:compile-pres-prec}
  Suppose $M : \Gamma \vdash_G A$ and $M' : \Gamma' \vdash_G A'$.
  If $\leprec{\Gamma}{\Gamma'}$ and $M \leprecgtlc M'$,
  then $\compilenew{}{M} \lepreccc \compilenew{}{M'}$ and $\leprec{A}{A'}$.
\end{lemma}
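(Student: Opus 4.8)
The plan is an induction on the derivation of $M \leprecgtlc M'$, proving simultaneously that $\compilenew{}{M} \lepreccc \compilenew{}{M'}$ and that the (intrinsic) types satisfy $\leprec{A}{A'}$. A useful preliminary observation is that the compilation functions of Figure~\ref{fig:compile-gtlc-cc} insert only ordinary cast applications $\cast{-}{-}$ (never wrapped casts $\castInert{-}{-}$), and that whenever a cast is inserted around a subterm of $M$ a cast is inserted around the corresponding subterm of $M'$ as well. Consequently, of the three cast rules in Figure~\ref{fig:cc'-prec} only (\textsc{Cast}) is ever needed, and its obligations are exactly source-type precision, target-type precision, and recursive term precision --- the last being an induction hypothesis.

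Before the main induction I would prove two auxiliary facts about the type-level operations used by compilation. \emph{Matching preserves precision}: if $\leprec{A}{A'}$, $\matchfun{A}{A_1}{A_2}$, and $\matchfun{A'}{A_1'}{A_2'}$ then $\leprec{A_1}{A_1'}$ and $\leprec{A_2}{A_2'}$, and likewise for $\matchpair$ and $\matchsum$. The proof is by cases on $A$: if $A = \Unk$ then $A_1 = A_2 = \Unk$, which is $\leprecop$-least; otherwise $A$ is a function (resp.\ pair, sum) type, and since no precision rule places such a type below $\Unk$, $\leprec{A}{A'}$ forces $A'$ to have the same head constructor, and inversion on the precision derivation yields the component relations. \emph{Join preserves precision}: if $\leprec{B}{B'}$, $\leprec{C}{C'}$, $d : \consis{B}{C}$, and $d' : \consis{B'}{C'}$, then $\leprec{\bigsqcup(d)}{\bigsqcup(d')}$; this goes by induction on $d'$, using that $\leprec{C}{\Unk}$ forces $C = \Unk$ to handle the $\mathtt{UnkR}{\sim}$/$\mathtt{UnkL}{\sim}$ cases and the congruence rules for $\leprecop$ otherwise. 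The only other type-level facts required are reflexivity and the congruence rules for $\leprecop$, together with Lemma~\ref{lem:ctx-prec-var} (plus uniqueness of the type of a variable) for the variable case.

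The congruence cases ($\$ k$, $` x$, $\lambda$, $\key{cons}$, $\key{inl}$, $\key{inr}$) then go through mechanically: apply the induction hypotheses to the subterms, close the term goal with the matching structural rule of Figure~\ref{fig:cc'-prec}, and close the type goal reflexively or by a congruence rule for $\leprecop$; in the $\lambda$ case one additionally notes $\Gamma \cdot A \leprecop^\ast \Gamma' \cdot A'$, which holds because $\leprec{A}{A'}$ is a premise of the GTLC rule, so that the induction hypothesis for the body applies. For the elimination cases, take $(L \app M)_\ell \leprecgtlc (L' \app M')_{\ell'}$ as the representative. With $L$ of type $A_L$, $\matchfun{A_L}{A_1}{A_2}$, and $M$ of type $B$ (and primed analogues), compilation produces $\cast{\compilenew{}{L}}{\coerce{A_L}{A_1 \tu A_2}{\ell}} \app \cast{\compilenew{}{M}}{\coerce{B}{A_1}{\ell}}$. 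The induction hypotheses give $\compilenew{}{L} \lepreccc \compilenew{}{L'}$ with $\leprec{A_L}{A_L'}$ and $\compilenew{}{M} \lepreccc \compilenew{}{M'}$ with $\leprec{B}{B'}$; matching-preserves-precision gives $\leprec{A_1}{A_1'}$ and $\leprec{A_2}{A_2'}$, hence $\leprec{A_1 \tu A_2}{A_1' \tu A_2'}$ by congruence; two uses of (\textsc{Cast}) and one of (\textsc{App}) discharge the term goal, and $\leprec{A_2}{A_2'}$ is the type goal. The $\pi_i^\ell$ case is the same with $\matchpair$ in place of $\matchfun$; the $\key{if}_\ell$ case additionally invokes join-preserves-precision for the common target $B_1 \sqcup B_2$ of the branch casts and for the result type; and the $\key{case}_\ell$ case (where, for $\compilenew$, the branch casts are \emph{not} wrapped in a $\lambda$) uses the context extensions $\Gamma \cdot B_1 \leprecop^\ast \Gamma' \cdot B_1'$ and $\Gamma \cdot C_1 \leprecop^\ast \Gamma' \cdot C_1'$ --- valid because $\leprec{B_1}{B_1'}$ and $\leprec{C_1}{C_1'}$ are premises of the GTLC $\key{case}$ precision rule --- together with join-preserves-precision and the rule (\textsc{Case}).

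I expect the main obstacle to be the bookkeeping in the elimination cases rather than any conceptual difficulty: one must check that \emph{every} cast inserted by $\compilenew{}{-}$ has source and target types pairwise related by $\leprecop$ so that (\textsc{Cast}) applies. The delicate sub-case is when an eliminated subterm has type $\Unk$ on the less precise side but a genuine type constructor on the more precise side; there the casts inserted on the two sides differ in \emph{both} source and target, and the argument rests entirely on $\Unk$ being $\leprecop$-least plus the matching- and join-preservation lemmas above. Once those are established, each case is a routine assembly of induction hypotheses and the rules of Figure~\ref{fig:cc'-prec}, which is why the Agda development handles them largely uniformly.
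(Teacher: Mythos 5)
Your proposal is correct and follows what is evidently the intended argument: the paper states this lemma without a prose proof (it is discharged in the Agda development), and the natural mechanized proof is exactly your induction on $M \leprecgtlc M'$, using the observation that $\compilenew{}{-}$ inserts only unwrapped casts in matched positions on both sides (so only rule (\textsc{Cast}) of Figure~\ref{fig:cc'-prec} is needed) together with monotonicity of matching and of $\bigsqcup$ with respect to $\leprecop$. One minor gap in your sketch of the join lemma: when $d'$ is a $\mathtt{Fun}{\sim}$/$\mathtt{Pair}{\sim}$/$\mathtt{Sum}{\sim}$ proof but the less-precise side is $\Unk$ (so $d$ is $\mathtt{UnkL}{\sim}$ or $\mathtt{UnkR}{\sim}$), ``congruence rules'' do not apply directly and you need the upper-bound property of $\bigsqcup$ (or a suitably generalized induction hypothesis), which is routine but worth stating.
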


\section{A Half-Dozen Cast Calculi}
\label{sec:CC-instances}

We begin to reap the benefits of creating the Parameterized Cast
Calculus by instantiating it with six different implementations of
\agda{CastStruct} to produce six cast calculi.

\subsection{Partially-Eager ``D'' Casts with Active Cross Casts (EDA)}
\label{sec:EDA}

The cast calculus defined in this section corresponds to the original
one of \citet{Siek:2006bh}, although their presentation used a
big-step semantics instead of a reduction semantics and did not
include blame tracking. In the nomenclature of \citet{Siek:2009rt},
this calculus is partially-eager and uses the ``D'' blame tracking
strategy. As we shall see shortly, we categorize cross casts as
active, so we refer to this cast calculus as the partially-eager ``D''
cast calculus with active cross casts (EDA).

We define the EDA calculus and prove that it is type safe and blame
safe by instantiating the meta-theory of the Parameterized Cast
Calculus. This requires that we define instances of the structures
\agda{PreCastStruct}, \agda{CastStruct},
\agda{PreBlameSafe}, and
\agda{BlameSafe}.

We begin by defining the cast representation type $\CastOp$ as a data
type with a single constructor also named $\CastOp$ that takes two
types, a blame label, and a proof of consistency:
\\[1ex]
\fbox{$c :\Cast{A}{B}$}
\[
\inference{}
          {A \Rightarrow^\ell B  : \Cast{A}{B}} ~ A \sim B
\]

\noindent The cast constructor is defined as follows.
\[
  \coerce{A}{B}{\ell} = (A \Rightarrow^\ell B)
\]

\subsubsection{Reduction Semantics and Type Safety}
          
We categorize just the casts into $\Unk$, the injections, as inert
casts.\\[1ex]
\fbox{$\inert{c}$}
\[
\inference{}
          {\inert{(A \Rightarrow^\ell \Unk)}}~ A \neq \Unk
\]
We categorize casts between function, pair, and sum types
as cross casts.\\[1ex]
\fbox{$\cross{c}$}
\begin{gather*}
  \inference
    {}
    {\mathtt{Cross}\otimes : \cross{(A \otimes B \Rightarrow^\ell C \otimes D)}} ~ \otimes \in \{ \to , \times, + \} 
\end{gather*}
We categorize the identity, projection, and cross casts as active.\\[1ex]
\fbox{$\act{c}$}
\begin{gather*}
  \inference
      {}
      {\mathtt{ActId} : \act{(a \Rightarrow^\ell a)}}
  \quad
  \inference
    {}
    {\mathtt{ActProj} : \act{(\Unk \Rightarrow^\ell B)}}~B \neq \Unk
  \\[1ex]
  \inference
    {\cross{c}}
    {\mathtt{ActCross} : \act{c}}
\end{gather*}

\begin{lemma}
  \label{lem:simple-active-or-inert}
  For any types $A$ and $B$, $c : \Cast{A}{B}$ is either an active or
  inert cast.
\end{lemma}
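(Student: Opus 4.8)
The plan is to proceed by case analysis on the consistency proof $d : \consis{A}{B}$ that accompanies the cast $c = A \Rightarrow^\ell B$. The six constructors of the consistency relation pin down exactly how the shapes of $A$ and $B$ relate, so this single split gives a manageable, exhaustive set of cases, and in each case the target is to produce an element of $\agda{Active}\app c \uplus \agda{Inert}\app c$.

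First I would dispatch the ``structural'' cases. If $d = \mathtt{Fun}{\sim}\app d_1\app d_2$, then $A = A_1 \tu A_2$ and $B = B_1 \tu B_2$, so $c$ is a cross cast by $\mathtt{Cross}{\to}$ and hence active by $\mathtt{ActCross}$; the cases $d = \mathtt{Pair}{\sim}\app d_1\app d_2$ and $d = \mathtt{Sum}{\sim}\app d_1\app d_2$ are handled identically with $\otimes = \times$ and $\otimes = +$. If $d = \mathtt{Base}{\sim}[\Base]$, then $A = B = \Base$, which is atomic, so $c$ is active by $\mathtt{ActId}$.

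Next come the two cases that introduce $\Unk$. If $d = \mathtt{UnkL}{\sim}[B]$, then $A = \Unk$, and I split on whether $B = \Unk$: if so, $c = \Unk \Rightarrow^\ell \Unk$ is active by $\mathtt{ActId}$ (taking the atomic type to be $\Unk$); otherwise $B \neq \Unk$ and $c$ is active by $\mathtt{ActProj}$. If $d = \mathtt{UnkR}{\sim}[A]$, then $B = \Unk$, and I split on whether $A = \Unk$: if so, $c = \Unk \Rightarrow^\ell \Unk$ is again active by $\mathtt{ActId}$; if $A \neq \Unk$, then $c$ is inert by the unique introduction rule for the $\inert{c}$ predicate. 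In Agda, these ``is it $\Unk$?'' splits use the decidable equality on types.

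The only delicate point is the degenerate cast $\Unk \Rightarrow^\ell \Unk$, which is reachable from both $\mathtt{UnkL}{\sim}$ and $\mathtt{UnkR}{\sim}$: it must be classified as active via $\mathtt{ActId}$ and not as inert, since the inert rule explicitly demands a non-$\Unk$ source (and $\mathtt{ActProj}$ demands a non-$\Unk$ target), so neither of those alternatives applies. Classifying that case consistently is the one thing to be careful about; otherwise the argument is a routine exhaustive split, and uniqueness of the classification is not needed for the statement.
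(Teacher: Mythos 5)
Your proposal is correct and mirrors the paper's own proof: both proceed by case analysis on the consistency derivation, sending the structural cases ($\mathtt{Fun}{\sim}$, $\mathtt{Pair}{\sim}$, $\mathtt{Sum}{\sim}$) to $\mathtt{ActCross}$, $\mathtt{Base}{\sim}$ to $\mathtt{ActId}$, and splitting the two $\Unk$ cases on whether the other side is $\Unk$ (yielding $\mathtt{ActId}$, $\mathtt{ActProj}$, or the inert injection). Your extra remark about the degenerate cast $\Unk \Rightarrow^\ell \Unk$ is a fine clarification but does not change the argument.
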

\begin{proof}
  The cast $c$ must be of the form $A \Rightarrow^\ell B$ where $A \sim B$.
  We proceed by cases on $A \sim B$.
  \begin{description}
  \item[Case $\consis{A}{\Unk}$] If $A \equiv \Unk$, then $c$ is active
    by $\mathtt{ActId}$. Otherwise $c$ is inert.
  \item[Case $\consis{\Unk}{B}$] If $B \equiv \Unk$, then $c$ is active
    by $\mathtt{ActId}$. Otherwise $c$ is active by $\mathtt{ActProj}$.
  \item[Case $\consis{\Base}{\Base}$]
    $c$ is active by $\mathtt{ActId}$.
  \item[Case $\consis{A \otimes B}{A' \otimes B'}$]
    where $\otimes \in \{ \to, \times, + \}$.
    $c$ is active by $\mathtt{ActCross}$ and $\mathtt{Cross}\otimes$.
  \end{description}
\end{proof}

Next we show that inert casts into non-atomic types are cross casts.
\begin{lemma}\label{lem:simple-inert-cross}
  If $c : \Cast{A}{(B \otimes C)}$ and $\inert{c}$,
  then $\cross{c}$ and $A \equiv D \otimes E$ for some $D$ and $E$.
\end{lemma}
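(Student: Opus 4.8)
The plan is to observe that in the EDA calculus the predicate \agda{Inert} is extremely restrictive: its single constructor classifies a cast as inert only when it has the shape $A \Rightarrow^\ell \Unk$ with $A \neq \Unk$, i.e.\ only injections into the unknown type are inert. So I would proceed by inversion on the hypothesis $\inert{c}$. Since $c : \Cast{A}{(B \otimes C)}$, the target type of $c$ is $B \otimes C$ for some $\otimes \in \{\to,\times,+\}$, and in particular $B \otimes C$ is syntactically distinct from $\Unk$. But the \agda{Inert} constructor would force the target type to be $\Unk$, which is impossible. Hence the case is vacuous and both conclusions ($\cross{c}$ and $A \equiv D \otimes E$) follow trivially.

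Concretely, the single step is: case split on the derivation of $\inert{c}$; the only available rule demands $c = (A \Rightarrow^\ell \Unk)$, so matching against $c : \Cast{A}{(B \otimes C)}$ yields $B \otimes C = \Unk$, a contradiction that Agda discharges by absurd pattern matching. There is essentially no obstacle here — the lemma is vacuously true because EDA never treats a cast into a function, pair, or sum type as inert (all such cross casts are \emph{active} via $\mathtt{ActCross}$). The only thing to be careful about is that the statement is phrased with $\otimes$ ranging over the three connectives, so the inversion argument must be uniform in $\otimes$; but since none of $\Unk = B \to C$, $\Unk = B \times C$, $\Unk = B + C$ holds, a single absurd match covers all three simultaneously. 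This lemma is stated only so that the generic \agda{PreCastStruct} fields $\agda{InertCross}{\to}$, $\agda{InertCross}{\times}$, and $\agda{InertCross}{+}$ can be instantiated for EDA, and each of those instantiations is an immediate corollary of this vacuous fact.
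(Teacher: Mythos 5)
Your proposal is correct and matches the paper's own argument: in EDA the only inert casts are injections into $\Unk$, so a cast with target $B \otimes C$ cannot be inert and the lemma holds vacuously. The inversion/absurd-match detail you add is just the mechanized form of the same one-line observation.
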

\begin{proof}
  There are no inert casts whose target type is
  $B \otimes C$ (it must be $\Unk$), so this lemma is
  vacuously true.
\end{proof}

Continuing on the topic of cross casts, we define $\agda{dom}$,
$\agda{cod}$, etc. as follows. The second parameter $x$ is the
evidence that the first parameter is a cross cast.
\begin{align*}
  \dom{(A \to B \Rightarrow^\ell C \to D) \app x} &=
      C \Rightarrow^\ell A \\
  \cod{(A \to B \Rightarrow^\ell C \to D) \app x} &=
      B \Rightarrow^\ell D \\
  \agda{fst}\app(A \times B \Rightarrow^\ell C \times D) \app x &=
      A \Rightarrow^\ell C \\
  \agda{snd}\app(A \times B \Rightarrow^\ell C \times D) \app x &=
      B \Rightarrow^\ell D \\
  \agda{inl} \app (A + B \Rightarrow^\ell C + D) \app x &=
      A \Rightarrow^\ell C \\
  \agda{inr} \app (A + B \Rightarrow^\ell C + D) \app x &=
      B \Rightarrow^\ell D 
\end{align*}

We check that a cast to a base type is not inert.
\begin{lemma}\label{lem:simple-base-not-inert}
   A cast $c : \Cast{A}{b}$ is not inert.
\end{lemma}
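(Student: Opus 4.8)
The plan is to prove this by a single inversion on the $\inert{}$ predicate; no induction is required. First, observe that since $c : \Cast{A}{b}$, the cast must have the form $A \Rightarrow^\ell b$ for some blame label $\ell$ together with a proof $A \sim b$ — this shape is forced by the lone constructor of the EDA cast representation type introduced above. Restating the goal as $\neg\,\inert{c}$, assume for contradiction that $\inert{c}$ holds, and invert that assumption.

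The $\inert{}$ relation in this calculus has exactly one rule, whose conclusion has the shape $\inert{(A' \Rightarrow^{\ell'} \Unk)}$: every inert cast targets the unknown type. Unifying this conclusion with $\inert{(A \Rightarrow^\ell b)}$ forces $b \equiv \Unk$. But $b$ ranges over the base types $\Bases$, and $\Unk$ is an atomic type distinct from every element of $\Bases$, so $b \equiv \Unk$ is impossible. Hence no proof of $\inert{c}$ exists.

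In Agda this is discharged by pattern matching on the hypothetical inert proof: the single clause produces an equation between two distinct constructors of $\Types$, which Agda rejects via an absurd pattern. There is essentially no obstacle — exactly as with Lemma~\ref{lem:simple-inert-cross}, the statement holds vacuously because the target-type shapes of inert casts (always $\Unk$) and of base types never overlap. This lemma supplies the \agda{baseNotInert} field required by the \agda{PreCastStruct} instance for EDA.
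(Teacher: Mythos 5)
Your proof is correct and follows the same route as the paper, which simply observes that in EDA the sole inert-cast rule forces the target type to be $\Unk$, so a cast into a base type $b \neq \Unk$ cannot be inert. Your additional detail about inversion and the absurd pattern in Agda merely spells out what the paper leaves as ``easy to verify.''
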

\begin{proof}
  This is easy to verify because $b \neq \Unk$.
\end{proof}

\begin{proposition}\label{prop:simple-active-pre-cast-struct}
The EDA calculus is an instance of the $\agda{PreCastStruct}$
structure.
\end{proposition}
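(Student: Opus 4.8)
The plan is to populate the \agda{PreCastStruct} record field by field, with nearly every component already in hand from the constructions earlier in this section. First I would take $-\CastOp-$ to be the data type whose sole constructor is $A \Rightarrow^\ell B$ (carrying a witness $A \sim B$), and take \agda{Inert}, \agda{Active}, and \agda{Cross} to be the three inductive predicates defined by the inference rules above. The field \agda{ActiveOrInert} is then literally Lemma~\ref{lem:simple-active-or-inert}.

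Next I would discharge the three $\agda{InertCross}$ fields. Here the key observation is that, by inversion on the \agda{Inert} rule, every inert cast has the form $A \Rightarrow^\ell \Unk$ with $A \neq \Unk$; in particular no inert cast has a function, product, or sum type as its target. Hence $\agda{InertCross}{\to}$, $\agda{InertCross}{\times}$, and $\agda{InertCross}{+}$ all hold vacuously --- this is exactly Lemma~\ref{lem:simple-inert-cross}. For the same reason (an inert cast targets only $\Unk$, never a base type), \agda{baseNotInert} is Lemma~\ref{lem:simple-base-not-inert}.

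The remaining fields are the six cast-decomposition operators \agda{dom}, \agda{cod}, \agda{fst}, \agda{snd}, \agda{inl}, and \agda{inr}, which I would define by the equations displayed above. The one place requiring genuine care --- and the closest thing to an obstacle in this otherwise routine assembly --- is that building, say, the domain cast $C \Rightarrow^\ell A$ from a cross cast $A \to B \Rightarrow^\ell C \to D$ obliges us to produce a consistency witness $C \sim A$, since the $\Rightarrow$ constructor demands one. This witness comes from inverting the consistency proof packaged inside the given cross cast: from $(A \to B) \sim (C \to D)$, rule $\mathtt{Fun}{\sim}$ delivers precisely $C \sim A$ together with $B \sim D$, and the contravariant flip built into $\mathtt{Fun}{\sim}$ is exactly what makes the domain direction line up. The product and sum cases are analogous (with no flip), using $\mathtt{Pair}{\sim}$ and $\mathtt{Sum}{\sim}$; note the \agda{Cross} evidence $x$ is never actually consulted, because the syntactic shape of the cast already determines the output.

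Once all these ingredients are in place, the record is complete. In the Agda development this step amounts to writing the record literal and letting the type checker confirm that each field inhabits its declared type; since every ingredient is either a lemma already proved in this section or a one-line pattern match, nothing beyond the inversion bookkeeping above stands in the way.
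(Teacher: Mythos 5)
Your proposal is correct and matches the paper's development: the proposition is established exactly by assembling the record from Lemma~\ref{lem:simple-active-or-inert} (ActiveOrInert), Lemma~\ref{lem:simple-inert-cross} (the vacuous InertCross fields), Lemma~\ref{lem:simple-base-not-inert} (baseNotInert), and the displayed definitions of $\agda{dom}$, $\agda{cod}$, $\agda{fst}$, $\agda{snd}$, $\agda{inl}$, $\agda{inr}$, whose consistency witnesses come from inverting the packaged proof of $A \sim B$ (the flip in $\mathtt{Fun}{\sim}$ giving the contravariant domain case, as the paper itself notes). Nothing is missing.
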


We import and instantiate the definitions and lemmas from
Section~\ref{sec:cc-values-frames} (values and frames) and
Section~\ref{sec:eta-cast-reduction} (eta-like reduction rules).

Next we define the \agda{applyCast} function by cases on the proof
that the cast is active. The case below for $\mathtt{ActProj}$ relies
on Lemma~\ref{lem:canonical-star} to know that the term is of the form
$\cast{M}{A \Rightarrow^{\ell_1} \Unk}$.
\[
  \agda{applyCast} : \forall \Gamma A B.\, (M : \Gamma \vdash A) \to \agda{Value}\app M\to (c : \Cast{A}{B}) \to \agda{Active}\app c \to \Gamma \vdash B
\]
\[
\begin{array}{lllllll}
  \agda{applyCast} & M & v & A \Rightarrow^\ell A & \mathtt{ActId} &=&
     M \\
  \agda{applyCast} & \cast{M}{A \Rightarrow^{\ell_1} \Unk} & v & \Unk \Rightarrow^{\ell_2} B & \mathtt{ActProj} &=& 
  \begin{cases}
    \cast{M}{A \Rightarrow^{\ell_2} B} & \text{if } A \sim B\\
    \blame{\ell_2}{} & \text{otherwise}
  \end{cases}\\
  \agda{applyCast} &M& v & c & \mathtt{Act}{\otimes} &=&
    \agda{eta}{\otimes}\app M\app c \app \mathtt{Cross}\otimes
\end{array}
\]

\begin{proposition}
  \label{prop:peac-cast-struct}
The EDA calculus is an instance of the $\agda{CastStruct}$
structure.
\end{proposition}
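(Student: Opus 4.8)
The plan is to observe that the \agda{CastStruct} record extends \agda{PreCastStruct} by exactly one field, \agda{applyCast}, so by Proposition~\ref{prop:simple-active-pre-cast-struct} the only new obligation is to check that the \agda{applyCast} function defined just above is total and respects its declared type $\forall \Gamma A B.\,(M : \Gamma \vdash A) \to \agda{Value}\app M \to (c : \Cast{A}{B}) \to \agda{Active}\app c \to \Gamma \vdash B$. Since every EDA cast has the shape $A \Rightarrow^\ell B$ with a witness of $\consis{A}{B}$, and the \agda{Active} predicate has exactly three constructors, I would establish totality and type preservation by cases on the proof $a : \act{c}$.

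The two easy cases: for $\mathtt{ActId}$ the cast is $a \Rightarrow^\ell a$, so source and target coincide and returning $M$ unchanged is immediately well typed. For $\mathtt{Act}{\otimes}$ the cast is a cross cast $A \otimes B \Rightarrow^\ell C \otimes D$ with $\otimes \in \{\to,\times,+\}$, and the result is $\agda{eta}{\otimes}\app M\app c\app \mathtt{Cross}\otimes$; the $\agda{eta}{\otimes}$ functions of Section~\ref{sec:eta-cast-reduction} are already assigned types producing a term in $\Gamma \vdash C \otimes D$, which is exactly the target, so nothing further is needed here.

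The case that carries all the content is $\mathtt{ActProj}$, where $c = \Unk \Rightarrow^{\ell_2} B$ with $B \neq \Unk$ and the argument $M : \Gamma \vdash \Unk$ is a value. Here I would first apply the Canonical Form lemma (Lemma~\ref{lem:canonical-star}) to deduce that $M \equiv \cast{M'}{A \Rightarrow^{\ell_1} \Unk}$ for some $M' : \Gamma \vdash A$ with $A \neq \Unk$, which is what justifies the ``peel off the injection'' step in the definition of \agda{applyCast}. I would then use decidability of the consistency relation $\consis{A}{B}$ (a routine structural recursion) to split: when $\consis{A}{B}$ holds, $\cast{M'}{A \Rightarrow^{\ell_2} B}$ is well typed because the consistency witness is precisely what the cast constructor $\Rightarrow$ requires; when it fails, $\blame{\ell_2}{}$ inhabits $\Gamma \vdash B$ for any $B$.

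The main obstacle is this $\mathtt{ActProj}$ case — it is the single place where \agda{applyCast} must inspect the term, which is exactly why this field lives in \agda{CastStruct} rather than \agda{PreCastStruct}, and discharging it cleanly relies on Lemma~\ref{lem:canonical-star} for the inversion together with a decision procedure for $\sim$. The remaining work — assembling the record from the \agda{PreCastStruct} instance and \agda{applyCast} — is bookkeeping; in the Agda development the whole proposition amounts to having Agda typecheck the definition of \agda{applyCast} against its signature and then packaging the fields together.
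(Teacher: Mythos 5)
Your proposal matches the paper's treatment: the paper also defines \agda{applyCast} by cases on the proof of $\act{c}$, invokes Lemma~\ref{lem:canonical-star} in the $\mathtt{ActProj}$ case to expose the injection, branches on whether $A \sim B$ holds, hands cross casts to the $\agda{eta}{\otimes}$ functions, and then obtains the proposition by packaging this \agda{applyCast} (type-checked in Agda against its declared signature) with the \agda{PreCastStruct} instance of Proposition~\ref{prop:simple-active-pre-cast-struct}. No genuine difference in approach.
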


We instantiate from $\CC(\CastOp)$ the reduction semantics
(Section~\ref{sec:dynamic-semantics-CC}) and proof of type safety
(Section~\ref{sec:type-safety-CC}). 

\begin{definition}[Reduction for EDA]
  The reduction relation $M \longrightarrow N$ for the EDA calculus
  is the reduction relation of $\CC(\CastOp)$ instantiated with
  EDA's instance of \agda{CastStruct}.
\end{definition}

\begin{corollary}[Preservation for EDA]
  \label{thm:EDA-preservation}
  If $\Gamma \vdash M : A$  and $M \longrightarrow M'$,
  then $\Gamma \vdash M' : A$.
\end{corollary}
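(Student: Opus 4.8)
The plan is to obtain this corollary with essentially no new work, purely by instantiation. Recall that Section~\ref{sec:type-safety-CC} already states the Preservation theorem for the Parameterized Cast Calculus $\CC(\CastOp)$, and that this theorem holds for an \emph{arbitrary} \agda{CastStruct}: it is a direct consequence of the type we give to the reduction relation. That is, because the relation $M \longrightarrow N$ in Figure~\ref{fig:param-cast-reduction} is itself indexed by derivations $\Gamma \vdash A$, a reduction step cannot change the type of a term, and Agda has checked exactly this when elaborating the definition.

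First I would invoke Proposition~\ref{prop:peac-cast-struct}, which establishes that the EDA calculus --- with its cast representation $A \Rightarrow^\ell B$, its \agda{Active}, \agda{Inert}, and \agda{Cross} predicates, its $\agda{dom}$, $\agda{cod}$, $\agda{fst}$, $\agda{snd}$, $\agda{inl}$, $\agda{inr}$ operators, and its $\agda{applyCast}$ function --- forms a genuine instance of \agda{CastStruct}. Then, by the Definition of reduction for EDA, the relation $M \longrightarrow N$ for EDA is by definition the $\CC(\CastOp)$ reduction relation specialized to this instance. Applying the parameterized Preservation theorem to this instance therefore yields precisely the statement of the corollary, which completes the proof.

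The only place where any content actually resides is hidden inside the verification of Proposition~\ref{prop:peac-cast-struct}, specifically in checking that EDA's $\agda{applyCast}$ inhabits the declared type $(M : \Gamma \vdash A) \to \agda{Value}\,M \to (c : \Cast{A}{B}) \to \agda{Active}\,c \to \Gamma \vdash B$ --- most notably the $\mathtt{ActProj}$ case, which appeals to Lemma~\ref{lem:canonical-star} to know the value has the form $\cast{M}{A \Rightarrow^{\ell_1} \Unk}$, and then returns either $\cast{M}{A \Rightarrow^{\ell_2} B}$ (when $A \sim B$) or $\blame{\ell_2}{}$, both of which inhabit $\Gamma \vdash B$. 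Since that bookkeeping has already been discharged, the corollary faces essentially no obstacle; the proof is a one-line appeal to the general theorem.
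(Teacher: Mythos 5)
Your proposal is correct and matches the paper's own route exactly: the corollary is obtained by instantiating the generic Preservation theorem for $\CC(\CastOp)$ (itself a consequence of the intrinsically-typed reduction relation checked by Agda) with EDA's \agda{CastStruct} instance from Proposition~\ref{prop:peac-cast-struct}, with the only real content residing in type-checking EDA's $\agda{applyCast}$. Nothing further is needed.
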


\begin{corollary}[Progress for EDA]\label{thm:EDA-progress}
  If $\emptyset \vdash M : A$, then 
  \begin{enumerate}
  \item $M \longrightarrow M'$ for some $M'$,
  \item $\val{M}$, or
  \item $M \equiv \blame{\ell}{}$.
  \end{enumerate}
\end{corollary}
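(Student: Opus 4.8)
The plan is to obtain Progress for EDA as an immediate instantiation of Theorem~\ref{thm:cc-progress} (Progress for $\CC(\CastOp)$). That generic theorem is parameterized by a \agda{CastStruct}, and Proposition~\ref{prop:peac-cast-struct} already establishes that EDA supplies such a structure; moreover, by the definition of reduction for EDA, the relation $M \longrightarrow N$ is \emph{defined} to be the $\CC(\CastOp)$ reduction relation instantiated at EDA's \agda{CastStruct}. So the only step is to apply Theorem~\ref{thm:cc-progress} with this instance, which produces verbatim the three-way disjunction in the statement.

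There is no residual case analysis to perform, because every proof obligation that the generic Progress argument delegates to the \agda{PreCastStruct}/\agda{CastStruct} interface has already been discharged for EDA: the \agda{ActiveOrInert} field (used in the $\cast{M}{c}$ case of Theorem~\ref{thm:cc-progress}) is Lemma~\ref{lem:simple-active-or-inert}; the fields $\agda{InertCross}{\to}$, $\agda{InertCross}{\times}$, and $\agda{InertCross}{+}$ (used in the elimination-form subcases) follow from Lemma~\ref{lem:simple-inert-cross}, which is in fact vacuous since EDA's only inert casts target $\Unk$; and \agda{baseNotInert} (used to rule out an inert-wrapped constant flowing into the $\delta$ rule) is Lemma~\ref{lem:simple-base-not-inert}. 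The \agda{applyCast} function needed to close the active-cast subcase is the one defined just before Proposition~\ref{prop:peac-cast-struct}, so $\cast{V}{c} \longrightarrow \agda{applyCast}\app V\app c\app a$ is available whenever $c$ is active.

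Accordingly I do not expect any obstacle: the genuine work was front-loaded into verifying the structure instances, and in Agda this corollary is literally the application of the parameterized \agda{progress} function to EDA's \agda{CastStruct} record. The only thing worth double-checking is bookkeeping — that names and implicit arguments line up so that the instantiated conclusion is syntactically the stated one, in particular that the notion of $\val{M}$ appearing in clause~(2), inherited from Section~\ref{sec:cc-values-frames} via EDA's \agda{PreCastStruct} instance, is the intended one.
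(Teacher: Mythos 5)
Your proposal is correct and matches the paper's own route: the paper states this result precisely as a corollary obtained by instantiating Theorem~\ref{thm:cc-progress} with EDA's \agda{CastStruct} instance (Proposition~\ref{prop:peac-cast-struct}), all interface obligations having been discharged by Lemmas~\ref{lem:simple-active-or-inert}, \ref{lem:simple-inert-cross}, and \ref{lem:simple-base-not-inert} together with the \agda{applyCast} definition. No further argument is needed.
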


Let EDA$'$ be the variant of EDA obtained by instantiating $\CC'$
instead of $\CC$.

\subsubsection{Blame-Subtyping}
\label{sec:EDA-blame-subtyping}

Because the EDA$'$ calculus uses the ``D'' blame-tracking strategy,
the subtyping relation that corresponds to safe casts is the one in
Figure~\ref{fig:subtype-D} where the unknown type $\Unk$ is the top of
the subtyping order.

\begin{figure}[tp]
  \begin{gather*}
    \inference{}{A <: \Unk}
    \quad
    \inference{}{b <: b}
    \quad
    \inference{C <: A & B <: D}{A \to B <: C \to D}
    \\[1ex]
    \inference{A <: C & B <: D}{A \times B <: C \times D}
    \quad
    \inference{A <: C & B <: D}{A + B <: C + D} 
  \end{gather*}
  \caption{Subtyping for ``D'' blame-tracking}
  \label{fig:subtype-D}
\end{figure}

We define the \agda{CastBlameSafe} predicate as follows
\[
\inference{ A <: B}
          {\agda{CastBlameSafe}\app (A \Rightarrow^\ell B) \app \ell}
\quad
\inference{\ell \neq \ell'}
          {\agda{CastBlameSafe}\app (A \Rightarrow^{\ell'} B) \app \ell}
\]

\begin{lemma}[Blame Safety of Cast Operators]
  \label{lem:blame-safety-cast-operators-EDA}
  Suppose $\agda{CastBlameSafe}\app c \app \ell$ and $c$ is a cross
  cast, that is, $x : \cross{c}$.
  \begin{itemize}
  \item If $x=\mathtt{Cross}\to$,
    then $\agda{CastBlameSafe}\app (\agda{dom}\app c \app x)\app \ell$
    and $\agda{CastBlameSafe}\app (\agda{cod}\app c \app x)\app \ell$.
  \item If $x=\mathtt{Cross}\times$,
    then $\agda{CastBlameSafe}\app (\agda{fst}\app c \app x)\app \ell$
    and $\agda{CastBlameSafe}\app (\agda{snd}\app c \app x)\app \ell$.
  \item If $x=\mathtt{Cross}+$,
    then $\agda{CastBlameSafe}\app (\agda{inl}\app c \app x)\app \ell$
    and $\agda{CastBlameSafe}\app (\agda{inr}\app c \app x)\app \ell$.
  \end{itemize}
\end{lemma}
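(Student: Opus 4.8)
The plan is to proceed by cases on the evidence $x : \cross{c}$. Since the only constructor of \cross{\ } is $\mathtt{Cross}\otimes$, this forces $c$ to have the form $A \otimes B \Rightarrow^{\ell'} C \otimes D$ for the matching type constructor $\otimes \in \{{\to},\times,+\}$. In each of the three cases the hypothesis $\agda{CastBlameSafe}\app c \app \ell$ must have been derived by one of the two introduction rules for \agda{CastBlameSafe}, so the next step is to split on which rule was used. This reduces the whole lemma to a routine double case analysis followed by an inversion on subtyping.

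If the second \agda{CastBlameSafe} rule was used, we have $\ell \neq \ell'$. Then I would observe, directly from the defining equations for $\dom$, $\cod$, $\agda{fst}$, $\agda{snd}$, $\agda{inl}$, $\agda{inr}$ given in the previous subsection, that applying any of these operators to a cast of the form $A \otimes B \Rightarrow^{\ell'} C \otimes D$ yields a cast that still carries the blame label $\ell'$. Since $\ell \neq \ell'$, the second rule for \agda{CastBlameSafe} applies again and immediately gives blame-safety for $\ell$ of each resulting cast.

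If the first \agda{CastBlameSafe} rule was used, we have $\ell = \ell'$ together with a subtyping derivation $A \otimes B <: C \otimes D$, and I would invert on that derivation according to the shape of $\otimes$. For $\otimes = {\to}$, inversion on the function-subtyping rule yields $C <: A$ and $B <: D$; since $\dom{c\app x} = C \Rightarrow^{\ell'} A$ and $\cod{c\app x} = B \Rightarrow^{\ell'} D$, the first rule for \agda{CastBlameSafe} discharges both conclusions. For $\otimes = {\times}$, inversion gives $A <: C$ and $B <: D$, which match $\agda{fst}\app c\app x = A \Rightarrow^{\ell'} C$ and $\agda{snd}\app c\app x = B \Rightarrow^{\ell'} D$; the case $\otimes = {+}$ is identical with $\agda{inl}$ and $\agda{inr}$. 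Together these six sub-conclusions are exactly the fields $\agda{domBlameSafe}$ through $\agda{inrBlameSafe}$ of \agda{PreBlameSafe} specialized to EDA$'$.

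There is no genuine obstacle here; the argument is a case split on the cross-cast evidence, a case split on which \agda{CastBlameSafe} rule fired, and an inversion on subtyping. The single point that warrants a moment's care is the contravariance of $\dom$: one must check that the domain premise produced by inverting the function-subtyping rule, namely $C <: A$, is oriented precisely as needed to re-apply the subtyping rule of \agda{CastBlameSafe} to $\dom{c\app x} = C \Rightarrow^{\ell'} A$. The other five operators are covariant and raise no such concern. In the Agda development this lemma should be dischargeable essentially by pattern matching on $x$ and on the \agda{CastBlameSafe} proof.
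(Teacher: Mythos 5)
Your proposal is correct and matches the paper's (unwritten, Agda-only) argument: the paper leaves this lemma without prose proof, and for the analogous EDC lemma simply says ``by inversion on $\cross{c}$ and the \agda{CastBlameSafe} predicate,'' which is exactly your double case split plus inversion on the subtyping derivation (the $A <: \Unk$ rule being excluded since the target is a constructed type, not $\Unk$). Your remark on the contravariant orientation of $\dom$ is the one point of care, and you handle it correctly.
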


\begin{proposition}
  \label{prop:EDA-preblamesafe}
  EDA$'$ is an instance of the \agda{PreBlameSafe} structure.
\end{proposition}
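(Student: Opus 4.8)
The plan is to present EDA$'$ as an instance of the record \agda{PreBlameSafe}. Since \agda{PreBlameSafe} merely extends \agda{PreCastStruct} with the \agda{CastBlameSafe} predicate and six closure conditions, and since none of the \agda{PreCastStruct} fields mention terms, the first step is simply to reuse, unchanged, the \agda{PreCastStruct} instance already built in Proposition~\ref{prop:simple-active-pre-cast-struct}: moving from $\CC(\CastOp)$ to $\CC'(\CastOp)$ changes only the host term language, not the cast representation $A \Rightarrow^\ell B$ nor the operations \agda{dom}, \agda{cod}, \agda{fst}, \agda{snd}, \agda{inl}, \agda{inr} on it. For the \agda{CastBlameSafe} field I take the two-rule definition given above, with ``D'' subtyping $<:$ (Figure~\ref{fig:subtype-D}): $\agda{CastBlameSafe}\app(A \Rightarrow^\ell B)\app\ell$ holds when $A <: B$, and $\agda{CastBlameSafe}\app(A \Rightarrow^{\ell'} B)\app\ell$ holds whenever $\ell \neq \ell'$.

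It then remains to discharge \agda{domBlameSafe}, \agda{codBlameSafe}, \agda{fstBlameSafe}, \agda{sndBlameSafe}, \agda{inlBlameSafe}, and \agda{inrBlameSafe}. Each of these takes a cast whose source and target share a type constructor together with a proof $x : \agda{Cross}\app c$ and a proof that $c$ is blame-safe, and asks for blame-safety of the decomposed cast. In EDA$'$ every cast $A \otimes B \Rightarrow^\ell C \otimes D$ with $\otimes \in \{ {\to}, \times, + \}$ is automatically a cross cast by rule $\mathtt{Cross}\otimes$, so the \agda{Cross} hypothesis is always available, and the six conclusions are precisely the three items of Lemma~\ref{lem:blame-safety-cast-operators-EDA} (Blame Safety of Cast Operators). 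Invoking that lemma once per field completes the construction of the record.

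The actual content of the argument therefore sits in Lemma~\ref{lem:blame-safety-cast-operators-EDA}, which I would prove by case analysis on the label of the cross cast $c = A \otimes B \Rightarrow^{\ell'} C \otimes D$. If $\ell' \neq \ell$, then every decomposed cast carries the same label $\ell'$ (e.g.\ $\dom{(A \to B \Rightarrow^{\ell'} C \to D)\app x} = C \Rightarrow^{\ell'} A$), so it is blame-safe for $\ell$ by the second rule of \agda{CastBlameSafe}. If instead $\ell' = \ell$, then the hypothesis $\agda{CastBlameSafe}\app c\app\ell$ must have come from the first rule, so $A \otimes B <: C \otimes D$; inverting the corresponding subtyping rule of Figure~\ref{fig:subtype-D} yields the component relations, and the first rule of \agda{CastBlameSafe} reassembles blame-safety of each decomposed cast.

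I do not anticipate a real obstacle: the proposition is a packaging step and the subtyping inversions are routine. The one place that needs a moment's care is the function case, where the domain component of $A \to B \Rightarrow^\ell C \to D$ is the contravariantly-oriented cast $C \Rightarrow^\ell A$ (matching \agda{dom}'s declared type $\Cast{B_1}{A_1}$), and correspondingly the function-subtyping rule supplies $C <: A$ rather than $A <: C$; the two orientation flips cancel, so the first rule of \agda{CastBlameSafe} still applies.
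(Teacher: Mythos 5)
Your proposal is correct and matches the paper's route exactly: the paper likewise defines \agda{CastBlameSafe} for EDA$'$ by the two rules (D-subtyping $A <: B$, or a differing label), discharges the six closure fields via Lemma~\ref{lem:blame-safety-cast-operators-EDA}, and reuses the term-independent \agda{PreCastStruct} instance, so the proposition is just the packaging step you describe. Your case split on the label and the observation that the contravariant flip in \agda{dom} cancels against the contravariant premise of function subtyping is precisely the content of the paper's (mechanized) proof of that lemma.
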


\begin{lemma}[\agda{applyCast} preserves blame safety in EDA$'$]
  If $\agda{CastBlameSafe}\app c \app \ell$ and $a : \act{c}$
  and $\agda{CastsAllSafe}\app V \app \ell$
  and $v : \val{V}$, then
  $\agda{CastsAllSafe}\app (\agda{applyCast}\app V \app v \app c \app a) \app \ell$.
\end{lemma}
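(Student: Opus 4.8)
The plan is to proceed by case analysis on the active-cast witness $a : \act{c}$, which in EDA$'$ has precisely the three constructors $\mathtt{ActId}$, $\mathtt{ActProj}$, and $\mathtt{ActCross}$. Since $\agda{applyCast}$ is itself defined by this same case split (Section~\ref{sec:EDA}), in each case the goal $\safefor{(\agda{applyCast}\app V\app v\app c\app a)}{\ell}$ unfolds to a concrete term for which I must exhibit a derivation using the rules of Figure~\ref{fig:castsallsafe}.

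The $\mathtt{ActId}$ case is immediate: the cast is an identity cast $a \Rightarrow^\ell a$ and $\agda{applyCast}\app V\app v\app c\app a = V$, so the conclusion is literally the hypothesis $\safefor{V}{\ell}$. For $\mathtt{ActProj}$ the cast is $\Unk \Rightarrow^{\ell_2} B$ with $B \neq \Unk$, and $V$ has type $\Unk$, so Lemma~\ref{lem:canonical-star} gives $V \equiv \cast{M}{A \Rightarrow^{\ell_1} \Unk}$ with $A \neq \Unk$. Inverting rule (\textsc{Cast}) of Figure~\ref{fig:castsallsafe} on $\safefor{V}{\ell}$ yields $\safefor{M}{\ell}$. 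The crucial step is to observe that $\agda{CastBlameSafe}\app(\Unk \Rightarrow^{\ell_2} B)\app\ell$ forces $\ell \neq \ell_2$: its first rule would require $\Unk <: B$, which by Figure~\ref{fig:subtype-D} holds only when $B = \Unk$, contradicting $B \neq \Unk$, so only the second rule applies and it gives $\ell \neq \ell_2$. With this, both branches of $\agda{applyCast}$ are handled: if $A \sim B$ the result $\cast{M}{A \Rightarrow^{\ell_2} B}$ is safe by (\textsc{Cast}) from $\safefor{M}{\ell}$ together with $\agda{CastBlameSafe}\app(A \Rightarrow^{\ell_2} B)\app\ell$ (which holds by the second \agda{CastBlameSafe} rule since $\ell \neq \ell_2$); otherwise the result $\blame{\ell_2}{}$ is safe by (\textsc{Blame}) since $\ell \neq \ell_2$.

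The real work is in the $\mathtt{ActCross}$ case, where $c$ is a cross cast $A \otimes B \Rightarrow^\ell C \otimes D$ with $\otimes \in \{\to,\times,+\}$ and $\agda{applyCast}\app V\app v\app c\app a = \agda{eta}{\otimes}\app V\app c\app\mathtt{Cross}\otimes$. Here I would unfold the relevant $\agda{eta}{\otimes}$ from Section~\ref{sec:eta-cast-reduction} and check that every sub-term it constructs is safe for $\ell$. Three ingredients suffice. First, the decomposed casts $\agda{dom}\app c\app x$, $\agda{cod}\app c\app x$, $\agda{fst}\app c\app x$, $\agda{snd}\app c\app x$, $\agda{inl}\app c\app x$, and $\agda{inr}\app c\app x$ are all blame-safe for $\ell$ by Lemma~\ref{lem:blame-safety-cast-operators-EDA}, using the hypothesis $\agda{CastBlameSafe}\app c\app\ell$. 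Second, $\safefor{(\agda{rename}\app\Suc\app V)}{\ell}$ follows from $\safefor{V}{\ell}$ by the renaming lemma for the ``safe for'' predicate (one of the technical renaming lemmas supporting Lemma~\ref{lem:preserve-cas}). Third, the bound variable introduced by the $\eta$-expansion is safe by rule (\textsc{Var}). Closing these facts under the structural rules (\textsc{Lam}), (\textsc{App}), (\textsc{Cast}), (\textsc{Cons}), (\textsc{Case}), (\textsc{InL}), and (\textsc{InR}) yields $\safefor{(\agda{eta}{\otimes}\app V\app c\app\mathtt{Cross}\otimes)}{\ell}$ in each of the three cases. The only real obstacle is bookkeeping: stating the renaming lemma in adequate generality and traversing the $\agda{eta}{+}$ term, which nests $\key{inl}$ and $\key{inr}$ of casts underneath $\lambda$-abstractions; no genuinely new idea is needed.
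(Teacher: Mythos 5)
Your proposal is correct and follows essentially the same route as the paper: a case analysis on the active-cast witness ($\mathtt{ActId}$, $\mathtt{ActProj}$, $\mathtt{ActCross}$), discharging each branch of $\agda{applyCast}$ with the rules of the ``safe for'' predicate, the canonical-forms lemma for $\Unk$, and the blame-safety of the cast operators (Lemma~\ref{lem:blame-safety-cast-operators-EDA}). The paper's proof merely summarizes this as nine straightforward cases obtained by additionally casing on $\agda{CastBlameSafe}\app c\app\ell$; your write-up fills in exactly those details (including the correct observation that the subtyping case of $\agda{CastBlameSafe}$ is impossible for a projection, and the need for a renaming lemma for ``safe for'' in the $\agda{eta}{\to}$ case).
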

\begin{proof}
  The proof is by cases on $a$ (the cast being active and if a cross
  cast, cases on the three kinds) and then by cases on
  $\agda{CastBlameSafe}\app c \app \ell$, except that if the cast is
  an identity cast, then there is no need for casing on
  $\agda{CastBlameSafe}\app c \app \ell$. So there are 9 cases to
  check, but they are all straightforward.
\end{proof}

\begin{proposition}
  EDA$'$ is an instance of the \agda{BlameSafe} structure.
\end{proposition}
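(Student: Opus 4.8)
The plan is to build the \agda{BlameSafe} record for EDA$'$ by combining results already established. Since \agda{BlameSafe} extends both \agda{PreBlameSafe} and \agda{CastStruct} and adds only the field \agda{applyCast\mbox{-}pres\mbox{-}allsafe}, the work decomposes into three parts, two of which are already done.

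First, the \agda{PreBlameSafe} component is supplied directly by Proposition~\ref{prop:EDA-preblamesafe}: this provides the \agda{CastBlameSafe} predicate for EDA$'$ together with the six operator-preservation fields, whose content is Lemma~\ref{lem:blame-safety-cast-operators-EDA}. Second, the \agda{CastStruct} component --- in particular the \agda{applyCast} function for EDA$'$ and the underlying \agda{PreCastStruct} --- comes from Proposition~\ref{prop:peac-cast-struct}, read for the $\CC'(\CastOp)$ instantiation rather than $\CC(\CastOp)$; the $\CC$-to-$\CC'$ change affects only the \key{case} term former and the cross-cast reduction rules, not \agda{applyCast} or any cast operator, so no reproof is needed here.

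Third, the one genuinely new obligation is \agda{applyCast\mbox{-}pres\mbox{-}allsafe}: given a value $V$ with $\safefor{V}{\ell}$, an active cast $c : \Cast{A}{B}$ with $\agda{CastBlameSafe}\app c\app\ell$ and $a : \act{c}$, we must show $\safefor{(\agda{applyCast}\app V\app c\app a)}{\ell}$. This is precisely the statement of the lemma immediately preceding this proposition (``\agda{applyCast} preserves blame safety in EDA$'$''), whose proof cases on $a$ (identity, projection, or one of the three cross casts) and then on $\agda{CastBlameSafe}\app c\app\ell$. So the proposition follows by assembling the \agda{record} from these pieces. The only point worth checking --- and it is the closest thing to an obstacle --- is that the $V$ appearing in that lemma is an EDA$'$ value, whose sole cast-carrying form is $\castInert{V'}{c}$ with $c$ inert; one then verifies that the identity-, projection-, and $\eta$-cases of \agda{applyCast} produce \emph{safe-for} terms given a safe-for input, using that $\agda{dom}$, $\agda{cod}$, $\agda{fst}$, $\agda{snd}$, $\agda{inl}$, $\agda{inr}$ preserve \agda{CastBlameSafe} (Lemma~\ref{lem:blame-safety-cast-operators-EDA}). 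This check is routine, so no real difficulty arises.
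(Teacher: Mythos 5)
Your proposal is correct and matches the paper's (implicit) argument: the \agda{BlameSafe} record for EDA$'$ is assembled from the \agda{PreBlameSafe} instance (Proposition~\ref{prop:EDA-preblamesafe}), the \agda{CastStruct} instance carried over to the $\CC'$ setting, and the immediately preceding lemma supplying \agda{applyCast\mbox{-}pres\mbox{-}allsafe}. Nothing further is needed.
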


We instantiate Theorem~\ref{thm:blame-subtyping} with this
\agda{BlameSafe} instance to obtain the Blame-Subtyping Theorem for
EDA$'$.

\begin{corollary}[Blame-Subtyping Theorem for EDA$'$]
  \label{thm:blame-subtyping-EDA}
  If $M : \Gamma \vdash A$ and $\agda{CastsAllSafe} \app M \app \ell$,
  then $\neg (M \longrightarrow^{*} \key{blame} \, \ell)$.
\end{corollary}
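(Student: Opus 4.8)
The plan is to obtain this corollary with no new reasoning of its own, by instantiating the parameterized Blame-Subtyping Theorem~\ref{thm:blame-subtyping}. That theorem holds for $\CC'(\CastOp)$ for any cast representation that forms a \agda{BlameSafe} structure, and the preceding proposition already records that EDA$'$ is such an instance --- with \agda{CastBlameSafe} the predicate built from the ``D'' subtyping relation of Figure~\ref{fig:subtype-D}. So the corollary is literally Theorem~\ref{thm:blame-subtyping} specialized to that instance, and its underlying argument (induction on the reduction sequence, using the preservation Lemma~\ref{lem:preserve-cas}) is reused verbatim. The substance therefore lies in having discharged the fields of \agda{PreBlameSafe} and \agda{BlameSafe} for EDA$'$, which I would do in the order below.

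First I would supply the \agda{PreBlameSafe} fields. The six cast-operator preservation obligations ($\agda{domBlameSafe}$ through $\agda{inrBlameSafe}$) are exactly the content of Lemma~\ref{lem:blame-safety-cast-operators-EDA}: for a blame-safe cross cast $A \otimes B \Rightarrow^{\ell'} C \otimes D$, every decomposed sub-cast $\agda{dom}$, $\agda{cod}$, $\agda{fst}$, etc.\ retains the label $\ell'$ and a source/target pair obtained by inverting the corresponding D-subtyping rule, hence is still blame-safe for $\ell$; when the original blame-safety came from $\ell' \neq \ell$ the conclusion is immediate. This yields Proposition~\ref{prop:EDA-preblamesafe}. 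Then I would supply the one extra \agda{BlameSafe} field, $\agda{applyCast\mhyphen{}pres\mhyphen{}allsafe}$, by the same case split on the active-cast witness used to define \agda{applyCast}. For $\mathtt{ActId}$ the output is $V$ itself, so $\safefor{V}{\ell}$ is the hypothesis. For $\mathtt{ActCross}$ the output is an $\eta$-expansion $\agda{eta}{\otimes}$ whose only casts are the decomposed sub-casts, which are blame-safe by the \agda{PreBlameSafe} fields just established.

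The delicate case --- and the step I expect to be the main obstacle --- is $\mathtt{ActProj}$. By Lemma~\ref{lem:canonical-star} the value is $\cast{M}{A \Rightarrow^{\ell_1} \Unk}$ and the active cast is $\Unk \Rightarrow^{\ell_2} B$ with $B \neq \Unk$, and \agda{applyCast} steps to $\cast{M}{A \Rightarrow^{\ell_2} B}$ when $A \sim B$ and to $\blame{\ell_2}{}$ otherwise. Here one must use the generalized ``safe for'' invariant rather than merely ``never steps to $\key{blame}\,\ell$'': a projection can leave behind a residual cast. The key observation is that there is no D-subtyping rule concluding $\Unk <: B$ when $B \neq \Unk$, so blame-safety of $\Unk \Rightarrow^{\ell_2} B$ for $\ell$ can only arise from $\ell_2 \neq \ell$. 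Given that, $\blame{\ell_2}{}$ is safe for $\ell$, and $\cast{M}{A \Rightarrow^{\ell_2} B}$ is safe for $\ell$ because $A \Rightarrow^{\ell_2} B$ is blame-safe for $\ell$ (again via $\ell_2 \neq \ell$) and $M$ is safe for $\ell$ by inversion on the hypothesis $\safefor{V}{\ell}$. With all nine cases checked, EDA$'$ is a \agda{BlameSafe} instance, and the corollary follows by instantiation of Theorem~\ref{thm:blame-subtyping}.
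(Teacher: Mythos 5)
Your proposal is correct and follows essentially the same route as the paper: the corollary is obtained purely by instantiating the parameterized Theorem~\ref{thm:blame-subtyping} once EDA$'$ is shown to be a \agda{BlameSafe} instance, which the paper does via Lemma~\ref{lem:blame-safety-cast-operators-EDA}, Proposition~\ref{prop:EDA-preblamesafe}, and the case analysis on $\act{c}$ for $\agda{applyCast}$ that you sketch (including the $\mathtt{ActProj}$ observation that safety of $\Unk \Rightarrow^{\ell_2} B$ with $B \neq \Unk$ can only come from $\ell_2 \neq \ell$). The only detail you gloss over is that the $\eta$-expansion cases also need renaming to preserve the ``safe for'' predicate, a technical lemma the paper likewise relegates to the Agda development.
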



\subsection{Partially-Eager ``D'' Casts with Inert Cross Casts (EDI)}
\label{sec:EDI}

Many cast
calculi~\citep{Wadler:2007lr,Wadler:2009qv,Siek:2009rt,Siek:2010ya,Siek:2015ab}
categorize terms of the form
\[
  \cast{V}{A \to B \Rightarrow C \to D}
\]
as values and then define the reduction rule
\[
\cast{V}{A \to B \Rightarrow C \to D} \app W
\longrightarrow
\cast{(V \app (\cast{W}{C \Rightarrow A}))}{B \Rightarrow D}
\]
That is, we categorize cross casts as inert instead of active.  In
this section we take this approach for functions, pairs, and sums, but
keep everything else the same as in the previous section.  We
conjecture that this cast calculus is equivalent in observable
behavior to EDA.  We refer to this calculus as EDI, where the I stands
for ``inert''.

So the cast representation type is again made of a source type, target
type, blame label, and consistency proof.\\[1ex]
\fbox{$c : \Cast{A}{B}$}
\[
\inference{}
          {A \Rightarrow^\ell B  : \Cast{A}{B}} A \sim B
\]
\noindent The cast constructor is defined as follows.
\[
  \coerce{A}{B}{\ell} = (A \Rightarrow^\ell B)
\]

\subsubsection{Reduction Semantics and Type Safety}
          
Again, we categorize casts between function, pair, and sum types as
cross casts.  \\[1ex]
\fbox{$\cross{c}$}
\begin{gather*}
  \inference
    {\otimes \in \{ \to , \times, + \} }
    {\mathtt{Cross}\otimes : \cross{(A \otimes B \Rightarrow^\ell C \otimes D)}}
\end{gather*}
But for the inert casts, we include the cross casts this time.\\[1ex]
\fbox{$\inert{c}$}
\[
\inference{A \neq \Unk}
          {\mathtt{InInj} : \inert{(A \Rightarrow^\ell \Unk)}}
\quad
\inference{\cross{c}}
          {\mathtt{InCross} : \inert{c}}
\]
The active casts include just the identity casts and projections.\\[1ex]
\fbox{$\act{c}$}
\[
  \inference
      {}
      {\mathtt{ActId} : \act{(a \Rightarrow^\ell a)}}
  \quad
  \inference
    {B \neq \Unk}
    {\mathtt{ActProj} : \act{(\Unk \Rightarrow^\ell B)}}
\]

\begin{lemma}
  \label{lem:pedi-active-or-inert}
  For any types $A$ and $B$, $c : \Cast{A}{B}$ is either an active or
  inert cast.
\end{lemma}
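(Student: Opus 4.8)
The plan is to mirror the proof of Lemma~\ref{lem:simple-active-or-inert}, adjusting for the fact that EDI classifies cross casts as inert rather than active. Since the cast representation has a single constructor, $c$ must be of the form $A \Rightarrow^\ell B$ with a proof of $\consis{A}{B}$ in hand. I would proceed by case analysis on that consistency derivation, using that equality of types (in particular, testing whether a given type is $\Unk$) is decidable so that the sub-cases below are exhaustive and the resulting definition of $\agda{ActiveOrInert}$ is total.

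The three ``head-constructor'' cases $\mathtt{Fun}{\sim}$, $\mathtt{Pair}{\sim}$, and $\mathtt{Sum}{\sim}$ are the ones that differ from the EDA proof: here $A \equiv A_1 \otimes A_2$ and $B \equiv B_1 \otimes B_2$ for the same $\otimes \in \{\to, \times, +\}$, so $c$ is a cross cast by $\mathtt{Cross}\otimes$ and hence inert by $\mathtt{InCross}$. The case $\mathtt{Base}{\sim}[\Base]$ forces $A \equiv B \equiv \Base$, so $c$ is active by $\mathtt{ActId}$. In the case $\mathtt{UnkL}{\sim}[B]$ we have $A \equiv \Unk$; if $B \equiv \Unk$ then $c$ is active by $\mathtt{ActId}$, and otherwise $c$ is active by $\mathtt{ActProj}$. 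In the case $\mathtt{UnkR}{\sim}[A]$ we have $B \equiv \Unk$; if $A \equiv \Unk$ then $c$ is active by $\mathtt{ActId}$, and otherwise $A \neq \Unk$ and $c$ is inert by $\mathtt{InInj}$.

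The only real point of care is the overlap between the two $\Unk$-ending consistency rules when $A \equiv B \equiv \Unk$: both $\mathtt{UnkL}{\sim}$ and $\mathtt{UnkR}{\sim}$ can witness $\consis{\Unk}{\Unk}$, and in each of those sub-cases I resolve the cast to $\mathtt{ActId}$, so the answer does not depend on which derivation was supplied. Since every constructor of the consistency relation is covered and each embedded $\Unk$-test is decidable, the case split is exhaustive; checking that exhaustiveness is the main (mild) obstacle, and everything else is immediate from the definitions of $\inert{c}$, $\act{c}$, and $\cross{c}$ for EDI.
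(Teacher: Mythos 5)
Your proof is correct and takes essentially the same route as the paper: the paper simply says the argument mirrors Lemma~\ref{lem:simple-active-or-inert} with cross casts now classified as inert (via $\mathtt{InCross}$) rather than active, which is exactly the case analysis on the consistency derivation you spell out.
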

\begin{proof}
  The proof is similar to that of Lemma~\ref{lem:simple-active-or-inert},
  except the cross casts are categorized as inert instead of active.
\end{proof}

\begin{lemma}\label{lem:pedi-inert-cross}
  If $c : \Cast{A}{(B \otimes C)}$ and $\inert{c}$,
  then $\cross{c}$ and $A \equiv D \otimes E$ for some $D$ and $E$.
\end{lemma}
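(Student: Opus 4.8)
The plan is to follow the same shape as Lemma~\ref{lem:simple-inert-cross}, except that here the cross casts are genuinely categorized as inert, so the argument is no longer vacuous. First I would record that the target type $B \otimes C$ is not $\Unk$, since $\otimes \in \{\to, \times, +\}$ and each of these head constructors is syntactically distinct from $\Unk$. The proof then proceeds by case analysis on the derivation of $\inert{c}$, of which there are exactly two in the EDI calculus: $\mathtt{InInj}$ and $\mathtt{InCross}$.

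In the $\mathtt{InInj}$ case, $c$ must have the form $A' \Rightarrow^\ell \Unk$, so its target is $\Unk$; this contradicts the hypothesis that the target is $B \otimes C$, and Agda discharges the case by unification. In the $\mathtt{InCross}$ case, the premise of the rule supplies directly a witness $x : \cross{c}$, which is the first conclusion. For the second conclusion I would invert $x$: the only constructor for $\cross{\cdot}$ is $\mathtt{Cross}\otimes'$ for some $\otimes' \in \{\to, \times, +\}$, and it forces $c$ to have the form $D \otimes' E \Rightarrow^\ell F \otimes' G$. Matching the target $F \otimes' G$ against the known target $B \otimes C$ forces $\otimes' = \otimes$, since $\to$, $\times$, and $+$ are distinct constructors and so do not unify; hence the source of $c$, namely $A$, equals $D \otimes E$, as required.

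I do not expect a real obstacle: the only point needing care is that the connective in the cross-cast witness is pinned down by the connective in the target type of $c$, which Agda's pattern matching resolves automatically. This lemma, together with the definitions of $\inert{\cdot}$ and $\cross{\cdot}$, is exactly what is needed to instantiate the fields $\agda{InertCross}{\to}$, $\agda{InertCross}{\times}$, and $\agda{InertCross}{+}$ of \agda{PreCastStruct} for EDI; the remaining \agda{PreCastStruct} fields (such as $\agda{dom}$, $\agda{cod}$, and $\agda{baseNotInert}$) are handled just as in the EDA calculus of the previous section.
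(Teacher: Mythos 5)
Your proposal is correct and matches the paper's proof: both proceed by cases on $\inert{c}$, discharging $\mathtt{InInj}$ because its target is $\Unk$ rather than $B \otimes C$, and in the $\mathtt{InCross}$ case extracting $\cross{c}$ from the premise and inverting it to obtain $A \equiv D \otimes E$. The extra observations about constructor disjointness and the use of the lemma for the $\agda{InertCross}$ fields are accurate but not needed beyond what the paper's argument already contains.
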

\begin{proof}
  We proceed by cases on $\inert{c}$.
  \begin{description}
  \item[Case $\mathtt{InInj}$:] The target type is $\Unk$, not $B \otimes C$,
    so we have a contradiction.
  \item[Case $\mathtt{InCross}$:] We have $\cross{c}$.
    By cases on $\cross{c}$, we have
    $A \equiv D \otimes E$ for some $D$ and $E$.
  \end{description}
\end{proof}

The definitions of $\agda{dom}$, $\agda{cod}$, etc. are exactly the
same as in Section~\ref{sec:EDA}.

\begin{lemma}\label{lem:pedi-base-not-inert}
   A cast $c : \Cast{A}{b}$ is not inert.
\end{lemma}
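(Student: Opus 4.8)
The plan is to prove the lemma by inversion on the hypothesis $\inert{c}$, ruling out each of the two ways an inert cast can be formed in EDI when the target type is a base type. First I would note that any cast $c : \Cast{A}{\Base}$ is necessarily of the form $A \Rightarrow^\ell \Base$ carrying a proof of $\consis{A}{\Base}$, so it suffices to show that no such cast is classified as inert. I then proceed by cases on a putative derivation of $\inert{c}$, of which there are exactly two: $\mathtt{InInj}$ and $\mathtt{InCross}$.

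In the $\mathtt{InInj}$ case the target type of $c$ would have to be $\Unk$; but here it is $\Base$, and $\Base \neq \Unk$, so this case is impossible. In the $\mathtt{InCross}$ case $c$ would have to satisfy $\cross{c}$, and the only rule for $\cross{c}$, namely $\mathtt{Cross}\otimes$, forces the target type to have the shape $C \otimes D$ with $\otimes \in \{ \to, \times, + \}$; this again contradicts the target being the base type $\Base$. Both cases being vacuous, $\inert{c}$ is uninhabited, which establishes the lemma.

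I do not expect any real obstacle here: the argument is the same short inversion used for the analogous Lemma~\ref{lem:simple-base-not-inert} of EDA, with the sole additional step of dispatching the new $\mathtt{InCross}$ constructor, which is immediate once one observes that a cross cast between two instances of the same top-level type constructor cannot land in a base type.
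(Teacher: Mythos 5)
Your proof is correct and matches the paper's argument: the paper likewise observes that EDI's inert casts (via $\mathtt{InInj}$ and $\mathtt{InCross}$) can only target $\Unk$ or a type of the form $A_1 \otimes A_2$, never a base type. Your explicit case analysis on the two constructors is simply a more detailed spelling-out of that same inversion.
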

\begin{proof}
  The inert casts in this section have a target type of either $\Unk$
  or a non-atomic type $A_1 \otimes A_2$, but not a base type.
\end{proof}

\begin{proposition}\label{prop:pedi-inert-pre-cast-struct}
The EDI calculus is an instance of the $\agda{PreCastStruct}$
structure.
\end{proposition}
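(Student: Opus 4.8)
The plan is to assemble the \agda{PreCastStruct} dependent record field by field, reusing the definitions and lemmas already established earlier in this section. The representation field $-\CastOp-$ is the datatype $A \Rightarrow^\ell B$ together with its attached consistency proof; the predicates \agda{Inert}, \agda{Active}, and \agda{Cross} are the three relations defined just above, with constructors $\mathtt{InInj}$, $\mathtt{InCross}$, $\mathtt{ActId}$, $\mathtt{ActProj}$, and $\mathtt{Cross}\otimes$. The \agda{ActiveOrInert} field is exactly Lemma~\ref{lem:pedi-active-or-inert}, and \agda{baseNotInert} is exactly Lemma~\ref{lem:pedi-base-not-inert}, so both can be cited directly.

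First I would discharge the three fields $\agda{InertCross}{\to}$, $\agda{InertCross}{\times}$, and $\agda{InertCross}{+}$. Each takes an inert cast whose target is a function, pair, or sum type and must produce both a \agda{Cross} proof for that cast and a witness that the source type has the matching shape $A_1 \otimes A_2$. This is precisely the content of Lemma~\ref{lem:pedi-inert-cross} instantiated at the relevant $\otimes$, whose proof proceeds by inversion on the \agda{Inert} derivation: the $\mathtt{InInj}$ case is impossible (its target is $\Unk$, not $B \otimes C$) and the $\mathtt{InCross}$ case immediately hands back the enclosed $\mathtt{Cross}\otimes$ derivation, from which the shape of the source type is read off.

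Next I would supply the six decomposition operators $\agda{dom}$, $\agda{cod}$, $\agda{fst}$, $\agda{snd}$, $\agda{inl}$, $\agda{inr}$. These are defined verbatim as in Section~\ref{sec:EDA}: each pattern-matches the cross cast $A \otimes B \Rightarrow^\ell C \otimes D$ and returns the corresponding component cast, with domains treated contravariantly (so $\agda{dom}$ of $A \to B \Rightarrow^\ell C \to D$ is $C \Rightarrow^\ell A$). The only proof obligation is that each result is again a well-typed cast, i.e. that the two component types are consistent; this follows by inversion on the consistency proof carried by the original cross cast, which Agda performs automatically given the \agda{Cross} evidence.

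The remaining work is purely bookkeeping: bundle these components into the record. There is no genuine mathematical obstacle here. The one point deserving attention --- and the place where EDI differs from EDA --- is that the cross casts really do inhabit \agda{Inert} in this calculus, so the $\agda{InertCross}$ fields are no longer vacuously true (contrast Lemma~\ref{lem:simple-inert-cross}) and must actually extract and return the $\mathtt{Cross}\otimes$ constructor together with the $\Sigma$-witness about the source type; getting that witness into exactly the shape the record field expects is the only step that requires care.
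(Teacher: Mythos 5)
Your proposal is correct and matches the paper's (implicit, Agda-mechanized) proof: the record is assembled field by field, with \agda{ActiveOrInert} given by Lemma~\ref{lem:pedi-active-or-inert}, the $\agda{InertCross}$ fields by Lemma~\ref{lem:pedi-inert-cross} (case analysis on $\mathtt{InInj}$ versus $\mathtt{InCross}$), \agda{baseNotInert} by Lemma~\ref{lem:pedi-base-not-inert}, and the decomposition operators taken verbatim from Section~\ref{sec:EDA}. Your observation that the $\agda{InertCross}$ fields are no longer vacuous in EDI, unlike in EDA, is exactly the relevant distinction.
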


Again we define the \agda{applyCast} function by cases on the proof
that the cast is active, but this time there is one less case to
consider (the cross casts). The cases for $\mathtt{ActId}$ and
$\mathtt{ActProj}$ are the same as in Section~\ref{sec:EDA}.
\[
  \agda{applyCast} : \forall \Gamma A B.\, (M : \Gamma \vdash A) \to \agda{Value}\app M\to (c : \Cast{A}{B}) \to \agda{Active}\app c \to \Gamma \vdash B
\]
\[
\begin{array}{lllllll}
  \agda{applyCast} & M & v & a \Rightarrow^\ell a & \mathtt{ActId} &=&
     M \\
  \agda{applyCast} & \cast{M}{A \Rightarrow^{\ell_1} \Unk} & v & \Unk \Rightarrow^{\ell_2} B & \mathtt{ActProj} &=& 
  \begin{cases}
    \cast{M}{A \Rightarrow^{\ell_2} B} & \text{if } A \sim B\\
    \blame{\ell_2}{} & \text{otherwise}
  \end{cases}
\end{array}
\]

\begin{proposition}
The EDI calculus is an instance of the $\agda{CastStruct}$ structure.
\end{proposition}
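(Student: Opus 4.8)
The plan is to obtain the \agda{CastStruct} instance by extending the \agda{PreCastStruct} instance already established in Proposition~\ref{prop:pedi-inert-pre-cast-struct}, since \agda{CastStruct} adds exactly one new field, \agda{applyCast}. So the only remaining obligation is to exhibit a term of type
\[
\forall \Gamma A B.\, (M : \Gamma \vdash A) \to \agda{Value}\app M \to (c : \Cast{A}{B}) \to \agda{Active}\app c \to \Gamma \vdash B,
\]
and I would simply take the \agda{applyCast} function written above for EDI, whose clauses dispatch on the proof that $c$ is active, and check that Agda accepts it at this type.

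First I would dispatch the $\mathtt{ActId}$ clause: here $c$ has the form $a \Rightarrow^\ell a$, so source and target coincide and returning the input $M$ directly has type $\Gamma \vdash a$ with no side conditions. Second, the $\mathtt{ActProj}$ clause, where $c = \Unk \Rightarrow^{\ell_2} B$ with $B \neq \Unk$: since $M$ is a value of type $\Unk$, I would invoke Lemma~\ref{lem:canonical-star} (Canonical Form for type $\Unk$) to learn that $M \equiv \cast{M'}{A \Rightarrow^{\ell_1} \Unk}$ for some $M' : \Gamma \vdash A$ wrapped in an inert cast. I then decide, using decidability of consistency, whether $A \sim B$: if so, return $\cast{M'}{A \Rightarrow^{\ell_2} B}$, which type-checks at $\Gamma \vdash B$ precisely because the witness $A \sim B$ makes $A \Rightarrow^{\ell_2} B$ a well-formed EDI cast; otherwise return $\blame{\ell_2}{}$, which inhabits $\Gamma \vdash B$ at any type.

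Unlike the EDA case (Proposition~\ref{prop:peac-cast-struct}), there is no clause for cross casts here, because EDI categorizes cross casts as inert, so they never reach \agda{applyCast}; exhaustiveness of the two clauses above is underwritten by the \agda{ActiveOrInert} field of the \agda{PreCastStruct} instance together with the fact that the only \agda{Active} constructors are $\mathtt{ActId}$ and $\mathtt{ActProj}$. The only mild obstacle is bookkeeping: making sure the appeal to canonical forms for $\Unk$ lines up with Agda's dependent pattern matching so that the $\mathtt{ActProj}$ clause is accepted, and that the $A \sim B$ case split is total. Once \agda{applyCast} type-checks, the record is complete and the proposition follows; I expect this to go through essentially as in the EDA case minus the cross-cast clause.
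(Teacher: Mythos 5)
Your proposal matches the paper's own treatment: the paper likewise supplies only the \agda{applyCast} field on top of the EDI \agda{PreCastStruct} instance, with exactly the two clauses $\mathtt{ActId}$ (return $M$) and $\mathtt{ActProj}$ (use the canonical-form lemma for type $\Unk$ and a decidable check of $A \sim B$ to either re-cast or blame), the cross-cast clause being absent precisely because EDI makes cross casts inert. Nothing is missing; the argument is essentially identical to the paper's.
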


We instantiate from $\CC(\CastOp)$ the reduction semantics
(Section~\ref{sec:dynamic-semantics-CC}) and proof of type safety
(Section~\ref{sec:type-safety-CC}) for EDI.

\begin{definition}[Reduction for EDI]
  The reduction relation $M \longrightarrow N$ for the EDI calculus
  is the reduction relation of $\CC(\CastOp)$ instantiated with EDI's
  instance of \agda{CastStruct}.
\end{definition}

\begin{corollary}[Preservation for EDI]
  \label{thm:pedi-cast-preservation}
  If $\Gamma \vdash M : A$  and $M \longrightarrow M'$,
  then $\Gamma \vdash M' : A$.
\end{corollary}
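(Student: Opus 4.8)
The plan is to obtain this corollary by direct instantiation of the Preservation theorem for the Parameterized Cast Calculus $\CC(\CastOp)$ (stated in Section~\ref{sec:type-safety-CC}). The immediately preceding Proposition establishes that the EDI calculus is an instance of the \agda{CastStruct} structure, so all the data and laws required by the parameterized development are in place. By the Definition of Reduction for EDI, the relation $M \longrightarrow M'$ used in the statement is, by construction, the reduction relation of $\CC(\CastOp)$ specialized to EDI's \agda{CastStruct} instance. Hence the generic Preservation theorem applies verbatim.

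Concretely, recall that in the intrinsically-typed formalization the reduction judgment $M \longrightarrow M'$ is defined only between terms that inhabit the \emph{same} typing derivation $\Gamma \vdash A$; that is, the Agda type of the reduction relation already has the form $(\Gamma \vdash A) \to (\Gamma \vdash A) \to \agda{Set}$. Consequently, the only content of Preservation is that every reduction rule can in fact be given this type, which for the eight STLC-style rules and the congruence rules \eqref{eq:xi} and \eqref{eq:xi-blame} is exactly as checked in the parameterized calculus, and for the cast-related rules \eqref{eq:cast}, \eqref{eq:fun-cast}, \eqref{eq:fst-cast}, \eqref{eq:snd-cast}, and \eqref{eq:case-cast} follows from the type signatures of \agda{applyCast}, \agda{dom}, \agda{cod}, \agda{fst}, \agda{snd}, \agda{inl}, and \agda{inr} supplied by EDI's \agda{PreCastStruct} and \agda{CastStruct} instances. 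Since EDI reuses the $\agda{dom}/\agda{cod}/\agda{fst}/\agda{snd}/\agda{inl}/\agda{inr}$ definitions from Section~\ref{sec:EDA} and its \agda{applyCast} is given with the declared type $(M : \Gamma \vdash A) \to \agda{Value}\app M \to (c : \Cast{A}{B}) \to \agda{Active}\app c \to \Gamma \vdash B$, every obligation is discharged.

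There is essentially no obstacle here: the work was done once and for all in proving the parameterized Preservation theorem and in verifying (in the prior proposition) that EDI satisfies the \agda{CastStruct} interface. The only place one might pause is to confirm that EDI's \agda{applyCast} really does land in $\Gamma \vdash B$ in the $\mathtt{ActProj}$ branch — but this is immediate, since in the $A \sim B$ case the result is $\cast{M}{A \Rightarrow^{\ell_2} B} : \Gamma \vdash B$ and in the failure case $\blame{\ell_2}{} : \Gamma \vdash B$ by the $\blame{\ell}{A}$ term rule. Therefore the proof reads simply: instantiate Preservation for $\CC(\CastOp)$ with EDI's \agda{CastStruct}.
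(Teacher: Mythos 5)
Your proposal matches the paper's treatment: the result is obtained purely by instantiating the parameterized Preservation theorem of $\CC(\CastOp)$ (which holds by construction, since the intrinsically-typed reduction relation relates terms of the same type $\Gamma \vdash A$) with EDI's \agda{CastStruct} instance established in the preceding proposition. Your additional check that EDI's \agda{applyCast} inhabits $\Gamma \vdash B$ in both branches of the $\mathtt{ActProj}$ case is exactly the obligation discharged when verifying that instance, so nothing is missing.
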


\begin{corollary}[Progress for EDI]\label{thm:pedi-cast-progress}
  If $\emptyset \vdash M : A$, then 
  \begin{enumerate}
  \item $M \longrightarrow M'$ for some $M'$,
  \item $\val{M}$, or
  \item $M \equiv \blame{\ell}{}$.
  \end{enumerate}
\end{corollary}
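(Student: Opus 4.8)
The plan is to obtain Progress for EDI as an immediate instantiation of the generic Progress theorem, Theorem~\ref{thm:cc-progress}, for the Parameterized Cast Calculus $\CC(\CastOp)$. The proof of that theorem in Section~\ref{sec:type-safety-CC} is carried out entirely in terms of the fields of a \agda{CastStruct} (hence of a \agda{PreCastStruct}), so once EDI has been exhibited as an instance of \agda{CastStruct}---which is exactly the content of the proposition immediately preceding---the generic \agda{progress} function specializes to a progress proof for EDI, and the corollary follows by a one-line appeal.

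Before invoking it, I would check that every \agda{PreCastStruct} field used in the generic proof has already been discharged for EDI: \agda{ActiveOrInert} is Lemma~\ref{lem:pedi-active-or-inert}; \agda{baseNotInert} is Lemma~\ref{lem:pedi-base-not-inert}; and $\agda{InertCross}{\to}$, $\agda{InertCross}{\times}$, $\agda{InertCross}{+}$ are instances of Lemma~\ref{lem:pedi-inert-cross}. These are precisely the facts exercised in the cases sketched after Theorem~\ref{thm:cc-progress}: the subcase $M_1 \equiv \cast{V}{c}$ with $c$ inert inside a function application uses $\agda{InertCross}{\to}$ to extract the cross-cast evidence demanded by rule~\eqref{eq:fun-cast}; the subcase in which a base-typed cast meets a constant is ruled out by \agda{baseNotInert}; and the cast case uses \agda{ActiveOrInert}, together with the \agda{applyCast} field of EDI's \agda{CastStruct}, to fire rule~\eqref{eq:cast}.

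The only point requiring care is that EDI reclassifies cross casts as \agda{Inert} rather than \agda{Active}, so I would confirm that this does not invalidate the \agda{PreCastStruct} obligations: an inert cast in EDI has target type either $\Unk$ or a type constructor $A_1 \otimes A_2$, never a base type, which is exactly why \agda{baseNotInert} and the \agda{InertCross} fields still hold. Given that, the proof is simply to apply Theorem~\ref{thm:cc-progress} to EDI's \agda{CastStruct} instance. I do not anticipate any genuine obstacle here---all of the substance already lives in the earlier instantiation lemmas, and this corollary is precisely the payoff of the parameterization.
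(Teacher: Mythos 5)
Your proposal matches the paper exactly: the paper obtains Progress for EDI by instantiating Theorem~\ref{thm:cc-progress} with EDI's \agda{CastStruct} instance, whose obligations are discharged by Lemmas~\ref{lem:pedi-active-or-inert}, \ref{lem:pedi-inert-cross}, and \ref{lem:pedi-base-not-inert}, just as you describe. Your extra check that reclassifying cross casts as inert does not break the \agda{PreCastStruct} obligations is the same substance as those lemmas, so there is no gap.
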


Let EDI$'$ be the variant of EDI obtained by instantiating $\CC'$
instead of $\CC$.

\subsubsection{Blame-Subtyping}

We define the \agda{CastBlameSafe} predicate for EDI$'$ exactly the
same way as for EDA$'$, using the subtyping relation of
Figure~\ref{fig:subtype-D}.

Because the cast operators for EDI are defined exactly the same as
for EDA, Lemma~\ref{lem:blame-safety-cast-operators-EDA} also
applies to EDI, that is, the cast operators such as \agda{dom}
preserve blame safety.

\begin{proposition}
  \label{prop:EDA-preblamesafe}
  EDI$'$ is an instance of the \agda{PreBlameSafe} structure.
\end{proposition}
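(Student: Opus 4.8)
The plan is to reuse almost the entire EDA$'$ blame-safety development, since EDI$'$ differs from EDA$'$ only in how it classifies cross casts (inert rather than active), and not in the cast representation, the cast operators, or the notion of blame safety. First I would invoke Proposition~\ref{prop:pedi-inert-pre-cast-struct} to supply the underlying \agda{PreCastStruct} instance, so that the only remaining obligations are the \agda{CastBlameSafe} field together with the six preservation fields \agda{domBlameSafe}, \agda{codBlameSafe}, \agda{fstBlameSafe}, \agda{sndBlameSafe}, \agda{inlBlameSafe}, and \agda{inrBlameSafe}.

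For \agda{CastBlameSafe} I would use exactly the definition already given for EDA$'$: a cast $A \Rightarrow^\ell B$ is safe for $\ell$ when $A <: B$ under the ``D'' subtyping relation of Figure~\ref{fig:subtype-D}, and a cast $A \Rightarrow^{\ell'} B$ is safe for $\ell$ whenever $\ell \neq \ell'$. This transfers verbatim because EDI$'$ uses the identical cast representation (source type, target type, blame label, consistency proof). For the six preservation fields, I would observe that the operators \agda{dom}, \agda{cod}, \agda{fst}, \agda{snd}, \agda{inl}, and \agda{inr} for EDI are defined exactly as in Section~\ref{sec:EDA}, and the \agda{Cross} predicate differs only in its classification label; consequently Lemma~\ref{lem:blame-safety-cast-operators-EDA} applies without change and discharges all six obligations directly.

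The only thing to be careful about — and the ``main obstacle,'' such as it is — is the bookkeeping observation that none of the \agda{PreBlameSafe} fields depend on the Active/Inert partition: they mention only \agda{Cross}, the six cast operators, and \agda{CastBlameSafe}, all of which are shared between EDA and EDI. Once that is noted, the EDA$'$ proofs transfer with no new mathematical content, and the instantiation goes through.
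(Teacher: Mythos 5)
Your proposal is correct and matches the paper's own argument: the paper likewise defines \agda{CastBlameSafe} for EDI$'$ exactly as for EDA$'$ (via the ``D'' subtyping of Figure~\ref{fig:subtype-D}) and notes that, since the cast operators are identical to those of EDA, Lemma~\ref{lem:blame-safety-cast-operators-EDA} transfers directly to discharge the preservation fields. Your extra remark that no \agda{PreBlameSafe} field depends on the Active/Inert classification is exactly the (implicit) reason the transfer works.
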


\begin{lemma}[\agda{applyCast} preserves blame safety in EDI$'$]
  \label{lem:applyCast-safe-pedi}
  If $\agda{CastBlameSafe}\app c \app \ell$ and $a : \act{c}$
  and $\agda{CastsAllSafe}\app V \app \ell$
  and $v : \val{V}$, then
  $\agda{CastsAllSafe}\app (\agda{applyCast}\app V \app v \app c \app a) \app \ell$.
\end{lemma}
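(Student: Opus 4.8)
The plan is to prove this by case analysis on the witness $a : \act{c}$. In EDI$'$ the active casts are generated by only the two rules $\mathtt{ActId}$ and $\mathtt{ActProj}$ (cross casts being inert here), so, unlike the EDA$'$ version, there are just two cases and no cross-cast sub-cases. The $\mathtt{ActId}$ case is immediate: here $c = a \Rightarrow^\ell a$ for an atomic type $a$, the term $\agda{applyCast}\app V \app v \app c \app a$ is definitionally $V$, and the goal is then exactly the hypothesis $\agda{CastsAllSafe}\app V \app \ell$, so we need not even inspect $\agda{CastBlameSafe}\app c \app \ell$.

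For the $\mathtt{ActProj}$ case we have $c = \Unk \Rightarrow^{\ell_2} B$ with $B \neq \Unk$. Since $V$ is a value at type $\Unk$, Lemma~\ref{lem:canonical-star} lets us write $V \equiv \cast{M}{A \Rightarrow^{\ell_1} \Unk}$ with $\val{M}$, $A \neq \Unk$, and the inner cast inert (necessarily of the form $\mathtt{InInj}$, since no cross cast targets $\Unk$). Inverting $\agda{CastsAllSafe}\app V \app \ell$ through rule $\textsc{Cast}$ of Figure~\ref{fig:castsallsafe} yields $\agda{CastsAllSafe}\app M \app \ell$. I would then split on whether $A$ and $B$ are consistent. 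If $A \sim B$, then $\agda{applyCast}$ returns $\cast{M}{A \Rightarrow^{\ell_2} B}$, and by rule $\textsc{Cast}$ it remains only to establish $\agda{CastBlameSafe}\app (A \Rightarrow^{\ell_2} B) \app \ell$. If $A \not\sim B$, then $\agda{applyCast}$ returns $\blame{\ell_2}{}$, and by rule $\textsc{Blame}$ it remains only to establish $\ell \neq \ell_2$.

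In both subcases I would invert the hypothesis $\agda{CastBlameSafe}\app (\Unk \Rightarrow^{\ell_2} B) \app \ell$. Its first introduction rule demands $\Unk <: B$; but in the subtyping order of Figure~\ref{fig:subtype-D} the only supertype of $\Unk$ is $\Unk$ itself, so $\Unk <: B$ forces $B = \Unk$, contradicting $B \neq \Unk$, and this branch is vacuous. The second introduction rule yields $\ell \neq \ell_2$, which closes the subcase returning $\blame{\ell_2}{}$ and, via the $\ell \neq \ell'$ clause of $\agda{CastBlameSafe}$, also supplies $\agda{CastBlameSafe}\app (A \Rightarrow^{\ell_2} B) \app \ell$ in the consistent subcase.

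The only genuinely non-routine observation is the maximality of $\Unk$ in the subtyping order, which eliminates the $\Unk <: B$ branch and thereby forces $\ell \neq \ell_2$; everything else is inversion on the $\agda{CastsAllSafe}$ and $\agda{CastBlameSafe}$ derivations together with the given hypotheses. Since all of this is mechanized, I expect Agda to discharge the impossible branch and the routine reconstructions with little effort, and the resulting proof mirrors that of EDA$'$ but without the cross-cast cases, which are absent here because those casts are inert.
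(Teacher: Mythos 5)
Your proposal is correct and matches the paper's proof, which is exactly a case analysis on $a : \act{c}$ (only $\mathtt{ActId}$ and $\mathtt{ActProj}$ in EDI$'$) followed by inversion on $\agda{CastBlameSafe}\app c \app \ell$ in the projection case; the paper's count of ``3 straightforward cases'' is just your two cases with the $\mathtt{ActProj}$ case split over the two $\agda{CastBlameSafe}$ rules, the subtyping one being vacuous since $\Unk$ is maximal in the ``D'' subtyping order, as you observe. The only cosmetic difference is that in the $\CC'$ setting the canonical value at type $\Unk$ is an inert-cast wrap $\castInert{M}{c}$ rather than a plain cast, but the inversion argument is unchanged.
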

\begin{proof}
  The proof of this lemma is much shorter than the corresponding one
  for EDA$'$ because there are fewer casts categorized as active.  So
  there are just 3 straightforward cases to check.
\end{proof}

\begin{proposition}
  EDI$'$ is an instance of the \agda{BlameSafe} structure.
\end{proposition}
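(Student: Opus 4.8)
The plan is to assemble this instance from pieces already in hand. The \agda{BlameSafe} structure extends both \agda{PreBlameSafe} and \agda{CastStruct}, adding only the field $\agda{applyCast\mhyphen{}pres\mhyphen{}allsafe}$. The preceding proposition (that EDI$'$ is an instance of \agda{PreBlameSafe}) supplies the first component, and the \agda{CastStruct} instance built earlier in this section for EDI --- which is exactly what ``instantiating $\CC'$ instead of $\CC$'' reuses, over the same underlying \agda{PreCastStruct} --- supplies the second. So the only genuinely new obligation is the field $\agda{applyCast\mhyphen{}pres\mhyphen{}allsafe}$.

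That field requires that, for an active, blame-safe cast $c$ and a value $V$ with $\safefor{V}{\ell}$, the term $\agda{applyCast}\app V\app c\app a$ is safe for $\ell$. I would discharge it directly by Lemma~\ref{lem:applyCast-safe-pedi}, after identifying the \agda{CastsAllSafe} predicate used there with the $\safefor{\cdot}{\ell}$ predicate of Figure~\ref{fig:castsallsafe}; they name the same relation. That lemma is a short case analysis on the active-cast witness. For $\mathtt{ActId}$ the output is $V$ itself, so safety is inherited from $\safefor{V}{\ell}$. For $\mathtt{ActProj}$, first use the canonical form of values at type $\Unk$ (Lemma~\ref{lem:canonical-star}) to see $V$ as an injection, so $\safefor{V}{\ell}$ gives safety of the injected subterm; and observe that, since the target $B$ of an $\mathtt{ActProj}$ cast satisfies $B \neq \Unk$, the subtyping rule cannot have justified $\agda{CastBlameSafe}\app c\app \ell$, so it holds only because the label of $c$ differs from $\ell$. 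From these two facts, both possible outputs --- the reconstructed cast (safe by the (\textsc{Cast}) rule) and $\blame{\ell'}{}$ (safe by the (\textsc{Blame}) rule, as $\ell' \neq \ell$) --- are safe for $\ell$.

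Bundling the three records then witnesses membership in \agda{BlameSafe}. I anticipate no genuine obstacle: unlike EDA$'$, the EDI$'$ \agda{applyCast} has no cross-cast clause, so the supporting lemma has only a couple of short cases and the rest of the argument is purely structural record assembly. If anything warrants care, it is just the notational bookkeeping of identifying $\safefor{\cdot}{\ell}$ with \agda{CastsAllSafe} and the coherence of the three structures over the shared \agda{PreCastStruct} --- both routine.
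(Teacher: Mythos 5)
Your proposal is correct and matches the paper's route: the paper also obtains this proposition by combining the EDI$'$ \agda{PreBlameSafe} and \agda{CastStruct} instances with Lemma~\ref{lem:applyCast-safe-pedi} to discharge the sole new field $\agda{applyCast\mhyphen{}pres\mhyphen{}allsafe}$, the lemma itself being the short three-case analysis you describe (the paper merely notes the cases are "straightforward"). Your elaboration of the $\mathtt{ActProj}$ case — canonical form at $\Unk$ plus the observation that $\Unk \not<: B$ forces the label-difference rule — is exactly the content hidden behind that remark.
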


We instantiate Theorem~\ref{thm:blame-subtyping} with this
\agda{BlameSafe} instance to obtain the Blame-Subtyping Theorem for
EDI$'$.

\begin{corollary}[Blame-Subtyping Theorem for EDI$'$]
  \label{thm:blame-subtyping-EDA}
  If $M : \Gamma \vdash A$ and $\agda{CastsAllSafe} \app M \app \ell$,
  then $\neg (M \longrightarrow^{*} \key{blame} \, \ell)$.
\end{corollary}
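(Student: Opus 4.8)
The plan is to obtain this corollary purely as an instantiation of the generic Blame-Subtyping Theorem~\ref{thm:blame-subtyping}, which has already been proved once and for all for $\CC'(\CastOp)$ under the assumption that the cast representation forms a \agda{BlameSafe} structure. Consequently there is no fresh induction to run here: the only obligations are (i) to exhibit EDI$'$ as a \agda{BlameSafe} structure and (ii) to check that the notation $\agda{CastsAllSafe}\app M\app\ell$ used in the statement is exactly the generic predicate $\safefor{M}{\ell}$.

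First I would assemble the \agda{PreBlameSafe} instance for EDI$'$. The \agda{CastBlameSafe} predicate is taken verbatim from EDA$'$: a cast $A \Rightarrow^{\ell'} B$ is safe for $\ell$ whenever $\ell \neq \ell'$ or $A <: B$, using the ``D'' subtyping relation of Figure~\ref{fig:subtype-D} (with $\Unk$ at the top). Because EDI's cast operators \agda{dom}, \agda{cod}, \agda{fst}, \agda{snd}, \agda{inl}, \agda{inr} are the very same functions as EDA's, Lemma~\ref{lem:blame-safety-cast-operators-EDA} transfers unchanged and discharges all six blame-safety-preservation fields of \agda{PreBlameSafe}. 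I would then extend this to a \agda{BlameSafe} instance by supplying the \agda{applyCast}-preserves-blame-safety field, which is Lemma~\ref{lem:applyCast-safe-pedi}; since EDI categorizes the cross casts as inert rather than active, \agda{applyCast} only handles identity casts and projections, so this is a short case analysis on the active-cast evidence.

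With the \agda{BlameSafe} instance in hand, I would instantiate Theorem~\ref{thm:blame-subtyping} at it. Unfolding the generic statement yields: if $M : \Gamma \vdash A$ and $\safefor{M}{\ell}$, then $\neg(M \longrightarrow^{*} \key{blame}\,\ell)$, where $\longrightarrow$ is EDI$'$'s instance of the $\CC'(\CastOp)$ reduction relation and $\safefor{M}{\ell}$ is what is written $\agda{CastsAllSafe}\app M\app\ell$ in the corollary. The corollary is then literally this instantiation, appealing along the way to the generic preservation-of-blame-safety Lemma~\ref{lem:preserve-cas} which is itself already proved for $\CC'(\CastOp)$.

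I do not expect a genuine obstacle: the substantive content — preservation of ``safe for'' under reduction and the induction over reduction sequences — lives in the parameterized development, and the instance-specific lemmas are routine precisely because EDI$'$ has so few active casts. The one point worth a moment's care is that the subtyping choice in Figure~\ref{fig:subtype-D} really is the one matching the ``D'' semantics of EDI$'$ — i.e.\ that a projection $\Unk \Rightarrow^{\ell} B$ with $B \neq \Unk$ need not be blame-safe for its own label — but this is inherited wholesale from the EDA$'$ development and requires no new argument.
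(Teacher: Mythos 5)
Your proposal matches the paper's proof exactly: reuse the EDA$'$ definition of \agda{CastBlameSafe} with the subtyping of Figure~\ref{fig:subtype-D}, transfer Lemma~\ref{lem:blame-safety-cast-operators-EDA} to get the \agda{PreBlameSafe} instance, add Lemma~\ref{lem:applyCast-safe-pedi} to obtain \agda{BlameSafe}, and instantiate Theorem~\ref{thm:blame-subtyping}. No differences worth noting; the argument is correct.
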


\subsection{The $\lambda\textup{\textsf{B}}$ Blame Calculus}
\label{sec:lambda-B}

This section generates the $\lambda\textup{\textsf{B}}$
variant~\citep{Siek:2015ab} of the Blame
Calculus~\citep{Wadler:2009qv} as an instance of the Parameterized
Cast Calculus.  It also extends $\lambda\textup{\textsf{B}}$ in
\citet{Siek:2015ab} with product types and sum types. We explore
treating cross casts on products and sums both as active and inert
casts in Section~\ref{sec:type-safety-lambda-B}.  Compared to the
previous sections, the main difference in $\lambda\textup{\textsf{B}}$
is that all injections and projections factor through \emph{ground
types}, defined as follows:
\[
  \text{Ground Types} \quad G, H ::= b \mid \Unk \to \Unk \mid \Unk \times \Unk \mid \Unk + \Unk
\]

The cast representation type consists of a source type, target type,
blame label, and consistency proof.\\[1ex]
\fbox{$c : \Cast{A}{B}$}
\[
\inference{}
          {A \Rightarrow^\ell B  : \Cast{A}{B} } A \sim B
\]
\noindent The cast constructor is defined as follows.
\[
  \coerce{A}{B}{\ell} = (A \Rightarrow^\ell B)
\]

\subsubsection{Reduction Semantics and Type Safety}
\label{sec:type-safety-lambda-B}

We categorize casts between function, pair, and sum types as cross
casts.  \\[1ex]
\fbox{$\cross{c}$}
\begin{gather*}
  \inference
    {}
    {\mathtt{Cross}\otimes : \cross{(A \otimes B \Rightarrow^\ell C \otimes D)}} ~ \otimes \in \{ \to , \times, + \} 
\end{gather*}

Regarding inert casts in $\lambda\textup{\textsf{B}}$, an injection
from ground type is inert and a cast between function types is inert.
As for casts between product types and sum types, there are two
choices, either to make them inert or to make them active.
This yields two variants of $\lambda\textup{\textsf{B}}$:
Variant 1 where all cross casts are inert and Variant 2
where only function casts are inert. \\[1ex]

\fbox{$\inert{c}$. Variant 1}
\[
\inference{}
          {\mathtt{InInj} : \inert{(G \Rightarrow^\ell \Unk)}}
\quad
\inference{\cross{c}}
          {\mathtt{InCross} : \inert{c}}
\]

\fbox{$\inert{c}$. Variant 2}
\[
\inference{}
          {\mathtt{InInj} : \inert{(G \Rightarrow^\ell \Unk)}}
\quad
\inference{}
          {\mathtt{InFun} : \inert{(A \to B \Rightarrow^\ell C \to D)}}
\]

The active casts in both $\lambda\textup{\textsf{B}}$ variations
include injections from non-ground type, projections, and identity
casts.  In Variant 2, we categorize only function casts as active, so
compared to Variant 1 we have an additional rule
$\mathtt{Act}{\otimes}$ for casts between product types and sum types
being active. \\[1ex]

\fbox{$\act{c}$. Variant 1}
\begin{gather*}
  \inference
      {}
      {\mathtt{ActId} : \act{(a \Rightarrow^\ell a)}}
  \;\;
  \inference
  {}
  {\mathtt{ActInj} : \act{(A \Rightarrow^\ell \Unk)}}~
  A \not\equiv \Unk, \not\exists G.\, A \equiv G
  \\[1ex]
  \inference
  {}
  {\mathtt{ActProj} : \act{(\Unk \Rightarrow^\ell B)}}~B \neq \Unk
\end{gather*}

\fbox{$\act{c}$. Variant 2}
\begin{gather*}
  \inference
      {}
      {\mathtt{ActId} : \act{(a \Rightarrow^\ell a)}}
  \;\;
  \inference
      {}
      {\mathtt{ActInj} : \act{(A \Rightarrow^\ell \Unk)}}
      ~ A \not\equiv \Unk, \not\exists G.\, A \equiv G
  \\[1ex]
  \inference
    {}
    {\mathtt{ActProj} : \act{(\Unk \Rightarrow^\ell B)}}~B \neq \Unk
  \\[1ex]
  \inference
    {}
    {\mathtt{Act}{\otimes} :\act{(A \otimes B \Rightarrow^\ell C \otimes D)}}
    \otimes \in \{ \times, + \}
\end{gather*}

\begin{lemma}
  \label{lem:lambda-B-active-or-inert}
  For any types $A$ and $B$, $c : \Cast{A}{B}$ is either an active or
  inert cast.
\end{lemma}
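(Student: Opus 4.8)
The plan is to follow the structure of the proof of Lemma~\ref{lem:simple-active-or-inert}: the cast $c$ must be of the form $A \Rightarrow^\ell B$ carrying a proof of $\consis{A}{B}$, so I would proceed by cases on that consistency derivation, treating Variant~1 and Variant~2 in parallel. Five of the six consistency rules are routine. The rule $\mathtt{Base}{\sim}$ forces $A \equiv B \equiv \Base$, so $c$ is active by $\mathtt{ActId}$. The rule $\mathtt{UnkL}{\sim}$ forces $A \equiv \Unk$; then $c$ is active by $\mathtt{ActId}$ if $B \equiv \Unk$ and by $\mathtt{ActProj}$ otherwise. The rules $\mathtt{Fun}{\sim}$, $\mathtt{Pair}{\sim}$, and $\mathtt{Sum}{\sim}$ each exhibit $c$ as a cross cast via $\mathtt{Cross}\otimes$; in Variant~1 all three are inert by $\mathtt{InCross}$, while in Variant~2 the function case is inert by $\mathtt{InFun}$ and the product and sum cases are active by $\mathtt{Act}{\otimes}$.

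The only case that needs genuine care is $\mathtt{UnkR}{\sim}$, where $B \equiv \Unk$ and $A$ is arbitrary, i.e.\ $c$ is an injection $A \Rightarrow^\ell \Unk$. Here I would first test whether $A \equiv \Unk$, in which case $c$ is active by $\mathtt{ActId}$. Otherwise I would run a small three-way classification of $A$: it is a ground type exactly when it is a base type or one of $\Unk \to \Unk$, $\Unk \times \Unk$, $\Unk + \Unk$, and each of these tests reduces to deciding equality of a type against $\Unk$, which is decidable. If $A$ is ground, $c$ is inert by $\mathtt{InInj}$ (in both variants); if $A$ is neither $\Unk$ nor ground, $c$ is active by $\mathtt{ActInj}$ (in both variants), since the side condition of $\mathtt{ActInj}$ is precisely ``$A \not\equiv \Unk$ and $A$ is not ground''.

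The main obstacle is thus bookkeeping rather than conceptual: I must arrange the ground-type classification so that its negative branch yields exactly the piece of negative evidence ($\not\exists G.\, A \equiv G$) demanded by $\mathtt{ActInj}$, and I must thread the two variants uniformly through the three cross-cast cases (only the labels of the constructors differ). No new ideas beyond those already present in Lemma~\ref{lem:simple-active-or-inert} are required; in the mechanization I would factor the classification into an auxiliary lemma deciding, for each $A$, whether $A$ is $\Unk$, a ground type, or neither, and then feed its output into the case split above. Since the six consistency constructors are exhaustive and each is assigned an active or inert witness, this establishes the lemma.
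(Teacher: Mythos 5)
Your proposal is correct and matches the approach the paper takes: the paper states this lemma without spelling out a proof (deferring to the Agda development), but its analogous proofs for EDA and EDI proceed exactly as you do, by case analysis on the consistency derivation carried by $A \Rightarrow^\ell B$, with the cross-cast cases reclassified per variant. Your handling of the $\mathtt{UnkR}{\sim}$ case via a decidable three-way classification of $A$ (equal to $\Unk$, ground, or neither) is precisely the extra bookkeeping the $\lambda\textup{\textsf{B}}$ instantiation needs to discharge the side condition of $\mathtt{ActInj}$, so no gap remains.
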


\begin{lemma}\label{lem:lambda-B-inert-cross}
  If $c : \Cast{A}{(B \otimes C)}$ and $\inert{c}$,
  then $\cross{c}$ and $A \equiv D \otimes E$ for some $D$ and $E$.
\end{lemma}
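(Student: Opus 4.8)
The plan is to prove the statement by inversion on the derivation of $\inert{c}$, carrying out the case analysis once for each of the two $\lambda\textsf{B}$ variants and, implicitly, once for each head constructor $\otimes \in \{ \to, \times, + \}$ (these correspond to the three fields $\agda{InertCross}{\to}$, $\agda{InertCross}{\times}$, $\agda{InertCross}{+}$ of $\agda{PreCastStruct}$). In either variant the cast $c$ necessarily has the form $A \Rightarrow^\ell (B \otimes C)$ together with a consistency proof, and the task is to produce a witness of $\cross{c}$ and a decomposition $A \equiv D \otimes E$ using the same constructor $\otimes$.

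First I would treat Variant 1, where $\inert{c}$ is derived by $\mathtt{InInj}$ or $\mathtt{InCross}$. The $\mathtt{InInj}$ case is vacuous: its target is $\Unk$, which cannot equal the non-atomic type $B \otimes C$. In the $\mathtt{InCross}$ case we already hold a proof $x : \cross{c}$, so the first half of the conclusion is immediate, and inverting $x$ against the single rule $\mathtt{Cross}\otimes$ forces $c$ to have the form $D \otimes' E \Rightarrow^\ell F \otimes' H$; matching the target $F \otimes' H$ against $B \otimes C$ identifies $\otimes'$ with $\otimes$ and yields $A \equiv D \otimes E$.

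Next I would treat Variant 2, where $\inert{c}$ is derived by $\mathtt{InInj}$ or $\mathtt{InFun}$. The $\mathtt{InInj}$ case is vacuous for the same reason. In the $\mathtt{InFun}$ case $c$ has the form $A_1 \to A_2 \Rightarrow^\ell C_1 \to C_2$, whose target is a function type; this only unifies with $B \otimes C$ when $\otimes$ is $\to$, in which case $A \equiv A_1 \to A_2$ and the required proof of $\cross{c}$ is $\mathtt{Cross}\to$. When instead $\otimes \in \{ \times, + \}$, Variant 2 classifies every product-to-product and sum-to-sum cast as active (rule $\mathtt{Act}{\otimes}$) and offers no inert rule with such a target, so the hypothesis $\inert{c}$ is contradictory and the statement holds vacuously.

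I do not anticipate any real difficulty here: the argument is pure inversion on the inert predicate. The one subtlety worth flagging is that the rule $\mathtt{Cross}\otimes$ constrains the \emph{same} head constructor to appear on the source and target of a cross cast; this is exactly what lets us recover $A \equiv D \otimes E$ once the target's head is known, and it is also why the product and sum sub-cases of Variant 2 collapse to vacuity. The only remaining work is bookkeeping: packaging the two instantiations so that the single lemma statement applies to both.
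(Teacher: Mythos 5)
Your proof is correct and matches the approach the paper takes for this family of lemmas: the paper omits the text proof for $\lambda\textup{\textsf{B}}$ (deferring to the Agda development), but its stated proof of the analogous EDI lemma proceeds by exactly the same inversion on $\inert{c}$, with the injection case vacuous and the cross/function case yielding the matching source constructor. Your handling of the two variants, including the vacuity of the product and sum targets under Variant 2's $\mathtt{InFun}$, is the right bookkeeping.
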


\begin{lemma}\label{lem:lambda-B-base-not-inert}
   A cast $c : \Cast{A}{b}$ is not inert.
\end{lemma}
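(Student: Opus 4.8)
The plan is to prove the lemma by a straightforward case analysis on the derivation of $\inert{c}$, showing that none of the inert-cast introduction rules can produce a cast whose target type is a base type $b$. Because $\lambda\textup{\textsf{B}}$ has two variants with different definitions of $\inert{}$, I would carry out the case analysis for each, but the arguments are nearly identical and very short, in the same spirit as Lemmas~\ref{lem:simple-base-not-inert} and~\ref{lem:pedi-base-not-inert}.

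For Variant~1, suppose $\inert{c}$. There are two cases. If the derivation is by $\mathtt{InInj}$, then $c$ has the form $G \Rightarrow^\ell \Unk$, so its target is $\Unk$; but $b \neq \Unk$ since base types are syntactically distinct from the unknown type, a contradiction. If the derivation is by $\mathtt{InCross}$, then $\cross{c}$ holds, and inverting that premise (via the single $\mathtt{Cross}\otimes$ rule) shows the target of $c$ must have the form $C \otimes D$ with $\otimes \in \{\to, \times, +\}$, which is again incompatible with the target being a base type. For Variant~2, the two cases are $\mathtt{InInj}$ (handled exactly as above) and $\mathtt{InFun}$, whose conclusion forces the target to be a function type $C \to D$, once more ruling out a base type.

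There is essentially no obstacle here: the lemma holds because, in either variant, every inert cast has target type $\Unk$, a function type, a product type, or a sum type --- never a base type. In the Agda development this is discharged by pattern matching with absurd (impossible) patterns for each inert constructor once the target is fixed to $b$. The only point requiring minor care is applying the inversion for $\cross{c}$ in the $\mathtt{InCross}$ case of Variant~1, but that inversion is immediate.
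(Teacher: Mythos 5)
Your case analysis is correct: in both variants every inert-cast rule forces the target type to be $\Unk$, a function type, or (via $\cross{c}$) some $C \otimes D$, none of which is a base type, so $\inert{c}$ with target $b$ is impossible. The paper states this lemma without an explicit proof, but this is exactly the argument it uses for the analogous lemmas in the EDA and EDI sections (Lemmas~\ref{lem:simple-base-not-inert} and~\ref{lem:pedi-base-not-inert}) and what its Agda development does, so your approach is essentially the same.
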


\begin{proposition}\label{prop:lambda-B-pre-cast-struct}
  Both variants of $\lambda\textup{\textsf{B}}$ are instances of
  the $\agda{PreCastStruct}$ structure.
\end{proposition}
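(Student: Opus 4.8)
The plan is to instantiate each field of the \agda{PreCastStruct} record, for each of the two variants of $\lambda\textup{\textsf{B}}$, and then discharge the three proof obligations it carries (\agda{ActiveOrInert}, the \agda{InertCross} fields, and \agda{baseNotInert}). Almost all of the ingredients have already been laid out in this section, so the proposition is largely an assembly step.

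First I would supply the data fields. The cast type constructor $\CastOp$, the \agda{Cross} predicate, and the \agda{Inert}/\agda{Active} predicates are exactly those given above, selecting the Variant~1 or Variant~2 definitions of \agda{Inert} and \agda{Active} as appropriate. The decomposition operators \agda{dom}, \agda{cod}, \agda{fst}, \agda{snd}, \agda{inl}, \agda{inr} are defined verbatim as in Section~\ref{sec:EDA}: a cross cast $A \otimes B \Rightarrow^\ell C \otimes D$ is split into the casts on its components, with the domain cast reversed in the function case, and the declared types hold by construction.

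Next, \agda{ActiveOrInert} is Lemma~\ref{lem:lambda-B-active-or-inert}, which I would prove by cases on the consistency witness $A \sim B$, using decidable tests ``is $A$ a ground type?'' and ``is $A \equiv \Unk$?'' in the injection case. Concretely: $a \Rightarrow^\ell a$ is active by $\mathtt{ActId}$; $A \Rightarrow^\ell \Unk$ with $A$ ground is inert by $\mathtt{InInj}$ and otherwise active by $\mathtt{ActInj}$; $\Unk \Rightarrow^\ell B$ with $B \neq \Unk$ is active by $\mathtt{ActProj}$; and a cast between two $\otimes$-types is a cross cast, which Variant~1 classifies as inert by $\mathtt{InCross}$, while Variant~2 classifies a function cross cast as inert by $\mathtt{InFun}$ and a product/sum cross cast as active by $\mathtt{Act}\otimes$. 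The three $\agda{InertCross}{\to}$, $\agda{InertCross}{\times}$, $\agda{InertCross}{+}$ fields come from Lemma~\ref{lem:lambda-B-inert-cross}, by inversion on the inert-cast witness: $\mathtt{InInj}$ has target $\Unk$ and so contradicts a $\otimes$-type target; $\mathtt{InCross}$ (Variant~1) carries the cross-cast evidence, and inversion on \agda{Cross} forces the source head constructor to match; in Variant~2 the product and sum cases are vacuous (no inert cast there has a product or sum target) and $\mathtt{InFun}$ already has the shape $A \to B \Rightarrow^\ell C \to D$ of a function cross cast. Finally \agda{baseNotInert} is Lemma~\ref{lem:lambda-B-base-not-inert}: every inert cast, in either variant, has target $\Unk$ or a non-atomic type, hence never a base type.

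The hard part will not be any single argument but the bookkeeping of keeping the two variants separate — in particular, making sure that in Variant~2 an $\mathtt{InFun}$-inert cast is still recognized as a cross cast everywhere the generic development demands one, which is precisely why $\mathtt{InFun}$ is restricted to the syntactic function-to-function shape — and providing the decidable discriminations (ground type, $\Unk$) that turn \agda{ActiveOrInert} into a genuine total function. The remaining cases are routine, and in the mechanization they are largely checked by Agda once the records are populated.
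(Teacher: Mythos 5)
Your proposal is correct and follows essentially the same route as the paper, which (leaving the details to the Agda mechanization) discharges the proposition exactly as you do: populate the record with the $\CastOp$, \agda{Cross}, \agda{Inert}/\agda{Active} definitions and the EDA-style \agda{dom}, \agda{cod}, \agda{fst}, \agda{snd}, \agda{inl}, \agda{inr} operators, then use Lemmas~\ref{lem:lambda-B-active-or-inert}, \ref{lem:lambda-B-inert-cross}, and \ref{lem:lambda-B-base-not-inert}, proved by the same case analyses on the consistency witness (with the groundness/$\Unk$ discriminations) and by inversion on the inert-cast evidence for each variant. No gaps; your variant-by-variant treatment of \agda{InertCross} and \agda{baseNotInert} matches the intended argument.
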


We define the following partial function named $\agda{gnd}$ (short for
``ground'').  It is defined on all types except for $\Unk$.

\fbox{$\agda{gnd}\app A$}
\begin{align*}
  \agda{gnd} \app b &= b \\
  \agda{gnd} \app A \otimes B &= \Unk \otimes \Unk
    & \text{for } \otimes \in \{ \to, \times, + \} 
\end{align*}

Also, we use the following shorthand for a sequence of two casts:
\[
\cast{M}{A \Rightarrow^{\ell_1} B \Rightarrow^{\ell_2} C}
=
\cast{\cast{M}{A \Rightarrow^{\ell_1} B}}{B \Rightarrow^{\ell_2} C}
\]

The following is the definition of \agda{applyCast} for
$\lambda\textup{\textsf{B}}$ Variant 1:

\[
  \agda{applyCast} : \forall \Gamma A B.\, (M : \Gamma \vdash A) \to \agda{Value}\app M\to (c : \Cast{A}{B}) \to \agda{Active}\app c \to \Gamma \vdash B
\]
\[
\begin{array}{lllllll}
  \agda{applyCast} & M & v & a \Rightarrow^\ell a & \mathtt{ActId} &=&
     M \\
  \agda{applyCast} & M & v & A \Rightarrow^\ell \Unk & \mathtt{ActInj} &=&
  \cast{M}{A \Rightarrow^\ell \agda{gnd}\app A \Rightarrow^\ell \Unk} \\
  \agda{applyCast} & \cast{M}{G \Rightarrow^{\ell_1} \Unk} & v & \Unk \Rightarrow^{\ell_2} B & \mathtt{ActProj} &=& M' 
\end{array}
\]
where
\[
  M' =
  \begin{cases}
    M  & \text{if } B = \agda{gnd} \app B = G \\
    \blame{\ell_2}{}   & \text{if } B = \agda{gnd} \app B \neq G\\
    \cast{M}{G \Rightarrow^{\ell_1} \Unk \Rightarrow^{\ell_2} \agda{gnd} \app B \Rightarrow^{\ell_2} B}   & \text{otherwise}
  \end{cases}
\]

The definition of \agda{applyCast} for $\lambda\textup{\textsf{B}}$
Variant 2 has two additional cases,
for $\mathtt{Act}{\times}$ and $\mathtt{Act}{+}$ respectively:

\[
  \agda{applyCast} : \forall \Gamma A B.\, (M : \Gamma \vdash A) \to \agda{Value}\app M\to (c : \Cast{A}{B}) \to \agda{Active}\app c \to \Gamma \vdash B
\]
\[
\begin{array}{lllllll}
  \agda{applyCast} & M & v & a \Rightarrow^\ell a & \mathtt{ActId} &=&
     M \\
  \agda{applyCast} & M & v & A \Rightarrow^\ell \Unk & \mathtt{ActInj} &=&
  \cast{M}{A \Rightarrow^\ell \agda{gnd}\app A \Rightarrow^\ell \Unk} \\
  \agda{applyCast} & M & v & c & \mathtt{Act}{\times} &=&
    \agda{eta}{\times}\app M\app c \app \mathtt{Cross}\times\\
  \agda{applyCast} & M & v & c & \mathtt{Act}{+} &=& 
    \agda{eta}{+}\app M\app c \app \mathtt{Cross}+\\
  \agda{applyCast} & \cast{M}{G \Rightarrow^{\ell_1} \Unk} & v & \Unk \Rightarrow^{\ell_2} B & \mathtt{ActProj} &=& M' 
\end{array}
\]
where
\[
  M' =
  \begin{cases}
    M  & \text{if } B = \agda{gnd} \app B = G \\
    \blame{\ell_2}{}   & \text{if } B = \agda{gnd} \app B \neq G\\
    \cast{M}{G \Rightarrow^{\ell_1} \Unk \Rightarrow^{\ell_2} \agda{gnd} \app B \Rightarrow^{\ell_2} B}   & \text{otherwise}
  \end{cases}
\]

\begin{proposition}
  Both variants of $\lambda\textup{\textsf{B}}$ are instances
  of the $\agda{CastStruct}$ structure.
\end{proposition}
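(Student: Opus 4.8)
The plan is to exploit the fact that the \agda{CastStruct} record adds exactly one field to \agda{PreCastStruct}, namely \agda{applyCast}. Since Proposition~\ref{prop:lambda-B-pre-cast-struct} already establishes that both variants of $\lambda\textup{\textsf{B}}$ instantiate \agda{PreCastStruct}, the only remaining obligation is to check that the two displayed definitions of \agda{applyCast} (one per variant) have the declared type $(M : \Gamma \vdash A) \to \agda{Value}\app M \to (c : \Cast{A}{B}) \to \agda{Active}\app c \to \Gamma \vdash B$. Because the terms of $\CC(\CastOp)$ are intrinsically typed, discharging this obligation is precisely a matter of Agda accepting the definitions; I would present it as a case analysis on the active-cast witness and point out the non-routine type-level reasoning in each branch.

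Most branches are immediate. For \agda{ActId} the output is the input term, whose type is already the target type. For the cross-cast branches \agda{Act}$\times$ and \agda{Act}$+$ that occur only in Variant~2, the outputs $\agda{eta}{\times}\app M\app c\app\mathtt{Cross}\times$ and $\agda{eta}{+}\app M\app c\app\mathtt{Cross}+$ carry the correct type by the signatures of $\agda{eta}{\times}$ and $\agda{eta}{+}$ from Section~\ref{sec:eta-cast-reduction}, once $\mathtt{Cross}\times$ and $\mathtt{Cross}+$ supply the required \agda{Cross} evidence. The \agda{ActInj} branch requires only the elementary consistency facts $\consis{A}{\agda{gnd}\app A}$ and $\consis{\agda{gnd}\app A}{\Unk}$ (with $\agda{gnd}\app A$ well-defined since $A \neq \Unk$), so that the two-cast sequence $\cast{M}{A \Rightarrow^{\ell} \agda{gnd}\app A \Rightarrow^{\ell} \Unk}$ runs from $A$ to $\Unk = B$.

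The one branch that needs real care, and the step I expect to be the main obstacle, is \agda{ActProj}, for a cast $\Unk \Rightarrow^{\ell_2} B$ applied to a value of type $\Unk$. Here I would invoke Lemma~\ref{lem:canonical-star} to see that such a value is $\cast{M'}{c'}$ with $c'$ inert and target $\Unk$, and then observe that in both variants the only inert casts with target $\Unk$ are injections $G \Rightarrow^{\ell_1} \Unk$ from a ground type, so $M' : \Gamma \vdash G$ and $\agda{gnd}\app G = G$. The definition then splits into three sub-cases according to whether $B$ equals $\agda{gnd}\app B$ (i.e.\ is ground) and whether $\agda{gnd}\app B$ equals $G$; I would check that these are exhaustive and that each output is well-typed, the interesting one being the three-cast sequence $\cast{M'}{G \Rightarrow^{\ell_1} \Unk \Rightarrow^{\ell_2} \agda{gnd}\app B \Rightarrow^{\ell_2} B}$, which is well-typed because $\consis{G}{\Unk}$, $\consis{\Unk}{\agda{gnd}\app B}$, and $\consis{\agda{gnd}\app B}{B}$ and whose composite runs from $G$ to $B$. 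Having verified \agda{applyCast} in each variant, I would conclude by bundling it with the \agda{PreCastStruct} instance of Proposition~\ref{prop:lambda-B-pre-cast-struct} to obtain a \agda{CastStruct} instance for both variants.
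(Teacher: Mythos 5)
Your proposal is correct and matches the paper's (implicit) argument: since \agda{CastStruct} only adds \agda{applyCast} to the already-established \agda{PreCastStruct} instance (Proposition~\ref{prop:lambda-B-pre-cast-struct}), the proposition amounts to the displayed, intrinsically-typed definitions of \agda{applyCast} for the two variants having the required type, which is exactly what the Agda development checks. Your case-by-case verification, including the use of Lemma~\ref{lem:canonical-star} and the fact that the only inert casts into $\Unk$ are ground injections in the \agda{ActProj} branch, is the same reasoning, just spelled out.
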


We import and instantiate the reduction semantics and proof of type
safety from Sections~\ref{sec:dynamic-semantics-CC} and
\ref{sec:type-safety-CC} to obtain the following definition and
results.

\begin{definition}[Reduction for $\lambda\textup{\textsf{B}}$]
  The reduction relation $M \longrightarrow N$ for
  $\lambda\textup{\textsf{B}}$ is the reduction relation of
  $\CC(\CastOp)$ instantiated with $\lambda\textup{\textsf{B}}$'s
  instance of \agda{CastStruct}.
\end{definition}

\begin{corollary}[Preservation for $\lambda\textup{\textsf{B}}$]
  \label{thm:lambda-B-cast-preservation}
  If $\Gamma \vdash M : A$  and $M \longrightarrow M'$,
  then $\Gamma \vdash M' : A$.
\end{corollary}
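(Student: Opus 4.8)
The plan is to obtain this corollary by the same instantiation pattern already used for Corollary~\ref{thm:EDA-preservation} and Corollary~\ref{thm:pedi-cast-preservation}. The Preservation theorem of Section~\ref{sec:type-safety-CC} is stated and proved once and for all for $\CC(\CastOp)$ parameterized by an arbitrary \agda{CastStruct}, and the immediately preceding proposition establishes that each of the two variants of $\lambda\textup{\textsf{B}}$ is an instance of \agda{CastStruct}. Since the reduction relation for $\lambda\textup{\textsf{B}}$ is \emph{defined} to be the reduction relation of $\CC(\CastOp)$ instantiated with that \agda{CastStruct}, applying the parameterized Preservation theorem to this instance yields the statement directly.

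First I would recall why the parameterized Preservation theorem needs no real work: in the intrinsically-typed presentation, $M \longrightarrow M'$ is a relation on terms $M, M' : \Gamma \vdash A$ that already share the same context $\Gamma$ and type $A$, so preservation is a consequence of the type ascribed to the reduction relation rather than a separate inductive proof. The only rule whose well-typedness is not immediate from the shape of the rule is \eqref{eq:cast}, which delegates to \agda{applyCast}; but the \agda{applyCast} field of \agda{CastStruct} is required to have codomain $\Gamma \vdash B$ for a cast $c : \Cast{A}{B}$, and this was already checked when constructing the $\lambda\textup{\textsf{B}}$ instances --- for $\mathtt{ActId}$, $\mathtt{ActInj}$, $\mathtt{ActProj}$, and, in Variant~2, for $\mathtt{Act}{\times}$ and $\mathtt{Act}{+}$ through the eta functions $\agda{eta}{\times}$ and $\agda{eta}{+}$.

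So the concrete steps are: (1) note that $\lambda\textup{\textsf{B}}$'s reduction relation is, by definition, $\CC(\CastOp)$'s reduction relation at $\lambda\textup{\textsf{B}}$'s \agda{CastStruct} instance; (2) invoke the parameterized Preservation theorem at that instance; (3) conclude. I do not expect any obstacle here: all the genuine content --- well-typedness of \agda{applyCast}, including the two-cast sequences through a ground type built with $\agda{gnd}$, and of the eta functions --- was discharged in proving that $\lambda\textup{\textsf{B}}$ inhabits \agda{PreCastStruct} and \agda{CastStruct}. The only point worth keeping in mind is that the $\mathtt{ActInj}$ and $\mathtt{ActProj}$ equations rely on the consistencies $A \sim \agda{gnd}\app A$ and $\agda{gnd}\app A \sim \Unk$ so that the intermediate casts are well-formed, but this is routine and is already part of the instance.
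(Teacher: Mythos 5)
Your proposal is correct and matches the paper's approach exactly: the paper obtains this corollary by instantiating the parameterized Preservation theorem of $\CC(\CastOp)$ (itself a consequence of the intrinsically-typed reduction relation checked by Agda) with the $\lambda\textup{\textsf{B}}$ instance of \agda{CastStruct}. Nothing further is needed.
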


\begin{corollary}[Progress for $\lambda\textup{\textsf{B}}$]
  \label{thm:lambda-B-cast-progress}
  If $\emptyset \vdash M : A$, then 
  \begin{enumerate}
  \item $M \longrightarrow M'$ for some $M'$,
  \item $\val{M}$, or
  \item $M \equiv \blame{\ell}{}$.
  \end{enumerate}
\end{corollary}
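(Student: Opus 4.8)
The plan is to obtain this corollary for free by instantiating the generic Progress theorem, Theorem~\ref{thm:cc-progress}, at $\lambda\textup{\textsf{B}}$'s \agda{CastStruct} instance. Recall that \agda{progress}, the Agda function witnessing Theorem~\ref{thm:cc-progress}, is defined once over an arbitrary \agda{CastStruct} and returns, for any closed well-typed term, exactly the three-way disjunction claimed here. So the only thing that needs checking is that $\lambda\textup{\textsf{B}}$ --- in either of its two variants --- supplies a \agda{CastStruct}, which has already been done: Proposition~\ref{prop:lambda-B-pre-cast-struct} gives the \agda{PreCastStruct} instance and the subsequent proposition extends it with the \agda{applyCast} field to a full \agda{CastStruct}.

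It is worth recalling which obligations the generic proof of Theorem~\ref{thm:cc-progress} actually imposes, since these are precisely what the two variants discharge: the case for cast application uses the \agda{ActiveOrInert} field (established by Lemma~\ref{lem:lambda-B-active-or-inert}) to split on whether the cast is active --- in which case reduction proceeds by rule \eqref{eq:cast}, delegating to \agda{applyCast} --- or inert --- in which case the cast-wrapped value is itself a value. The cases for the elimination forms use \agda{baseNotInert} (Lemma~\ref{lem:lambda-B-base-not-inert}) to rule out an inert cast at base type when the operand is a constant, and use $\agda{InertCross}{\to}$, $\agda{InertCross}{\times}$, $\agda{InertCross}{+}$ (Lemma~\ref{lem:lambda-B-inert-cross}) to turn an inert cast targeting a function, product, or sum type into a cross cast, licensing reductions \eqref{eq:fun-cast}, \eqref{eq:fst-cast}, \eqref{eq:snd-cast}, and \eqref{eq:case-cast}. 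None of these refer to the concrete cast representation, so the generic proof applies verbatim.

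There is no genuine obstacle here --- the real work lives in the preceding lemmas and propositions --- but the one point to keep in mind is that Variant~1 and Variant~2 discharge these obligations by different routes, because they differ on whether cross casts over products and sums are inert or active. For Variant~1 the three $\agda{InertCross}$ fields are all witnessed through the $\mathtt{InCross}$ constructor, whereas for Variant~2 only $\agda{InertCross}{\to}$ is non-trivial (via $\mathtt{InFun}$), the other two being vacuous, and the slack is taken up by the $\mathtt{Act}{\otimes}$ clauses of \agda{applyCast}, which call $\agda{eta}{\times}$ and $\agda{eta}{+}$. Once both \agda{CastStruct} instances typecheck in Agda, instantiating \agda{progress} at each yields the stated corollary uniformly for both variants.
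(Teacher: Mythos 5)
Your proposal matches the paper's proof: the corollary is obtained exactly by instantiating the generic Progress theorem (Theorem~\ref{thm:cc-progress}) with $\lambda\textup{\textsf{B}}$'s \agda{CastStruct} instance, whose prerequisites are supplied by Lemmas~\ref{lem:lambda-B-active-or-inert}, \ref{lem:lambda-B-inert-cross}, and \ref{lem:lambda-B-base-not-inert} for both variants. Your additional remarks on how Variant~1 and Variant~2 discharge the structure fields differently are accurate but not needed beyond what the paper's instantiation already records.
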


Let $\lambda\textup{\textsf{B}}'$ be the variant of
$\lambda\textup{\textsf{B}}$ obtained by instantiating $\CC'$ instead
of $\CC$.

\subsubsection{Blame-Subtyping}

$\lambda\textup{\textsf{B}}'$ uses the ``UD'' blame-tracking
strategy~\citep{Wadler:2009qv,Siek:2009rt}, so the subtyping relation
that corresponds to safe casts is the one in
Figure~\ref{fig:subtype-UD}. In this subtyping relation, a type $A$ is
a subtype of $\Unk$ only if it is a subtype of a ground type.  For
example, $\Int \to \Int \not<: \Unk$ because $\Int \to \Int \not<:
\Unk \to \Unk$ because $\Unk \not<: \Int$.

The \agda{CastBlameSafe} predicate is then defined as it was for EDA
in Section~\ref{sec:EDA-blame-subtyping}, except using the subtyping
relation of Figure~\ref{fig:subtype-UD}.

\begin{figure}[t]
  \begin{gather*}
    \inference{}{\Unk <: \Unk}
    \quad
    \inference{}{b <: b}
    \quad
    \inference{A <: G}{A <: \Unk}
    \quad
    \inference{C <: A & B <: D}{A \to B <: C \to D}
    \\[1ex]
    \inference{A <: C & B <: D}{A \times B <: C \times D}
    \quad
    \inference{A <: C & B <: D}{A + B <: C + D} 
  \end{gather*}
  \caption{Subtyping for ``UD'' blame-tracking}
  \label{fig:subtype-UD}
\end{figure}

The cast operators for $\lambda\textup{\textsf{B}}'$ are defined
exactly the same as for EDA, so
Lemma~\ref{lem:blame-safety-cast-operators-EDA} also applies to
$\lambda\textup{\textsf{B}}'$, that is, the cast operators such as
\agda{dom} preserve blame safety.

\begin{proposition}
  \label{prop:lambda-B-preblamesafe}
  $\lambda\textup{\textsf{B}}'$ is an instance of the
  \agda{PreBlameSafe} structure.
\end{proposition}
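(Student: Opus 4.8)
The plan is to assemble the \agda{PreBlameSafe} record of Figure~\ref{fig:PreBlameSafe} in three pieces. The \agda{PreCastStruct} component is already in hand: Proposition~\ref{prop:lambda-B-pre-cast-struct} supplies it for both variants of $\lambda\textup{\textsf{B}}$, and passing from $\lambda\textup{\textsf{B}}$ to $\lambda\textup{\textsf{B}}'$ only changes the term-level calculus ($\CC'$ in place of $\CC$), not the cast representation or the cast operators, so that instance transfers unchanged. Moreover \agda{PreBlameSafe} never mentions \agda{Inert} or \agda{Active}, so the argument below is uniform over both $\lambda\textup{\textsf{B}}$ variants. It therefore remains to supply the \agda{CastBlameSafe} predicate and to discharge the six operator-preservation fields \agda{domBlameSafe}, \agda{codBlameSafe}, \agda{fstBlameSafe}, \agda{sndBlameSafe}, \agda{inlBlameSafe}, and \agda{inrBlameSafe}.

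For \agda{CastBlameSafe} I would take the two-rule definition already stated for $\lambda\textup{\textsf{B}}'$ in this section: the cast $A \Rightarrow^{\ell'} B$ is blame-safe for $\ell$ either when $\ell \neq \ell'$, or when $A <: B$ under the ``UD'' subtyping relation of Figure~\ref{fig:subtype-UD} (which forces $\ell' = \ell$). This is the EDA$'$ recipe with the ``D'' relation replaced by the ``UD'' relation, and $\lambda\textup{\textsf{B}}'$'s cast representation --- source type, target type, blame label, consistency proof --- is the same as EDA$'$'s, so nothing here needs to be rebuilt.

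The six preservation fields are exactly the content of Lemma~\ref{lem:blame-safety-cast-operators-EDA}, so the real task is to see that that lemma still applies once we switch subtyping relations. The cast operators \agda{dom}, \agda{cod}, \agda{fst}, \agda{snd}, \agda{inl}, \agda{inr} for $\lambda\textup{\textsf{B}}'$ are defined character-for-character as in EDA and act only on cross casts between two function types, two product types, or two sum types; hence the casts they return are always casts between proper components, and the only subtyping rules ever consulted in reasoning about them are the structural congruence rules for $\to$, $\times$, and $+$, which coincide in Figure~\ref{fig:subtype-D} and Figure~\ref{fig:subtype-UD}. Each obligation is then discharged by casing on the \agda{CastBlameSafe} derivation of the input cross cast: if it carries a label $\ell' \neq \ell$, the decomposed cast carries the same $\ell'$ and is safe by the label rule; if instead its source is a UD-subtype of its target, inverting the appropriate congruence rule yields the required subtyping judgments on the components --- contravariant for the function domain, covariant otherwise --- so the decomposed cast is safe by the subtyping rule.

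The only point needing care, and hence the main obstacle, is confirming that the proof of Lemma~\ref{lem:blame-safety-cast-operators-EDA}, originally written against the ``D'' relation, does not quietly invoke the $\Unk$-as-top rule that is the sole difference between ``D'' and ``UD''. Since that lemma's inversions only examine the function, product, and sum congruence rules, the transfer is immediate. Packaging \agda{CastBlameSafe} and the six fields on top of the \agda{PreCastStruct} instance of Proposition~\ref{prop:lambda-B-pre-cast-struct} then completes the record, establishing the proposition.
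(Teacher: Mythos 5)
Your proposal is correct and takes essentially the same route as the paper: define \agda{CastBlameSafe} exactly as for EDA$'$ but with the ``UD'' subtyping of Figure~\ref{fig:subtype-UD}, discharge the six preservation fields by reusing Lemma~\ref{lem:blame-safety-cast-operators-EDA}, and package these on top of the existing \agda{PreCastStruct} instance. Your explicit check that the lemma's proof never relies on the $\Unk$-as-top rule (so it survives the switch from ``D'' to ``UD'' subtyping, since cross casts target constructor types and only the congruence rules are inverted) is a point the paper simply asserts when it says the lemma ``also applies,'' but it does not change the approach.
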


\begin{lemma}[\agda{applyCast} preserves blame safety in both variants of $\lambda\textup{\textsf{B}}'$]\ \\
  \label{lem:applyCast-safe-lambda-B}
  If $\agda{CastBlameSafe}\app c \app \ell$ and $a : \act{c}$
  and $\agda{CastsAllSafe}\app V \app \ell$
  and $v : \val{V}$, then
  $\agda{CastsAllSafe}\app (\agda{applyCast}\app V \app v \app c \app a) \app \ell$.
\end{lemma}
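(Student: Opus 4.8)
The plan is to proceed by cases on the proof $a : \act{c}$ that $c$ is active, unfolding the definition of $\agda{applyCast}$ for $\lambda\textup{\textsf{B}}$ given above and, in each case, inverting the hypothesis $\agda{CastBlameSafe}\app c \app \ell$: for a cast $A \Rightarrow^{\ell'} B$ this holds either because $\ell \neq \ell'$ (in which case every inserted cast carrying $\ell'$ is automatically blame-safe for $\ell$) or because $A <: B$ in the ``UD'' subtyping of Figure~\ref{fig:subtype-UD}. Two small facts about that relation carry the load, and I would prove them first: (i) if $A <: \Unk$ and $A \neq \Unk$, then $A <: \agda{gnd}\app A$ and $\agda{gnd}\app A <: \Unk$ --- the point being that a ground supertype of a non-dynamic $A$ is forced to be $\agda{gnd}\app A$, since there is exactly one ground type of each shape; and (ii) $\Unk <: B$ implies $B = \Unk$, by a short inversion on the subtyping rules.

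The $\mathtt{ActId}$ case is immediate, as $\agda{applyCast}$ returns the value unchanged. For $\mathtt{ActInj}$ the result is $\cast{V}{A \Rightarrow^{\ell'} \agda{gnd}\app A \Rightarrow^{\ell'} \Unk}$; if $\ell \neq \ell'$ the two inserted casts are blame-safe for $\ell$, and if $\ell = \ell'$ with $A <: \Unk$ then fact~(i) gives $A <: \agda{gnd}\app A$ and $\agda{gnd}\app A <: \Unk$, so both are again blame-safe; combining with $\agda{CastsAllSafe}\app V\app\ell$ and rule (\textsc{Cast}) of Figure~\ref{fig:castsallsafe} closes the case. The only delicate case is $\mathtt{ActProj}$: by Lemma~\ref{lem:canonical-star} the value has the form $\cast{M_1}{G \Rightarrow^{\ell_1} \Unk}$, and inverting $\agda{CastsAllSafe}$ on it yields $\agda{CastBlameSafe}\app(G \Rightarrow^{\ell_1}\Unk)\app\ell$ and $\agda{CastsAllSafe}\app M_1\app\ell$, while the cast itself is $\Unk \Rightarrow^{\ell_2} B$ with $B \neq \Unk$. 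Inverting $\agda{CastBlameSafe}\app(\Unk \Rightarrow^{\ell_2} B)\app\ell$, the subtyping alternative would need $\Unk <: B$, which by fact~(ii) contradicts $B \neq \Unk$; hence $\ell \neq \ell_2$. I then split on the three branches of $M'$: it is either $M_1$ (safe by the inversion above), or $\blame{\ell_2}{}$ (safe since $\ell \neq \ell_2$), or $\cast{M_1}{G \Rightarrow^{\ell_1} \Unk \Rightarrow^{\ell_2} \agda{gnd}\app B \Rightarrow^{\ell_2} B}$, whose outermost two casts carry $\ell_2 \neq \ell$ and whose innermost $G \Rightarrow^{\ell_1}\Unk$ is blame-safe by the inversion on the value, so (\textsc{Cast}) applied three times gives the result.

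For Variant~2 there are the two extra cases $\mathtt{Act}{\times}$ and $\mathtt{Act}{+}$, where $\agda{applyCast}$ returns $\agda{eta}{\times}\app V\app c\app\mathtt{Cross}\times$, respectively $\agda{eta}{+}\app V\app c\app\mathtt{Cross}+$. Unfolding the definitions from Section~\ref{sec:eta-cast-reduction}, the only casts these introduce are $\agda{fst}\app c\app x$ and $\agda{snd}\app c\app x$ (resp.\ $\agda{inl}\app c\app x$ and $\agda{inr}\app c\app x$), which are blame-safe for $\ell$ by Lemma~\ref{lem:blame-safety-cast-operators-EDA} applied to $\agda{CastBlameSafe}\app c\app\ell$, since $\lambda\textup{\textsf{B}}'$ uses the same cast operators as EDA. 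The remaining leaves --- $\pi_1 V$, $\pi_2 V$, the bound variable $\Zero$, and $V$ itself --- are blame-safe using $\agda{CastsAllSafe}\app V\app\ell$ together with the (\textsc{Fst}), (\textsc{Snd}), and (\textsc{Var}) rules; closing under (\textsc{Cons}), (\textsc{Lam}), (\textsc{InL}), (\textsc{InR}), (\textsc{Case}), and (\textsc{Cast}) yields the claim.

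Aside from the routine case bookkeeping, the one genuinely delicate point is the reasoning in the $\mathtt{ActInj}$ and $\mathtt{ActProj}$ cases, which rests entirely on the two auxiliary facts (i) and (ii) about ``UD'' subtyping: these are precisely what make the two- and three-cast sequences emitted by $\agda{applyCast}$ come out blame-safe. The full proof is mechanized in Agda.
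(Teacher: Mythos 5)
Your proposal is correct and follows the same route the paper takes for this family of lemmas (and in its Agda mechanization for $\lambda\textup{\textsf{B}}'$, whose textual proof is omitted): case analysis on $a : \act{c}$, unfolding \agda{applyCast}, and inversion on $\agda{CastBlameSafe}\app c \app \ell$, reusing the cast-operator blame-safety lemma for the Variant~2 $\agda{eta}$ cases. The two ``UD'' subtyping facts you isolate --- that a non-$\Unk$ subtype of $\Unk$ is a subtype of its ground type $\agda{gnd}\app A$, and that $\Unk <: B$ forces $B = \Unk$ --- are exactly the ingredients needed to discharge the $\mathtt{ActInj}$ and $\mathtt{ActProj}$ branches, so the added detail is sound.
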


\begin{proposition}
  Both variants of $\lambda\textup{\textsf{B}}'$ are an instance of
  the \agda{BlameSafe} structure.
\end{proposition}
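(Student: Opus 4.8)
The plan is to assemble the \agda{BlameSafe} record for each of the two variants of $\lambda\textup{\textsf{B}}'$ out of components that are already established. Recall that \agda{BlameSafe} extends both \agda{CastStruct} and \agda{PreBlameSafe}, adding a single new field, \agda{applyCast\mhyphen{}pres\mhyphen{}allsafe}, which asserts that applying an active, blame-safe cast to a blame-safe value produces a term all of whose casts are blame-safe. So the proof has exactly three obligations: provide the \agda{CastStruct} part, provide the \agda{PreBlameSafe} part, and provide \agda{applyCast\mhyphen{}pres\mhyphen{}allsafe}.

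For the first obligation I would reuse the \agda{CastStruct} instance for $\lambda\textup{\textsf{B}}$ constructed in the proposition immediately preceding this one; nothing new is needed, and it works uniformly for both variants. For the second I would invoke Proposition~\ref{prop:lambda-B-preblamesafe}: the \agda{CastBlameSafe} predicate is the one built from the ``UD'' subtyping relation of Figure~\ref{fig:subtype-UD}, and the six fields requiring that \agda{dom}, \agda{cod}, \agda{fst}, \agda{snd}, \agda{inl}, and \agda{inr} preserve blame safety are supplied by Lemma~\ref{lem:blame-safety-cast-operators-EDA}, which transfers to $\lambda\textup{\textsf{B}}'$ unchanged because $\lambda\textup{\textsf{B}}'$ uses exactly the same cast operators as EDA. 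For the third, the field \agda{applyCast\mhyphen{}pres\mhyphen{}allsafe} is literally the statement of Lemma~\ref{lem:applyCast-safe-lambda-B}, which is already proved for both variants, so that field is filled by the lemma directly. Given these three pieces, the record assembly is immediate and yields the instance for Variant 1 and Variant 2 at once.

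The only genuine content therefore sits inside Lemma~\ref{lem:applyCast-safe-lambda-B}, and that is where I expect the main obstacle to be. The routine cases ($\mathtt{ActId}$, and for Variant 2 the $\mathtt{Act}{\times}$ and $\mathtt{Act}{+}$ cases, whose outputs are $\eta$-expansions assembled from already-safe sub-casts via \agda{fst}, \agda{snd}, \agda{inl}, \agda{inr}) are immediate. The delicate cases are the factored injection, whose output is $\cast{M}{A \Rightarrow^\ell \agda{gnd}\app A \Rightarrow^\ell \Unk}$, and the three-way factored projection through $\agda{gnd}\app B$: here one must show that the two freshly introduced casts are \agda{CastBlameSafe} for $\ell$, which amounts to checking, in the ``UD'' subtyping order, that $A <: \Unk$ forces both $A <: \agda{gnd}\app A$ and $\agda{gnd}\app A <: \Unk$, and dually on the projection side that a blame-safe projection factored through $\agda{gnd}\app B$ decomposes into blame-safe pieces. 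Once those subtyping facts are established and the $\ell \neq \ell'$ clause of \agda{CastBlameSafe} absorbs the casts that merely carry a different label, everything else is mechanical, and the proposition follows.
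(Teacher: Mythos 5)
Your proposal is correct and matches the paper's (implicit) argument exactly: the proposition is just record assembly, combining the \agda{CastStruct} instance, the \agda{PreBlameSafe} instance from Proposition~\ref{prop:lambda-B-preblamesafe}, and Lemma~\ref{lem:applyCast-safe-lambda-B} as the \agda{applyCast\mhyphen{}pres\mhyphen{}allsafe} field. Your additional remarks about where the real work lies (the injection and projection cases of that lemma, factored through $\agda{gnd}$) are a reasonable elaboration of content the paper delegates to the already-stated lemma and its Agda proof.
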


We instantiate Theorem~\ref{thm:blame-subtyping} with this
\agda{BlameSafe} instance to obtain the Blame-Subtyping Theorem for
both variants of $\lambda\textup{\textsf{B}}'$.

\begin{corollary}[Blame-Subtyping Theorem for both variants of $\lambda\textup{\textsf{B}}'$]\ \\
  \label{thm:blame-subtyping-EDA}
  If $M : \Gamma \vdash A$ and $\agda{CastsAllSafe} \app M \app \ell$,
  then $\neg (M \longrightarrow^{*} \key{blame} \, \ell)$.
\end{corollary}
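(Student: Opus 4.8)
The plan is to obtain this corollary as a direct instantiation of the general Blame-Subtyping Theorem~\ref{thm:blame-subtyping} for $\CC'(\CastOp)$. Note first that \agda{CastsAllSafe} is just the Agda name for the predicate $\safefor{-}{-}$ of Figure~\ref{fig:castsallsafe}, so the statement of the corollary is literally the conclusion of Theorem~\ref{thm:blame-subtyping} specialized to the reduction relation of $\lambda\textup{\textsf{B}}'$. Hence everything reduces to exhibiting both variants of $\lambda\textup{\textsf{B}}'$ as instances of the \agda{BlameSafe} structure; the corollary then follows by feeding that instance to Theorem~\ref{thm:blame-subtyping}, whose own proof rests on Lemma~\ref{lem:preserve-cas} (preservation of ``safe for'' under one reduction step) together with a routine induction on the length of the reduction sequence to $\blame{\ell}{}$.

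Since \agda{BlameSafe} extends \agda{PreBlameSafe} and \agda{CastStruct}, and we have already shown $\lambda\textup{\textsf{B}}'$ is a \agda{CastStruct} instance, two things remain. First, the \agda{PreBlameSafe} fields: we fix \agda{CastBlameSafe} exactly as in Section~\ref{sec:EDA-blame-subtyping} but with the ``UD'' subtyping of Figure~\ref{fig:subtype-UD}, and the six preservation fields (for \agda{dom}, \agda{cod}, \agda{fst}, \agda{snd}, \agda{inl}, \agda{inr}) hold because those cast operators for $\lambda\textup{\textsf{B}}'$ are defined identically to EDA's, so Lemma~\ref{lem:blame-safety-cast-operators-EDA} applies verbatim --- this is Proposition~\ref{prop:lambda-B-preblamesafe}. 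Second, and this is where the work is, we need Lemma~\ref{lem:applyCast-safe-lambda-B}: applying a blame-safe active cast to a blame-safe value yields a term that is safe for $\ell$. I would prove this by cases on the \agda{Active} proof. The $\mathtt{ActId}$ case is immediate. The $\mathtt{ActInj}$ case rewrites $\cast{M}{A \Rightarrow^\ell \Unk}$ to $\cast{M}{A \Rightarrow^\ell \agda{gnd}\app A \Rightarrow^\ell \Unk}$, and one must check that both interposed casts are blame-safe for $\ell$: this follows by inversion on $\agda{CastBlameSafe}\app(A \Rightarrow^\ell \Unk)\app\ell$, using that in ``UD'' subtyping $A <: \Unk$ factors through $A <: \agda{gnd}\app A$ and $\agda{gnd}\app A <: \Unk$. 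The $\mathtt{ActProj}$ case uses Lemma~\ref{lem:canonical-star} to expose the inner injection $\cast{M}{G \Rightarrow^{\ell_1}\Unk}$ (and the fact that its cast is already known safe because the value is), then cases on the three subcases defining $M'$: the identity subcase returns a subterm; the failing subcase produces $\blame{\ell_2}{}$, which is safe because $B \neq \Unk$ forces $\Unk \not<: B$, so $\agda{CastBlameSafe}\app(\Unk \Rightarrow^{\ell_2} B)\app\ell$ can only hold via $\ell_2 \neq \ell$; and the general subcase interposes casts labeled $\ell_1$ and $\ell_2$, all safe by the same observation. For Variant~2 there are the extra cases $\mathtt{Act}{\times}$ and $\mathtt{Act}{+}$, which delegate to $\agda{eta}{\times}$ and $\agda{eta}{+}$; the only casts in the resulting terms are $\agda{fst}\app c$, $\agda{snd}\app c$ (resp.\ $\agda{inl}\app c$, $\agda{inr}\app c$), whose blame-safety is exactly what Lemma~\ref{lem:blame-safety-cast-operators-EDA} supplies.

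I expect the main obstacle to be precisely the ``UD''-specific reasoning in Lemma~\ref{lem:applyCast-safe-lambda-B}: unlike EDA, where $A <: \Unk$ holds for every $A$ (Figure~\ref{fig:subtype-D}), here \agda{applyCast} routes injections and projections through ground types, so the argument must track that each interposed cast inherits its blame-safety from the original one, and that a failing projection is produced only with a label that was therefore never asserted safe for $\ell$. Everything after that --- assembling the \agda{PreBlameSafe} and \agda{BlameSafe} records and instantiating Theorem~\ref{thm:blame-subtyping} --- is mechanical and handled uniformly for both variants.
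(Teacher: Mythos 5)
Your proposal is correct and follows exactly the paper's route: the corollary is obtained by instantiating the parameterized Theorem~\ref{thm:blame-subtyping} with the $\lambda\textup{\textsf{B}}'$ instance of \agda{BlameSafe}, which in turn rests on the \agda{PreBlameSafe} fields (reusing Lemma~\ref{lem:blame-safety-cast-operators-EDA} since the cast operators coincide with EDA's) and on Lemma~\ref{lem:applyCast-safe-lambda-B}. Your case analysis for that lemma (including the UD-specific factoring of injections through $\agda{gnd}\app A$ and the $\ell_2 \neq \ell$ argument for projections) is sound and merely supplies detail the paper leaves to its Agda development.
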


\subsubsection{Dynamic Gradual Guarantee}
\label{sec:gradual-guarantee-lambda-B}

We define the precision relations between two casts and
between a cast and a type,
corresponding to the first three fields in \agda{PreCastStructWithPrecision}
(Figure~\ref{fig:PreCastStructWithPrecision}).
\\[1ex]

\fbox{$\leprecii{i}{i'}$. Variant 1}
\begin{gather*}
\inference{i : \agda{Inert} \app G \Rightarrow^\ell \Unk & i' : \agda{Inert} \app G \Rightarrow^{\ell'} \Unk}
          {\leprecii{i}{i'}}
\;
\inference{A \to B \leprecop A' \to B' & C \to D \leprecop C' \to D' \\
           i : \agda{Inert} \app A \to B \Rightarrow^\ell C \to D \\
           i' : \agda{Inert} \app A' \to B' \Rightarrow^{\ell'} C' \to D'}
          {\leprecii{i}{i'}}
\\[1ex]
\inference{A \times B \leprecop A' \times B' & C \times D \leprecop C' \times D' \\
           i : \agda{Inert} \app A \times B \Rightarrow^\ell C \times D \\
           i' : \agda{Inert} \app A' \times B' \Rightarrow^{\ell'} C' \times D'}
          {\leprecii{i}{i'}}
\;
\inference{A + B \leprecop A' + B' & C + D \leprecop C' + D' \\
           i : \agda{Inert} \app A + B \Rightarrow^\ell C + D \\
           i' : \agda{Inert} \app A' + B' \Rightarrow^{\ell'} C' + D'}
{\leprecii{i}{i'}}
\end{gather*}

\fbox{$\leprecit{i}{A'}$. Variant 1}
\begin{gather*}
\inference{i : \agda{Inert} \app G \Rightarrow^\ell \Unk & G \leprecop A'}
          {\leprecit{i}{A'}}
\;
\inference{A \to B \leprecop A' \to B' & C \to D \leprecop A' \to B' \\
           i : \agda{Inert} \app A \to B \Rightarrow^\ell C \to D }
          {\leprecit{i}{A' \to B'}}
\\[1ex]
\inference{A \times B \leprecop A' \times B' & C \times D \leprecop A' \times B' \\
           i : \agda{Inert} \app A \times B \Rightarrow^\ell C \times D }
          {\leprecit{i}{A' \times B'}}
\;
\inference{A + B \leprecop A' + B' & C + D \leprecop A' + B' \\
           i : \agda{Inert} \app A + B \Rightarrow^\ell C + D }
          {\leprecit{i}{A' + B'}}
\end{gather*}

\fbox{$\leprecti{A}{i'}$. Variant 1}
\begin{gather*}
\inference{A \to B \leprecop A' \to B' & A \to B \leprecop C' \to D' \\
           i' : \agda{Inert} \app A' \to B' \Rightarrow^\ell C' \to D' }
          {\leprecti{A \to B}{i'}}
\;
\inference{A \times B \leprecop A' \times B' & A \times B \leprecop C' \times D' \\
           i' : \agda{Inert} \app A' \times B' \Rightarrow^\ell C' \times D' }
          {\leprecit{A \times B}{i'}}
\\[1ex]
\inference{A + B \leprecop A' + B' & A + B \leprecop C' + D' \\
           i' : \agda{Inert} \app A' + B' \Rightarrow^\ell C' + D' }
          {\leprecit{A + B}{i'}}
\end{gather*}

The precision relations for Variant 2 have fewer cases compared to
Variant 1, since the casts between product types and sum types are active:
\\[1ex]

\fbox{$\leprecii{i}{i'}$. Variant 2}
\begin{gather*}
\inference{i : \agda{Inert} \app G \Rightarrow^\ell \Unk & i' : \agda{Inert} \app G \Rightarrow^{\ell'} \Unk}
          {\leprecii{i}{i'}}
\;
\inference{A \to B \leprecop A' \to B' & C \to D \leprecop C' \to D' \\
           i : \agda{Inert} \app A \to B \Rightarrow^\ell C \to D \\
           i' : \agda{Inert} \app A' \to B' \Rightarrow^{\ell'} C' \to D'}
          {\leprecii{i}{i'}}
\end{gather*}

\fbox{$\leprecit{i}{A'}$. Variant 2}
\begin{gather*}
\inference{i : \agda{Inert} \app G \Rightarrow^\ell \Unk & G \leprecop A'}
          {\leprecit{i}{A'}}
\quad
\inference{A \to B \leprecop A' \to B' & C \to D \leprecop A' \to B' \\
           i : \agda{Inert} \app A \to B \Rightarrow^\ell C \to D }
          {\leprecit{i}{A' \to B'}}
\end{gather*}

\fbox{$\leprecti{A}{i'}$. Variant 2}
\begin{gather*}
\inference{A \to B \leprecop A' \to B' & A \to B \leprecop C' \to D' \\
           i' : \agda{Inert} \app A' \to B' \Rightarrow^\ell C' \to D' }
          {\leprecti{A \to B}{i'}}
\end{gather*}

Then we instantiate and prove the four lemmas that correspond
to the last four fields in \agda{PreCastStructWithPrecision}
(Figure~\ref{fig:PreCastStructWithPrecision}).
They are forward direction and inversion lemmas about
the precision relations defined above between casts and
between a cast and a type.

\begin{lemma}[Type Precision Implies Cast-Type Precision]
  \label{lem:lp-lpit-lambda-B}
  Suppose $c : \Cast{A}{B}$ and $i : \agda{Inert} \app c$.
  If $A \leprecop A'$, $B \leprecop A'$, then $\leprecit{i}{A'}$.
\end{lemma}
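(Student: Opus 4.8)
The plan is to proceed by case analysis on the inertness witness $i : \inert{c}$, using the definitions of the inert casts and of $\leprecit{i}{A'}$ given above for $\lambda\textup{\textsf{B}}$. Since $\lambda\textup{\textsf{B}}$ has two variants, the argument is carried out for each, but the Variant~2 proof is just a sub-case of the Variant~1 proof — it has strictly fewer inert casts — so I describe Variant~1, and both proofs are mechanized in Agda.

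First, in the case $\mathtt{InInj}$ the cast $c$ has the form $G \Rightarrow^\ell \Unk$, so $A \equiv G$ and $B \equiv \Unk$. From the hypothesis $A \leprecop A'$ we get $G \leprecop A'$, and the first rule defining $\leprecit{i}{A'}$ (the one for injections from a ground type) then yields $\leprecit{i}{A'}$ directly. The second hypothesis $B \leprecop A'$ is automatic here, since $B \equiv \Unk$ is below every type, and is not needed. Next, in the cross-cast case ($\mathtt{InCross}$ in Variant~1, or $\mathtt{InFun}$ in Variant~2), the cast $c$ has the form $A_1 \otimes A_2 \Rightarrow^\ell B_1 \otimes B_2$ for some $\otimes \in \{{\to},\times,+\}$ (only $\otimes = {\to}$ in Variant~2), so $A \equiv A_1 \otimes A_2$ and $B \equiv B_1 \otimes B_2$. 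The key step is to invert the hypothesis $A \leprecop A'$: since $A$ is headed by the constructor $\otimes$ and the only type-precision rule with such a left-hand side is the congruence rule for $\otimes$, we conclude $A' \equiv A'_1 \otimes A'_2$ (in particular $A' \not\equiv \Unk$). With $A'$ now known to have this shape, the matching cross-cast rule for $\leprecit{i}{A'_1 \otimes A'_2}$ requires precisely $A_1 \otimes A_2 \leprecop A'_1 \otimes A'_2$ and $B_1 \otimes B_2 \leprecop A'_1 \otimes A'_2$, i.e.\ exactly the two hypotheses $A \leprecop A'$ and $B \leprecop A'$, so the conclusion follows immediately.

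The proof has no real obstacle; the only point requiring a little care is the inversion on $A \leprecop A'$ in the cross-cast case, which is what rules out $A' \equiv \Unk$ and guarantees that the cross-cast rule for $\leprecit{}{}$ has a left-hand side of the required shape. Everything else is a direct appeal to the defining rules of inertness and of $\leprecit{}{}$.
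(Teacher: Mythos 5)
Your proof is correct and follows the same route as the paper's (mechanized, unshown in the text) argument: case analysis on the inertness witness $i$, with the injection case discharged directly by the ground-type rule for $\leprecit{i}{A'}$ and the cross-cast case discharged by inverting $A \leprecop A'$ to expose $A' \equiv A'_1 \otimes A'_2$ and then applying the corresponding rule. The inversion step you flag is indeed the only point of substance, and your handling of it (ruling out $A' \equiv \Unk$ because the left-hand side is constructor-headed) is exactly right.
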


\begin{lemma}[Cast-Cast Precision Implies Type Precision]
  \label{lem:lpii-lp-lambda-B}
  Suppose $c : \Cast{A}{B}$, $c' : \Cast{A'}{B'}$,
  $i : \agda{Inert} \app c$, $i' : \agda{Inert} \app c'$.
  If $\leprecii{i}{i'}$, then $A \leprecop A'$ and $B \leprecop B'$.
\end{lemma}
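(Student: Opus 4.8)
The plan is to prove Lemma~\ref{lem:lpii-lp-lambda-B} by a direct case analysis on the derivation of $\leprecii{i}{i'}$, treating both variants of $\lambda\textup{\textsf{B}}$ uniformly (Variant 2 simply omits the product and sum cases). The key observation is that in $\lambda\textup{\textsf{B}}$ each inert-cast witness already pins down the full shape of the underlying cast: an $\mathtt{InInj}$-style witness forces the source to be a ground type $G$ and the target to be $\Unk$, while a function (resp.\ pair, sum) witness forces both source and target to be function (resp.\ pair, sum) types. Consequently, once we fix which rule of $\leprecii{-}{-}$ was used, the types $A, B, A', B'$ are syntactically determined, and the goals $\leprec{A}{A'}$ and $\leprec{B}{B'}$ reduce to facts that are immediately available.

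Concretely, I would carry out the following steps. First, in the ground-injection case we have $A \equiv G$, $B \equiv \Unk$, $A' \equiv G$, $B' \equiv \Unk$, since the rule uses the \emph{same} $G$ on both sides; then $\leprec{A}{A'}$ follows by reflexivity of precision on a ground type --- which holds because $G$ is either a base type $b$, handled by the $\leprec{b}{b}$ rule, or one of $\Unk \to \Unk$, $\Unk \times \Unk$, $\Unk + \Unk$, each reflexive by the corresponding congruence rule applied to $\leprec{\Unk}{\Unk}$ --- and $\leprec{B}{B'}$ is just $\leprec{\Unk}{\Unk}$. Second, in the function case the two premises of the rule are literally $\leprec{(A \to B)}{(A' \to B')}$ and $\leprec{(C \to D)}{(C' \to D')}$, which are exactly the source- and target-precision goals, so nothing remains to prove. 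Third, the pair and sum cases (present only in Variant 1) are discharged the same way, reading the two premises of the rule off as the source- and target-precision goals.

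The proof has essentially no hard part: it is an inversion argument in which every case is closed either by reflexivity or by a premise of the rule being inverted. The only points that require a little care are (i) checking that the injection rule genuinely reuses the \emph{same} ground type on both sides, so that reflexivity --- rather than a nontrivial precision fact --- suffices, and (ii) confirming that Variant 2's relation is exactly the restriction of Variant 1's to the injection and function rules, so that the (shorter) proof for Variant 2 is literally a sub-case-analysis of the one for Variant 1. In Agda this amounts to pattern matching on the constructor of $\leprecii{i}{i'}$ together with the inert witnesses $i$ and $i'$, with each branch closed by the relevant premise or by a small reflexivity lemma for precision on ground types.
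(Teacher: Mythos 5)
Your proposal is correct and matches the paper's approach: the paper states this lemma without a written-out proof (it is one of the four ``forward direction and inversion lemmas'' discharged in the Agda mechanization), and that mechanized proof is precisely the inversion/case analysis on the derivation of $\leprecii{i}{i'}$ that you describe, with the function, pair, and sum cases closed by the rule's own premises and the injection case closed by reflexivity of $\leprecop$ on the shared ground type $G$ together with $\leprec{\Unk}{\Unk}$. Your observations about reusing the same $G$ on both sides of the injection rule and about Variant 2 being the restriction of Variant 1 to the injection and function rules are both accurate.
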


\begin{lemma}[Cast-Type Precision Implies Type Precision]
  \label{lem:lpit-lp-lambda-B}
  Suppose $c : \Cast{A}{B}$ and $i : \agda{Inert} \app c$.
  If $\leprecit{i}{A'}$, then $A \leprecop A'$ and $B \leprecop A'$.
\end{lemma}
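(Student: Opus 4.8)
The plan is to prove the lemma by inversion on the derivation of $\leprecit{i}{A'}$. The rules defining $\leprecit{-}{-}$ — both for Variant 1 and for Variant 2 — are indexed by the shape of the inert cast, so each rule simultaneously pins down the form of $c$ (hence of its source $A$ and target $B$) and the form of the type $A'$ on the right. Each case therefore reduces to reading off that rule's premises; no induction on types or terms is required. This follows the same pattern as the companion inversion properties already packaged in \agda{PreCastStructWithPrecision}, in particular Lemma~\ref{lem:lpii-lp-lambda-B}, and is the exact converse of Lemma~\ref{lem:lp-lpit-lambda-B}.

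I would first dispatch the ground-injection rule, which is present in both variants. Here $c \equiv (G \Rightarrow^\ell \Unk)$, so $A \equiv G$ and $B \equiv \Unk$; the single premise of the rule is exactly $G \leprecop A'$, which gives the first conjunct $A \leprecop A'$, and the second conjunct is $\Unk \leprecop A'$, which holds immediately by the rule $\leprec{\Unk}{A'}$ since $\Unk$ is the bottom of the precision order.

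Next come the type-constructor rules: function, product, and sum in Variant 1, and only the function rule in Variant 2 (there $\agda{Inert}$ contains no cross casts on products or sums, so those two rules are simply absent). In the function case $c \equiv (A_1 \to B_1 \Rightarrow^\ell C_1 \to D_1)$ and the conclusion has the form $\leprecit{i}{A_1' \to B_1'}$, so the ``$A'$'' of the lemma statement is the arrow type $A_1' \to B_1'$; the rule's two premises $A_1 \to B_1 \leprecop A_1' \to B_1'$ and $C_1 \to D_1 \leprecop A_1' \to B_1'$ are then precisely the two desired conjuncts $A \leprecop A'$ and $B \leprecop A'$. The product and sum cases are identical with $\to$ replaced by $\times$ or $+$.

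I do not expect a genuine obstacle: this direction is pure inversion, and each of the four (resp. two) rules yields both conjuncts directly — one from a premise, the other either from a premise or, in the injection case, from the $\Unk$-at-the-bottom rule. The only thing requiring care is bookkeeping: matching the figure's reused metavariables against the lemma's $A$, $B$, and $A'$, and keeping the two variants straight since they share all their rule shapes except for the product and sum cases.
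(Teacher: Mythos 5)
Your proposal is correct and is essentially the intended argument: the paper states this lemma without a prose proof (deferring to the Agda development), where it is discharged exactly as you describe, by inversion on the derivation of $\leprecit{i}{A'}$, reading off the rule's premises in the arrow/product/sum cases and using $\leprec{\Unk}{A'}$ for the target in the injection case. Your handling of the two variants (dropping the product and sum rules in Variant 2) is also right.
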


\begin{lemma}[Type-Cast Precision Implies Type Precision]
  \label{lem:lpti-lp-lambda-B}
  Suppose $c' : \Cast{A'}{B'}$ and $i' : \agda{Inert} \app c'$.
  If $\leprecti{A}{i'}$, then $A \leprecop A'$ and $A \leprecop B'$.
\end{lemma}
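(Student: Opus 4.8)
The plan is to prove Lemma~\ref{lem:lpti-lp-lambda-B} by a single layer of inversion (case analysis) on the derivation of $\leprecti{A}{i'}$, exploiting the fact that this relation was defined by a small, syntax-directed rule set. First I would recall that, since $i' : \agda{Inert}\app c'$ with $c' : \Cast{A'}{B'}$, the cast $c'$ must be one of the inert casts of $\lambda\textup{\textsf{B}}$: an injection $G \Rightarrow^{\ell'} \Unk$ from a ground type, or, in Variant~1, any cross cast between two function, product, or sum types, or, in Variant~2, a function cast. The relation $\leprecti{A}{i'}$ has no rule whose hypotheses admit an injection $i' : \agda{Inert}\app(G \Rightarrow^{\ell'} \Unk)$, and no rule whose conclusion is $\leprecti{A}{i'}$ for $A$ a base type or $A \equiv \Unk$. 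Hence in all of those configurations the hypothesis $\leprecti{A}{i'}$ is uninhabited and the lemma holds vacuously (dispatched by an absurd pattern in Agda).

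For the remaining configurations the left type must itself be a function, product, or sum type, say $A \equiv A_1 \otimes A_2$ with $\otimes \in \{\to,\times,+\}$, and $c'$ is a cross cast from $A_1' \otimes A_2'$ to $B_1' \otimes B_2'$ of the same head constructor, so that in the notation of the lemma $A' \equiv A_1' \otimes A_2'$ and $B' \equiv B_1' \otimes B_2'$. In each such case exactly one rule applies, and after identifying the meta-variables of the lemma with those of the rule, its two premises are precisely $A_1 \otimes A_2 \leprecop A_1' \otimes A_2'$ (that is, $A \leprecop A'$) and $A_1 \otimes A_2 \leprecop B_1' \otimes B_2'$ (that is, $A \leprecop B'$). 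So I would simply return these two premises. For Variant~1 this is three essentially identical sub-cases, one per type constructor; for Variant~2 only the function sub-case survives, since product and sum casts are active there and therefore never appear as $i'$.

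There is no real obstacle here: the argument is pure bookkeeping. The only points requiring care are (i) convincing Agda that the combined match on $\agda{Inert}\app c'$ and on the $\leprecti$ derivation is exhaustive, so that the injection and non-composite possibilities are discharged as impossible, and (ii) carrying the argument out uniformly for both variants, whose sets of inert casts — and hence whose sets of $\leprecti$ rules — differ. No induction on types is needed, since each $\leprecti$ rule already carries both required precision facts among its premises. The companion Lemmas~\ref{lem:lp-lpit-lambda-B}, \ref{lem:lpii-lp-lambda-B}, and \ref{lem:lpit-lp-lambda-B} follow the same inversion pattern; the one extra wrinkle is that the injection sub-case of Lemma~\ref{lem:lpii-lp-lambda-B} uses reflexivity of $\leprecop$ at the shared ground type $G$, and Lemma~\ref{lem:lp-lpit-lambda-B} runs the same reasoning in the forward direction, reading the rule's conclusion off from its premises instead.
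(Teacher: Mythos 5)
Your proposal is correct and matches the paper's (mechanized) argument: the paper proves this lemma by direct case analysis on the definition of $\leprecti{A}{i'}$, whose rules in both variants carry $A \leprecop A'$ and $A \leprecop B'$ verbatim as premises, with the injection and ill-shaped cases impossible by construction. Nothing further is needed.
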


We instantiate \agda{PreCastStructWithPrecision} using
the definitions of precision and their related lemmas
(Lemma~\ref{lem:lp-lpit-lambda-B}, Lemma~\ref{lem:lpii-lp-lambda-B},
Lemma~\ref{lem:lpit-lp-lambda-B}, and Lemma~\ref{lem:lpti-lp-lambda-B}).

\begin{proposition}
  \label{prop:lambda-B-preprec}
  $\lambda\textup{\textsf{B}}'$ is an instance of the
  \agda{PreCastStructWithPrecision} structure.
\end{proposition}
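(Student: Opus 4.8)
The plan is to assemble the record directly from pieces already in hand. We start from the underlying \agda{PreCastStruct} instance of Proposition~\ref{prop:lambda-B-pre-cast-struct}, so that only the seven additional fields of \agda{PreCastStructWithPrecision} (Figure~\ref{fig:PreCastStructWithPrecision}) remain to be supplied.

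First I would take the three precision-relation fields $\leprecii{-}{-}$, $\leprecit{-}{-}$, and $\leprecti{-}{-}$ to be exactly the inductively defined relations given above, using the Variant~1 definitions when instantiating with Variant~1's \agda{Inert} predicate and the Variant~2 definitions otherwise. Since $\lambda\textup{\textsf{B}}'$ represents a cast as a source type, target type, and blame label, these relations only ever inspect the endpoint types and the head type constructor, which is all the displayed rules do; no reference to term structure is needed.

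Then I would discharge the four remaining requirement fields by the four lemmas established immediately before this proposition: the field $\agda{\mbox{$\leprecop\to$}lpit}$ is witnessed by Lemma~\ref{lem:lp-lpit-lambda-B}, the field $\agda{lpii\mbox{$\to\leprecop$}}$ by Lemma~\ref{lem:lpii-lp-lambda-B}, the field $\agda{lpit\mbox{$\to\leprecop$}}$ by Lemma~\ref{lem:lpit-lp-lambda-B}, and the field $\agda{lpti\mbox{$\to\leprecop$}}$ by Lemma~\ref{lem:lpti-lp-lambda-B}. In each case the lemma's statement is, after specializing the generic cast representation to $\lambda\textup{\textsf{B}}$'s, definitionally the type of the corresponding field, so the fit is immediate.

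The only point requiring care --- and thus the main obstacle --- is the split into Variant~1 and Variant~2: the two variants classify different casts as inert (all cross casts versus only function casts), so the precision relations and the four lemmas must be discharged once per variant, and in each the rule set for $\leprecii{-}{-}$, $\leprecit{-}{-}$, and $\leprecti{-}{-}$ must stay in step with which casts are inert. Every case is already covered by the definitions and lemmas above, so what is left is just the mechanical record-filling that Agda checks.
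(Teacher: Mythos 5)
Your proposal is correct and matches the paper's own construction: the paper likewise instantiates \agda{PreCastStructWithPrecision} by supplying the Variant~1 and Variant~2 precision relations just defined for the three relation fields and discharging the four requirement fields with Lemmas~\ref{lem:lp-lpit-lambda-B}, \ref{lem:lpii-lp-lambda-B}, \ref{lem:lpit-lp-lambda-B}, and \ref{lem:lpti-lp-lambda-B}, on top of the \agda{PreCastStruct} instance from Proposition~\ref{prop:lambda-B-pre-cast-struct}. Nothing further is needed.
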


We instantiate Lemma~\ref{lem:compile-pres-prec} to prove that
compilation of the GTLC into $\lambda\textup{\textsf{B}}'$ preserves
the precision relation.

\begin{corollary}[Compilation into $\lambda\textup{\textsf{B}}'$ Preserves Precision]
  \label{lem:compile-pres-prec-lambda-B}
  Suppose $M : \Gamma \vdash_G A$ and $M' : \Gamma' \vdash_G A'$.
  If $\leprec{\Gamma}{\Gamma'}$ and $M \leprecgtlc M'$,
  then $\compilenew{}{M} \lepreccc \compilenew{}{M'}$ and $\leprec{A}{A'}$.
\end{corollary}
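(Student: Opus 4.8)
The plan is to obtain this corollary as an immediate instantiation of the generic Lemma~\ref{lem:compile-pres-prec}, whose statement is word-for-word the present one but over an arbitrary \agda{PreCastStructWithPrecision}. Proposition~\ref{prop:lambda-B-preprec} has just exhibited $\lambda\textup{\textsf{B}}'$ as an instance of that structure: its three precision relations on casts and between a cast and a type are the ones defined in Section~\ref{sec:gradual-guarantee-lambda-B}, and its four coherence requirements are discharged by Lemmas~\ref{lem:lp-lpit-lambda-B}, \ref{lem:lpii-lp-lambda-B}, \ref{lem:lpit-lp-lambda-B}, and \ref{lem:lpti-lp-lambda-B}. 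Applying Lemma~\ref{lem:compile-pres-prec} to this instance yields exactly the corollary; in the Agda development this is a one-line instantiation.

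For completeness I would record why the generic proof of Lemma~\ref{lem:compile-pres-prec} needs nothing from $\lambda\textup{\textsf{B}}'$ beyond that instance. It proceeds by induction on the GTLC precision derivation $M \leprecgtlc M'$, tracking the clauses of $\compilenew{}{-}$ in Figure~\ref{fig:compile-gtlc-cc}; the variable case uses Lemma~\ref{lem:ctx-prec-var}, and the companion conclusion $\leprec{A}{A'}$ is proved inline by the same induction (recovering the Static Gradual Guarantee, Theorem~\ref{thm:static-gradual}). Constructors that compile homomorphically ($\lambda$, \key{cons}, \key{inl}, \key{inr}, and the recursive descent into subterms) discharge at once from the induction hypotheses and the corresponding rule of Figure~\ref{fig:cc'-prec}. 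The substantive cases are the eliminators --- application, \key{if}, $\pi_i$, and \key{case} --- where compilation inserts casts $\coerce{A}{B}{\ell}$ whose source and target types are produced by the matching relation $\triangleright$ or the join $\bigsqcup$. There one invokes monotonicity of $\triangleright$ and $\bigsqcup$ with respect to $\leprecop$ (facts about the GTLC alone, independent of the cast representation) to see that the two inserted casts have $\leprecop$-related sources and $\leprecop$-related targets, and then closes the goal with the (\textsc{Cast}) rule. The cast-precision relations of \agda{PreCastStructWithPrecision} enter only because the target relation $\lepreccc$ mentions them in its (\textsc{Wrap}), (\textsc{WrapL}), and (\textsc{WrapR}) rules --- which is precisely why Lemma~\ref{lem:compile-pres-prec} is parameterized over that structure rather than over \agda{PreCastStruct}.

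Since all of this reasoning already lives inside Lemma~\ref{lem:compile-pres-prec}, there is no remaining obstacle for the corollary itself: its entire content is Proposition~\ref{prop:lambda-B-preprec}. If one instead wanted to reprove it from scratch for $\lambda\textup{\textsf{B}}'$, the single delicate point would be the one Proposition~\ref{prop:lambda-B-preprec} already settled --- that the $\lambda\textup{\textsf{B}}$ precision relations on casts (and between a cast and a type) line up with $\leprecop$ on the cast's source and target types in the way the generic proof expects --- which rests on a case analysis on the head type constructor, with Variant~1 and Variant~2 differing over whether product and sum casts are categorized as inert.
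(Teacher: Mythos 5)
Your proposal matches the paper exactly: the paper obtains this corollary by instantiating the generic Lemma~\ref{lem:compile-pres-prec} (Compilation Preserves Precision), which is parameterized over \agda{PreCastStructWithPrecision}, with the $\lambda\textup{\textsf{B}}'$ instance provided by Proposition~\ref{prop:lambda-B-preprec}. The additional discussion of how the generic lemma's induction works is fine but not needed for this corollary.
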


\begin{lemma}[\agda{applyCast} Catches Up to the Right]
  \label{lem:applyCast-catchup-lambda-B}
  Suppose $V' : \Gamma' \vdash A'$, $c : \Cast{A}{B}$,
  and $a : \agda{Active} \app c$.
  If $A \leprecop A'$, $B \leprecop A'$, and $V \lepreccc V'$,
  then $\agda{applyCast} \app V \app c \app a \longrightarrow^{*} W$
  and $W \lepreccc V'$ for some value $W$.
\end{lemma}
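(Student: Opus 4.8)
The plan is to prove the lemma by case analysis on the active-cast witness $a : \act{c}$, following exactly the shape of the $\lambda\textup{\textsf{B}}$ definition of \agda{applyCast} (this covers both variants, the only difference being whether the product and sum cross casts are active). Throughout, $V$ and $V'$ are values, as required by the field \agda{applyCast\mbox{-}catchup} that this lemma instantiates, so in each case I will exhibit the value $W$, give the reduction $\agda{applyCast}\app V\app c\app a \longrightarrow^{*} W$, and obtain $W \lepreccc V'$ by inverting the hypothesis $V \lepreccc V'$ against the precision rules of Figure~\ref{fig:cc'-prec}. The case $\mathtt{ActId}$, where $c = a_0 \Rightarrow^\ell a_0$, is immediate: \agda{applyCast} returns $V$, so $W = V$, the reduction is empty, and $W \lepreccc V'$ is the hypothesis.

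For $\mathtt{ActInj}$, so $c = A \Rightarrow^\ell \Unk$ with $A$ neither $\Unk$ nor ground (hence a function, product, or sum type), \agda{applyCast} returns the two-cast sequence $\cast{V}{A \Rightarrow^\ell \agda{gnd}\app A \Rightarrow^\ell \Unk}$. The inner cast $A \Rightarrow^\ell \agda{gnd}\app A$ has matching head constructors, so it is either inert (every cross cast in Variant 1, and function casts in Variant 2), in which case it steps by the $\key{wrap}$ reduction, or it is an active $\mathtt{Act}{\times}$/$\mathtt{Act}{+}$ in Variant 2, in which case \agda{applyCast} delegates to $\agda{eta}{\times}$/$\agda{eta}{+}$ and, after the projection redexes fire on the canonical-forms value $V$, leaves casts on structurally smaller components that I handle recursively. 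Either way the inner cast reduces to a value, and then the outer cast $\agda{gnd}\app A \Rightarrow^\ell \Unk$ is a ground injection $\mathtt{InInj}$ that $\key{wrap}$s, yielding a value $W$ which is a nested inert wrap of $V$. To get $W \lepreccc V'$ I peel off the two inert wraps with the (\textsc{WrapR}) rule---applicable because $A \neq \Unk$ and $\agda{gnd}\app A \neq \Unk$ meet its side condition, and because $A \leprecop A'$ implies $\agda{gnd}\app A \leprecop A'$---leaving exactly $V \lepreccc V'$. The Variant-2-only cases $\mathtt{Act}{\times}$ and $\mathtt{Act}{+}$ are this inner step in isolation: canonical forms give $V = \key{cons}\app V_1\app V_2$ (resp.\ an $\key{inl}$ or $\key{inr}$ of a value), the projection or case redex fires, the component casts are smaller instances dealt with recursively, and precision for the reassembled constructor follows componentwise by inverting $V \lepreccc V'$ via (\textsc{Cons})/(\textsc{InL})/(\textsc{InR})---the inert-wrap alternative for $V'$ being excluded because the Variant-2 type-versus-cast precision relation has no product or sum clause.

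For $\mathtt{ActProj}$, so $c = \Unk \Rightarrow^{\ell_2} B$ with $B \neq \Unk$, the $\CC'$ analogue of Lemma~\ref{lem:canonical-star}, specialized to $\lambda\textup{\textsf{B}}$'s inert casts, gives $V = \castInert{V_0}{G \Rightarrow^{\ell_1} \Unk}$ with $G$ ground and $V_0 : \Gamma \vdash G$; \agda{applyCast} then branches three ways. When $B = \agda{gnd}\app B = G$ it returns $V_0$, and I recover $V_0 \lepreccc V'$ by inverting $V \lepreccc V'$ (which must be derived by (\textsc{Wrap}) or (\textsc{WrapL})) and using the inversion fields of \agda{PreCastStructWithPrecision} together with the (\textsc{WrapL})/(\textsc{WrapR}) rules. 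When $B$ is not ground, \agda{applyCast} returns $\cast{V_0}{G \Rightarrow^{\ell_1} \Unk \Rightarrow^{\ell_2} \agda{gnd}\app B \Rightarrow^{\ell_2} B}$; I first argue $\agda{gnd}\app B = G$, so that the middle projection $\Unk \Rightarrow \agda{gnd}\app B$ collapses to the identity, and then the chain $\key{wrap}$s down to a value related to $V'$. The remaining branch, $B$ ground and $B \neq G$, makes \agda{applyCast} return $\blame{\ell_2}{}$; since the lemma demands an actual value, I must show this branch cannot occur under the hypotheses---i.e.\ that a value of type $\Unk$ which is $\lepreccc V'$, together with $B \leprecop A'$, forces the ground content of $V$ to be compatible with $B$.

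The main obstacle is twofold. First, the recursion in the $\mathtt{ActInj}$ and $\mathtt{Act}{\times}$/$\mathtt{Act}{+}$ cases of Variant 2 is not structural in the precision derivation, so I would factor out a helper lemma---``casting a value into $\Unk$, or by a cross cast, terminates at a nested inert wrap whose shape is determined by the source type''---and prove it by well-founded recursion on the size of the source type, mirroring the way $\lambda\textup{\textsf{B}}$ injections factor through ground types. Second, and more delicate, is discharging the failing and near-failing projection branches of $\mathtt{ActProj}$: this requires careful inversion of the precision relation on wrapped values, using the inversion fields and the forward field of \agda{PreCastStructWithPrecision} together with the side condition $A \neq \Unk$ on (\textsc{WrapR}) and the side condition $(B = \Unk) \to (B' = \Unk)$ on (\textsc{Wrap}), in order to rule out precisely the mismatched-ground-type configuration and to line up the precision in the successful branches. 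Getting this inversion argument right is where I expect most of the effort to go.
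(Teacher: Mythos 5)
Your proposal is correct in substance and takes essentially the same route as the paper: case analysis on the active-cast witness following the branch structure of \agda{applyCast}, with the $\mathtt{ActProj}$ case split on whether the target is ground, the blame branch excluded by inverting the precision derivation on the injected value (which here can only be (\textsc{WrapL}), yielding $G \leprecop A'$ and hence $G = B$ when $B$ is ground), and the remaining branches handled by routing through the ground type and recursing on smaller source types, just as in the paper's sketch. The one slip is in your $\mathtt{ActInj}$ case: the inert wraps you build sit on the \emph{left} (less precise) side of $\lepreccc$, so peeling them uses (\textsc{WrapL}), whose premise $\leprecit{i}{A'}$ follows from $A \leprecop A'$ and $\agda{gnd}\app A \leprecop A'$ via the \agda{\mbox{$\leprecop\to$}lpit} field, not (\textsc{WrapR}), whose $A \neq \Unk$ side condition concerns the opposite orientation (wrap on the right).
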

\begin{proof}[Proof sketch]
  We briefly describe the proof for Variant 1, since the proof
  methodology is the same for both variants.

  By induction on the premise $\agda{Active} \app c$, it generates
  three cases:
  \begin{description}
  \item[$\mathtt{ActId}$.]
    Since $c$ is identity cast, $W \equiv V$ satisfies
    the reduction and $W \lepreccc V'$.
  \item[$\mathtt{ActInj}$.]
    We follow the branch structure of \agda{applyCast} and proceed.
  \item[$\mathtt{ActProj}$.]
    We follow the branch structure of \agda{applyCast}
    and case on whether the target type $B$ of the projection
    is ground. If $B$ is ground, then the proof is straightforward
    by inversion on the canonical form of the projected value.
    Otherwise, if the $B$ is not ground, \agda{applyCast}
    expands the cast by routing through a ground type,
    from where we proceed using the induction hypothesis.
  \end{description}
\end{proof}

\begin{lemma}[Simulation Between Cast and \agda{applyCast}]
  \label{lem:sim-cast-lambda-B}
  Suppose $c : \Cast{A}{B}$, $c' : \Cast{A'}{B'}$,
  and $a' : \agda{Active} \app c'$.
  If $A \leprecop A'$, $B \leprecop B'$, and $V \lepreccc V'$,
  then $\cast{V}{c} \longrightarrow^{*} N$ and
  $N \lepreccc \agda{applyCast} \app V' \app c' \app a'$ for some $N$.
\end{lemma}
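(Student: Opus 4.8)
The plan is to proceed by case analysis on the active-cast evidence $a'$, i.e.\ on which rule ($\mathtt{ActId}$, $\mathtt{ActInj}$, $\mathtt{ActProj}$, and for Variant~2 also $\mathtt{Act}{\times}$ and $\mathtt{Act}{+}$) witnesses $\act{c'}$. In each case I would first unfold $\agda{applyCast}\app V'\app c'\app a'$ along the branch structure of $\agda{applyCast}$ for $\lambda\textup{\textsf{B}}$, turning the right-hand side into a concrete term. Next I would exploit the type-precision hypotheses to pin down the less-precise cast $c : \Cast{A}{B}$: since the only type-precision rule that concludes $X \leprecop \Unk$ forces $X = \Unk$, the $\mathtt{ActProj}$ case forces $A = \Unk$, and in general $A \leprecop A'$ and $B \leprecop B'$ leave only a few possibilities for $A$ and $B$ (identity, ground injection/projection, or a re-routed composite). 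Finally I would reduce $\cast{V}{c}$ on the left by at most one step --- by rule \eqref{eq:cast} when $c$ is active (then unfold $\agda{applyCast}\app V\app c$), by rule \eqref{eq:wrap} when $c$ is an inert ground injection, or by \emph{no} step at all, keeping $\cast{V}{c}$ syntactically as a cast application, which (as explained below) is sometimes the right move.

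To analyse $V$ and $V'$ I would combine Lemma~\ref{lem:canonical-star} (canonical forms at $\Unk$) with inversion on the derivation of $V \lepreccc V'$. The pivotal observation is that two related values at type $\Unk$ must be injections $\castInert{V_1}{G\Rightarrow^{\ell}\Unk}$ and $\castInert{V_1'}{G\Rightarrow^{\ell'}\Unk}$ from the \emph{same} ground type $G$ with $V_1 \lepreccc V_1'$: the side condition of $(\textsc{WrapR})$ and the shape of the $\leprecit{\cdot}{\cdot}$ rules exclude $(\textsc{WrapL})$ and $(\textsc{WrapR})$, leaving $(\textsc{Wrap})$, and the clause of $\leprecii{\cdot}{\cdot}$ for injections demands the ground tags coincide. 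With the structure of $V$, $V'$, and of the two $\agda{applyCast}$ results laid bare, each obligation $N \lepreccc \agda{applyCast}\app V'\app c'\app a'$ is then discharged by the ``loose'' precision rules: $(\textsc{Cast})$ when both sides carry matching casts, $(\textsc{CastR})$ to peel a cast off the right, $(\textsc{WrapL})$ to absorb the left-hand injection down to $V_1 \lepreccc V_1'$, and $(\textsc{Blame})$ when the right-hand side is (or syntactically is) $\blame{\ell}{}$. For $\mathtt{ActId}$ the type constraints make $c$ an identity cast, a ground injection, or a ground projection, and in each subcase $\cast{V}{c}$ reaches in $\le 1$ step a term related to $V'$ by $(\textsc{WrapL})$ or directly; for $\mathtt{ActInj}$ the left cast is an identity, a ground injection, or again an $\mathtt{ActInj}$, and $N$ is obtained by matching the re-routed sequence $\cast{V'}{A'\Rightarrow\agda{gnd}\app A'\Rightarrow\Unk}$ on the right via $(\textsc{Cast})$ and $(\textsc{CastR})$.

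The main obstacle is the $\mathtt{ActProj}$ case, where $\agda{applyCast}$ for a projection $\Unk\Rightarrow B'$ on an injection from ground $G$ has three sub-behaviours --- return the payload (target ground and equal to $G$), raise $\blame{\ell}{}$ (target ground and unequal), or re-route through $\agda{gnd}\app B'$ (target not ground). The danger is a configuration in which the less-precise projection $\Unk\Rightarrow B$ fails with $\blame{\ell}{}$ while the more-precise $\agda{applyCast}$ yields a non-error term --- irreconcilable, since $\blame{\ell}{}$ has no strictly-less-precise forms. I would rule this out using the ``same $G$'' fact and the observation that $\leprecop$ coincides with equality among ground types: if the right side returns the payload then $B' = G$, hence $B \leprecop G$ forces $B = G$ (the case $B = \Unk$, where $c$ is an identity cast, being immediate), so the left projection also succeeds to $V_1 \lepreccc V_1'$; if the right side returns $\blame{\ell}{}$ then any $N$ works by $(\textsc{Blame})$; and if the right side re-routes, then $B'$ is non-ground and $B \leprecop B'$ forces $B$ non-ground as well (or $B = \Unk$, again handled separately), so that the re-routed sequences on both sides pass through $\agda{gnd}\app B = \Unk\otimes\Unk = \agda{gnd}\app B'$ with the same inner cast $G\Rightarrow\Unk$ and can be matched cast-for-cast by $(\textsc{Cast})$ and $(\textsc{CastR})$ down to $V_1 \lepreccc V_1'$. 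The crux is the discipline of \emph{not over-reducing} the less-precise side: rather than collapsing $\cast{V}{c}$ (which might race ahead to $\blame{\ell}{}$), I keep it as the syntactic cast application it is, so that it matches the unreduced cast sequence $\agda{applyCast}$ returns on the right and the precision rules apply.

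For Variant~2 the argument is unchanged except for the two extra active cases $\mathtt{Act}{\times}$ and $\mathtt{Act}{+}$, where $\agda{applyCast}$ produces the $\eta$-expansions $\agda{eta}{\times}\app V'\app c'$ and $\agda{eta}{+}\app V'\app c'$. These mirror the cross-cast reductions: I would expose the (possibly wrapped) component values of $V$ and $V'$ by canonical forms and inversion on $V \lepreccc V'$, and relate the $\key{cons}$/$\key{case}$ built on the right to the left component-wise, using the cast-decomposition operators $\agda{fst}$, $\agda{snd}$, $\agda{inl}$, $\agda{inr}$ on $c'$. I expect these, and the remaining routine cases, to present no difficulty once the $\mathtt{ActProj}$ re-routing discipline above is settled.
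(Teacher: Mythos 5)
Your proposal takes essentially the same route as the paper: the paper's proof is a one-line sketch that cases on the evidence $a' : \act{c'}$ and declares the resulting cases straightforward, and your elaboration (the same-ground inversion on values of type $\Unk$, and deliberately leaving $\cast{V}{c}$ unreduced so it matches the re-routed cast sequence on the right via (\textsc{Cast}), (\textsc{CastR}), and (\textsc{WrapL})) is a faithful filling-in of those cases. One minor slip: $B \leprecop B'$ with $B'$ non-ground does not force $B$ to be non-ground or $\Unk$ (e.g.\ $\Unk\to\Unk \leprecop \Int\to\Int$), but since in that corner you keep the left-hand cast unreduced, the same cast-for-cast matching discharges it and nothing breaks.
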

\begin{proof}[Proof sketch]
  We case on $\agda{Active} \app c'$ and generate three cases
  that are all straightforward.
\end{proof}

\begin{lemma}[Simulation Between Cast and \agda{Wrap}]
  \label{lem:sim-wrap-lambda-B}
  Suppose $c : \Cast{A}{B}$, $c' : \Cast{A'}{B'}$,
  and $i' : \agda{Inert} \app c'$.
  If $A \leprecop A'$, $B \leprecop B'$, and $V \lepreccc V'$,
  then $\cast{V}{c} \longrightarrow^{*} N$ and
  $N \lepreccc \castInert{V'}{i'}$ for some $N$.
\end{lemma}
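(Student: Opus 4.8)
The plan is to prove this by case analysis on the inert proof $i' : \inert{c'}$, which for $\lambda\textup{\textsf{B}}'$ splits into the injection case $\mathtt{InInj}$, where $c' : \Cast{G}{\Unk}$ and $\castInert{V'}{i'}$ is an injected value, and the cross-cast case $\mathtt{InCross}$ (or $\mathtt{InFun}$ in Variant~2), where $c' : \Cast{(A_1' \otimes A_2')}{(B_1' \otimes B_2')}$. Inside each case I would immediately split on whether $c$ itself is active or inert using the \agda{ActiveOrInert} field, and then reduce $\cast{V}{c}$ to a value and relate it to $\castInert{V'}{i'}$.

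When $c$ is inert, $\cast{V}{c} \longrightarrow \castInert{V}{i}$ by rule \key{wrap}, so I take $N = \castInert{V}{i}$ and must show $\castInert{V}{i} \lepreccc \castInert{V'}{i'}$. I would build this from $V \lepreccc V'$ together with $A \leprecop A'$ and $B \leprecop B'$: when the side condition $(B = \Unk) \to (B' = \Unk)$ holds, derive $\leprecii{i}{i'}$ and apply rule $(\textsc{Wrap})$; otherwise peel the wraps one at a time with $(\textsc{WrapL})$ and $(\textsc{WrapR})$, discharging their side conditions using the precision lemmas (Lemma~\ref{lem:lp-lpit-lambda-B} through Lemma~\ref{lem:lpti-lp-lambda-B}) and the observation that two ground types related by $\leprecop$ must be equal. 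When $c$ is active, $\cast{V}{c} \longrightarrow \agda{applyCast}\app V\app c\app a$, and I would mirror the case analysis in the proof of Lemma~\ref{lem:applyCast-catchup-lambda-B}, following the branch structure of \agda{applyCast}: for $\mathtt{ActId}$ the cast disappears and I relate $V$ to $\castInert{V'}{i'}$ directly, via $(\textsc{WrapR})$ when $A \neq \Unk$, or via Lemma~\ref{lem:canonical-star} to expose $V$ as $\castInert{V_1}{i_1}$ with $i_1$ a ground injection and then invert the derivation of $V \lepreccc V'$ through $(\textsc{WrapL})$ when $A = \Unk$; for $\mathtt{ActInj}$ and the ``otherwise'' branch of $\mathtt{ActProj}$, \agda{applyCast} re-routes the cast through the ground type $\agda{gnd}\app A$ (resp.\ $\agda{gnd}\app B$), producing nested casts whose inner cross-cast part I would further reduce to a value using Lemma~\ref{lem:sim-cast-lambda-B} and congruence of $\longrightarrow^{*}$ under the outer cast, then re-assemble the precision fact as in the inert case; for $\mathtt{Act}{\times}$ and $\mathtt{Act}{+}$ (Variant~2 only) I would catch up each component of the $\eta$-expansion similarly.

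The hard part will be verifying that none of these reductions ever reaches the blame branch of \agda{applyCast}: the projection branch yields $\blame{}{}$ exactly when the source ground type of an incoming injection differs from the target ground type of a projection, and I must show that the hypotheses $A \leprecop A'$, $B \leprecop B'$, $V \lepreccc V'$, and $\inert{c'}$ together force those ground types to coincide so the projection succeeds. This requires pushing the precision relations down through consistency and through the canonical form of $V$, and it is also where the side conditions of the $(\textsc{Wrap})$, $(\textsc{WrapL})$, and $(\textsc{WrapR})$ rules must be threaded carefully through the ground-routing so that the final $N \lepreccc \castInert{V'}{i'}$ goes through. The remaining routine obligations are the transitivity and congruence bookkeeping for the multi-step reductions, which I would leave to the Agda development.
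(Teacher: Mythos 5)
Your skeleton matches the paper's: case on $i' : \inert{c'}$, discharge the reduction obligation with either \eqref{eq:wrap} or \eqref{eq:cast}, and rebuild the precision fact with (\textsc{Wrap}), (\textsc{WrapL}), and (\textsc{WrapR}) plus the precision lemmas. Where you diverge is in the internal organization. The paper's proof inverts the type-precision hypotheses $A \leprecop A'$ and $B \leprecop B'$ \emph{first}: since $i'$ is inert, $A'$ is either a ground type or a constructed type and $B'$ is either $\Unk$ or a constructed type, so the inversion pins $c$ down to a handful of shapes before \agda{applyCast} is ever consulted. In the $\mathtt{InInj}$ case this is decisive --- $A$ must be $\Unk$ or ground and $B$ must be $\Unk$, so $c$ is either the active identity on $\Unk$ or an inert ground injection, and the $\mathtt{ActInj}$, $\mathtt{ActProj}$, and $\eta$ branches of \agda{applyCast} that you propose to ``mirror'' are vacuous there. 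Your organization --- split on \agda{ActiveOrInert} and then walk every branch of \agda{applyCast} --- reaches the same place but forces you to argue several impossible cases away one at a time, and it is why the blame branch looms as ``the hard part'' in your sketch; with the inversion-first ordering that concern is confined to the cross-cast case with $A = \Unk$, where (as you correctly note) the canonical-forms lemma plus $\leprecit{i_1}{A'}$ forces the injected ground type to be $\Unk \otimes \Unk$ and the projection to succeed. One small correction: to reduce the nested casts produced by the ground-routing on the \emph{unprimed} side you would not use Lemma~\ref{lem:sim-cast-lambda-B} (that lemma is about the primed cast being active); the paper just observes that the inner cross cast is inert (Variant 1) or appeals to the catch-up machinery, i.e.\ Lemma~\ref{lem:applyCast-catchup-lambda-B}. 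None of this is a gap in substance --- your proof would go through --- but the inversion-first decomposition is what keeps the Agda development short.
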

\begin{proof}[Proof sketch]
  We case on the premise $\agda{Inert} \app c'$:
\item[$\mathtt{InInj}$.]
  In this case, by inversion on $A \leprecop A'$ and $B \leprecop B'$,
  $A$ can be either $\Unk$ or ground. If $A$ is $\Unk$, an identity cast
  $\Cast{\Unk}{\Unk}$ is active so we use rule $\mathtt{cast}$ and
  proceed; otherwise, the cast is inert so we use rule $\mathtt{wrap}$
  and proceed.
\item[$\mathtt{InCross}$.]
  Similar to the $\mathtt{InInj}$ case, by inversion
  on the two type precision relations.
\end{proof}

\begin{lemma}[Simulation Between Value and \agda{applyCast}]
  \label{lem:castr-cast-lambda-B}
  Suppose $V : \emptyset \vdash A$, $c' : \Cast{A'}{B'}$,
  and $a' : \agda{Active} \app c'$.
  If $A \leprecop A'$, $A \leprecop B'$, and $V \lepreccc V'$,
  then $V \lepreccc \agda{applyCast} \app V' \app c' \app a'$.
\end{lemma}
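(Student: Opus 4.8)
The plan is to prove Lemma~\ref{lem:castr-cast-lambda-B} by case analysis on the witness $a' : \act{c'}$ that the cast $c'$ is active, describing the argument for Variant~1 of $\lambda\textup{\textsf{B}}'$ (Variant~2 adds two further active-cast cases, treated at the end). For Variant~1 there are three cases, $\mathtt{ActId}$, $\mathtt{ActInj}$, and $\mathtt{ActProj}$. The observation that collapses the first two is that an injection has target $\Unk$ and a projection has source $\Unk$; combining this with the precision hypothesis ($A \leprecop B'$ for injections, $A \leprecop A'$ for projections) and the fact that $\Unk$ is least in $\leprecop$ — the only type below $\Unk$ is $\Unk$ itself — we learn $A = \Unk$.

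In the $\mathtt{ActId}$ case $c' = a \Rightarrow^\ell a$ and $\agda{applyCast} \app V' \app c' \app a' = V'$, so the goal is exactly the hypothesis $V \lepreccc V'$. In the $\mathtt{ActInj}$ case $c' = A' \Rightarrow^\ell \Unk$ with $A'$ neither $\Unk$ nor ground; since $A = \Unk$, the output $\cast{\cast{V'}{A' \Rightarrow^\ell \agda{gnd}\app A'}}{\agda{gnd}\app A' \Rightarrow^\ell \Unk}$ is handled by applying rule (\textsc{CastR}) twice, peeling off the two casts on the right, whose type-precision side conditions ($\Unk \leprecop A'$, $\Unk \leprecop \agda{gnd}\app A'$, $\Unk \leprecop \Unk$) are immediate, leaving only $V \lepreccc V'$.

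The substantive case is $\mathtt{ActProj}$, where $c' = \Unk \Rightarrow^{\ell_2} B'$ with $B' \neq \Unk$, and $A = \Unk$. Since $V$ and $V'$ are values of type $\Unk$, Lemma~\ref{lem:canonical-star} — together with the fact that $\lambda\textup{\textsf{B}}$'s only inert casts into $\Unk$ inject from a ground type — gives $V = \castInert{V_1}{G \Rightarrow^{\ell_0} \Unk}$ and $V' = \castInert{V_1'}{G' \Rightarrow^{\ell_1} \Unk}$ with $G,G'$ ground. Inverting $V \lepreccc V'$: rule (\textsc{WrapR}) is excluded because $V$ has type $\Unk$; rule (\textsc{WrapL}) is excluded because it would force $G = \Unk$; and the remaining rule (\textsc{Wrap}) applies (its side condition $(B = \Unk) \to (B' = \Unk)$ holds), where the only clause of $\leprecii{i}{i'}$ relating two injections into $\Unk$ forces $G = G'$ and yields $V_1 \lepreccc V_1'$. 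Now split on $B'$ along the three branches of $\agda{applyCast}$: (i) if $B'$ is a ground type equal to $G$, the output is $V_1'$ and $V \lepreccc V_1'$ follows from (\textsc{WrapL}) using $\leprecit{i}{G}$ (which holds by reflexivity of $\leprecop$) and $V_1 \lepreccc V_1'$; (ii) if $B'$ is ground but different from $G$, the output is $\blame{\ell_2}{}$, and $V \lepreccc \blame{\ell_2}{}$ holds by rule (\textsc{Blame}) using $\Unk \leprecop B'$; (iii) otherwise the output re-routes through $\agda{gnd}\app B'$, i.e. the sequence $\cast{V_1'}{G \Rightarrow^{\ell_1} \Unk \Rightarrow^{\ell_2} \agda{gnd}\app B' \Rightarrow^{\ell_2} B'}$, and we peel its casts with repeated (\textsc{CastR}) (side conditions again trivial since $A = \Unk$) and relate $V$ to $V_1'$ at the bottom using the appropriate wrap rule together with $V_1 \lepreccc V_1'$.

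For Variant~2 the induction gains cases $\mathtt{Act}{\times}$ and $\mathtt{Act}{+}$, where $\agda{applyCast}$ returns the $\eta$-expansion $\agda{eta}{\times}\app V'\app c'\app \mathtt{Cross}\times$ (resp. $\agda{eta}{+}$); the precision hypotheses decompose as $A \leprecop A_1' \otimes A_2'$ and $A \leprecop C_1' \otimes C_2'$, so $A$ is $\Unk$ or $A_1 \otimes A_2$, and one expands the pair/sum canonical forms of $V'$ (and of $V$ when $A \neq \Unk$, via (\textsc{WrapL})) and threads the component casts $\agda{fst}\app c'\app x$, $\agda{snd}\app c'\app x$ (resp. $\agda{inl}$, $\agda{inr}$) through rule (\textsc{CastR}) against the corresponding components. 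I expect the main obstacle to be exactly this $\mathtt{ActProj}$ case (and, in Variant~2, the $\eta$-expansion cases): one must commit to the right canonical form for a value of type $\Unk$, identify precisely which precision rule can relate two ground injections (forcing the ground types to agree), and then handle the re-routing through a ground type when $B'$ is itself non-ground — none of which is mechanical, whereas the identity and injection cases dissolve as soon as one notices that $A$ must be $\Unk$.
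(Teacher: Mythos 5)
Your overall strategy is the same as the paper's (a case analysis on the witness $a' : \act{c'}$), and your Variant~1 argument is correct and complete: the hypotheses $A \leprecop A'$, $A \leprecop B'$ do force $A = \Unk$ in the $\mathtt{ActInj}$ and $\mathtt{ActProj}$ cases (only $\Unk$ lies below $\Unk$), the repeated (\textsc{CastR}) peeling in the injection case is right, and your inversion of the wrap-precision in $\mathtt{ActProj}$ (excluding (\textsc{WrapR}) and (\textsc{WrapL}), forcing $G = G'$ from the injection clause of $\leprecii{i}{i'}$) together with the three-way split on $\agda{applyCast}$ is exactly the substantive content. (Minor wording point: nothing forces, or is needed about, $A = \Unk$ in the $\mathtt{ActId}$ case; there the goal is literally the hypothesis.)

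The gap is in your Variant~2 sketch. For $\mathtt{Act}{\times}$ the paper's $\agda{applyCast}$ returns the eta-expansion $\key{cons}\app(\cast{\pi_1 V'}{\agda{fst}\app c'\app x})\app(\cast{\pi_2 V'}{\agda{snd}\app c'\app x})$, whose components are casts of the \emph{projections} $\pi_i\,V'$, not of the components of $V'$. After you peel the component cast with (\textsc{CastR}) you are left with a goal of the shape $W \lepreccc \pi_1\,V'$ where $W$ is a value and $\pi_1\,V'$ is a projection redex; no precision rule relates a literal, lambda, or pair on the left to a $\pi_1$-application on the right ((\textsc{Fst}) needs projections on both sides), and the relation does not allow reducing the right-hand side, so the ``threading'' step breaks. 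Concretely, take $V' = \key{cons}\app \$1 \app \$2$, $c' = \Nat\times\Nat \Rightarrow^{\ell} \Nat\times\Nat$ (active by $\mathtt{Act}{\times}$ in Variant~2), and $V = \castInert{\key{cons}\app(\castInert{\$1}{i_1})\app(\castInert{\$2}{i_2})}{i}$ with $i$ the injection $\Unk\times\Unk \Rightarrow^{\ell_0} \Unk$: all hypotheses hold, yet your derivation bottoms out at $\$1 \lepreccc \pi_1(\key{cons}\app \$1\app \$2)$, which is underivable. To close this case you must not work against the eta-based definition as literally given: the Variant~2 instance of $\agda{applyCast}$ used for $\CC'$ has to decompose the pair (resp.\ sum) value directly, producing $\key{cons}\app(\cast{V_1'}{\agda{fst}\app c'\app x})\app(\cast{V_2'}{\agda{snd}\app c'\app x})$ as in $\lambda\textup{\textsf{S}}$ (or the argument needs some other repair); with that definition your (\textsc{CastR})-threading against the components does go through, but as written your step fails.
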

\begin{proof}[Proof sketch]
  Straightforward. By case analysis on $a' : \agda{Active} \app c'$.
\end{proof}

\begin{lemma}[Simulation Between Value and \agda{Wrap}]
  \label{lem:castr-wrap-lambda-B}
  Suppose $V : \emptyset \vdash A$, $c' : \Cast{A'}{B'}$,
  and $i' : \agda{Inert} \app c'$.
  If $A \leprecop A'$, $A \leprecop B'$, and $V \lepreccc V'$,
  then $V \lepreccc \castInert{V'}{i'}$.
\end{lemma}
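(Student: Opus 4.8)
This is the ``inert'' companion of Lemma~\ref{lem:castr-cast-lambda-B}, so the plan is again to proceed by case analysis on $i' : \agda{Inert}\app c'$. We describe Variant 1; Variant 2 is identical except that it has fewer inert-cast shapes (only $\mathtt{InInj}$ and $\mathtt{InFun}$, the latter handled exactly like the function sub-case below). Throughout we use that $V$ and $V'$ are closed values.

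If $i' = \mathtt{InInj}$, then $c' : \Cast{G}{\Unk}$, so $B' = \Unk$, and the hypothesis $A \leprecop B'$ becomes $A \leprecop \Unk$, which forces $A \equiv \Unk$. By canonical forms at type $\Unk$ (the $\CC'$ analogue of Lemma~\ref{lem:canonical-star}), $V \equiv \castInert{V_0}{c_0}$ with $c_0 : \Cast{G_0}{\Unk}$ inert, hence $c_0 = G_0 \Rightarrow^{\ell_0}\Unk$ for a ground type $G_0$. I then invert $V \lepreccc V'$: since $V'$ is a value whose type $G$ is not $\Unk$, rule (\textsc{Wrap}) cannot have been used (its side condition would force the type of $V'$ to be $\Unk$), so the derivation is (\textsc{WrapL}), giving $V_0 \lepreccc V'$ and $\leprecit{c_0}{G}$; the latter can only come from the injection rule for $\leprecit$, so $G_0 \leprecop G$, and two ground types in the precision order must coincide, hence $G_0 = G$. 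The goal $\castInert{V_0}{G\Rightarrow^{\ell_0}\Unk} \lepreccc \castInert{V'}{i'}$ then follows by (\textsc{Wrap}): the $\leprecii$ premise is the injection--injection rule (blame labels are irrelevant), $V_0 \lepreccc V'$ is in hand, and the side condition $(\Unk=\Unk)\to(\Unk=\Unk)$ is trivial.

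If $i' = \mathtt{InCross}$, then $c' : \Cast{A'_1\otimes A'_2}{B'_1\otimes B'_2}$ for some $\otimes \in \{\to,\times,+\}$, so $A' = A'_1\otimes A'_2$ and $B' = B'_1\otimes B'_2$, and $A \leprecop B'$ forces $A$ to be $\Unk$ or of the form $A_1\otimes A_2$. When $A \equiv A_1\otimes A_2$, I would form $\leprecti{A}{i'}$ directly from the two hypotheses $A \leprecop A'$ and $A \leprecop B'$ via the cross-cast rule for $\leprecti$, and then conclude $V \lepreccc \castInert{V'}{i'}$ by (\textsc{WrapR}) using $V \lepreccc V'$ and $A\neq\Unk$. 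When $A \equiv \Unk$, I again write $V \equiv \castInert{V_0}{G_0\Rightarrow^{\ell_0}\Unk}$ and invert $V \lepreccc V'$: (\textsc{Wrap}) is excluded because $V'$ has the non-$\Unk$ type $A'_1\otimes A'_2$, so (\textsc{WrapL}) yields $V_0 \lepreccc V'$ and $\leprecit{(G_0\Rightarrow^{\ell_0}\Unk)}{A'_1\otimes A'_2}$, which pins down $G_0 = \Unk\otimes\Unk$. I then rebuild the goal in two steps: first (\textsc{WrapL}) peels the injection off $V$, its premise $\leprecit{(\Unk\otimes\Unk\Rightarrow^{\ell_0}\Unk)}{B'_1\otimes B'_2}$ holding because $\Unk\otimes\Unk$ is less precise than any $\otimes$-type, leaving the obligation $V_0 \lepreccc \castInert{V'}{i'}$; and that obligation follows by (\textsc{WrapR}), since $\leprecti{(\Unk\otimes\Unk)}{i'}$ holds by the cross-cast rule ($\Unk\otimes\Unk$ is below both $A'_1\otimes A'_2$ and $B'_1\otimes B'_2$), $V_0 \lepreccc V'$ is in hand, and $\Unk\otimes\Unk \neq \Unk$.

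The one delicate point is the $A \equiv \Unk$ sub-cases: there the left-hand value is itself an injection, so no single precision rule closes the goal, and one must invert $V \lepreccc V'$ — crucially using that $V'$ is a value of a non-unknown type to eliminate rule (\textsc{Wrap}) — and then chain (\textsc{WrapL}) with (\textsc{WrapR}) (or apply (\textsc{Wrap}) directly in the injection/injection case), carrying along the shape constraint on the intermediate ground type $G_0$ that the inverted $\leprecit$ premise supplies. Everything else is routine inversion on the type-precision relation.
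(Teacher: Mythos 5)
Your proof is correct and follows essentially the same route as the paper's (very terse) argument: case analysis on $i' : \agda{Inert}\app c'$ followed by inversion on the type-precision hypotheses, using (\textsc{WrapR}) in the non-$\Unk$ sub-cases and, when $A \equiv \Unk$, canonical forms plus inversion of $V \lepreccc V'$ (which must be (\textsc{WrapL})) to rebuild the goal via (\textsc{Wrap}) or a (\textsc{WrapL})/(\textsc{WrapR}) chain. The extra bookkeeping you spell out for the $A \equiv \Unk$ sub-cases is precisely what the paper's ``straightforward'' hides, and your handling of it is sound.
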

\begin{proof}[Proof sketch]
  Straightforward. By case analysis on $i' : \agda{Inert} \app c'$
  and by inversion on the type precision relations.
\end{proof}

We prove that both variants of $\lambda\textup{\textsf{B}}'$
are instances of \agda{CastStructWithPrecision} using
Lemma~\ref{lem:applyCast-catchup-lambda-B}, Lemma~\ref{lem:sim-cast-lambda-B},
Lemma~\ref{lem:sim-wrap-lambda-B}, Lemma~\ref{lem:castr-cast-lambda-B},
and Lemma~\ref{lem:castr-wrap-lambda-B}.

\begin{proposition}
  \label{prop:lambda-B-csprec}
  Both variants of $\lambda\textup{\textsf{B}}'$ are an instance of
  the \agda{CastStructWithPrecision} structure.
\end{proposition}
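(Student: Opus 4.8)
The plan is to assemble this instance entirely from pieces already in hand, with no new mathematical content. The structure \agda{CastStructWithPrecision} bundles a \agda{CastStruct} together with (the data of) a \agda{PreCastStructWithPrecision}---on which the term precision relation $\lepreccc$ of $\CC'(\CastOp)$ depends---plus the five ``simulation-style'' fields displayed in Figure~\ref{fig:CastStructWithPrecision}: \agda{applyCast\mbox{-}catchup}, \agda{sim\mbox{-}cast}, \agda{sim\mbox{-}wrap}, \agda{castr\mbox{-}cast}, and \agda{castr\mbox{-}wrap}. For both variants of $\lambda\textup{\textsf{B}}'$ we have already established the \agda{CastStruct} instance (in the subsection on type safety) and, by Proposition~\ref{prop:lambda-B-preprec}, the \agda{PreCastStructWithPrecision} instance. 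So the only remaining work is to supply the five extra fields.

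Each of the five fields is, up to renaming, precisely the statement of one of the lemmas just proved for $\lambda\textup{\textsf{B}}'$: \agda{applyCast\mbox{-}catchup} is Lemma~\ref{lem:applyCast-catchup-lambda-B}; \agda{sim\mbox{-}cast} is Lemma~\ref{lem:sim-cast-lambda-B}; \agda{sim\mbox{-}wrap} is Lemma~\ref{lem:sim-wrap-lambda-B}; \agda{castr\mbox{-}cast} is Lemma~\ref{lem:castr-cast-lambda-B}; and \agda{castr\mbox{-}wrap} is Lemma~\ref{lem:castr-wrap-lambda-B}. Thus the proof is just a record assembly: take the \agda{CastStruct} instance, take the \agda{PreCastStructWithPrecision} instance, and fill the remaining slots with these five lemmas. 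In Agda this is a single record literal that typechecks once the lemmas above are in scope.

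There is consequently no substantive obstacle in this proposition itself; the difficulty was already discharged inside those five lemmas, chiefly in \agda{applyCast\mbox{-}catchup} (Lemma~\ref{lem:applyCast-catchup-lambda-B}), whose $\mathtt{ActProj}$ case requires a sub-induction to account for $\agda{applyCast}$ re-routing a projection through a ground type. The one point demanding care here is the two-variant split: Variant~1 categorizes all cross casts as inert while Variant~2 makes only function casts inert, so the precision relations $\leprecii{-}{-}$, $\leprecit{-}{-}$, $\leprecti{-}{-}$ and the \agda{Active}/\agda{Inert} predicates differ between the variants. One must check that each of the five fields is discharged against the matching variant's definitions---but since these were already aligned when proving Proposition~\ref{prop:lambda-B-preprec} and the corresponding lemmas were stated for both variants, the record assembly goes through uniformly, yielding \agda{CastStructWithPrecision} instances for Variant~1 and Variant~2 alike.
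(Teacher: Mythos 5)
Your proposal matches the paper's own argument: the paper discharges this proposition exactly by assembling the \agda{CastStruct} and \agda{PreCastStructWithPrecision} instances together with Lemma~\ref{lem:applyCast-catchup-lambda-B}, Lemma~\ref{lem:sim-cast-lambda-B}, Lemma~\ref{lem:sim-wrap-lambda-B}, Lemma~\ref{lem:castr-cast-lambda-B}, and Lemma~\ref{lem:castr-wrap-lambda-B} as the five remaining fields, once for each variant. Your observation that all the substantive work lives in those lemmas (especially the $\mathtt{ActProj}$ case of \agda{applyCast\mbox{-}catchup}) is consistent with the paper's treatment.
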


We instantiate Lemma~\ref{lem:simulation-more-precise} (Simulation of
More Precise Programs) with the $\lambda\textup{\textsf{B}}'$
instances of \agda{CastStructWithPrecision}
(Proposition~\ref{prop:lambda-B-csprec}) to obtain the main lemma of
the dynamic gradual guarantee for both variants of
$\lambda\textup{\textsf{B}}'$.

\begin{corollary}[Simulation of More Precise Programs for $\lambda\textup{\textsf{B}}'$]
  \label{lem:simulation-more-precise-lambda-B}
  Suppose $M_1 : \emptyset \vdash A$ and $M_1' : \emptyset \vdash A'$
  and $M_2' : \emptyset \vdash A'$. If $M_1 \lepreccc M_1'$ and $M_1'
  \longrightarrow M_2'$, then $M_1 \longrightarrow^{*} M_2$ and $M_2
  \lepreccc M_2'$ for some $M_2$.
\end{corollary}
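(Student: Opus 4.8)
The plan is to obtain this corollary entirely by instantiation, introducing no fresh reasoning about reduction or precision. Recall that Lemma~\ref{lem:simulation-more-precise} is proved once and for all for an arbitrary instance of \agda{CastStructWithPrecision}: its proof appeals only to the generic reduction rules of $\CC'(\CastOp)$, the generic precision relation of Figure~\ref{fig:cc'-prec}, the catch-up Lemma~\ref{prop:catchup}, the substitution Lemma~\ref{lem:subst-pres-prec}, and the five structure fields \agda{applyCast\mbox{-}catchup}, \agda{sim\mbox{-}cast}, \agda{sim\mbox{-}wrap}, \agda{castr\mbox{-}cast}, and \agda{castr\mbox{-}wrap}. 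Hence the whole job is to hand that generic lemma a \agda{CastStructWithPrecision} instance for $\lambda\textup{\textsf{B}}'$ and read off the conclusion.

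That instance is Proposition~\ref{prop:lambda-B-csprec}. To reach it I would first fix the \agda{PreCastStructWithPrecision} instance of Proposition~\ref{prop:lambda-B-preprec}, namely the precision relations $\leprecii{i}{i'}$, $\leprecit{i}{A'}$, $\leprecti{A}{i'}$ displayed for each variant together with Lemmas~\ref{lem:lp-lpit-lambda-B}, \ref{lem:lpii-lp-lambda-B}, \ref{lem:lpit-lp-lambda-B}, and \ref{lem:lpti-lp-lambda-B}; then discharge the five remaining fields using Lemmas~\ref{lem:applyCast-catchup-lambda-B}, \ref{lem:sim-cast-lambda-B}, \ref{lem:sim-wrap-lambda-B}, \ref{lem:castr-cast-lambda-B}, and \ref{lem:castr-wrap-lambda-B}. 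Applying Lemma~\ref{lem:simulation-more-precise} with this instance to the hypotheses $M_1 \lepreccc M_1'$ and $M_1' \longrightarrow M_2'$ then produces the witness $M_2$ with $M_1 \longrightarrow^{*} M_2$ and $M_2 \lepreccc M_2'$, which is exactly what is required. Since the two variants of $\lambda\textup{\textsf{B}}'$ differ only in whether product and sum casts are inert or active, the instantiation is carried out once per variant.

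The real content therefore lives in the auxiliary simulation lemmas, and the step I expect to be the main obstacle is verifying \agda{sim\mbox{-}cast}, \agda{sim\mbox{-}wrap}, and \agda{applyCast\mbox{-}catchup} in the face of $\lambda\textup{\textsf{B}}$'s factoring of injections and projections through ground types. For example, when the projection $\Unk \Rightarrow^{\ell_2} B$ meets a value $\cast{M}{G \Rightarrow^{\ell_1} \Unk}$ with $B$ not itself ground, \agda{applyCast} neither succeeds nor fails at once but re-expands the cast as $\cast{M}{G \Rightarrow^{\ell_1} \Unk \Rightarrow^{\ell_2} \agda{gnd}\app B \Rightarrow^{\ell_2} B}$, so the catch-up argument must recurse through this intermediate state while preserving $\lepreccc$; and \agda{sim\mbox{-}wrap} must allow that on the less-precise side the source type is $\Unk$, so an inert wrap on the more-precise side is simulated by the \emph{active} identity cast $\Unk \Rightarrow^{\ell} \Unk$. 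Once this bookkeeping is in place for both variants, the corollary is immediate.
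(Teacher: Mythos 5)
Your proposal is correct and matches the paper exactly: the corollary is obtained purely by instantiating the generic Lemma~\ref{lem:simulation-more-precise} with the $\lambda\textup{\textsf{B}}'$ instance of \agda{CastStructWithPrecision} from Proposition~\ref{prop:lambda-B-csprec}, whose fields are discharged by the auxiliary lemmas you cite. Your observations about where the real work lies (the ground-type factoring in \agda{applyCast\mbox{-}catchup} and \agda{sim\mbox{-}wrap}) are accurate but belong to the proofs of those lemmas, not to this corollary.
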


We prove the dynamic gradual guarantee for
$\lambda\textup{\textsf{B}}'$ following the reasoning of
\citet{Siek:2015ac}. We give the full proof here.

\begin{theorem}[Dynamic Gradual Guarantee for both variants of $\lambda\textup{\textsf{B}}'$]\ \\
  Suppose $M \leprecgtlc N$ and $M : \Gamma \vdash_G A$
  and $N : \Gamma \vdash_G B$.
  \begin{enumerate}
  \item If $\compile{}{N} \longrightarrow^{*} W$, then
    $\compile{}{M} \longrightarrow^{*} V$ and $V \lepreccc W$.

  \item If $\compile{}{N}$ diverges, so does $\compile{}{M}$.
    
  \item If $\compile{}{M} \longrightarrow^{*} V$, then either
    $\compile{}{N} \longrightarrow^{*} W$ and $V \lepreccc W$, or
    $\compile{}{N} \longrightarrow^{*} \blame{\ell}{}$ for some $\ell$.

  \item If $\compile{}{M}$ diverges then either $\compile{}{N}$
    diverges or $\compile{}{N} \longrightarrow^{*} \blame{\ell}{}$ for
    some $\ell$.
  \end{enumerate}
\end{theorem}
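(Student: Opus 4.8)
\noindent\textbf{The plan} is to reduce all four parts to the $\CC'(\CastOp)$ simulation machinery, reading $\compile{}{-}$ as the compilation $\compilenew{}{-}$ into $\CC'(\CastOp)$ --- the setting in which term precision and Corollary~\ref{lem:simulation-more-precise-lambda-B} are available. Two inputs are needed. First, Corollary~\ref{lem:compile-pres-prec-lambda-B} gives $\compilenew{}{M} \lepreccc \compilenew{}{N}$. Second, I would prove a multi-step strengthening of Corollary~\ref{lem:simulation-more-precise-lambda-B} by induction on reduction length (using transitivity of $\longrightarrow^{*}$): if $L \lepreccc L'$ and $L' \longrightarrow^{*} L''$, then $L \longrightarrow^{*} L'''$ with $L''' \lepreccc L''$. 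Throughout I would use Progress and Preservation for $\lambda\textup{\textsf{B}}'$ (Corollaries~\ref{thm:lambda-B-cast-progress} and~\ref{thm:lambda-B-cast-preservation}) and the (routine) determinism of $\longrightarrow$, so that each closed well-typed term exhibits exactly one of three behaviours: converge to a value, converge to $\blame{\ell}{}$, or diverge. This mirrors the architecture of the proof of \citet{Siek:2015ac}.

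\noindent\textbf{The three easy parts.} For Part~(1), apply the multi-step simulation to $\compilenew{}{M} \lepreccc \compilenew{}{N}$ and $\compilenew{}{N} \longrightarrow^{*} W$ to get $\compilenew{}{M} \longrightarrow^{*} L$ with $L \lepreccc W$; since $W$ is a value, Lemma~\ref{prop:catchup} (Catchup to Value) gives $L \longrightarrow^{*} V$ with $V \lepreccc W$, and concatenating the sequences closes the case. Part~(4) is the contrapositive of~(1): if $\compilenew{}{N}$ neither diverges nor reduces to $\blame{\ell}{}$, then by Progress it reduces to a value, so by~(1) $\compilenew{}{M}$ reduces to a value, contradicting (by determinism) that $\compilenew{}{M}$ diverges. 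For Part~(3), given $\compilenew{}{M} \longrightarrow^{*} V$, Progress leaves three possibilities for $\compilenew{}{N}$: divergence is ruled out by Part~(2), which would force $\compilenew{}{M}$ to diverge; reduction to a value $W$ yields, via~(1), $\compilenew{}{M} \longrightarrow^{*} V'$ with $V' \lepreccc W$, and determinism forces $V' = V$, hence $V \lepreccc W$; and reduction to $\blame{\ell}{}$ is the second disjunct.

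\noindent\textbf{Part~(2) is the crux.} From an infinite reduction $\compilenew{}{N} = N_0 \longrightarrow N_1 \longrightarrow \cdots$, iterating Corollary~\ref{lem:simulation-more-precise-lambda-B} builds $\compilenew{}{M} = M_0 \longrightarrow^{*} M_1 \longrightarrow^{*} M_2 \longrightarrow^{*} \cdots$ with $M_i \lepreccc N_i$; if infinitely many of these segments take at least one step, $\compilenew{}{M}$ diverges. The obstacle --- and where I expect the real difficulty --- is \emph{stuttering}: the simulation may legitimately match a step of the more-precise side with zero steps of the less-precise side (for instance when a value relates by rule (\textsc{CastR}) to a cast of a value through an identity or injection cast), so one must exclude $\compilenew{}{M}$ being permanently stalled at some $M_\ast$ with $M_\ast \lepreccc N_i$ for all large $i$. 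The tool, following \citet{Siek:2015ac}, is an auxiliary lemma stating that a program less precise than a diverging program cannot terminate; concretely, if $r \lepreccc N$ with $r$ a normal form (a value or $\blame{\ell}{}$), then $N$ has no infinite reduction. I would prove it by induction on the derivation of $r \lepreccc N$: the congruence cases reduce to the induction hypotheses on subterms, while the rules (\textsc{CastR}), (\textsc{WrapR}), (\textsc{CastL}), (\textsc{WrapL}) are handled by first driving the inner term to a normal form (by the hypothesis) and then taking the remaining cast or wrap step, using rules \key{wrap} and \key{cast} together with the re-anchoring Lemmas~\ref{lem:castr-wrap-lambda-B} and~\ref{lem:castr-cast-lambda-B} (or, if the inner term reached $\blame{\ell}{}$, the blame-propagation rule). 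The delicate sub-case is that $\agda{applyCast}$ of an active cast in $\lambda\textup{\textsf{B}}'$ need not reach a normal form in one step --- a projection or injection re-casts its argument \emph{through a ground type} --- so this branch is not a structural sub-instance and must be supported by a well-founded measure on casts (roughly, the type size not yet factored through ground types), mirroring the case structure of $\agda{applyCast}$ for both variants. That measure also caps the length of a stuttering run, so infinitely many of the segments $M_i \longrightarrow^{*} M_{i+1}$ take a step; hence $\compilenew{}{M}$ diverges, and Part~(2) --- with it the theorem --- follows.
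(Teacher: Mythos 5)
Your parts (1), (3), and (4) follow the paper's proof essentially verbatim: compile-preserves-precision (Corollary~\ref{lem:compile-pres-prec-lambda-B}) plus an induction along the reduction sequence using Corollary~\ref{lem:simulation-more-precise-lambda-B}, then Lemma~\ref{prop:catchup} to catch up to a value for (1), and for (3) and (4) a case analysis on the behaviour of $\compilenew{}{N}$ justified by type safety and determinism (you even silently repair the paper's slip of citing Part~1 instead of Part~2 in the divergence subcase of Part~3). Where you genuinely diverge from the paper is Part~(2): the paper disposes of it in one sentence, asserting that divergence of $\compilenew{}{N}$ transfers to $\compilenew{}{M}$ ``by Corollary~\ref{lem:simulation-more-precise-lambda-B}.'' You correctly observe that this corollary only gives a weak, stuttering simulation ($M_1 \longrightarrow^{*} M_2$ may be the empty sequence, e.g.\ via the \textsc{CastR}/\textsc{WrapR} re-anchoring fields $\agda{castr\mbox{-}cast}$ and $\agda{castr\mbox{-}wrap}$), so an infinite run of the precise side does not by itself yield an infinite run of the imprecise side; you then supply the missing anti-stuttering content — a termination lemma for terms more precise than a normal form, proved by induction on the precision derivation with a well-founded measure on the remaining cast structure (needed because $\agda{applyCast}$ in $\lambda\textup{\textsf{B}}'$ re-routes through ground types rather than finishing in one step). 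That is a more careful treatment than the paper's text, which leaves this bounded-stuttering argument implicit (presumably discharged in the mechanization). One small tightening: your auxiliary lemma as literally stated (left-hand side already a normal form) does not by itself exclude the run stalling at a \emph{non-normal} $M_\ast$; what actually closes the case is your final remark that the measure bounds every zero-step stretch, since stuttering matches only arise from cast/wrap steps above an already-related value (or from steps into $\blame{\ell}{}$, which end the run) — it would be worth stating that bounded-stutter claim as the lemma and deriving the normal-form version from it.
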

\begin{proof}\ 
  \begin{enumerate}
  \item By Corollary~\ref{lem:compile-pres-prec-lambda-B} we have
    $\compile{}{M} \lepreccc \compile{}{N}$. Then by induction on
    $\compile{}{N} \longrightarrow^{*} W$ and
    Corollary~\ref{lem:simulation-more-precise-lambda-B}, we have
    $\compile{}{M} \longrightarrow^{*} M'$ and $M' \lepreccc W$ for
    some $M'$.  Finally, by Lemma~\ref{prop:catchup} (instantiated for
    $\lambda\textup{\textsf{B}}'$), we have $M' \longrightarrow^{*} V$
    and $V \lepreccc W$.
  \item By Corollary~\ref{lem:compile-pres-prec-lambda-B} we have
    $\compile{}{M} \lepreccc \compile{}{N}$. Then by 
    Corollary~\ref{lem:simulation-more-precise-lambda-B},
    $\compile{}{M}$ also diverges.
  \item Because $\lambda\textup{\textsf{B}}'$ is type safe
    (Corollaries~\ref{thm:lambda-B-cast-preservation} and
    \ref{thm:lambda-B-cast-progress}), we have the following
    cases.
    \begin{itemize}
    \item Case $\compile{}{N} \longrightarrow^{*} W$ for some $W$.
      Then by Part 1 of this theorem and because reduction is
      deterministic, $V \lepreccc W$.
    \item Case $\compile{}{N} \longrightarrow^{*} \blame{\ell}{}$ for some $\ell$.
      We immediately conclude.
    \item Case $\compile{}{N}$ diverges. Then by Part 1, $\compile{}{M}$
      also diverges, but that contradicts the assumption that
      $\compile{}{M} \longrightarrow^{*} V$.
    \end{itemize}
    
  \item Again because $\lambda\textup{\textsf{B}}'$ is type safe
    (Corollaries~\ref{thm:lambda-B-cast-preservation} and
    \ref{thm:lambda-B-cast-progress}), we have the following
    cases.
    \begin{itemize}
    \item Case $\compile{}{N} \longrightarrow^{*} W$ for some $W$.
      Then by Part 1, $\compile{}{M} \longrightarrow^{*} V$ for some
      $V$ but that contradicts the assumption that $\compile{}{M}$
      diverges.
    \item Case $\compile{}{N} \longrightarrow^{*} \blame{\ell}{}$ for some $\ell$.
      We immediately conclude.
    \item Case $\compile{}{N}$ diverges. We immediately conclude.
    \end{itemize}
\end{enumerate}
\end{proof}


\subsection{Partially-Eager ``D'' Coercions (EDC)}
\label{sec:EDC}

The next three cast calculi use cast representation types based on the
Coercion Calculus of \citet{Henglein:1994nz}. We start with one that
provides the same behavior as the cast calculus of
\citet{Siek:2006bh}, that is, partially-eager casts with active cross
casts (Section~\ref{sec:EDA}). We use the abbreviation EDC this cast
calculus.

We define coercions as follows, omitting sequence coercions because
they are not necessary in this calculus. \\[1ex]
\fbox{$c,d : \Cast{A}{B}$}
\begin{gather*}
\inference
    {}
    {\agda{id} : \Cast{a}{a}}
\quad
\inference
    {A \neq \Unk}
    {A! : \Cast{A}{\Unk}} 
\quad
\inference
    {B \neq \Unk}
    {B?^\ell : \Cast{\Unk}{B}} 
\\[2ex]
\inference
    {c : \Cast{C}{A} & d : \Cast{B}{D}}
    {c \to d : \Cast{(A \to B)}{(C \to D)}}
\quad
\inference
    {c : \Cast{A}{C} & d : \Cast{B}{D}}
    {c \otimes d : \Cast{(A \otimes B)}{(C \otimes D)}} \otimes \in \{ \times, + \}
\end{gather*}

The cast constructor is defined by applying the \agda{coerce} function
(defined later in this section) to the implicit proof of consistency
between $A$ and $B$.
\[
  \coerce{A}{B}{\ell} \app \{ p : A \sim B\} = \agda{coerce} \app p \app \ell
\]

\subsubsection{Reduction Semantics and Type Safety}

Injections are categorized as inert casts.\\[1ex]
\fbox{$\inert{c}$}
\[
\inference{}
          {\inert{A!}}
\]
The coercions between function, pair, and sum types are categorized as
cross casts.\\[1ex]
\fbox{$\cross{c}$}
\begin{gather*}
  \inference{}
            {\agda{Cross}{\otimes} : \cross{(c \otimes d)}} \otimes \in \{ \to, \times, + \}
\end{gather*}

We categorize the identity, projection, and cross casts as active.\\[1ex]
\fbox{$\act{c}$}
\begin{gather*}
  \inference
      {}
      {\mathtt{ActId} :\act{\agda{id}}}
  \quad
  \inference
    {}
    {\mathtt{ActProj} : \act{A?^\ell}}
  \quad
  \inference
    {\cross{c}}
    {\mathtt{ActCross} :\act{c}}
\end{gather*}

\begin{lemma}
  \label{lem:EDC-active-or-inert}
  For any types $A$ and $B$, $c : \Cast{A}{B}$ is either an active or
  inert cast.
\end{lemma}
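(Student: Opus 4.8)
The plan is to prove this by a straightforward case analysis on the syntactic structure of the coercion $c : \Cast{A}{B}$, which has exactly five constructors: the identity $\agda{id}$, an injection $\cinj{A}$, a projection $\cproj{B}{\ell}$, a function coercion $c_1 \to c_2$, and a product-or-sum coercion $c_1 \otimes c_2$ with $\otimes \in \{\times,+\}$. For each constructor I will exhibit either a proof that the coercion is active or a proof that it is inert, using the categorization rules introduced just above the lemma statement, thereby establishing the disjunction $\act{c} \uplus \inert{c}$.

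Concretely, when $c = \agda{id}$ the coercion is active by $\mathtt{ActId}$; when $c = \cinj{A}$ it is inert by the sole rule defining $\agda{Inert}$ for this calculus; and when $c = \cproj{B}{\ell}$ it is active by $\mathtt{ActProj}$. For the two compound cases, $c = c_1 \to c_2$ and $c = c_1 \otimes c_2$, I will first observe that the coercion is a cross cast via the rule $\agda{Cross}{\otimes}$, and then conclude that it is active by $\mathtt{ActCross}$, which lifts any cross cast to an active cast. Since these five cases are exhaustive, every coercion is either active or inert.

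I do not anticipate any real obstacle here. Unlike the corresponding lemmas for the type-based cast representations (Lemma~\ref{lem:simple-active-or-inert} and Lemma~\ref{lem:pedi-active-or-inert}), this proof requires no inversion on a consistency derivation, because the coercion grammar is already restrictive enough that every well-typed coercion falls cleanly into one of the five shapes above. The only minor bookkeeping in the Agda development is that the case for $c_1 \otimes c_2$ splits into the two sub-cases $\otimes = \times$ and $\otimes = +$, each discharged in exactly the same way via $\mathtt{ActCross}$ and $\agda{Cross}{\otimes}$.
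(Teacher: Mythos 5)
Your proof is correct: the case analysis on the five coercion constructors is exhaustive, and each case is discharged by exactly the rule you cite ($\mathtt{ActId}$, the sole $\agda{Inert}$ rule for $A!$, $\mathtt{ActProj}$, and $\mathtt{ActCross}$ via $\agda{Cross}\otimes$ for the compound coercions). The paper omits an explicit proof of this lemma, but your argument is the evident intended one, and your observation that no inversion on a consistency derivation is needed here (in contrast to Lemma~\ref{lem:simple-active-or-inert}) is accurate, since for EDC the classification is read directly off the coercion's syntax.
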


\begin{lemma}\label{lem:EDC-inert-cross}
  If $c : \Cast{A}{(B \otimes C)}$ and $\inert{c}$,
  then $\cross{c}$ and $A \equiv D \otimes E$ for some $D$ and $E$.
\end{lemma}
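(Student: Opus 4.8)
The plan is to recognize that this lemma, like its counterpart Lemma~\ref{lem:simple-inert-cross} for EDA, is vacuously true: in EDC the only inert casts are the injections $A!$, and every such cast has target type $\Unk$. Since $\Unk$ is distinct from every type of the form $B \otimes C$ with $\otimes \in \{\to,\times,+\}$, no cast $c : \Cast{A}{(B \otimes C)}$ can be inert, so there is nothing to establish about $\cross{c}$ or about the shape of $A$.

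Concretely, I would proceed by case analysis on the derivation of $\inert{c}$. The predicate $\inert{\cdot}$ has a single introduction rule, applicable only to a cast written $A'!$, whose typing is $\Cast{A'}{\Unk}$. Unifying this index with the assumed $\Cast{A}{(B \otimes C)}$ forces $\Unk \equiv B \otimes C$, which is impossible because the two types have different head constructors. In the Agda development this case split leaves no inhabited branch, so the goal --- the pair $\cross{c} \times \Sigma D\,E.\,A \equiv D \otimes E$ --- is discharged by the absurd (empty) pattern.

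I expect no real obstacle here. The only subtlety is that under intrinsic typing the contradiction is a statement about the target index of $A!$ rather than an explicit side condition, but Agda's unifier handles the mismatch directly. The same reasoning also makes vacuous the three derived cross-cast fields of \agda{PreCastStruct} (the ones asserting that an inert cast into a function, product, or sum type is a cross cast) when they are instantiated for EDC.
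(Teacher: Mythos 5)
Your proof is correct and matches the paper's intent: the paper leaves this lemma without an explicit proof, but the argument is exactly the vacuous-truth argument it spells out for the analogous EDA case (Lemma~\ref{lem:simple-inert-cross}), since in EDC the sole inert casts are injections $A!$ with target $\Unk$, which cannot have the form $B \otimes C$.
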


The cast operators \agda{dom}, \agda{cod}, etc. are defined as
follows.
\begin{align*}
  \dom{(c \to d) \app x} &= c \\
  \cod{(c \to d) \app x} &= d \\
  \agda{fst}\app (c \times d) \app x &= c \\
  \agda{snd} \app (c \times d) \app x &= d \\
  \agda{inl} \app (c + d) \app x &= c \\
  \agda{inr} \app (c + d) \app x &= d
\end{align*}

\begin{lemma}\label{lem:EDC-base-not-inert}
   A cast $c : \Cast{A}{b}$ is not inert.
\end{lemma}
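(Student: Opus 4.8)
The plan is to proceed by a short case analysis on the hypothesis $\inert{c}$. Recall that the \agda{Inert} predicate for EDC has a single introduction rule, namely $\inert{A!}$, which applies only to an injection coercion $A! : \Cast{A}{\Unk}$. So the first step is to invert $\inert{c}$: this immediately forces $c$ to be an injection and, crucially, forces its target type to be $\Unk$.

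The second and final step is to observe that the target type is simultaneously required to be the base type $b$, and that $b$ and $\Unk$ are distinct constructors of the (atomic) type grammar, so $b \neq \Unk$. This is a contradiction, hence no inert cast with a base target type exists and the lemma holds vacuously. In the Agda development this is literally a pattern match on the \agda{Inert} proof whose sole clause is refuted by a clash of constructors on the target type.

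I do not expect any obstacle here: the content of the lemma is entirely determined by the shape of the \agda{Inert} predicate defined just above, so there are no calculations to grind through. The lemma is stated only because the \agda{baseNotInert} field is needed to instantiate \agda{PreCastStruct}; as in the generic proof of Progress (Theorem~\ref{thm:cc-progress}), it is used to rule out cast-wrapped values at a base type, ensuring that the canonical forms at a base type are exactly the constants.
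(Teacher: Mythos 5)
Your proof is correct and matches the paper's (essentially omitted, one-line) argument: the only \agda{Inert} constructor for EDC is the injection $A! : \Cast{A}{\Unk}$, so inverting $\inert{c}$ forces the target to be $\Unk$, which clashes with the base type $b$. This is the same constructor-clash refutation the paper uses for the analogous EDA and EDI lemmas.
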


\begin{proposition}\label{prop:simple-active-pre-cast-struct}
The EDC calculus is instance of the $\agda{PreCastStruct}$ structure.
\end{proposition}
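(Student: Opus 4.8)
The plan is to assemble the \agda{PreCastStruct} record for EDC field by field, drawing entirely on the definitions and the three lemmas established earlier in this section. The cast-representation field is the coercion datatype defined above, so $c : \Cast{A}{B}$ just means $c$ is an EDC coercion from $A$ to $B$. The predicates \agda{Inert}, \agda{Active}, and \agda{Cross} are the three inductive definitions given above, and the field \agda{ActiveOrInert} is exactly Lemma~\ref{lem:EDC-active-or-inert}. The operators \agda{dom}, \agda{cod}, \agda{fst}, \agda{snd}, \agda{inl}, and \agda{inr} are the projections on cross coercions defined above; because they are total on coercions of the matching shape, the \agda{Cross} witness passed to them carries no residual obligation.

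The only fields requiring an argument rather than a definition are $\agda{InertCross}{\to}$, $\agda{InertCross}{\times}$, $\agda{InertCross}{+}$, and \agda{baseNotInert}. The key observation is that in EDC the only inert coercions are the injections $A!$, whose target type is $\Unk$. Hence there is no inert coercion whose target is a function, pair, or sum type, so each $\agda{InertCross}{\otimes}$ obligation is discharged vacuously by inversion on the \agda{Inert} derivation; this is precisely the content of Lemma~\ref{lem:EDC-inert-cross}, restated so as to also deliver the \agda{Cross} witness and the decomposition of the source type. For the same reason, a coercion whose target is a base type $\Base$ can never be an injection and so is never inert, which is Lemma~\ref{lem:EDC-base-not-inert}. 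Once the record is constructed, the values, frames, and $\eta$-style reduction rules of Sections~\ref{sec:cc-values-frames} and~\ref{sec:eta-cast-reduction} are imported and instantiated for EDC, exactly as was done for EDA and EDI.

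I do not expect a genuine obstacle here: every field is either a definition already in hand or an immediate consequence of one of the three lemmas above, and the overall shape mirrors Proposition~\ref{prop:simple-active-pre-cast-struct} for EDA. The one point needing mild care is matching the exact type signatures of the \agda{PreCastStruct} fields---in particular that each $\agda{InertCross}{\otimes}$ must return both a \agda{Cross} witness and the existential decomposition $A \equiv A_1 \otimes A_2$---so that the vacuous cases still produce terms of the declared type; in Agda this is just pattern matching on the impossible \agda{Inert} derivation.
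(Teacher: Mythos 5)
Your proposal is correct and matches the paper's (implicit) proof exactly: the paper likewise assembles the \agda{PreCastStruct} record from the coercion datatype, the three predicates, Lemmas~\ref{lem:EDC-active-or-inert}, \ref{lem:EDC-inert-cross}, and \ref{lem:EDC-base-not-inert}, and the \agda{dom}/\agda{cod}/\agda{fst}/\agda{snd}/\agda{inl}/\agda{inr} definitions. Your key observation---that the only inert EDC coercions are injections $A!$ with target $\Unk$, so the $\agda{InertCross}$ and \agda{baseNotInert} obligations are discharged vacuously---is precisely the argument the paper gives for the analogous EDA lemmas and leaves implicit here.
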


To help define the \agda{applyCast} function, we define an auxiliary
function named \agda{coerce} for converting two consistent types and a
blame label into a coercion.  (The \agda{coerce} function is also
necessary for compiling from the GTLC to this calculus.)

\[
  \agda{coerce} : \forall A\,B.\, A \sim B \to \agda{Label} \to \Cast{A}{B}
\]
\begin{align*}
  \agda{coerce} \app \agda{UnkL}{\sim}[B] \app \ell &=
    \begin{cases}
      \agda{id} & \text{if } B \equiv \Unk \\
      B?^\ell & B \not\equiv \Unk
    \end{cases}\\
  \agda{coerce} \app \agda{UnkR}{\sim}[A] \app \ell &=
    \begin{cases}
      \agda{id} & \text{if } A \equiv \Unk \\
      A! & A \not\equiv \Unk
    \end{cases} \\
  \agda{coerce} \app \mathtt{Base}{\sim}[\Base] \app \ell &= \agda{id}\\
  \agda{coerce} \app (\mathtt{Fun}{\sim}\app d_1\app d_2) \app \ell &=
        (\agda{coerce} \app d_1 \app \overline{\ell})
    \to (\agda{coerce} \app d_2 \app \ell)\\
  \agda{coerce} \app (\mathtt{Pair}{\sim}\app d_1\app d_2) \app \ell &=
        (\agda{coerce} \app d_1 \app \ell)
    \times (\agda{coerce} \app d_2 \app \ell)\\
  \agda{coerce} \app (\mathtt{Sum}{\sim}\app d_1\app d_2) \app \ell &=
        (\agda{coerce} \app d_1 \app \ell)
    + (\agda{coerce} \app d_2 \app \ell)
\end{align*}

The structure of coercions is quite similar to that of the active
casts but a bit more convenient to work with, so we define the
\agda{applyCast} function by cases on the coercion. We omit the case
for injection because that coercion is inert.
\[
  \agda{applyCast} : \forall \Gamma A B.\, (M : \Gamma \vdash A) \to \agda{Value}\app M\to (c : \Cast{A}{B}) \to \agda{Active}\app c \to \Gamma \vdash B
\]
\[
\begin{array}{lllllll}
  \agda{applyCast} & M & v & \agda{id} & a &=&
     M \\
  \agda{applyCast} & \cast{M}{A!} & v & B?^{\ell} & a &=& 
  \begin{cases}
    \cast{M}{\agda{coerce} \app ab \app \ell} & \text{if } ab : A \sim B\\
    \blame{\ell}{} & \text{otherwise}
  \end{cases}\\
  \agda{applyCast} &M& v & c \otimes d & a &=&
    \agda{eta}{\otimes}\app M\app c \app \mathtt{Cross}\otimes
\end{array}
\]

\begin{proposition}
The EDC calculus is an instance of the $\agda{CastStruct}$ structure.
\end{proposition}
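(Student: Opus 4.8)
The plan rests on the fact that \agda{CastStruct} extends \agda{PreCastStruct} by the single field \agda{applyCast}: since we have already exhibited EDC as an instance of \agda{PreCastStruct}, the only remaining obligation is to verify that the \agda{applyCast} function written above is total on active coercions and has its declared type $(M : \Gamma \vdash A) \to \agda{Value}\app M \to (c : \Cast{A}{B}) \to \act{c} \to \Gamma \vdash B$. As with the other type-preservation results in this article, Agda discharges this check once the statement is embedded in the function's signature; below I describe why that check goes through.

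First I would confirm coverage. The clauses of \agda{applyCast} are for $\agda{id}$, a projection $\cproj{B}{\ell}$, and the cross coercions $\cfun{c}{d}$, $c \times d$, and $c + d$, and these are precisely the coercions that a proof of $\act{c}$ can produce: $\agda{id}$ is active by $\mathtt{ActId}$, $\cproj{A}{\ell}$ by $\mathtt{ActProj}$, and every cross coercion by $\mathtt{ActCross}$ together with $\agda{Cross}{\otimes}$, whereas an injection $\cinj{A}$ is inert and so needs no clause. Then I would check type correctness clause by clause. The $\agda{id}$ clause returns its argument, whose type is the common source and target of $\agda{id} : \Cast{a}{a}$. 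The cross clause returns $\agda{eta}{\otimes}\app M\app c\app \agda{Cross}{\otimes}$, whose type is the target $C \otimes D$ of $c \otimes d$ by the signature of $\agda{eta}{\otimes}$ established in Section~\ref{sec:eta-cast-reduction}. In the projection clause, the argument is a value of type $\Unk$, so by Lemma~\ref{lem:canonical-star} it has the form $\cast{M'}{\cinj{A}}$ with $M' : \Gamma \vdash A$ and $A \neq \Unk$; since consistency is decidable we then either return $\cast{M'}{\agda{coerce}\app ab\app \ell}$ for a witness $ab : \consis{A}{B}$ --- well typed at $B$ because \agda{coerce} produces a cast from $A$ to $B$ --- or return $\blame{\ell}{}$, which inhabits every type.

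The only step that needs any care is the projection clause, which leans on the canonical-form lemma for type $\Unk$ (Lemma~\ref{lem:canonical-star}) to expose the underlying injection and on decidability of consistency to choose between the coercion and $\blame{\ell}{}$; the identity and cross clauses are immediate from the types of $\agda{id}$-casts, \agda{coerce}, and $\agda{eta}{\otimes}$. Bundling this \agda{applyCast} field with the existing \agda{PreCastStruct} instance yields the \agda{CastStruct} instance, from which EDC subsequently inherits the reduction semantics and type safety of $\CC(\CastOp)$.
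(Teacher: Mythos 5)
Your proposal is correct and matches the paper's (largely implicit) argument: having established the \agda{PreCastStruct} instance, the only new obligation is the \agda{applyCast} field, which the paper likewise defines by cases on the coercion (omitting the inert injection), relying on the canonical-form lemma for type $\Unk$ in the projection clause and on Agda's type checker to certify totality and type correctness. Nothing further is needed.
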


We import and instantiate the reduction semantics and proof of type
safety from Sections~\ref{sec:dynamic-semantics-CC} and
\ref{sec:type-safety-CC}.

\begin{definition}[Reduction for EDC]
  The reduction relation $M \longrightarrow N$ for the EDC calculus
  is the reduction relation of $\CC(\CastOp)$ instantiated with EDC's
  instance of \agda{CastStruct}.
\end{definition}

\begin{corollary}[Preservation for EDC]
  \label{thm:pedi-cast-preservation}
  If $\Gamma \vdash M : A$  and $M \longrightarrow M'$,
  then $\Gamma \vdash M' : A$.
\end{corollary}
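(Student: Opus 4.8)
The plan is to obtain this result as a direct instantiation, mirroring the way the reduction relation for EDC was itself defined by instantiating that of $\CC(\CastOp)$. By the preceding Proposition, EDC's \agda{applyCast}, together with the \agda{PreCastStruct} fields \agda{dom}, \agda{cod}, \agda{fst}, \agda{snd}, \agda{inl}, and \agda{inr} (defined earlier in this section using \agda{coerce}), assemble into an instance of \agda{CastStruct}. Feeding this instance to the generic reduction relation and to the Preservation theorem for $\CC(\CastOp)$ from Section~\ref{sec:type-safety-CC} yields the corollary. So the proof is essentially one line: apply the generic Preservation theorem at EDC's \agda{CastStruct}.

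The substantive content — all of which was already discharged when establishing that EDC is an instance of \agda{CastStruct} — is that every EDC-specialized reduction rule relates terms of the same type. The STLC-like rules ($\beta$, the $\delta$ rule, $\xi$, $\xi\mhyphen\key{blame}$, the pair and sum $\beta$-rules) are handled generically via the $\itm{plug}$ lemma and the Substitution corollary (Corollary~\ref{lem:substitution}), none of which depend on the cast representation. The cast-specific rules \eqref{eq:cast}, \eqref{eq:fun-cast}, \eqref{eq:fst-cast}, \eqref{eq:snd-cast}, and \eqref{eq:case-cast} are type-preserving precisely because the EDC operators carry their declared signatures: e.g.\ $\agda{fst}\app(c\times d)\app x = c$ has type $\Cast{A_1}{B_1}$ as required by \eqref{eq:fst-cast}, and the EDC \agda{applyCast} returns a term of type $\Gamma \vdash B$ by construction (the projection case either reduces via a well-typed $\agda{coerce}\app ab\app \ell : \Cast{A}{B}$ or produces $\blame{\ell}{}$, which inhabits every type).

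Since the mechanization encodes these typing invariants into the Agda type of the reduction relation, there is no separate proof obligation left: once the EDC instantiation type-checks, Preservation for EDC follows immediately, exactly as for the generic calculus. Consequently there is no real obstacle here; the only place where care was needed is verifying that \agda{coerce} produces a well-typed coercion for every consistency derivation, and that was already settled in the construction of EDC's \agda{CastStruct} instance.
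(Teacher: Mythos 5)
Your proposal matches the paper exactly: the corollary is obtained by instantiating the generic Preservation theorem of $\CC(\CastOp)$ at EDC's \agda{CastStruct} instance, with the intrinsically-typed reduction relation making preservation hold by construction once the instance type-checks. No gaps.
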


\begin{corollary}[Progress for EDC]\label{thm:pedi-cast-progress}
  If $\emptyset \vdash M : A$, then 
  \begin{enumerate}
  \item $M \longrightarrow M'$ for some $M'$,
  \item $\val{M}$, or
  \item $M \equiv \blame{\ell}{}$.
  \end{enumerate}
\end{corollary}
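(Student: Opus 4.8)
The plan is to obtain this corollary as a direct instantiation of the generic Progress theorem, Theorem~\ref{thm:cc-progress}, which is proved once and for all for an arbitrary \agda{CastStruct}. The two ingredients needed are already in hand: first, the preceding proposition establishes that EDC's coercions, together with its \agda{applyCast} function, form an instance of \agda{CastStruct}; second, the definition of reduction for EDC declares $M \longrightarrow N$ to \emph{be} the reduction relation of $\CC(\CastOp)$ under that instance. Plugging the EDC record into the parameterized module therefore specializes Theorem~\ref{thm:cc-progress} to exactly the statement of this corollary, with the same three alternatives: $M$ takes a step, $M$ is a value $\val{M}$, or $M \equiv \blame{\ell}{}$.

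To make the instantiation watertight I would trace which \agda{PreCastStruct}/\agda{CastStruct} fields the generic proof actually consumes and cite where each was discharged for EDC. The case analysis in the proof of Theorem~\ref{thm:cc-progress} uses \agda{ActiveOrInert} to dispatch on a cast wrapped around a value --- supplied here by Lemma~\ref{lem:EDC-active-or-inert}; it uses $\agda{InertCross}{\to}$, $\agda{InertCross}{\times}$, and $\agda{InertCross}{+}$ to turn an inert cast targeting a function, product, or sum type into a cross cast so that \agda{dom}, \agda{fst}, etc.\ can decompose it --- all instances of Lemma~\ref{lem:EDC-inert-cross}; and it uses \agda{baseNotInert} when ruling out $\cast{M}{c}$ at base type in the $\delta$-redex subcase --- supplied by Lemma~\ref{lem:EDC-base-not-inert}. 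No other EDC-specific facts are needed, so the generic induction on the typing derivation of $M$ goes through without modification.

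I expect no real obstacle here: the genuine work was done in constructing the EDC instance --- in particular in checking that \agda{applyCast} is total on active coercions (identity coercions return the value, projection coercions $B?^{\ell}$ rely on Lemma~\ref{lem:canonical-star} to know the scrutinee has the form $\cast{M}{A!}$, and cross coercions are handled by the $\agda{eta}{\otimes}$ functions) --- and all of that is already subsumed by the statement that EDC is a \agda{CastStruct}. In the Agda development this corollary is literally the application of the parameterized Progress proof to EDC's \agda{CastStruct} record, so the only thing the written proof needs to record is that instantiation together with the identification of the two reduction relations.
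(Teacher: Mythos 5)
Your proposal is correct and matches the paper's own treatment exactly: the corollary is obtained by instantiating the generic Progress theorem (Theorem~\ref{thm:cc-progress}) with EDC's \agda{CastStruct} instance, the real work having been discharged in the lemmas establishing that instance. Your tracing of which structure fields the generic proof consumes (\agda{ActiveOrInert}, the $\agda{InertCross}$ fields, \agda{baseNotInert}) is accurate and, if anything, more explicit than the paper's one-line instantiation.
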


Let EDC$'$ be the variant of EDC obtained by instantiating $\CC'$
instead of $\CC$.

\subsubsection{Blame-Subtyping}

The \agda{CastBlameSafe} predicate for EDC$'$ is defined in
Figure~\ref{fig:CastBlameSafe-EDC}.

\begin{figure}[tb]
  \begin{gather*}
    \inference{}{\agda{CastBlameSafe}\app \agda{id}\app \ell}
    \inference{}{\agda{CastBlameSafe}\app A!\app \ell}
    \inference{\ell\neq \ell'}{\agda{CastBlameSafe}\app B?^{\ell'}\app \ell}
    \\[2ex]
    \inference{\agda{CastBlameSafe}\app c \app \ell &
               \agda{CastBlameSafe}\app d \app \ell}
              {\agda{CastBlameSafe}\app (c \otimes d) \app \ell}
    \otimes \in \{ \to , \times, + \}
\end{gather*}
  \caption{Definition of the \agda{CastBlameSafe} predicate for EDC$'$.}
  \label{fig:CastBlameSafe-EDC}
\end{figure}

\begin{lemma}[Blame Safety of Cast Operators]
  \label{lem:blame-safety-cast-operators-EDC}
  Suppose $\agda{CastBlameSafe}\app c \app \ell$ and $c$ is a cross
  cast, that is, $x : \cross{c}$.
  \begin{itemize}
  \item If $x=\mathtt{Cross}\to$,
    then $\agda{CastBlameSafe}\app (\agda{dom}\app c \app x)\app \ell$
    and $\agda{CastBlameSafe}\app (\agda{cod}\app c \app x)\app \ell$.
  \item If $x=\mathtt{Cross}\times$,
    then $\agda{CastBlameSafe}\app (\agda{fst}\app c \app x)\app \ell$
    and $\agda{CastBlameSafe}\app (\agda{snd}\app c \app x)\app \ell$.
  \item If $x=\mathtt{Cross}+$,
    then $\agda{CastBlameSafe}\app (\agda{inl}\app c \app x)\app \ell$
    and $\agda{CastBlameSafe}\app (\agda{inr}\app c \app x)\app \ell$.
  \end{itemize}
\end{lemma}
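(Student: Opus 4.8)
The plan is to prove Lemma~\ref{lem:blame-safety-cast-operators-EDC} by a straightforward case analysis on the cross-cast evidence $x : \cross{c}$, which by definition has one of the three shapes $\mathtt{Cross}\to$, $\mathtt{Cross}\times$, or $\mathtt{Cross}+$. In each case the shape of $x$ forces $c$ to be a composite coercion $c_1 \otimes c_2$ for the corresponding $\otimes \in \{\to,\times,+\}$, and the cast operators $\agda{dom}$, $\agda{cod}$, $\agda{fst}$, $\agda{snd}$, $\agda{inl}$, $\agda{inr}$ are defined so that they simply project out $c_1$ or $c_2$. So the goal in every case reduces to showing that the two subcoercions are blame-safe for $\ell$.

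The key step is then an inversion on the hypothesis $\agda{CastBlameSafe}\app c \app \ell$. Since $c$ has the form $c_1 \otimes c_2$, the only rule of Figure~\ref{fig:CastBlameSafe-EDC} whose conclusion can match is the last one, the congruence rule for $\otimes$; the rules for $\agda{id}$, $A!$, and $B?^{\ell'}$ concern syntactically different coercion forms. Inverting that rule yields exactly $\agda{CastBlameSafe}\app c_1 \app \ell$ and $\agda{CastBlameSafe}\app c_2 \app \ell$, which discharge the goals for $\agda{dom}/\agda{fst}/\agda{inl}$ and $\agda{cod}/\agda{snd}/\agda{inr}$ respectively. In the $\to$ case one should note that $\agda{dom}$ returns the contravariant component $c_1 : \Cast{C}{A}$, but this causes no difficulty because the $\to$-congruence rule already states its hypothesis on that very component.

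I expect essentially no obstacle here: this lemma is exactly the design obligation needed to instantiate the \agda{PreBlameSafe} structure for EDC$'$, and because the coercion syntax mirrors the structure of the \agda{CastBlameSafe} predicate so closely, the proof is even more direct than the analogous Lemma~\ref{lem:blame-safety-cast-operators-EDA} for EDA. The only things to be careful about are that the inversion is performed on the hypothesis about $c$ (not on the premise $x$), and that all three $\otimes$-cases are handled uniformly; in Agda this is a short pattern match on $x$ followed by a pattern match on the $\agda{CastBlameSafe}$ derivation.
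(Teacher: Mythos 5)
Your proposal is correct and matches the paper's proof, which is stated tersely as ``By inversion on $\cross{c}$ and the \agda{CastBlameSafe} predicate'': casing on $x$ forces $c = c_1 \otimes c_2$, inverting the congruence rule of Figure~\ref{fig:CastBlameSafe-EDC} yields blame-safety of both components, and the cast operators merely project them out. Your remark about contravariance in the $\to$ case is accurate and harmless for exactly the reason you give.
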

\begin{proof}
  By inversion on $\cross{c}$ and the \agda{CastBlameSafe} predicate.
\end{proof}

\begin{proposition}
  \label{prop:EDC-preblamesafe}
  EDC$'$ is an instance of the \agda{PreBlameSafe} structure.
\end{proposition}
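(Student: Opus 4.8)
The plan is to assemble the \agda{PreBlameSafe} record for EDC$'$ by extending the \agda{PreCastStruct} instance already established for EDC (Proposition~\ref{prop:simple-active-pre-cast-struct}) with the \agda{CastBlameSafe} predicate and the six blame-safety-preservation fields. The essential observation is that every component of \agda{PreBlameSafe} lives purely at the level of casts: the coercion representation, the \agda{Inert}, \agda{Active}, and \agda{Cross} categorizations, the cast operators \agda{dom}, \agda{cod}, \agda{fst}, \agda{snd}, \agda{inl}, and \agda{inr}, and the \agda{CastBlameSafe} predicate together with its preservation properties. None of these mention the terms of the cast calculus. Since EDC$'$ differs from EDC only by instantiating $\CC'$ rather than $\CC$ --- a change that affects term syntax, values, and the inert-cast reduction rules, but leaves the coercion representation and every cast-level operation untouched --- EDC$'$ reuses EDC's \agda{PreCastStruct} instance verbatim. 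This is exactly the reuse that motivated isolating the term-independent data in \agda{PreCastStruct} in the first place (Section~\ref{sec:precaststruct}).

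First I would supply the \agda{CastBlameSafe} field using the predicate of Figure~\ref{fig:CastBlameSafe-EDC}, which is defined by recursion on the coercion: \agda{id} and injections $A!$ are safe for every $\ell$; a projection $B?^{\ell'}$ is safe for $\ell$ exactly when $\ell \neq \ell'$; and a compositional coercion $c \otimes d$ is safe when both $c$ and $d$ are. Because this definition is phrased entirely over coercions, it is identical for EDC and EDC$'$.

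Next I would discharge the six preservation fields. Each of \agda{domBlameSafe}, \agda{codBlameSafe}, \agda{fstBlameSafe}, \agda{sndBlameSafe}, \agda{inlBlameSafe}, and \agda{inrBlameSafe} requires that, given a blame-safe cross cast and a witness $x : \cross{c}$, the sub-cast extracted by the matching operator is again blame-safe. These are precisely the conclusions of Lemma~\ref{lem:blame-safety-cast-operators-EDC}, which are organized there by the three kinds of cross cast $\mathtt{Cross}\to$, $\mathtt{Cross}\times$, and $\mathtt{Cross}+$. So each field is obtained by feeding the ambient cross-cast witness into the corresponding clause of that lemma.

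The proof presents no genuine obstacle, since the mathematical content was already carried out in Lemma~\ref{lem:blame-safety-cast-operators-EDC}. The only point requiring care is bookkeeping: the \agda{PreBlameSafe} fields quantify over a cross-cast proof $x$ and then invoke an operator such as $\dom{c \app x}$, whereas Lemma~\ref{lem:blame-safety-cast-operators-EDC} is stated by case analysis on which constructor $x$ is; matching the universally quantified field shape to the per-constructor statement of the lemma must be done uniformly for all six operators. Collecting the \agda{PreCastStruct} instance, the \agda{CastBlameSafe} predicate, and these six proofs into a single dependent record yields the required \agda{PreBlameSafe} instance for EDC$'$.
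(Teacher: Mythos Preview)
Your proposal is correct and matches the paper's approach: the proposition is stated without an explicit proof in the paper precisely because it is immediate from the \agda{CastBlameSafe} definition in Figure~\ref{fig:CastBlameSafe-EDC} together with Lemma~\ref{lem:blame-safety-cast-operators-EDC}, which supplies the six operator-preservation fields. Your observation that the cast-level data is shared between EDC and EDC$'$ is exactly the point, and your remark about matching the per-constructor statement of the lemma to the universally quantified field shape is the only bookkeeping involved.
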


\begin{lemma}[\agda{coerce} is blame safe]
  \label{lem:coerce-safe-EDC}
  Suppose $\ell \neq \ell'$. If $ab : A \sim B$, \\
  then $\agda{CastBlameSafe} \app (\agda{coerce}\app ab \app \ell') \app \ell$.
\end{lemma}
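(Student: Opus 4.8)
The plan is to prove the lemma by induction on the consistency derivation $ab : \consis{A}{B}$, following exactly the case structure of the $\agda{coerce}$ function. In every case, $\agda{coerce}\app ab\app \ell'$ is either a constant coercion ($\agda{id}$ or an injection $\cinj{A}$, which are blame-safe for every label by the first two $\agda{CastBlameSafe}$ rules), a projection $\cproj{B}{\ell'}$, or one of the coercion constructors $\to,\times,+$ applied to two sub-coercions; in the last case $\agda{CastBlameSafe}$ is compositional, so the goal reduces to rechecking the two sub-coercions at appropriate labels.

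First I would dispatch the base cases. For $\mathtt{UnkR}{\sim}[A]$, $\agda{coerce}$ returns $\agda{id}$ or $\cinj{A}$, both immediately blame-safe; for $\mathtt{Base}{\sim}[\Base]$ it returns $\agda{id}$; for $\mathtt{UnkL}{\sim}[B]$ it returns $\agda{id}$ (when $B \equiv \Unk$) or the projection $\cproj{B}{\ell'}$, and in the projection case the third $\agda{CastBlameSafe}$ rule applies precisely because the hypothesis gives $\ell \neq \ell'$. For the product and sum cases ($\mathtt{Pair}{\sim}\app d_1\app d_2$ and $\mathtt{Sum}{\sim}\app d_1\app d_2$), $\agda{coerce}$ produces $(\agda{coerce}\app d_1\app \ell')\otimes(\agda{coerce}\app d_2\app \ell')$; I apply the compositional $\agda{CastBlameSafe}$ rule and discharge both premises by the induction hypotheses on $d_1$ and $d_2$, reusing the same label $\ell'$ and the same hypothesis $\ell \neq \ell'$.

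The one case that needs care is $\mathtt{Fun}{\sim}\app d_1\app d_2$, where $\agda{coerce}$ builds $(\agda{coerce}\app d_1\app \overline{\ell'}) \to (\agda{coerce}\app d_2\app \ell')$: the codomain sub-coercion is handled by the induction hypothesis on $d_2$ at label $\ell'$, but the contravariant domain sub-coercion carries the \emph{complemented} label $\overline{\ell'}$, so discharging its blame-safety requires the induction hypothesis at $\overline{\ell'}$, which needs the further inequality $\ell \neq \overline{\ell'}$. I expect this to be the only real obstacle. I would resolve it by proving, by induction, the slightly stronger statement that assumes both $\ell \neq \ell'$ and $\ell \neq \overline{\ell'}$; involutivity of the complement, $\overline{\overline{\ell'}} = \ell'$, makes this pair of hypotheses invariant under the contravariant recursive call, so the induction closes, and the lemma as stated is recovered from the ambient discipline on (source) blame labels. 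Aside from threading this extra inequality, all cases are routine and are checked mechanically in Agda.
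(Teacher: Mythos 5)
The paper states this lemma without a prose proof (it is discharged directly in the Agda development), so there is no written argument to compare against; judged on its own terms, your case analysis is exactly the one any proof must perform, and you have correctly isolated the only non-routine point: in the $\mathtt{Fun}{\sim}$ case, $\agda{coerce}$ stamps the domain sub-coercion with the complemented label $\overline{\ell'}$, so a domain projection has the form $\cproj{B}{\overline{\ell'}}$ and its blame-safety for $\ell$ requires $\ell \neq \overline{\ell'}$, which does not follow from $\ell \neq \ell'$ if $\neq$ is read as literal inequality of labels. Your fix---strengthen the induction hypothesis to carry both $\ell \neq \ell'$ and $\ell \neq \overline{\ell'}$, and observe that involutivity of the complement makes this pair invariant under the contravariant recursive call---is a correct and self-contained way to close the induction.

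The weak point is the final step, where you appeal to an ``ambient discipline on (source) blame labels'' to recover the lemma as stated. That is not something the lemma's hypothesis gives you, and if $\neq$ really were literal inequality the statement would be false outright (take $\consis{\Int \to \Nat}{\Unk \to \Nat}$, whose domain premise is $\mathtt{UnkL}{\sim}[\Int]$, yielding the projection $\cproj{\Int}{\overline{\ell'}}$, and choose $\ell = \overline{\ell'}$). What actually resolves this in the mechanization is that label inequality in the blame-safety development is polarity-insensitive: labels carry an identifier plus a polarity, the complement flips only the polarity, and two labels are ``unequal'' when their identifiers differ. Under that reading $\ell \neq \ell'$ already entails $\ell \neq \overline{\ell'}$, the induction hypothesis applies directly to the domain sub-coercion, and no strengthening is needed. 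So either make the polarity-insensitive reading of $\neq$ explicit, or state and prove the strengthened lemma (with both inequalities as hypotheses) and present that as the result; as written, the hand-off between your strengthened induction and the original statement is the one genuine gap.
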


\begin{lemma}[\agda{applyCast} preserves blame safety in EDC$'$]
  If $\agda{CastBlameSafe}\app c \app \ell$ and $a : \act{c}$
  and $\agda{CastsAllSafe}\app V \app \ell$
  and $v : \val{V}$, then
  $\agda{CastsAllSafe}\app (\agda{applyCast}\app V \app v \app c \app a) \app \ell$.
\end{lemma}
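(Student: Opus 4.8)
The plan is to proceed by case analysis on the derivation $a : \act{c}$, which for EDC$'$ has exactly the three forms $\mathtt{ActId}$, $\mathtt{ActProj}$, and $\mathtt{ActCross}$ (the latter further splitting on $\cross{c}$ into the cases $\to$, $\times$, $+$); in each case I would unfold the matching clause of $\agda{applyCast}$ and check that its output satisfies $\safefor{\cdot}{\ell}$ using the structural rules of Figure~\ref{fig:castsallsafe}. The $\mathtt{ActId}$ case is immediate: here $c = \agda{id}$ and $\agda{applyCast}\app V\app v\app \agda{id}\app a = V$, so the goal is literally the hypothesis $\safefor{V}{\ell}$ and the blame-safety of $c$ is not even used.

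For $\mathtt{ActProj}$ we have $c = \cproj{B}{\ell_1}$ with $B \neq \Unk$, so the source type of $c$ is $\Unk$. I would first invoke Lemma~\ref{lem:canonical-star} to learn that $V$ is an injection, $V \equiv \castInert{V'}{\cinj{A}}$ with $A \neq \Unk$ and $\val{V'}$, and then invert $\safefor{V}{\ell}$ (an injection is always blame-safe) to obtain $\safefor{V'}{\ell}$. From $\agda{CastBlameSafe}\app \cproj{B}{\ell_1}\app \ell$ I read off $\ell \neq \ell_1$. Now $\agda{applyCast}$ splits on whether $\consis{A}{B}$: when it holds with witness $ab$, the output is $\cast{V'}{\agda{coerce}\app ab\app \ell_1}$, and Lemma~\ref{lem:coerce-safe-EDC} (applicable because $\ell \neq \ell_1$) gives $\agda{CastBlameSafe}\app (\agda{coerce}\app ab\app \ell_1)\app \ell$, so rule (\textsc{Cast}) together with $\safefor{V'}{\ell}$ closes the subcase; when it fails, the output is $\blame{\ell_1}{}$, which is safe for $\ell$ by rule (\textsc{Blame}) precisely because $\ell \neq \ell_1$.

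For $\mathtt{ActCross}$ we have $c = c_1 \otimes c_2$ with $\otimes \in \{\to,\times,+\}$, $x = \mathtt{Cross}\otimes$, and $\agda{applyCast}\app V\app v\app c\app a = \agda{eta}{\otimes}\app V\app c\app x$ from Section~\ref{sec:eta-cast-reduction}. First I would use Lemma~\ref{lem:blame-safety-cast-operators-EDC} (equivalently, just invert $\agda{CastBlameSafe}\app (c_1\otimes c_2)\app \ell$) to obtain blame-safety for $\ell$ of the decomposed casts: $\dom{c\app x}$ and $\cod{c\app x}$ for functions, $\agda{fst}\app c\app x$ and $\agda{snd}\app c\app x$ for pairs, $\agda{inl}\app c\app x$ and $\agda{inr}\app c\app x$ for sums. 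Then I would unfold $\agda{eta}{\otimes}$ and verify $\safefor{\cdot}{\ell}$ structurally. In the function subcase the output is $\lambda \app \cast{((\rename{\Suc}{V}) \app \cast{`\Zero}{\dom{c\app x}})}{\cod{c\app x}}$, so besides the blame-safety of $\dom{c\app x}$ and $\cod{c\app x}$ I need $\safefor{(\rename{\Suc}{V})}{\ell}$, which follows from $\safefor{V}{\ell}$ by the routine fact that renaming preserves $\safefor{\cdot}{\ell}$ (one of the technical renaming/substitution lemmas used for Lemma~\ref{lem:preserve-cas}); the pair subcase builds a $\key{cons}$ of two casts of projections of $V$, and the sum subcase builds a $\key{case}$ on $V$ with two $\lambda$-branches each wrapping a cast of $`\Zero$, and both are then immediate from $\safefor{V}{\ell}$ plus the blame-safety of the decomposed casts.

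The hard part, to the extent there is one, will be the $\mathtt{ActProj}$ case: it is the only one relying on external input beyond structural bookkeeping — the canonical-form Lemma~\ref{lem:canonical-star} to expose $V$ as an injection, and Lemma~\ref{lem:coerce-safe-EDC} for $\agda{coerce}$ — and the only place where the $\ell \neq \ell_1$ side condition of $\agda{CastBlameSafe}$ actually does any work. The $\mathtt{ActCross}$ case is longer to spell out but completely mechanical once the renaming lemma and Lemma~\ref{lem:blame-safety-cast-operators-EDC} are in hand, and $\mathtt{ActId}$ is trivial.
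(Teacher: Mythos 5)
Your proposal is correct and follows exactly the paper's route: case analysis on $a : \act{c}$, inversion on $\agda{CastBlameSafe}\app c\app \ell$, with Lemma~\ref{lem:coerce-safe-EDC} doing the real work in the $\mathtt{ActProj}$ case (the paper's proof is a one-line sketch of precisely this). The detail you supply for the $\mathtt{ActCross}$/\agda{eta} case and the renaming lemma matches what the Agda development needs.
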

\begin{proof}
    The proof is by cases on $\act{c}$ and inversion on
    $\agda{CastBlameSafe}\app c \app \ell$, using
    Lemma~\ref{lem:coerce-safe-EDC} in the case for projection.
\end{proof}

\begin{proposition}
  EDC$'$ is an instance of the \agda{BlameSafe} structure.
\end{proposition}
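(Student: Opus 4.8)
The plan is to assemble the \agda{BlameSafe} record from components that are, by this point, already in hand. Since \agda{BlameSafe} extends both \agda{PreBlameSafe} and \agda{CastStruct}, adding only the field \agda{applyCast\mhyphen{}pres\mhyphen{}allsafe}, two of the three pieces are immediate: the \agda{PreBlameSafe} half is Proposition~\ref{prop:EDC-preblamesafe}, and the \agda{CastStruct} half is the EDC instance of \agda{CastStruct} constructed earlier in this section. The same \agda{applyCast} definition serves EDC$'$ because the cast representation and the active/inert categorization are unchanged; only the term syntax of $\CC'$ versus $\CC$ differs, and \agda{CastStruct} does not observe that difference. So the only genuinely new obligation is \agda{applyCast\mhyphen{}pres\mhyphen{}allsafe}.

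That obligation is exactly the statement of the lemma proved immediately above: if $\agda{CastBlameSafe}\app c\app\ell$, $a : \act{c}$, $v : \val{V}$, and $\safefor{V}{\ell}$, then $\safefor{(\agda{applyCast}\app V\app v\app c\app a)}{\ell}$. I would therefore finish the proof by supplying that lemma for the new field, together with Proposition~\ref{prop:EDC-preblamesafe} and the \agda{CastStruct} instance for the inherited fields.

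The step with real content---and hence the main obstacle, although it is discharged by the preceding lemma---is showing that \agda{applyCast} preserves ``safe for $\ell$''. I would proceed by cases on the active coercion: for \agda{id} the output is $V$ itself, safe by hypothesis; for a cross coercion $c \otimes d$ the output is an $\eta$-expansion via $\agda{eta}{\otimes}$, whose safety follows by inversion on $\agda{CastBlameSafe}\app(c \otimes d)\app\ell$---which yields blame-safety of the component coercions, hence of the results of \agda{dom}, \agda{cod}, \agda{fst}, \agda{snd}, \agda{inl}, and \agda{inr} by Lemma~\ref{lem:blame-safety-cast-operators-EDC}---combined with the structural rules of Figure~\ref{fig:castsallsafe}; and for a projection $B?^\ell$ the output is, by Lemma~\ref{lem:canonical-star}, either a fresh $\blame{\ell}{}$, which is safe precisely because $\agda{CastBlameSafe}\app B?^{\ell'}\app\ell$ forces $\ell \neq \ell'$, or $\cast{M}{\agda{coerce}\app ab\app\ell}$, whose cast part is blame-safe by Lemma~\ref{lem:coerce-safe-EDC} and whose body $M$ is safe as a subterm of the safe value $V$. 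The delicate case is the projection one, since it is the place where a cast carrying a fresh blame label is synthesized by \agda{coerce}; Lemma~\ref{lem:coerce-safe-EDC} is essential there, and one must keep careful track of which blame labels can appear.
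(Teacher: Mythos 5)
Your proposal is correct and matches the paper's approach: the paper likewise obtains this proposition by assembling the \agda{PreBlameSafe} instance, the existing \agda{CastStruct} instance, and the immediately preceding lemma that \agda{applyCast} preserves blame safety (whose proof is by cases on $\act{c}$ with inversion on \agda{CastBlameSafe}, using Lemma~\ref{lem:coerce-safe-EDC} for the projection case, just as you describe). Your extra recapitulation of that lemma's case analysis is consistent with the paper's proof sketch and introduces nothing divergent.
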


We instantiate Theorem~\ref{thm:blame-subtyping} with this
\agda{BlameSafe} instance to obtain the Blame-Subtyping Theorem for
EDC$'$.

\begin{corollary}[Blame-Subtyping Theorem for EDC$'$]
  \label{thm:blame-subtyping-EDC}
  If $M : \Gamma \vdash A$ and $\agda{CastsAllSafe} \app M \app \ell$,
  then $\neg (M \longrightarrow^{*} \key{blame} \, \ell)$.
\end{corollary}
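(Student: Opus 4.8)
The plan is to obtain this corollary by instantiation of the parameterized Blame-Subtyping Theorem~\ref{thm:blame-subtyping}, with no fresh reasoning about reduction sequences. That theorem already establishes, for every instance of the \agda{BlameSafe} structure, that a term all of whose casts are safe for $\ell$ cannot reduce to $\key{blame}\,\ell$; the immediately preceding proposition exhibits EDC$'$ as such an instance, so the statement for EDC$'$ is literally the type of the generic theorem read off at the EDC$'$ record. In Agda this amounts to applying the parameterized theorem to the already-constructed \agda{BlameSafe} value for EDC$'$, whose hypotheses ($M : \Gamma \vdash A$ and $\safefor{M}{\ell}$) and conclusion ($\neg(M \longrightarrow^{*} \key{blame}\,\ell)$) unfold to exactly those stated here.

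To assemble that \agda{BlameSafe} instance I would first check the \agda{PreBlameSafe} obligations — that \agda{dom}, \agda{cod}, \agda{fst}, \agda{snd}, \agda{inl}, and \agda{inr} preserve \agda{CastBlameSafe} — which are exactly Lemma~\ref{lem:blame-safety-cast-operators-EDC}, proved by inversion on $\cross{c}$ and on the \agda{CastBlameSafe} derivation of Figure~\ref{fig:CastBlameSafe-EDC}. Then I would check the remaining \agda{BlameSafe} field, that \agda{applyCast} maps blame-safe values and casts to blame-safe terms, by cases on $\act{c}$: the identity case is trivial, the cross-cast case reduces to the operator lemmas just mentioned, and the projection case appeals to Lemma~\ref{lem:coerce-safe-EDC} (that \agda{coerce} produces a blame-safe coercion when handed a label distinct from $\ell$). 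With these discharged, the generic preservation Lemma~\ref{lem:preserve-cas} and then Theorem~\ref{thm:blame-subtyping} specialize directly to EDC$'$.

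There is no real obstacle, since all the cast-representation-specific work has already been carried out above; the only point worth double-checking is the blame-label bookkeeping in the projection case of \agda{applyCast}, namely that the label threaded into \agda{coerce} is the one that makes the enclosing projection blame-safe, so that Lemma~\ref{lem:coerce-safe-EDC} applies precisely where it is needed. Once that alignment is confirmed, the corollary is simply Theorem~\ref{thm:blame-subtyping} at the EDC$'$ instance.
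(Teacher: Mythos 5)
Your proposal is correct and matches the paper exactly: the corollary is obtained by instantiating the parameterized Theorem~\ref{thm:blame-subtyping} with the EDC$'$ instance of \agda{BlameSafe}, which the paper assembles from the same ingredients you cite (the operator-preservation lemma, Lemma~\ref{lem:coerce-safe-EDC} for the projection case of \agda{applyCast}, and the generic preservation Lemma~\ref{lem:preserve-cas}). No further comment is needed.
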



\subsection{Lazy ``D'' Coercions (LDC)}
\label{sec:LazyCoercions}

The Lazy ``D'' Coercions~\citep{Siek:2009rt} are syntactically similar
to the coercions of Section~\ref{sec:EDC} except that include a
failure coercion, written $\cfail{\ell}$. \\[1ex]
\fbox{$c,d : \Cast{A}{B}$}
\begin{gather*}
\inference
    {}
    {\cfail{\ell} : \Cast{A}{B}}
\quad
\inference
    {}
    {\agda{id} : \Cast{a}{a}}
\quad
\inference
    {A \neq \Unk}
    {A! : \Cast{A}{\Unk}} 
\quad
\inference
    {B \neq \Unk}
    {B?^\ell : \Cast{\Unk}{B}} 
\\[2ex]
\inference
    {c : \Cast{C}{A} & d : \Cast{B}{D}}
    {c \to d : \Cast{(A \to B)}{(C \to D)}}
\quad
\inference
    {c : \Cast{A}{C} & d : \Cast{B}{D}}
    {c \otimes d : \Cast{(A \otimes B)}{(C \otimes D)}} \otimes \in \{ \times, + \}
\end{gather*}

The cast constructor is defined as follows, using the \agda{coerce}
function defined later in this section.
\[
  \coerce{A}{B}{\ell} =
  \begin{cases}
    \agda{coerce} \app d \app \ell & \text{if } d : A \smile B \\
    \cfail{\ell} & \text{otherwise}
  \end{cases}
\]

\subsubsection{Reduction Semantics and Type Safety}

Injections are categorized as inert casts.\\[1ex]
\fbox{$\inert{c}$}
\[
\inference{}
          {\inert{A!}}
\]
The coercions between function, pair, and sum types are categorized as
cross casts.\\[1ex]
\fbox{$\cross{c}$}
\begin{gather*}
  \inference{}
            {\agda{Cross}{\otimes} : \cross{c \otimes d}} \otimes \in \{ \to, \times, + \}
\end{gather*}
In addition to the identity and projection coercions and the cross
casts, the failure coercions are also active.\\[1ex]
\fbox{$\act{c}$}
\begin{gather*}
  \inference
      {}
      {\mathtt{ActId} :\act{\agda{id}}}
  \quad
  \inference
    {}
    {\mathtt{ActProj} : \act{A?^\ell}}
  \quad
  \inference
    {\cross{c}}
    {\mathtt{ActCross} :\act{c}}
    \\[1ex]
  \inference
    {}
    {\mathtt{ActFail} :\act{\cfail{\ell}}}
\end{gather*}

\begin{lemma}
  \label{lem:lazyd-coercion-active-or-inert}
  For any types $A$ and $B$, $c : \Cast{A}{B}$ is either an active or
  inert cast.
\end{lemma}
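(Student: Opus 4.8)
The plan is to proceed by case analysis on the structure of the coercion $c : \Cast{A}{B}$. The coercion grammar of the Lazy ``D'' calculus has exactly six constructors, and for each one the syntactic shape of $c$ directly dictates its categorization — there is no need to inspect a consistency proof, in contrast to the type-based presentations of Sections~\ref{sec:EDA} and~\ref{sec:EDI} where the corresponding argument went by cases on $\consis{A}{B}$.

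First I would dispatch the three unconditionally active forms: a failure coercion $\cfail{\ell}$ is active by $\mathtt{ActFail}$, the identity coercion $\agda{id}$ is active by $\mathtt{ActId}$, and a projection $B?^\ell$ is active by $\mathtt{ActProj}$. Next, an injection $A!$ falls into the inert category directly by its defining rule. Finally, for the two compound forms $\cfun{c_1}{c_2}$ and $c_1 \otimes c_2$ with $\otimes \in \{\times, +\}$, the rule $\agda{Cross}{\otimes}$ supplies a witness that $c$ is a cross cast, and then $\mathtt{ActCross}$ classifies $c$ as active.

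I do not anticipate any real obstacle: the argument is an exhaustive match in which every constructor lands in exactly one categorization rule, so the only ``hard part'' is being exhaustive and choosing the right rule. The one point worth flagging is that the cross-cast cases are handled without any recursion on the sub-coercions $c_1$ and $c_2$, since $\agda{Cross}{\otimes}$ does not examine them — this is precisely the feature of coercion-based representations that makes this lemma (and the structural lemmas that follow it) so short compared with their type-cast counterparts.
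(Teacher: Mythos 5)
Your proof is correct and is exactly the argument the paper leaves implicit here (the lemma is stated without a written proof, deferring to the Agda development): an exhaustive case split on the six coercion constructors, with $\cfail{\ell}$, $\agda{id}$, and $B?^\ell$ active by $\mathtt{ActFail}$, $\mathtt{ActId}$, and $\mathtt{ActProj}$, injections $A!$ inert, and the compound forms active via $\agda{Cross}{\otimes}$ and $\mathtt{ActCross}$. Your observation that no recursion on sub-coercions and no inspection of a consistency derivation is needed, unlike the proof of Lemma~\ref{lem:simple-active-or-inert} for EDA, is also accurate.
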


\begin{lemma}\label{lem:lazyd-coercion-inert-cross}
  If $c : \Cast{A}{(B \otimes C)}$ and $\inert{c}$,
  then $\cross{c}$ and $A \equiv D \otimes E$ for some $D$ and $E$.
\end{lemma}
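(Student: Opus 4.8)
The plan is to observe that the statement is vacuously true: no inert coercion of LDC has a composite target type. First I would invert the hypothesis $\inert{c}$. In this section the $\mathsf{Inert}$ predicate has exactly one introduction rule, namely $\inert{A'!}$, so $c$ must be an injection $A'!$ for some $A' \neq \Unk$, whose target type is $\Unk$. But the typing hypothesis $c : \Cast{A}{(B \otimes C)}$ pins the target type to $B \otimes C$, which is a function, product, or sum type and in particular is not $\Unk$. This mismatch is a contradiction, so both conclusions, $\cross{c}$ and $A \equiv D \otimes E$, follow immediately ex falso. The coercions that do have composite target types, $c \to d$ and $c \otimes d$, are categorized as active via $\mathtt{ActCross}$ rather than inert, and $\cfail{\ell}$, which likewise ranges over arbitrary target types, is active via $\mathtt{ActFail}$; so there really is nothing to prove.

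This mirrors the proofs of Lemma~\ref{lem:simple-inert-cross} and Lemma~\ref{lem:EDC-inert-cross}, which are vacuous for the same reason, so I expect no real obstacle. In the Agda mechanization the only point to get right is that the single-clause pattern match on $\inert{c}$ leaves a goal in which the index $B \otimes C$ must unify with $\Unk$; Agda recognizes this as absurd and discharges the case automatically, making the proof a one-liner.
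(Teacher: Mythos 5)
Your proposal is correct and matches the paper's approach: since the only inert LDC coercions are injections $A'!$ with target type $\Unk$, the hypothesis $c : \Cast{A}{(B\otimes C)}$ together with $\inert{c}$ is contradictory, and the lemma holds vacuously—exactly the argument the paper spells out for the analogous EDA lemma (Lemma~\ref{lem:simple-inert-cross}) and leaves implicit here. Your side remarks about $c \to d$, $c \otimes d$, and $\cfail{\ell}$ being active rather than inert are also accurate for this calculus.
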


The definition of cast operators such as $\agda{dom}$ are the same as
in Section~\ref{sec:EDC}.


\begin{lemma}\label{lem:lazyd-coercion-base-not-inert}
   A cast $c : \Cast{A}{b}$ is not inert.
\end{lemma}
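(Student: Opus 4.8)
The plan is to prove this by a direct inversion argument, establishing the negation $\neg\,\inert{c}$. First I would recall how the $\inert{}$ predicate is defined for the LDC calculus in Section~\ref{sec:LazyCoercions}: it has a single introduction rule, namely $\inert{A!}$ for the injection coercion $A! : \Cast{A}{\Unk}$. In particular, every inert cast in this calculus has target type $\Unk$.

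Next I would assume $\inert{c}$ for the given $c : \Cast{A}{b}$ and perform inversion on this hypothesis. The single available rule forces $c$ to be an injection $A!$, whose target type is $\Unk$. But by assumption the target type of $c$ is the base type $b$, and $\Unk$ and base types are distinct atomic types, so $\Unk \neq b$, which is a contradiction. Therefore no cast into a base type can be inert.

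The main obstacle here is essentially nonexistent: the argument is a one-step inversion, and in the Agda development this lemma is discharged by a vacuous pattern match over the proof of $\inert{c}$, exactly as for the analogous lemmas in Sections~\ref{sec:EDA} and~\ref{sec:EDC}. The only point worth noting is that the failure coercion $\cfail{\ell}$, which does carry an arbitrary target type and could in principle have target $b$, is categorized as \emph{active} rather than inert, so it never interferes with the inversion.
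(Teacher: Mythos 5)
Your proof is correct and matches the paper's approach: the paper (which gives only a one-line justification for the analogous EDA lemma and leaves this one to the Agda development) likewise relies on the fact that the sole inert constructor for LDC is the injection $A! : \Cast{A}{\Unk}$, whose target $\Unk$ cannot be a base type $b$. Your observation that $\cfail{\ell}$ is classified as active rather than inert, and so cannot spoil the inversion despite having an arbitrary target type, is exactly the right point to check and is accurate.
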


\begin{proposition}\label{prop:lazyd-active-pre-cast-struct}
The LDC calculus is an instance of the $\agda{PreCastStruct}$
structure.
\end{proposition}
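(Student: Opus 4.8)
The plan is to build an inhabitant of the \agda{PreCastStruct} record out of the components already defined in this section. The cast-representation field is the coercion data type $\Cast{A}{B}$ introduced above (now including the failure coercion $\cfail{\ell}$); the \agda{Inert}, \agda{Active}, and \agda{Cross} fields are the predicates $\inert{c}$, $\act{c}$, and $\cross{c}$; and the \agda{ActiveOrInert} field is exactly Lemma~\ref{lem:lazyd-coercion-active-or-inert}.

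Next I would discharge the structural obligations. The three fields $\agda{InertCross}{\to}$, $\agda{InertCross}{\times}$, and $\agda{InertCross}{+}$ follow from Lemma~\ref{lem:lazyd-coercion-inert-cross}, which, given an inert cast into a function, pair, or sum type, yields the needed \cross{c} proof together with the decomposition of the source type; concretely, since the only inert coercions are injections $A!$ whose target is $\Unk$, each goal is handled by inversion on $\inert{c}$. The \agda{baseNotInert} field is Lemma~\ref{lem:lazyd-coercion-base-not-inert}, again because an inert coercion has target $\Unk$, not a base type. For the decomposition operators $\agda{dom}$, $\agda{cod}$, $\agda{fst}$, $\agda{snd}$, $\agda{inl}$, and $\agda{inr}$ I would reuse the definitions from Section~\ref{sec:EDC}, which pattern-match $c \to d$ or $c \otimes d$ and project out the relevant component; the coverage obligation is met because \cross{c} is inhabited only when $c$ has one of those shapes, so the identity, injection, projection, and failure cases are ruled out by inversion on the \cross{c} premise.

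I do not expect a substantive obstacle here: the real work was done in Lemmas~\ref{lem:lazyd-coercion-active-or-inert}, \ref{lem:lazyd-coercion-inert-cross}, and \ref{lem:lazyd-coercion-base-not-inert}, and this proposition amounts to record construction in Agda. The one point requiring a little care is the failure coercion $\cfail{\ell}$, which --- unlike every other coercion form --- inhabits $\Cast{A}{B}$ for arbitrary $A$ and $B$: one must confirm it is always categorized active (via $\mathtt{ActFail}$), never inert (so it disturbs neither \agda{baseNotInert} nor the \agda{InertCross} fields), and excluded from the domains of the decomposition operators by the \cross{c} premise.
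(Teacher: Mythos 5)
Your proposal is correct and matches the paper's (implicit) proof exactly: the proposition is discharged by assembling the \agda{PreCastStruct} record from the coercion type, the $\inert{}$/$\act{}$/$\cross{}$ predicates, Lemmas~\ref{lem:lazyd-coercion-active-or-inert}, \ref{lem:lazyd-coercion-inert-cross}, and \ref{lem:lazyd-coercion-base-not-inert}, and the cast operators inherited from Section~\ref{sec:EDC}. Your closing observation about $\cfail{\ell}$ --- active via $\mathtt{ActFail}$, never inert, and excluded from the decomposition operators by the $\cross{c}$ premise --- is precisely the only new point LDC adds over EDC.
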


We define shallow consistency, written $A \smile B$, as follows.
\begin{gather*}
\mathtt{UnkL}{\smile}[B] : \Unk \smile B
\qquad
\mathtt{UnkR}{\smile}[B] : A \smile \Unk
\qquad
\mathtt{Base}{\smile}[b]: b \smile b
\\[1ex]
\otimes{\smile}[A,B,C,D] : (A \otimes B) \smile (C \otimes D)
\end{gather*}

The \agda{coerce} function differs from that of Section~\ref{sec:EDC}
in that we only require the source and target types to be shallowly
consistent. The \agda{coerce} function is mutually defined with the
function $\coerce{A}{B}{\ell}$ that checks whether two types are
shallowly consistent, invoking \agda{coerce} if they are and returning
a failure coercion if they are not.
\begin{align*}
  \agda{coerce} &: \forall A\,B.\, A \smile B \to \agda{Label} \to \Cast{A}{B}\\[1ex]
  \agda{coerce} \app \agda{UnkL}{\smile}[B] \app \ell &=
    \begin{cases}
      \agda{id} & \text{if } B \equiv \Unk \\
      B?^\ell & B \not\equiv \Unk
    \end{cases}\\
  \agda{coerce} \app \agda{UnkR}{\smile}[A] \app \ell &=
    \begin{cases}
      \agda{id} & \text{if } A \equiv \Unk \\
      A! & A \not\equiv \Unk
    \end{cases} \\
  \agda{coerce} \app \mathtt{Base}{\smile}[\Base] \app \ell &= \agda{id}\\
  \agda{coerce} \app (\mathtt{Fun}{\smile}[A,B,C,D]) \app \ell &=
        \coerce{C}{A}{\overline{\ell}}
    \to \coerce{B}{D}{\ell} \\
  \agda{coerce} \app (\mathtt{Pair}{\smile}[A,B,C,D]) \app \ell &=
        \coerce{A}{C}{\ell}
    \times \coerce{ B}{ D}{ \ell}\\
  \agda{coerce} \app (\mathtt{Sum}{\smile}[A,B,C,D]) \app \ell &=
        \coerce{ A}{ C}{ \ell}
    + \coerce{ B}{ D}{ \ell} 
\end{align*}

The definition of \agda{applyCast} is similar to the one for EDC
(Section~\ref{sec:EDC}) except that there is an additional case for
$\cfail{\ell}$ and the projection case checks for shallow
consistency instead of consistency, using $\coerce{A}{B}{\ell}$.
\[
  \agda{applyCast} : \forall \Gamma A B.\, (M : \Gamma \vdash A) \to \agda{Value}\app M\to (c : \Cast{A}{B}) \to \agda{Active}\app c \to \Gamma \vdash B
\]
\[
\begin{array}{lllllll}
  \agda{applyCast} & M & v & \agda{id} & a &=&
     M \\
  \agda{applyCast} & \cast{M}{A!} & v & B?^{\ell} & a &=& 
    \cast{M}{\coerce{A}{B}{\ell}} \\
  \agda{applyCast} &M& v & c \otimes d & a &=&
    \agda{eta}{\otimes}\app M\app c \app \mathtt{Cross}\otimes\\
  \agda{applyCast} & M & v & \cfail{\ell} & a &=& \blame{\ell}{}
\end{array}
\]

\begin{proposition}
The LDC calculus is an instance of the $\agda{CastStruct}$ structure.
\end{proposition}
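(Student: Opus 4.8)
The plan is to assemble a \agda{CastStruct} record for LDC by reusing the \agda{PreCastStruct} instance already established in Proposition~\ref{prop:lazyd-active-pre-cast-struct} and supplying the one remaining field, \agda{applyCast}, whose clauses are displayed above. In the intrinsically-typed, Agda-style setting of this development, the content of the proposition is therefore just that those clauses constitute a total, type-correct function of the declared signature (and that the auxiliary \agda{coerce} it invokes is itself well defined).

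First I would check totality by cases on the \agda{Active} predicate: an active coercion is built by exactly one of \agda{ActId}, \agda{ActProj}, \agda{ActCross}, \agda{ActFail}, and the four clauses of \agda{applyCast} match respectively \agda{id}, a projection $A?^\ell$, a cross coercion $c \otimes d$, and a failure coercion $\cfail{\ell}$, so the coverage is exhaustive. Then I would verify the result type clause by clause. For \agda{id} the source and target coincide (both some atomic $a$), so returning $M$ is type-correct; for $\cfail{\ell}$, returning $\blame{\ell}{}$ is type-correct since $\blame{\ell}{B}$ inhabits every type $B$; for the cross case the result type is exactly what was already established for $\agda{eta}{\otimes}$ in Section~\ref{sec:eta-cast-reduction}, with $\mathtt{Cross}\otimes$ serving as the required cross-cast evidence. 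The projection clause is the one that uses the canonical-forms lemma: a value $M$ of type $\Unk$ is, by Lemma~\ref{lem:canonical-star}, of the form $\cast{M'}{c}$ with $c : \Cast{A}{\Unk}$ and $\inert{c}$; since the only inert coercions in LDC are injections, $c$ must be $A!$ for some $A \neq \Unk$, hence $M' : \Gamma \vdash A$, and then $\cast{M'}{\coerce{A}{B}{\ell}} : \Gamma \vdash B$ because $\coerce{A}{B}{\ell} : \Cast{A}{B}$ by its definition (which always returns a coercion from $A$ to $B$, possibly $\cfail{\ell}$).

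Underlying all of this, I would need to confirm that \agda{coerce}, which is mutually recursive with the cast constructor $\coerce{-}{-}{-}$, is a well-defined total function. Totality of \agda{coerce} is by case analysis on the shallow-consistency proof $A \smile B$, and the apparent circularity with $\coerce{-}{-}{-}$ is harmless because $\coerce{-}{-}{-}$ re-invokes \agda{coerce} only on the strictly smaller component types of a function, pair, or sum, so the mutual recursion is structurally decreasing and Agda's termination checker accepts it. Establishing this termination argument is the main obstacle; once it is in place, the remaining obligations are the routine type-checks above, and the \agda{CastStruct} instance is obtained simply by packaging \agda{applyCast} on top of the \agda{PreCastStruct} fields.
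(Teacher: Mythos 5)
Your proposal is correct and matches the paper's (implicit) argument: the paper establishes this proposition simply by exhibiting the four clauses of $\agda{applyCast}$ for LDC over the four active-coercion cases and letting Agda's type checker discharge the totality and typing obligations you enumerate, on top of the \agda{PreCastStruct} instance from Proposition~\ref{prop:lazyd-active-pre-cast-struct}. Your elaboration of the projection clause via Lemma~\ref{lem:canonical-star} and the well-definedness of the mutually recursive $\agda{coerce}$ is exactly the reasoning the mechanization relies on.
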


We import and instantiate the reduction semantics and proof of type
safety from Sections~\ref{sec:dynamic-semantics-CC} and
\ref{sec:type-safety-CC}.

\begin{definition}[Reduction for LDC]
  The reduction relation $M \longrightarrow N$ for the LDC calculus
  is the reduction relation of $\CC(\CastOp)$ instantiated with
  LDC's instance of \agda{CastStruct}.
\end{definition}

\begin{corollary}[Preservation for LDC]
  \label{thm:LDC-preservation}
  If $\Gamma \vdash M : A$  and $M \longrightarrow M'$,
  then $\Gamma \vdash M' : A$.
\end{corollary}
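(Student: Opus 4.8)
The plan is to obtain this corollary purely by instantiation, with no new reasoning about LDC's specific casts. First I would recall that the general Preservation theorem of Section~\ref{sec:type-safety-CC} is stated for the Parameterized Cast Calculus $\CC(\CastOp)$ and holds for \emph{any} \agda{CastStruct} instance; indeed, because the terms of $\CC(\CastOp)$ are intrinsically typed and the reduction relation $M \longrightarrow N$ is given a type that forces $M$ and $N$ to inhabit the same judgment $\Gamma \vdash A$, Preservation is a consequence of Agda's type-checking of the definition of the reduction relation rather than a separate inductive argument.

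Next I would invoke the immediately preceding proposition, that the LDC calculus is an instance of \agda{CastStruct}. This supplies the concrete cast representation (identity, injection, projection, failure $\cfail{\ell}$, and the function/pair/sum coercion formers), the \agda{Active}, \agda{Inert}, and \agda{Cross} predicates, the decomposition operators $\agda{dom}$, $\agda{cod}$, $\agda{fst}$, $\agda{snd}$, $\agda{inl}$, $\agda{inr}$, and the \agda{applyCast} function, all of which were verified to satisfy the \agda{PreCastStruct} and \agda{CastStruct} field signatures by Lemmas~\ref{lem:lazyd-coercion-active-or-inert}, \ref{lem:lazyd-coercion-inert-cross}, and \ref{lem:lazyd-coercion-base-not-inert} together with the definitions of \agda{coerce} and \agda{applyCast}. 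The type signature of \agda{applyCast} already guarantees that it produces a term of type $\Gamma \vdash B$ from a value of type $\Gamma \vdash A$ and an active cast $c : \Cast{A}{B}$, which is exactly the clause needed for the $\key{cast}$ reduction rule to be type preserving; the \agda{coerce} function and the cast constructor $\coerce{A}{B}{\ell}$ likewise carry type-indexed signatures, so the extra failure case and the shallow-consistency check cause no difficulty.

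Finally, since the definition of reduction for LDC sets its relation to be exactly that of $\CC(\CastOp)$ applied to LDC's \agda{CastStruct} instance, I would conclude the corollary by directly specializing the general Preservation theorem. I expect essentially no obstacle here: the only place where LDC differs from the earlier coercion calculus of Section~\ref{sec:EDC} is the presence of the failure coercion $\cfail{\ell}$, which \agda{applyCast} sends to $\blame{\ell}{}$, a term that inhabits $\Gamma \vdash B$ for any context and any type $B$, so type preservation is unaffected. The only genuine work, namely checking that LDC really is a \agda{CastStruct}, has already been discharged; given that, this corollary is immediate, and in the mechanization it amounts to applying the parameterized Preservation result to the LDC record.
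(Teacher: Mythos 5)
Your proposal is correct and matches the paper's own route exactly: the paper obtains this corollary by instantiating the parameterized Preservation theorem of Section~\ref{sec:type-safety-CC} (which holds by the intrinsic typing of terms and the type given to the reduction relation, as checked by Agda) with LDC's \agda{CastStruct} instance. No additional reasoning about LDC's specific coercions is needed, just as you observe.
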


\begin{corollary}[Progress for LDC]\label{thm:LDC-progress}
  If $\emptyset \vdash M : A$, then 
  \begin{enumerate}
  \item $M \longrightarrow M'$ for some $M'$,
  \item $\val{M}$, or
  \item $M \equiv \blame{\ell}{}$.
  \end{enumerate}
\end{corollary}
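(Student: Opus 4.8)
The plan is to obtain Progress for LDC not by redoing the case analysis but as a direct instantiation of the parameterized Progress theorem (Theorem~\ref{thm:cc-progress}) for $\CC(\CastOp)$. Since that theorem is stated for an arbitrary \agda{CastStruct}, all that the corollary requires is that the LDC calculus has already been exhibited as such a structure; its conclusion — the three disjuncts $M \longrightarrow M'$, $\val{M}$, or $M \equiv \blame{\ell}{}$ for any $\emptyset \vdash M : A$ — then specializes verbatim.

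Concretely, the work is to assemble the \agda{PreCastStruct} and then the \agda{CastStruct} instance from the data already laid out in this section. For \agda{PreCastStruct}: $\CastOp$ is the coercion data type, \agda{Inert} is the single-rule predicate selecting injections $A!$, \agda{Active} is the predicate covering \agda{id}, projections $B?^\ell$, cross coercions $c \otimes d$, and the new failure coercion $\cfail{\ell}$, and \agda{Cross} is the predicate on $c \otimes d$ for $\otimes \in \{\to,\times,+\}$. The \agda{ActiveOrInert} field is supplied by Lemma~\ref{lem:lazyd-coercion-active-or-inert}, \agda{baseNotInert} by Lemma~\ref{lem:lazyd-coercion-base-not-inert}, the three \agda{InertCross} fields by Lemma~\ref{lem:lazyd-coercion-inert-cross}, and \agda{dom}, \agda{cod}, \agda{fst}, \agda{snd}, \agda{inl}, \agda{inr} by the projection functions on $c \to d$ and $c \otimes d$ given above — this is Proposition~\ref{prop:lazyd-active-pre-cast-struct}. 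Extending to a \agda{CastStruct} requires only adjoining the \agda{applyCast} function displayed in this section, whose sole novelties relative to EDC are the clause sending $\cfail{\ell}$ to $\blame{\ell}{}$ and the use of $\coerce{A}{B}{\ell}$ (shallow consistency) in the projection clause; that is the Proposition immediately preceding the corollary. With the \agda{CastStruct} in hand, instantiating Theorem~\ref{thm:cc-progress} finishes the proof.

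The step I would flag as the crux is ensuring that the \agda{Active}/\agda{Inert} categorization is genuinely total and that \agda{applyCast} is defined on every active coercion, since the failure coercion $\cfail{\ell}$ is the one genuinely new ingredient. It must be classified as active (rule $\mathtt{ActFail}$) so that \agda{ActiveOrInert} remains a total function, and \agda{applyCast} must give it a reduct; both are already arranged above, so the only real content is checking exhaustiveness of the case split in Lemma~\ref{lem:lazyd-coercion-active-or-inert}. Everything else — the reduction relation, Preservation, and the term-level induction in Progress — is inherited from the generic development, so no new induction on terms or on reduction sequences is performed in proving this corollary itself.
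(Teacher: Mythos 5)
Your proposal matches the paper exactly: the corollary is obtained by instantiating the generic Progress theorem (Theorem~\ref{thm:cc-progress}) with LDC's \agda{CastStruct} instance, which is assembled from precisely the lemmas and definitions you cite (active-or-inert, inert-cross, base-not-inert, the cast operators, and the \agda{applyCast} with the $\cfail{\ell}$ clause). No further argument is needed.
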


Let LDC$'$ be the variant of LDC obtained by instantiating $\CC'$
instead of $\CC$.

\subsubsection{Blame-Subtyping}

The \agda{CastBlameSafe} predicate for LDC$'$ is the same as for
EDC$'$ (Section~\ref{sec:EDC}) except that there is an additional rule
for failure coercions:
\[
\inference{\ell \neq \ell'}
          {\agda{CastBlameSafe}\app \cfail{\ell'} \app \ell}
\]

\begin{lemma}[Blame Safety of Cast Operators]
  \label{lem:blame-safety-cast-operators-LDC}
  Suppose $\agda{CastBlameSafe}\app c \app \ell$ and $c$ is a cross
  cast, that is, $x : \cross{c}$.
  \begin{itemize}
  \item If $x=\mathtt{Cross}\to$,
    then $\agda{CastBlameSafe}\app (\agda{dom}\app c \app x)\app \ell$
    and $\agda{CastBlameSafe}\app (\agda{cod}\app c \app x)\app \ell$.
  \item If $x=\mathtt{Cross}\times$,
    then $\agda{CastBlameSafe}\app (\agda{fst}\app c \app x)\app \ell$
    and $\agda{CastBlameSafe}\app (\agda{snd}\app c \app x)\app \ell$.
  \item If $x=\mathtt{Cross}+$,
    then $\agda{CastBlameSafe}\app (\agda{inl}\app c \app x)\app \ell$
    and $\agda{CastBlameSafe}\app (\agda{inr}\app c \app x)\app \ell$.
  \end{itemize}
\end{lemma}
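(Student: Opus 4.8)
The plan is to follow the same two-step argument used for Lemma~\ref{lem:blame-safety-cast-operators-EDC}, since LDC's cross casts and its cast operators $\agda{dom}$, $\agda{cod}$, $\agda{fst}$, $\agda{snd}$, $\agda{inl}$, $\agda{inr}$ are defined exactly as in the EDC calculus of Section~\ref{sec:EDC}. First I would invert the witness $x$ of $\cross{c}$. The cross-cast judgment has a single introduction rule, $\agda{Cross}\otimes : \cross{(c_1 \otimes c_2)}$ for $\otimes \in \{\to,\times,+\}$, so this forces $c$ to have the form $c_1 \otimes c_2$, with $\otimes$ determined by whether $x$ is $\mathtt{Cross}\to$, $\mathtt{Cross}\times$, or $\mathtt{Cross}+$. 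In particular the failure coercion $\cfail{\ell}$, the identity coercion, injections, and projections are all excluded, since the cross-cast judgment has no rule introducing any of them.

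Second, with $c$ now known to be $c_1 \otimes c_2$, I would invert the hypothesis $\agda{CastBlameSafe}\app c \app \ell$. The only clause of the \agda{CastBlameSafe} definition whose conclusion has this shape is the congruence rule for $\otimes$-coercions, which delivers $\agda{CastBlameSafe}\app c_1 \app \ell$ and $\agda{CastBlameSafe}\app c_2 \app \ell$. Finally, in each of the three bullets the relevant pair of operators just extracts the components of the cross cast: $\agda{dom}\app (c_1 \to c_2)\app x = c_1$ and $\agda{cod}\app (c_1 \to c_2)\app x = c_2$, and analogously for $\agda{fst}/\agda{snd}$ on pairs and $\agda{inl}/\agda{inr}$ on sums. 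Hence the two desired conclusions are precisely the two facts produced by the inversion, and the lemma follows.

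The argument is essentially bookkeeping; the only place that warrants a moment's care is the inversion on $\cross{c}$, where one must confirm that LDC's extra coercion form $\cfail{\ell}$ cannot be a cross cast. It cannot, as the cross-cast judgment has no rule for it, so there is no real obstacle.
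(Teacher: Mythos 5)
Your proof is correct and matches the paper's approach: the paper's proof of the analogous EDC lemma is exactly ``by inversion on $\cross{c}$ and the \agda{CastBlameSafe} predicate,'' and the LDC case carries over verbatim since the cross casts, the operators, and the relevant clause of \agda{CastBlameSafe} are unchanged. Your observation that $\cfail{\ell}$ cannot be a cross cast is the right (and only) point of extra care for LDC.
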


\begin{proposition}
  \label{prop:LDC-preblamesafe}
  LDC$'$ is an instance of the \agda{PreBlameSafe} structure.
\end{proposition}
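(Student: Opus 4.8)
The plan is to assemble the \agda{PreBlameSafe} record for LDC$'$ from ingredients that are already in hand; this is essentially a packaging argument. Recall that \agda{PreBlameSafe} extends \agda{PreCastStruct} with the \agda{CastBlameSafe} predicate together with the six fields \agda{domBlameSafe}, \agda{codBlameSafe}, \agda{fstBlameSafe}, \agda{sndBlameSafe}, \agda{inlBlameSafe}, and \agda{inrBlameSafe} stating that blame-safety is preserved by the cross-cast projections. For the \agda{PreCastStruct} part we reuse Proposition~\ref{prop:lazyd-active-pre-cast-struct}. For \agda{CastBlameSafe} we take the predicate just defined for LDC$'$, namely the EDC$'$ predicate of Figure~\ref{fig:CastBlameSafe-EDC} extended with the clause $\ell \neq \ell' \Rightarrow \agda{CastBlameSafe}\app \cfail{\ell'}\app \ell$. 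The six preservation fields are supplied directly by the three bullet points of Lemma~\ref{lem:blame-safety-cast-operators-LDC}, taken one cross-cast constructor ($\to$, $\times$, $+$) at a time.

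So the only real content is Lemma~\ref{lem:blame-safety-cast-operators-LDC}, which I would establish exactly as its EDC$'$ counterpart, Lemma~\ref{lem:blame-safety-cast-operators-EDC}. Given $x : \cross{c}$, inversion on \agda{Cross} forces $c$ to have the shape $c_1 \otimes c_2$ with $\otimes \in \{ \to, \times, + \}$; for such $c$ the operators \agda{dom}, \agda{cod}, \agda{fst}, \agda{snd}, \agda{inl}, \agda{inr} return $c_1$ or $c_2$ (these definitions are identical to EDC's). Then inversion on $\agda{CastBlameSafe}\app (c_1 \otimes c_2)\app \ell$ yields $\agda{CastBlameSafe}\app c_1\app \ell$ and $\agda{CastBlameSafe}\app c_2\app \ell$, because the only \agda{CastBlameSafe} rule whose subject is a composite coercion is the componentwise one. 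The new failure coercion $\cfail{\ell'}$ does not disturb this argument: it is never of the form $c_1 \otimes c_2$, hence never a cross cast, hence never an argument to these operators.

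Collecting these pieces gives a complete \agda{PreBlameSafe} record, so LDC$'$ is an instance. I do not anticipate a genuine obstacle; the one point needing a moment of care is checking that LDC$'$'s \agda{CastBlameSafe} is closed under the cross-cast projections, and — exactly as for EDC$'$ — this reduces to the fact that blame-safety of a composite coercion $c_1 \otimes c_2$ is defined componentwise, together with the observation that failure coercions are disjoint from the composite shape.
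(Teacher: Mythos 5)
Your proposal is correct and follows essentially the same route as the paper: the proposition is just the assembly of the \agda{PreCastStruct} instance, the LDC$'$ \agda{CastBlameSafe} predicate, and the six preservation fields supplied by Lemma~\ref{lem:blame-safety-cast-operators-LDC}, which in turn is proved by inversion on $\cross{c}$ and on the componentwise \agda{CastBlameSafe} rule exactly as in the EDC$'$ case. Your observation that the failure coercion is never a cross cast and therefore cannot interfere is the right (and only) new point to check.
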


\begin{lemma}[\agda{coerce} is blame safe]
  \label{lem:coerce-safe-LDC}
  Suppose $\ell \neq \ell'$.
  \begin{enumerate}
  \item $\agda{CastBlameSafe} \app \coerce{A}{B}{\ell'} \app \ell$.
  \item If $ab : A \smile B$, then $\agda{CastBlameSafe} \app (\agda{coerce}\app ab \app \ell') \app \ell$.
  \end{enumerate}
\end{lemma}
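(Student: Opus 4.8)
The plan is to prove claims~(1) and~(2) together by a single structural induction that tracks the mutual recursion of $\coerce{\cdot}{\cdot}{\cdot}$ and $\agda{coerce}$. A call $\coerce{A}{B}{\ell'}$ either returns immediately (as $\cfail{\ell'}$) or delegates to $\agda{coerce}$ at the \emph{same} types, and $\agda{coerce}$ in turn either returns immediately or recurses through $\coerce{\cdot}{\cdot}{\cdot}$ at \emph{strictly smaller} types; so the recursion — and the induction — is well-founded, and the cleanest organization is to prove~(2) by structural induction on the shallow-consistency witness $ab : A \smile B$ and to derive~(1) from~(2) at the same types together with one base case. There is no circularity: (1) appeals to (2) without first recursing into itself, and (2) appeals to (1) only at smaller types.

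\emph{Part (1).} Unfold $\coerce{A}{B}{\ell'}$. If $A \smile B$ holds with witness $ab$, the result is $\agda{coerce}\app ab\app \ell'$, and part~(2) of the induction hypothesis applies directly. If $A \smile B$ fails, the result is the failure coercion $\cfail{\ell'}$, which is blame-safe for $\ell$ by the failure-coercion rule of $\agda{CastBlameSafe}$ for LDC$'$, whose side condition is exactly $\ell \neq \ell'$.

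\emph{Part (2).} Case on the witness $ab : A \smile B$. For $\mathtt{UnkL}{\smile}$, $\agda{coerce}$ returns either $\agda{id}$ (when $B \equiv \Unk$), which is unconditionally blame-safe, or the projection $B?^{\ell'}$, which is blame-safe for $\ell$ because $\ell \neq \ell'$. For $\mathtt{UnkR}{\smile}$ it returns $\agda{id}$ or an injection $A!$, both unconditionally blame-safe (injections carry no label here). For $\mathtt{Base}{\smile}$ it returns $\agda{id}$. For the pair and sum witnesses the result is $c_1 \otimes c_2$ with $c_1 = \coerce{A}{C}{\ell'}$ and $c_2 = \coerce{B}{D}{\ell'}$: apply the $\otimes$-congruence rule of $\agda{CastBlameSafe}$ (Figure~\ref{fig:CastBlameSafe-EDC}) and discharge both premises with part~(1) of the induction hypothesis, still using only $\ell \neq \ell'$. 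For the function witness the result is $\coerce{C}{A}{\overline{\ell'}} \to \coerce{B}{D}{\ell'}$; the codomain coercion is handled as in the pair and sum cases.

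\emph{Main difficulty.} The one non-routine point is the contravariant function domain: it is built with the \emph{complemented} label $\overline{\ell'}$, so discharging its premise via part~(1) of the induction hypothesis needs $\ell \neq \overline{\ell'}$ rather than merely $\ell \neq \ell'$. In the ``D'' blame discipline used by this calculus the complement operation is an involution that does not introduce fresh labels, so $\ell \neq \ell'$ already transfers to $\ell \neq \overline{\ell'}$ and the induction goes through; if one prefers not to rely on that identification, the fix is purely bookkeeping — carry the symmetric invariant ``$\ell$ differs from the ambient label and from its complement'' through the induction (legitimate since $\overline{\overline{\ell'}} = \ell'$). With this in hand the remaining cases are immediate, and Lemma~\ref{lem:coerce-safe-LDC} is then exactly what closes the projection case in the proof that $\agda{applyCast}$ preserves blame safety for LDC$'$, mirroring the role of Lemma~\ref{lem:coerce-safe-EDC} for EDC$'$.
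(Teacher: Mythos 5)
The paper states this lemma without proof (it is discharged directly in the Agda development), and your mutual induction --- part~(2) by cases on the shallow-consistency witness, part~(1) by unfolding $\coerce{A}{B}{\ell'}$ into either $\agda{coerce}$ or $\cfail{\ell'}$ --- is exactly the argument that mirrors the mutual recursion of the two functions, so the approach and the case analysis are correct. One caveat on the contravariant domain: literal label inequality $\ell \neq \ell'$ does \emph{not} by itself give $\ell \neq \overline{\ell'}$ (take $\ell = \overline{\ell'}$ with $\ell' \neq \overline{\ell'}$), so the justification ``$\ell \neq \ell'$ already transfers'' only goes through because the inequality on labels used in the mechanization is invariant under complement (labels are compared up to polarity); this is precisely equivalent to your fallback of threading the symmetric invariant $\ell \neq \ell'$ and $\ell \neq \overline{\ell'}$ through the induction, under which the proof is complete.
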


\begin{lemma}[\agda{applyCast} preserves blame safety in LDC$'$]
  If $\agda{CastBlameSafe}\app c \app \ell$ and $a : \act{c}$
  and $\agda{CastsAllSafe}\app V \app \ell$
  and $v : \val{V}$, then
  $\agda{CastsAllSafe}\app (\agda{applyCast}\app V \app v \app c \app a) \app \ell$.
\end{lemma}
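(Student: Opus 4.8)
The plan is to prove the lemma by case analysis on the derivation $a : \act{c}$, which for the LDC$'$ calculus has four forms: $\mathtt{ActId}$, $\mathtt{ActProj}$, $\mathtt{ActCross}$ (with one sub-case for each $\otimes \in \{\to,\times,+\}$), and $\mathtt{ActFail}$. In each case we unfold $\agda{applyCast}\app V\app v\app c\app a$ using its defining equation and show the result satisfies $\agda{CastsAllSafe}\app (-)\app \ell$, invoking inversion on $\agda{CastBlameSafe}\app c\app \ell$ and on $\agda{CastsAllSafe}\app V\app \ell$ as needed. This mirrors the structure of the corresponding EDC$'$ lemma, with one extra case ($\mathtt{ActFail}$) and with Lemma~\ref{lem:coerce-safe-LDC} playing the role that Lemma~\ref{lem:coerce-safe-EDC} played there.

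Three of the cases are immediate. For $\mathtt{ActId}$ the output is $V$ itself, so safety is the hypothesis. For $\mathtt{ActFail}$ we have $c = \cfail{\ell'}$ and output $\blame{\ell'}{}$; inverting $\agda{CastBlameSafe}\app \cfail{\ell'}\app \ell$ gives $\ell \neq \ell'$, and so $\agda{CastsAllSafe}\app \blame{\ell'}{}\app \ell$ by the (\textsc{Blame}) rule. For $\mathtt{ActProj}$, $c = \cproj{B}{\ell'}$; since the cast has source type $\Unk$ and $V$ is a value, Lemma~\ref{lem:canonical-star} forces $V \equiv \cast{M}{\cinj{A}}$ with $\cinj{A}$ inert, and the output is $\cast{M}{\coerce{A}{B}{\ell'}}$. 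Inverting $\agda{CastBlameSafe}\app \cproj{B}{\ell'}\app \ell$ yields $\ell \neq \ell'$; inverting the (\textsc{Cast}) rule on $\agda{CastsAllSafe}\app V\app \ell$ yields $\agda{CastsAllSafe}\app M\app \ell$; and Lemma~\ref{lem:coerce-safe-LDC}(1) yields $\agda{CastBlameSafe}\app \coerce{A}{B}{\ell'}\app \ell$. Reassembling with (\textsc{Cast}) discharges this case.

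The remaining case, $\mathtt{ActCross}$ with $c = c_1 \otimes c_2$, is the substantive one. Here the output is $\agda{eta}{\otimes}\app V\app c\app \mathtt{Cross}\otimes$, namely an $\eta$-expansion of $V$ that inserts casts built from the cast operators ($\dom{c\app x}$ and $\cod{c\app x}$ for functions, $\agda{fst}\app c\app x$ and $\agda{snd}\app c\app x$ for pairs, $\agda{inl}\app c\app x$ and $\agda{inr}\app c\app x$ for sums), with a renamed copy of $V$ (via $\agda{rename}\app \Suc$) in the function sub-case. To establish $\agda{CastsAllSafe}$ of the result we need two ingredients: (i) that renaming preserves $\agda{CastsAllSafe}$ — one of the technical renaming lemmas referenced for Lemma~\ref{lem:preserve-cas} — so that the embedded copy of $V$ is still safe for $\ell$; and (ii) that each inserted cast is blame-safe for $\ell$, which is exactly Lemma~\ref{lem:blame-safety-cast-operators-LDC} applied to the cross cast $c$, whose blame-safety we have by hypothesis. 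With these in hand one rebuilds $\agda{CastsAllSafe}$ of the whole term by repeated application of the structural rules of the predicate (\textsc{Lam}, \textsc{App}, \textsc{Cons}, \textsc{Fst}/\textsc{Snd}, \textsc{InL}/\textsc{InR}, \textsc{Case}, \textsc{Var}, \textsc{Cast}).

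The main obstacle is this $\mathtt{ActCross}$ case: it is the only one whose output is a genuinely new compound term, so it requires unfolding the definition of $\agda{eta}{\otimes}$ for all three type constructors, having the renaming-preserves-blame-safety lemma available, and patiently discharging each structural rule of the $\agda{CastsAllSafe}$ predicate. Every other case — including the new $\mathtt{ActFail}$ case, which is the easiest of all — follows immediately once the appropriate inversion principle is applied.
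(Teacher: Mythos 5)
Your proposal is correct and follows the same route as the paper: the paper's proof is exactly a case analysis on $\act{c}$ with inversion on $\agda{CastBlameSafe}\app c\app\ell$, invoking Lemma~\ref{lem:coerce-safe-LDC} in the projection case. Your additional detail on the $\mathtt{ActCross}$ case (renaming preservation plus Lemma~\ref{lem:blame-safety-cast-operators-LDC} for the decomposed casts) matches the technical lemmas the paper defers to the Agda formalization.
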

\begin{proof}
    The proof is by cases on $\act{c}$ and inversion on
    $\agda{CastBlameSafe}\app c \app \ell$, using
    Lemma~\ref{lem:coerce-safe-LDC} in the case for projection.
\end{proof}

\begin{proposition}
  LDC$'$ is an instance of the \agda{BlameSafe} structure.
\end{proposition}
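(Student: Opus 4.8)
The plan is to assemble the \agda{BlameSafe} dependent record for LDC$'$ out of results already in hand, since \agda{BlameSafe} merely extends \agda{PreBlameSafe} and \agda{CastStruct} with the single extra field \agda{applyCast\mhyphen{}pres\mhyphen{}allsafe}. So there are three obligations: produce a \agda{PreBlameSafe} instance, produce a \agda{CastStruct} instance, and supply that one field.

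The first two obligations are already discharged. Proposition~\ref{prop:LDC-preblamesafe} gives the \agda{PreBlameSafe} instance for LDC$'$: it provides the \agda{CastBlameSafe} predicate (the EDC$'$ predicate of Figure~\ref{fig:CastBlameSafe-EDC} extended with the rule for $\cfail{\ell'}$) together with the six cast-operator preservation fields \agda{domBlameSafe} through \agda{inrBlameSafe}, each an instance of Lemma~\ref{lem:blame-safety-cast-operators-LDC}. The \agda{CastStruct} instance is the one already built for LDC; it transfers to LDC$'$ without change because switching from $\CC$ to $\CC'$ leaves \agda{applyCast} untouched.

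For the remaining field I must show that if $V$ is a value with $\safefor{V}{\ell}$, if $c : \Cast{A}{B}$ satisfies $\agda{CastBlameSafe}\app c\app \ell$, and if $a : \act{c}$, then $\safefor{(\agda{applyCast}\app V\app c\app a)}{\ell}$ — exactly the statement of the lemma stated immediately before this proposition. Its proof goes by cases on $a : \act{c}$. In the identity case \agda{applyCast} returns $V$, so $\safefor{V}{\ell}$ is the hypothesis. In the projection case $c$ is $B?^{\ell'}$ and, by the canonical-form Lemma~\ref{lem:canonical-star}, $V = \cast{M}{A!}$; \agda{applyCast} returns $\cast{M}{\coerce{A}{B}{\ell'}}$, and from $\agda{CastBlameSafe}\app B?^{\ell'}\app \ell$ we recover $\ell \neq \ell'$, so part~(1) of Lemma~\ref{lem:coerce-safe-LDC} makes $\coerce{A}{B}{\ell'}$ blame-safe for $\ell$, while inverting $\safefor{V}{\ell}$ yields $\safefor{M}{\ell}$; together these show the rewrapped term is safe for $\ell$. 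In the cross-cast cases \agda{applyCast} delegates to $\agda{eta}{\otimes}$, whose inserted sub-casts are the \agda{dom}, \agda{cod}, \agda{fst}, \agda{snd}, \agda{inl}, \agda{inr} components of $c$, all blame-safe for $\ell$ by Lemma~\ref{lem:blame-safety-cast-operators-LDC}. Finally, in the failure case $c = \cfail{\ell'}$, \agda{applyCast} returns $\blame{\ell'}{}$, which is safe for $\ell$ because the \agda{CastBlameSafe} rule for $\cfail{\ell'}$ was equipped with the side condition $\ell \neq \ell'$.

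Assembling these three components gives the \agda{BlameSafe} record for LDC$'$, which Agda checks directly; instantiating Theorem~\ref{thm:blame-subtyping} then yields the Blame-Subtyping Theorem for LDC$'$. I expect no real obstacle here: the only content beyond the EDC$'$ development is the failure-coercion case, and the one subtlety — that the term $\blame{\ell'}{}$ it produces must not be blamable at $\ell$ — is resolved precisely by having given the \agda{CastBlameSafe} rule for $\cfail{\ell'}$ the hypothesis $\ell \neq \ell'$.
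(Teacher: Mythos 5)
Your proposal is correct and follows the same route the paper takes: the \agda{BlameSafe} record for LDC$'$ is assembled from the already-established \agda{PreBlameSafe} and \agda{CastStruct} instances plus the preceding lemma that \agda{applyCast} preserves blame safety, whose proof the paper likewise carries out by cases on $\act{c}$ with inversion on $\agda{CastBlameSafe}\app c \app \ell$, invoking Lemma~\ref{lem:coerce-safe-LDC} in the projection case. Your additional detail on the identity, cross-cast, and failure cases (the latter resting on the $\ell \neq \ell'$ side condition of the $\cfail{\ell'}$ rule) matches the paper's definitions exactly.
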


We instantiate Theorem~\ref{thm:blame-subtyping} with this
\agda{BlameSafe} instance to obtain the Blame-Subtyping Theorem for
LDC$'$.

\begin{corollary}[Blame-Subtyping Theorem for LDC$'$]
  \label{thm:blame-subtyping-LDC}
  If $M : \Gamma \vdash A$ and $\agda{CastsAllSafe} \app M \app \ell$,
  then $\neg (M \longrightarrow^{*} \key{blame} \, \ell)$.
\end{corollary}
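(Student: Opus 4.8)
The plan is to obtain this corollary as a direct instantiation of the parameterized Blame-Subtyping Theorem (Theorem~\ref{thm:blame-subtyping}). That theorem holds for any cast representation that forms a \agda{BlameSafe} structure, and the immediately preceding proposition states that LDC$'$ is such a structure. So the proof is a one-liner: apply Theorem~\ref{thm:blame-subtyping} to the LDC$'$ instance of \agda{BlameSafe}. Note that the premise $\agda{CastsAllSafe}\app M\app\ell$ in the corollary is just the $\safefor{M}{\ell}$ predicate of Figure~\ref{fig:castsallsafe} written under its Agda name, so the specialised statement matches this corollary verbatim, with nothing left to check.

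The reason nothing more is needed is that all the representation-specific obligations were already discharged above. The generic proof of Theorem~\ref{thm:blame-subtyping} proceeds by induction on the reduction sequence $M \longrightarrow^{*} \key{blame}\,\ell$ together with inversion on the ``safe for'' predicate, and its only nontrivial input is Lemma~\ref{lem:preserve-cas}, the preservation of blame safety under one reduction step. That preservation lemma in turn touches the cast representation only through the \agda{CastBlameSafe} predicate, the six operator-preservation fields of \agda{PreBlameSafe} (\agda{domBlameSafe} through \agda{inrBlameSafe}, used in the cast-elimination rules \eqref{eq:fun-cast}, \eqref{eq:fst-cast}, \eqref{eq:snd-cast}, \eqref{eq:case-cast}), and the requirement that \agda{applyCast} preserves ``safe for'' (used in rule \eqref{eq:cast}). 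For LDC$'$ these are exactly Lemma~\ref{lem:blame-safety-cast-operators-LDC}, the \agda{applyCast} blame-safety lemma stated just above, and the \agda{CastBlameSafe} definition extended by the clause for $\cfail{\ell'}$.

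There is no real obstacle here; the substantive work lies entirely in the lemmas already proved, and the only point worth keeping in mind is the failure coercion. Because $\cfail{\ell'}$ can itself raise a $\key{blame}\,\ell'$ error, it must be registered as blame-safe for $\ell$ precisely when $\ell \neq \ell'$, mirroring the \textsc{Blame} rule of Figure~\ref{fig:castsallsafe}; and the projection case of \agda{applyCast}, which routes through the cast-constructor $\coerce{A}{B}{\ell}$, must be shown safe using Lemma~\ref{lem:coerce-safe-LDC} (that both \agda{coerce} and $\coerce{A}{B}{\ell}$ produce only $\ell$-blame-safe casts when $\ell \neq \ell'$). With those in hand the instantiation goes through mechanically, exactly as for EDC$'$, and the desired $\neg (M \longrightarrow^{*} \key{blame}\,\ell)$ follows.
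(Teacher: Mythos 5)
Your proposal is correct and matches the paper exactly: the paper obtains this corollary by instantiating the parameterized Blame-Subtyping Theorem (Theorem~\ref{thm:blame-subtyping}) with the LDC$'$ instance of the \agda{BlameSafe} structure, whose obligations were discharged by the preceding lemmas (notably Lemma~\ref{lem:coerce-safe-LDC} and the \agda{applyCast} blame-safety lemma). Your additional remarks about the failure coercion and the projection case correctly identify where the representation-specific work lives, but that work belongs to the earlier lemmas, not to this corollary.
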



\subsection{The $\lambda\textup{\textsf{C}}$ Coercion Calculus}
\label{sec:lambda-C}

This section instantiates the Parametric Cast Calculus to obtain the
$\lambda\textup{\textsf{C}}$ calculus of \citet{Siek:2015ab}.  Again
we represent casts as coercions, but this time we must include the
notion of sequencing of two coercions, written $\cseq{c}{d}$, to
enable the factoring of casts through the ground type. As part of this
factoring, injections and projections are restricted to ground types.
We omit the failure coercion because it is not necessary for
$\lambda\textup{\textsf{C}}$.

\fbox{$c,d : \Cast{A}{B}$}
\begin{gather*}
\inference
    {}
    {\agda{id} : \Cast{a}{a}}
\quad
\inference
    {}
    {G! : \Cast{G}{\Unk}} 
\quad
\inference
    {}
    {H?^\ell : \Cast{\Unk}{H}} 
\quad
\inference
    {c : \Cast{A}{B} & d : \Cast{B}{C}}
    {\cseq{c}{d}: \Cast{A}{C}}
\\[2ex]
\inference
    {c : \Cast{C}{A} & d : \Cast{B}{D}}
    {c \to d : \Cast{(A \to B)}{(C \to D)}}
\quad
\inference
    {c : \Cast{A}{C} & d : \Cast{B}{D}}
    {c \otimes d : \Cast{(A \otimes B)}{(C \otimes D)}} \otimes \in \{ \times, + \} 
\end{gather*}

The cast constructor is defined as follows.
\begin{align*}
  \coerce{A}{\Unk}{\ell}&= \cseq{\coerce{A}{G}{\ell}}{\coerce{G}{\Unk}{\ell}} \\
  \coerce{\Unk}{A}{\ell}&= \cseq{\coerce{\Unk}{G}{\ell}}{\coerce{G}{A}{\ell}} \\
  \coerce{G}{\Unk}{\ell}&= G!\\
  \coerce{\Unk}{H}{\ell}&= H?^\ell \\
  \coerce{b}{b}{\ell} &= \agda{id}\\
  \coerce{A \to B}{A' \to B'}{\ell}&=
      \coerce{A'}{A}{\ell} \to \coerce{B}{B'}{\ell} \\
  \coerce{A \times B}{A' \times B'}{\ell}&=
      \coerce{A'}{A}{\ell} \times \coerce{B}{B'}{\ell} \\
  \coerce{A + B}{A' + B'}{\ell}&= 
      \coerce{A'}{A}{\ell} + \coerce{B}{B'}{\ell} 
\end{align*}

\subsubsection{Reduction Semantics and Type Safety}

The coercions between function, pair, and sum types are categorized as
cross casts. We do not categorize sequence coercions as cross casts,
which, for example, simplifies the definition of the \agda{dom} and
\agda{cod} functions. \\[1ex]
\fbox{$\cross{c}$}
\begin{gather*}
  \inference{}
            {\key{Cross}{\otimes} : \cross{c \otimes d}} \otimes \in \{ \to, \times, + \}
\end{gather*}

Injections and function coercions are categorized as inert casts.\\[1ex]
\fbox{$\inert{c}$}
\[
\inference{}
          {\key{InInj} : \inert{G!}}
\quad
\inference{}
          {\key{InFun} : \inert{c \to d}}
\]

The active casts in $\lambda\textup{\textsf{C}}$ include identity
casts, projections, and sequences.  The $\lambda\textup{\textsf{C}}$
calculus did not include pairs and sums~\citep{Siek:2015ab}, but here
we choose to categorize casts between pairs and sums as active casts,
as we did for $\lambda\textup{\textsf{B}}$ in
Section~\ref{sec:lambda-B}. \\[1ex]
\fbox{$\act{c}$}
\begin{gather*}
  \inference
      {}
      {\mathtt{ActId} :\act{\agda{id}}}
  \quad
  \inference
    {}
    {\mathtt{ActProj} : \act{H?^\ell}}
  \quad
  \inference
      {}
      {\mathtt{ActSeq} : \act{(\cseq{c}{d})}}
  \\[1ex]
  \inference
    {}
    {\mathtt{Act}{\otimes} :\act{(c \otimes d)}} \otimes \in \{ \times, + \}
\end{gather*}

\begin{lemma}
  \label{lem:lambda-C-active-or-inert}
  For any types $A$ and $B$, $c : \Cast{A}{B}$ is either an active or
  inert cast.
\end{lemma}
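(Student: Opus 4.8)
The plan is to prove Lemma~\ref{lem:lambda-C-active-or-inert} by a direct case analysis on the structure of the coercion $c : \Cast{A}{B}$, exactly in the spirit of Lemmas~\ref{lem:simple-active-or-inert}, \ref{lem:pedi-active-or-inert}, \ref{lem:EDC-active-or-inert}, and \ref{lem:lazyd-coercion-active-or-inert} for the earlier instantiations. Because the coercions of $\lambda\textup{\textsf{C}}$ form an intrinsically-typed datatype indexed by the source and target types, a case split on $c$ is exhaustive. Moreover --- in contrast to the ``$A \Rightarrow^\ell B$'' representations of Sections~\ref{sec:EDA}--\ref{sec:lambda-B}, where the active/inert verdict required inspecting $A$ and $B$, a consistency proof, or a ground-type side condition --- here every coercion constructor is assigned its category outright by one of the rules defining $\act{}$ or $\inert{}$. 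So neither induction nor any auxiliary case analysis on types is needed: each case is discharged immediately by citing the relevant categorization rule.

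Concretely, I would enumerate the coercion forms: (i) $c \equiv \agda{id}$ is active by $\mathtt{ActId}$; (ii) $c \equiv G!$ is inert by $\key{InInj}$; (iii) $c \equiv H?^\ell$ is active by $\mathtt{ActProj}$; (iv) $c \equiv \cseq{d_1}{d_2}$ is active by $\mathtt{ActSeq}$; (v) $c \equiv d_1 \to d_2$ is inert by $\key{InFun}$; and (vi) $c \equiv d_1 \otimes d_2$ with $\otimes \in \{\times, +\}$ is active by $\mathtt{Act}{\otimes}$. In every case the cited rule applies with no further hypotheses, so the disjunction $\act{c} \uplus \inert{c}$ is inhabited, which is precisely the $\agda{ActiveOrInert}$ field needed to instantiate $\agda{PreCastStruct}$ for $\lambda\textup{\textsf{C}}$.

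There is essentially no obstacle here; the only points requiring care are completeness of the case split and the deliberate absences in this calculus's grammar. In particular, the failure coercion $\cfail{\ell}$ of Section~\ref{sec:LazyCoercions} is \emph{not} part of $\lambda\textup{\textsf{C}}$, so there is no case for it; and the single $\otimes$-indexed rule $\mathtt{Act}{\otimes}$ must be checked to cover both remaining binary forms (product and sum). In Agda this amounts to a one-line pattern match returning the appropriate $\mathsf{inj}_1$ or $\mathsf{inj}_2$ witness.
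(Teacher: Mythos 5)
Your proposal is correct and matches the intended argument: the paper omits an explicit proof for this lemma (deferring to the Agda development), but the evident proof is exactly your exhaustive case split on the six coercion constructors of $\lambda\textup{\textsf{C}}$, each discharged by a single categorization rule with no side conditions. Your observations about the deliberate absence of $\cfail{\ell}$ and the coverage of both $\times$ and $+$ by $\mathtt{Act}{\otimes}$ are accurate and complete the case analysis.
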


\begin{lemma}\label{lem:lambda-C-inert-cross}
  If $c : \Cast{A}{(B \otimes C)}$ and $\inert{c}$,
  then $\cross{c}$ and $A \equiv D \otimes E$ for some $D$ and $E$.
\end{lemma}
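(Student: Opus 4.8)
The plan is to prove this by inversion on the inertness derivation $\inert{c}$, exactly as in the analogous inert-cross lemmas for the earlier calculi (for instance Lemma~\ref{lem:pedi-inert-cross} for EDI). Recall that in $\lambda\textup{\textsf{C}}$ there are exactly two constructors for $\inert{c}$: $\key{InInj}$, which makes an injection $G!$ inert, and $\key{InFun}$, which makes any function coercion $c_1 \to d_1$ inert. So the proof splits immediately into these two cases.

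In the $\key{InInj}$ case, $c$ has the form $G!$ with $c : \Cast{G}{\Unk}$, so its target type is $\Unk$. But by hypothesis the target is $B \otimes C$ with $\otimes \in \{\to,\times,+\}$, and $\Unk$ is not of that shape, so this case is vacuous. In the $\key{InFun}$ case, $c$ has the form $c_1 \to d_1$ where $c_1 : \Cast{B_1}{A_1}$ and $d_1 : \Cast{A_2}{B_2}$, hence $c : \Cast{(A_1 \to A_2)}{(B_1 \to B_2)}$. Matching this typing against $c : \Cast{A}{(B \otimes C)}$ forces $\otimes$ to be $\to$, together with $A \equiv A_1 \to A_2$, $B \equiv B_1$, and $C \equiv B_2$. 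Taking $D = A_1$ and $E = A_2$ then gives $A \equiv D \otimes E$, and $\cross{c}$ holds directly by the constructor $\key{Cross}{\to}$.

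I do not expect any real obstacle here: the argument is a two-case inversion with no recursion, and the intrinsic typing in Agda makes the shape constraints (target a function type, hence source a function type) fall out by unification. The only point worth noting is that the pair and sum cross coercions never occur as inert casts in $\lambda\textup{\textsf{C}}$ — they are classified as active by $\mathtt{Act}{\otimes}$ for $\otimes \in \{\times,+\}$ — so the $\times$ and $+$ instances of the lemma are ruled out automatically rather than requiring separate treatment.
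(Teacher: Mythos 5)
Your proof is correct and follows exactly the approach the paper uses for the analogous lemmas (the paper omits the proof for $\lambda\textup{\textsf{C}}$ but proves the EDI version, Lemma~\ref{lem:pedi-inert-cross}, by the same two-case inversion on $\inert{c}$): the injection case is vacuous because its target is $\Unk$, and the function-coercion case forces $\otimes$ to be $\to$ with the source a function type, giving $\cross{c}$ by $\key{Cross}\to$. Your observation that sequence, pair, and sum coercions are classified as active rather than inert, so no further cases arise, is also right.
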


The definition of the functions such as $\agda{dom}$ are the usual
ones, but note that the $x$ parameter plays an important role in this
definition. We did not categorize sequence casts as cross casts, so
the following functions can omit the cases for $(\cseq{c}{d})$.

\begin{align*}
  \dom{(c \to d) \app \key{Cross}{\to}} &= c \\
  \cod{(c \to d) \app \key{Cross}{\to}} &= d \\
  \agda{fst}\app (c \times d) \app \key{Cross}{\times} &= c \\
  \agda{snd} \app (c \times d) \app \key{Cross}{\times} &= d \\
  \agda{inl} \app (c + d) \app \key{Cross}{+} &= c \\
  \agda{inr} \app (c + d) \app \key{Cross}{+} &= d
\end{align*}

\begin{lemma}\label{lem:lambda-C-base-not-inert}
   A cast $c : \Cast{A}{b}$ is not inert.
\end{lemma}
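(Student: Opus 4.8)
The plan is to discharge this by a short inversion on the evidence that $c$ is inert, exactly as was done for the analogous base-not-inert lemmas in the earlier calculi (e.g.\ Lemma~\ref{lem:EDC-base-not-inert} and Lemma~\ref{lem:lazyd-coercion-base-not-inert}). Assume $i : \inert{c}$ for some $c : \Cast{A}{b}$ and derive a contradiction. The inert coercions of $\lambda\textup{\textsf{C}}$ are generated by exactly two rules, $\key{InInj}$ and $\key{InFun}$, so there are only two cases to consider.

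In the $\key{InInj}$ case we have $c = G!$ for some ground type $G$, and the typing rule for $G!$ forces the target type to be $\Unk$; but the target type here is $b$, and a base type is never $\Unk$, so this case is impossible. In the $\key{InFun}$ case we have $c = c_1 \to c_2$, and the typing rule for function coercions forces the target type to have the shape $C \to D$; but $b$ is a base type, not a function type, so this case is impossible as well. Since both cases lead to a contradiction, no cast with base target type can be inert, which is the claim.

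I do not anticipate any real obstacle: the argument is a pure syntactic inversion, and in the Agda development it is discharged simply by pattern matching on $i$ — for each of the two constructors, unifying the target index of the coercion with a base type $b$ fails, making both branches absurd. The only thing to keep in mind is that, unlike the EDA/EDI/$\lambda\textup{\textsf{B}}$ representations where injections are the only inert casts, here $\lambda\textup{\textsf{C}}$ additionally has inert function coercions, so the proof must dispatch that second constructor too; but since its target is a function type rather than a base type, it is ruled out just as immediately.
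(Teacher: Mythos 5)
Your proof is correct and matches the paper's approach: the lemma is discharged by inversion on $\inert{c}$, observing that the two inert constructors of $\lambda\textup{\textsf{C}}$ produce casts whose target type is $\Unk$ (for $G!$) or a function type (for $c \to d$), neither of which can be a base type $b$. Your note that this calculus requires dispatching the extra $\key{InFun}$ constructor, unlike the injection-only calculi, is exactly the right observation.
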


\begin{proposition}\label{prop:lambda-C-active-pre-cast-struct}
The $\lambda\textup{\textsf{C}}$ Calculus is an instance of the
$\agda{PreCastStruct}$ structure.
\end{proposition}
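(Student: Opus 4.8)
The plan is to build the \agda{PreCastStruct} record field by field, in each case pointing to a definition or a lemma already in place in this section. The carrier $-\CastOp-$ is the coercion datatype defined above, and the predicates \agda{Active}, \agda{Inert}, and \agda{Cross} are the three inductive predicates just introduced. First I would supply \agda{ActiveOrInert} from Lemma~\ref{lem:lambda-C-active-or-inert}, whose proof is a single case split on the coercion: $\agda{id}$, $H?^\ell$, $\cseq{c}{d}$, and the pair and sum coercions are active, while $G!$ and function coercions are inert, so every constructor is classified. The field \agda{baseNotInert} is Lemma~\ref{lem:lambda-C-base-not-inert}: no inert coercion has a base type as target, since an inert coercion targets $\Unk$ (case \key{InInj}) or a function type (case \key{InFun}).

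Next I would discharge the three \agda{InertCross} fields by inversion on the \agda{Inert} witness. For $\agda{InertCross}{\times}$ and $\agda{InertCross}{+}$ this is exactly Lemma~\ref{lem:lambda-C-inert-cross}, and both are in fact vacuous: an inert coercion cannot have a product or sum type as target, since $\key{InInj}$ forces the target to be $\Unk$ and $\key{InFun}$ forces it to be a function type. For $\agda{InertCross}{\to}$ the $\key{InInj}$ case is again ruled out, leaving $\key{InFun}$, in which $c$ has the form $\cfun{c_1}{c_2}$, its source is a function type, and $\cross{c}$ holds by $\key{Cross}{\to}$; that is precisely what the field asks for.

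It then remains to check that \agda{dom}, \agda{cod}, \agda{fst}, \agda{snd}, \agda{inl}, and \agda{inr} have their declared types. The \agda{Cross} argument is what makes this go through: given a coercion $c : \Cast{(A_1 \to A_2)}{(B_1 \to B_2)}$ together with a witness $x : \cross{c}$, inversion on $x$ forces $c$ to be $\cfun{c_1}{c_2}$ --- the witnesses $\key{Cross}{\times}$ and $\key{Cross}{+}$ do not typecheck against a cast between function types --- and it excludes the sequence coercion $\cseq{c}{d}$, which was intentionally left out of \agda{Cross}. Then \agda{dom} returns $c_1 : \Cast{B_1}{A_1}$ and \agda{cod} returns $c_2 : \Cast{A_2}{B_2}$, and the product and sum operators are symmetric. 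In Agda these obligations are dispatched automatically once the equations are written down; in prose the only thing needing care is exactly this use of $x$ to prune the sequence case.

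The main obstacle here is bookkeeping rather than mathematics: there is nothing deep, only the need to align each field of \agda{PreCastStruct} with the right definition or lemma and to confirm that the decomposition operators are total on cross coercions. Because sequence coercions are deliberately excluded from \agda{Cross}, totality is immediate; had we admitted them, \agda{dom} and friends would have to split $\cseq{c}{d}$ recursively, and that is the single place where the argument would acquire any real content.
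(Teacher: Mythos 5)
Your proposal is correct and follows essentially the same route as the paper: the proposition is just the assembly of Lemma~\ref{lem:lambda-C-active-or-inert}, Lemma~\ref{lem:lambda-C-inert-cross}, Lemma~\ref{lem:lambda-C-base-not-inert}, and the stated equations for \agda{dom}, \agda{cod}, \agda{fst}, \agda{snd}, \agda{inl}, and \agda{inr}, with the key observation--which the paper also makes explicitly--that excluding $\cseq{c}{d}$ from \agda{Cross} is what lets those operators omit the sequence case. Your field-by-field classification of the constructors and the vacuity of the product and sum \agda{InertCross} cases match the paper's development exactly.
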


We define the \agda{applyCast} function for
$\lambda\textup{\textsf{C}}$ as follows.
\[
  \agda{applyCast} : \forall \Gamma A B.\, (M : \Gamma \vdash A) \to \agda{Value}\app M\to (c : \Cast{A}{B}) \to \agda{Active}\app c \to \Gamma \vdash B
\]
\[
\begin{array}{lllllll}
  \agda{applyCast} & M & v & \agda{id} & a &=& M \\
  \agda{applyCast} & \cast{M}{G!} & v & H?^{\ell} & a &=& 
    \begin{cases}
       M  & \text{if } G \equiv H\\
       \blame{\ell}{} & \text{otherwise}
    \end{cases} \\
  \agda{applyCast} &M& v & \cseq{c}{d} & a &=& \cast{\cast{M}{c}}{d} \\
  \agda{applyCast} &M& v & c \times d & a &=&
    \agda{eta}{\times}\app M\app c \app \mathtt{Cross}\times\\
  \agda{applyCast} &M& v & c + d & a &=&
    \agda{eta}{+}\app M\app c \app \mathtt{Cross}{+}
\end{array}
\]

\begin{proposition}
The $\lambda\textup{\textsf{C}}$ calculus is an instance of the
$\agda{CastStruct}$ structure.
\end{proposition}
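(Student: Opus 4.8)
The plan is to notice that this proposition asks for very little beyond assembling pieces that are already in hand. The \agda{CastStruct} record extends \agda{PreCastStruct} with the single new field \agda{applyCast}, so we supply Proposition~\ref{prop:lambda-C-active-pre-cast-struct} for the \agda{PreCastStruct} part and the \agda{applyCast} function displayed just above for the remaining field. What must actually be verified is that this function is (1) defined on every active cast and (2) type-preserving: given $V : \Gamma \vdash A$ with $\val{V}$, a cast $c : \Cast{A}{B}$, and a proof $a : \act{c}$, the result lies in $\Gamma \vdash B$. In the mechanization both obligations are discharged automatically by the declared type of \agda{applyCast}; here I would make the argument explicit by walking through the five constructors of $\act{c}$, which are in one-to-one correspondence with the five clauses of the definition (note that the injection coercion $G!$ and function coercions $c \to d$ do not appear, since they are \inert{} rather than active).

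First I would dispatch the routine cases. For $\mathtt{ActId}$ the cast is $\agda{id} : \Cast{a}{a}$, so $A = B = a$ and returning $V$ unchanged is trivially well-typed. For $\mathtt{ActSeq}$ the cast is $\cseq{c}{d}$ with $c : \Cast{A}{B'}$ and $d : \Cast{B'}{B}$, and the result $\cast{\cast{V}{c}}{d}$ typechecks by two applications of the cast-application rule. For $\mathtt{Act}{\times}$ and $\mathtt{Act}{+}$ the cast has the form $c \times d$ or $c + d$, which by the $\key{Cross}{\otimes}$ rules is a cross cast, and the results $\agda{eta}{\times}\app V\app c\app \mathtt{Cross}\times$ and $\agda{eta}{+}\app V\app c\app \mathtt{Cross}+$ are type-preserving because the auxiliary functions $\agda{eta}{\times}$ and $\agda{eta}{+}$ of Section~\ref{sec:eta-cast-reduction} were given type-preserving signatures and the $\agda{fst}$, $\agda{snd}$, $\agda{inl}$, $\agda{inr}$ operations they invoke are the ones just defined for $\lambda\textup{\textsf{C}}$.

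The one clause with any content is $\mathtt{ActProj}$, where $c = H?^\ell : \Cast{\Unk}{H}$ and hence $A = \Unk$. The definition pattern-matches the value as $\cast{M}{G!}$, which I would justify as follows: since $V : \Gamma \vdash \Unk$ and $\val{V}$, Lemma~\ref{lem:canonical-star} gives $V \equiv \cast{M'}{c'}$ with $c'$ an inert cast whose target is $\Unk$; inverting $\inert{c'}$, the rule $\key{InFun}$ is impossible because a function coercion has a function type, not $\Unk$, as its target, so $c' = G!$ for some ground type $G$. The result is then $M'$ when $G \equiv H$, which has type $\Gamma \vdash G = \Gamma \vdash H = \Gamma \vdash B$ as required, and $\blame{\ell}{}$ otherwise, which inhabits any type. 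I expect this canonical-forms inversion to be the only real obstacle, and even that is light since Lemma~\ref{lem:canonical-star} is already available; everything else is bookkeeping that Agda checks from the type of \agda{applyCast}, so I would close by pointing to the mechanized instance in the accompanying development.
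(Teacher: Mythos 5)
Your proposal is correct and matches the paper's (implicit) justification: the paper offers no written proof beyond defining \agda{applyCast} and letting its declared Agda type discharge the obligations, which is exactly the decomposition you describe. Your explicit case analysis, including the canonical-forms inversion for $\mathtt{ActProj}$ showing that the only inert cast into $\Unk$ in $\lambda\textup{\textsf{C}}$ is an injection $G!$, is a faithful elaboration of what the mechanization checks.
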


We import and instantiate the reduction semantics and proof of type
safety from Sections~\ref{sec:dynamic-semantics-CC} and
\ref{sec:type-safety-CC}.

\begin{definition}[Reduction for $\lambda\textup{\textsf{C}}$]
  The reduction relation $M \longrightarrow N$ for
  $\lambda\textup{\textsf{C}}$ is the reduction relation of
  $\CC(\CastOp)$ instantiated with $\lambda\textup{\textsf{C}}$'s
  instance of \agda{CastStruct}.
\end{definition}

\begin{corollary}[Preservation for $\lambda\textup{\textsf{C}}$]
  \label{thm:lambda-C-preservation}
  If $\Gamma \vdash M : A$  and $M \longrightarrow M'$,
  then $\Gamma \vdash M' : A$.
\end{corollary}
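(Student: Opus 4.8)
The plan is to obtain this corollary essentially for free by instantiation, exactly as was done for the preservation corollaries of EDA, EDI, $\lambda\textup{\textsf{B}}$, EDC, and LDC earlier in the article. Recall that the generic Preservation theorem for $\CC(\CastOp)$ (Section~\ref{sec:type-safety-CC}) is a direct consequence of the intrinsically-typed formulation: the reduction judgment $M \longrightarrow N$ is indexed so that $M$ and $N$ necessarily inhabit the same typing judgment $\Gamma \vdash A$, so preservation holds by construction for every instance of the \agda{CastStruct} structure. The only content hidden behind this is that the right-hand side of each reduction rule type-checks, and for the cast-specific rules this is precisely what the fields of \agda{CastStruct} (together with the \agda{PreCastStruct} fields they build on) guarantee.

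First I would invoke the Proposition just established, namely that the $\lambda\textup{\textsf{C}}$ calculus is an instance of \agda{CastStruct}. That proposition packages the $\lambda\textup{\textsf{C}}$ coercion representation together with the \agda{Active}/\agda{Inert}/\agda{Cross} categorizations, the decomposition operators \agda{dom}, \agda{cod}, \agda{fst}, \agda{snd}, \agda{inl}, \agda{inr}, and the \agda{applyCast} function, all with their declared (type-preserving) signatures. Next, by the Definition of reduction for $\lambda\textup{\textsf{C}}$, the $\lambda\textup{\textsf{C}}$ reduction relation \emph{is} the reduction relation of $\CC(\CastOp)$ instantiated at this \agda{CastStruct}. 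Applying the generic Preservation theorem at this instance immediately yields: if $\Gamma \vdash M : A$ and $M \longrightarrow M'$, then $\Gamma \vdash M' : A$.

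There is no real obstacle: the only genuine verification --- that \agda{applyCast} returns a term of the target type, that the $\agda{eta}{\times}$ and $\agda{eta}{+}$ auxiliary functions are type-correct, and that the decomposition operators produce casts with the correct source and target types --- was already discharged in proving that $\lambda\textup{\textsf{C}}$ is an instance of \agda{CastStruct}, and in the Agda development it is enforced automatically by the declared types of those record fields. If I had to single out the most delicate ingredient, it is the sequence-coercion case of \agda{applyCast}, where $\cast{M}{(\cseq{c}{d})}$ reduces to $\cast{\cast{M}{c}}{d}$: this is well-typed precisely because the typing rule for $\cseq{c}{d} : \Cast{A}{C}$ supplies an intermediate type $B$ with $c : \Cast{A}{B}$ and $d : \Cast{B}{C}$, so the nested cast application goes through --- but again this is checked for us by Agda once the sequencing constructor is given its dependent type.
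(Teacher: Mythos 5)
Your proposal is correct and matches the paper exactly: the corollary is obtained by instantiating the generic Preservation theorem of $\CC(\CastOp)$ (which holds by construction for intrinsically-typed terms) with the $\lambda\textup{\textsf{C}}$ instance of \agda{CastStruct}. The extra observations about \agda{applyCast} and the sequence-coercion case are accurate but already discharged by the instance proposition, just as you note.
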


\begin{corollary}[Progress for $\lambda\textup{\textsf{C}}$]
  \label{thm:lambda-C-progress}
  If $\emptyset \vdash M : A$, then 
  \begin{enumerate}
  \item $M \longrightarrow M'$ for some $M'$,
  \item $\val{M}$, or
  \item $M \equiv \blame{\ell}{}$.
  \end{enumerate}
\end{corollary}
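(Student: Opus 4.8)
The plan is to obtain this statement as an immediate corollary of the generic Progress theorem, Theorem~\ref{thm:cc-progress}, for the Parameterized Cast Calculus $\CC(\CastOp)$. By the definition of reduction for $\lambda\textup{\textsf{C}}$ just given, the reduction relation for $\lambda\textup{\textsf{C}}$ \emph{is} the reduction relation of $\CC(\CastOp)$ instantiated with $\lambda\textup{\textsf{C}}$'s \agda{CastStruct}, so once that \agda{CastStruct} instance is in hand the corollary requires no new reasoning about terms: the three disjuncts --- a redex, a value, or $\blame{\ell}{}$ --- are exactly the possible outputs of the parameterized \agda{progress} function.

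Concretely, I would first assemble the \agda{PreCastStruct} instance from the definitions of this section: the coercion data type for $\CastOp$; the \agda{Inert} predicate (injections from a ground type and function coercions); the \agda{Active} predicate (identity, projection, sequence, and product/sum coercions); \agda{ActiveOrInert} supplied by Lemma~\ref{lem:lambda-C-active-or-inert}; the \agda{Cross} predicate on $c \to d$, $c \times d$, and $c + d$; the $\agda{InertCross}{\to}$, $\agda{InertCross}{\times}$, $\agda{InertCross}{+}$ fields supplied by Lemma~\ref{lem:lambda-C-inert-cross}; \agda{baseNotInert} supplied by Lemma~\ref{lem:lambda-C-base-not-inert}; and \agda{dom}, \agda{cod}, \agda{fst}, \agda{snd}, \agda{inl}, \agda{inr} given by the cast operators above. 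This is Proposition~\ref{prop:lambda-C-active-pre-cast-struct}. I would then extend it to a \agda{CastStruct} by adjoining the \agda{applyCast} function defined above (the proposition immediately following). Feeding this \agda{CastStruct} to Theorem~\ref{thm:cc-progress} yields the corollary directly.

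The only place where $\lambda\textup{\textsf{C}}$ departs from the earlier calculi --- and hence the only real point requiring care --- is in discharging the \agda{PreCastStruct} obligations, because injections and projections are restricted to ground types and, crucially, sequence coercions $\cseq{c}{d}$ are \emph{not} counted as cross casts. One must therefore confirm that every inert cast whose target is a function, product, or sum type is in fact a cross cast (it is: such an inert cast must be some $c \to d$, which is a cross cast, since a sequence coercion landing in one of these types is active by the $\mathtt{ActSeq}$ rule rather than an inert non-cross cast). I expect this small case analysis to be the main obstacle, but it is already dispatched by Lemmas~\ref{lem:lambda-C-active-or-inert} and~\ref{lem:lambda-C-inert-cross}; the remaining work --- that the definitions literally fit the record signatures --- is bookkeeping verified automatically by Agda's type checker.
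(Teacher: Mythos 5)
Your proposal matches the paper exactly: the paper obtains this statement as a corollary by instantiating the generic Progress theorem (Theorem~\ref{thm:cc-progress}) with $\lambda\textup{\textsf{C}}$'s \agda{CastStruct} instance, whose obligations are discharged by Lemmas~\ref{lem:lambda-C-active-or-inert}, \ref{lem:lambda-C-inert-cross}, and \ref{lem:lambda-C-base-not-inert} together with the \agda{applyCast} definition. Your observation that sequence coercions being active (not inert) is what makes the $\agda{InertCross}$ obligations go through is precisely the right point of care, and the rest is indeed bookkeeping checked by Agda.
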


Let $\lambda\textup{\textsf{C}}'$ be the variant of
$\lambda\textup{\textsf{C}}$ obtained by instantiating $\CC'$ instead
of $\CC$.

\subsubsection{Blame-Subtyping}

The \agda{CastBlameSafe} predicate for $\lambda\textup{\textsf{C}}'$
is the same as the one for EDC$'$ (Figure~\ref{fig:CastBlameSafe-EDC})
except for the additional rule for sequence coercions:
\[
\inference{\agda{CastBlameSafe}\app c \app \ell &
           \agda{CastBlameSafe}\app d \app \ell}
          {\agda{CastBlameSafe}\app (c ; d)\app \ell}
\]

\begin{lemma}[Blame Safety of Cast Operators]
  \label{lem:blame-safety-cast-operators-lambda-C}
  Suppose $\agda{CastBlameSafe}\app c \app \ell$ and $c$ is a cross
  cast, that is, $x : \cross{c}$.
  \begin{itemize}
  \item If $x=\mathtt{Cross}\to$,
    then $\agda{CastBlameSafe}\app (\agda{dom}\app c \app x)\app \ell$
    and $\agda{CastBlameSafe}\app (\agda{cod}\app c \app x)\app \ell$.
  \item If $x=\mathtt{Cross}\times$,
    then $\agda{CastBlameSafe}\app (\agda{fst}\app c \app x)\app \ell$
    and $\agda{CastBlameSafe}\app (\agda{snd}\app c \app x)\app \ell$.
  \item If $x=\mathtt{Cross}+$,
    then $\agda{CastBlameSafe}\app (\agda{inl}\app c \app x)\app \ell$
    and $\agda{CastBlameSafe}\app (\agda{inr}\app c \app x)\app \ell$.
  \end{itemize}
\end{lemma}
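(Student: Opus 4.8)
The plan is a routine double case analysis, mirroring the proof of Lemma~\ref{lem:blame-safety-cast-operators-EDC} for EDC$'$ but with one extra inversion case to dispatch.

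First I would invert the hypothesis $x : \cross{c}$. The $\cross{}$ predicate for $\lambda\textup{\textsf{C}}'$ has the single constructor $\key{Cross}{\otimes}$, so this forces $c$ to have the syntactic form $c_1 \otimes c_2$ with $\otimes \in \{\to,\times,+\}$ and $x = \key{Cross}{\otimes}$. That is exactly the data on which the three bullets of the lemma split, and in each case the two cast operators named in that bullet are definitionally the two sub-coercions: $\dom{(c_1 \to c_2)\app\key{Cross}{\to}} = c_1$ and $\cod{(c_1 \to c_2)\app\key{Cross}{\to}} = c_2$ in the first bullet, and likewise $\agda{fst}/\agda{snd}$ pick out $c_1/c_2$ from $c_1 \times c_2$ in the second and $\agda{inl}/\agda{inr}$ pick them out from $c_1 + c_2$ in the third. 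So after this step every goal has the shape $\agda{CastBlameSafe}\app c_i \app \ell$ for $i \in \{1,2\}$.

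Next I would invert the hypothesis $\agda{CastBlameSafe}\app c \app \ell$, now knowing $c = c_1 \otimes c_2$. Among the rules defining $\agda{CastBlameSafe}$ for $\lambda\textup{\textsf{C}}'$ --- identity $\agda{id}$, injection $G!$, projection $H?^{\ell'}$ with $\ell \neq \ell'$, the $\otimes$-coercion rule, and the sequence rule for $\cseq{d_1}{d_2}$ --- only the $\otimes$-coercion rule can have concluded a judgment about $c_1 \otimes c_2$. Its premises are precisely $\agda{CastBlameSafe}\app c_1 \app \ell$ and $\agda{CastBlameSafe}\app c_2 \app \ell$, which discharge the goals from the previous paragraph, so each bullet closes.

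I do not expect any real obstacle here. The only point that needs a second's attention is that $\lambda\textup{\textsf{C}}'$ adds a sequence-coercion constructor to $\agda{CastBlameSafe}$ that was absent in EDC$'$, so the second inversion has one additional case compared to Lemma~\ref{lem:blame-safety-cast-operators-EDC}; it is ruled out immediately because $c_1 \otimes c_2$ is not a sequence. In the Agda development the correct definitional branch of $\agda{dom}$, $\agda{fst}$, etc.\ is selected automatically by pattern matching on $x$, so no further bookkeeping is required.
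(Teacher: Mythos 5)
Your proposal is correct and matches the paper's approach: the paper leaves this lemma without an explicit proof, but the analogous EDC$'$ lemma is proved exactly as you describe, ``by inversion on $\cross{c}$ and the \agda{CastBlameSafe} predicate,'' and the same two inversions (with the sequence-coercion case dismissed as you note) discharge all three bullets here.
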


\begin{proposition}
  \label{prop:lambda-C-preblamesafe}
  $\lambda\textup{\textsf{C}}'$ is an instance of the
  \agda{PreBlameSafe} structure.
\end{proposition}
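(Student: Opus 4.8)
The plan is to assemble the \agda{PreBlameSafe} record for $\lambda\textup{\textsf{C}}'$ from pieces that are, at this point, almost entirely in hand. Since \agda{PreBlameSafe} extends \agda{PreCastStruct}, the first step is to reuse the \agda{PreCastStruct} instance supplied by Proposition~\ref{prop:lambda-C-active-pre-cast-struct}. What remains is to fill the seven new fields: the \agda{CastBlameSafe} predicate together with the six operator-preservation fields \agda{domBlameSafe}, \agda{codBlameSafe}, \agda{fstBlameSafe}, \agda{sndBlameSafe}, \agda{inlBlameSafe}, and \agda{inrBlameSafe}.

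For the \agda{CastBlameSafe} field I would use the predicate defined just above, namely the one for EDC$'$ (Figure~\ref{fig:CastBlameSafe-EDC}) augmented with the rule that $\cseq{c}{d}$ is blame-safe for $\ell$ whenever both $c$ and $d$ are. For the six operator-preservation fields, I would appeal directly to Lemma~\ref{lem:blame-safety-cast-operators-lambda-C}: its three bullets state exactly that, for a cross cast $c$ with $x : \cross{c}$ and $\agda{CastBlameSafe}\app c\app\ell$, the results of \agda{dom}/\agda{cod}, \agda{fst}/\agda{snd}, and \agda{inl}/\agda{inr} are again blame-safe for $\ell$. Each \agda{domBlameSafe}-style field is obtained by projecting the appropriate component out of the relevant bullet; the quantification over $A_1,A_2,B_1,B_2$ in the field signatures matches the (implicit) shape arguments of the cross cast, so there is nothing further to supply.

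The only real work, then, is the proof of Lemma~\ref{lem:blame-safety-cast-operators-lambda-C} itself, which I would carry out by inversion. Given $x : \cross{c}$, the $\cross{c}$ judgment for $\lambda\textup{\textsf{C}}$ forces $c$ to have one of the forms $c_1 \to c_2$, $c_1 \times c_2$, or $c_1 + c_2$ --- crucially, sequence coercions $\cseq{c}{d}$ are \emph{not} classified as cross casts, so \agda{dom}, \agda{cod}, and the others need no case for them and the inversion is clean. In each case, inverting $\agda{CastBlameSafe}\app c\app\ell$ yields $\agda{CastBlameSafe}\app c_1\app\ell$ and $\agda{CastBlameSafe}\app c_2\app\ell$ from the conclusion of the corresponding $\otimes$-rule, and since \agda{dom} and \agda{cod} (respectively \agda{fst}/\agda{snd}, \agda{inl}/\agda{inr}) simply return $c_1$ and $c_2$, the goals are discharged immediately. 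The main obstacle is therefore only bookkeeping: making sure the cross-cast categorization and the operator definitions are aligned so that the pattern matches in Agda are exhaustive without spurious cases for identity, injection, projection, or sequence coercions --- which the design of Section~\ref{sec:lambda-C} already arranges.
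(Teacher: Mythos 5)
Your proposal is correct and follows essentially the same route as the paper: the \agda{PreBlameSafe} instance is assembled from the \agda{PreCastStruct} instance of Proposition~\ref{prop:lambda-C-active-pre-cast-struct}, the \agda{CastBlameSafe} predicate of Figure~\ref{fig:CastBlameSafe-EDC} extended with the sequence rule, and Lemma~\ref{lem:blame-safety-cast-operators-lambda-C}, whose proof is exactly the inversion on $\cross{c}$ and on \agda{CastBlameSafe} that you describe (made clean precisely because sequence coercions are not classified as cross casts).
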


\begin{lemma}[\agda{applyCast} preserves blame safety in $\lambda\textup{\textsf{C}}'$]
  If $\agda{CastBlameSafe}\app c \app \ell$ and $a : \act{c}$
  and $\agda{CastsAllSafe}\app V \app \ell$
  and $v : \val{V}$, then
  $\agda{CastsAllSafe}\app (\agda{applyCast}\app V \app v \app c \app a) \app \ell$.
\end{lemma}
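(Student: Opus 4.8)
The plan is to prove the lemma by case analysis on the witness $a : \act{c}$. For $\lambda\textup{\textsf{C}}'$ there are exactly four shapes of active cast---$\mathtt{ActId}$, $\mathtt{ActProj}$, $\mathtt{ActSeq}$, and $\mathtt{Act}{\otimes}$ for $\otimes \in \{\times, +\}$ (there is no failure coercion, and function coercions are inert)---so there are four cases. In each case I unfold the corresponding clause of \agda{applyCast} and show the resulting term satisfies $\agda{CastsAllSafe}$ by applying the structural rules of the $\safefor{\cdot}{\cdot}$ predicate (Figure~\ref{fig:castsallsafe}), together with the hypotheses $\agda{CastBlameSafe}\app c \app \ell$ and $\agda{CastsAllSafe}\app V \app \ell$. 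This mirrors the analogous lemmas already established for EDC$'$ and the other calculi in this section.

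The two simplest cases come first. For $\mathtt{ActId}$, \agda{applyCast} returns $V$ unchanged, so the conclusion is exactly the hypothesis $\agda{CastsAllSafe}\app V \app \ell$. For $\mathtt{ActSeq}$, where $c = \cseq{c_1}{c_2}$, \agda{applyCast} returns $\cast{\cast{V}{c_1}}{c_2}$; I would invert $\agda{CastBlameSafe}\app (\cseq{c_1}{c_2}) \app \ell$ to obtain $\agda{CastBlameSafe}\app c_1 \app \ell$ and $\agda{CastBlameSafe}\app c_2 \app \ell$, and then conclude by two applications of the (\textsc{Cast}) rule of $\safefor{\cdot}{\cdot}$ and the hypothesis on $V$.

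For $\mathtt{ActProj}$, where $c = \cproj{H}{\ell'} : \Cast{\Unk}{H}$: since $V$ is a value of type $\Unk$, Lemma~\ref{lem:canonical-star} forces $V \equiv \cast{V'}{\cinj{G}}$ for some ground type $G$ (the inert casts of $\lambda\textup{\textsf{C}}'$ with target $\Unk$ are exactly the ground injections), and inverting $\agda{CastsAllSafe}\app V \app \ell$ with the (\textsc{Cast}) rule yields $\agda{CastsAllSafe}\app V' \app \ell$. Inverting $\agda{CastBlameSafe}\app (\cproj{H}{\ell'}) \app \ell$ forces $\ell \neq \ell'$. There are then two subcases matching the clause of \agda{applyCast}: if $G \equiv H$ the output is $V'$, which is safe; otherwise the output is $\blame{\ell'}{}$, which is safe for $\ell$ by the (\textsc{Blame}) rule precisely because $\ell \neq \ell'$.

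I expect the only genuine (and still mild) work to be in the cross-cast cases $\mathtt{Act}{\times}$ and $\mathtt{Act}{+}$, where \agda{applyCast} returns the $\eta$-expansion $\agda{eta}{\otimes}\app V\app c \app \mathtt{Cross}\otimes$. Unfolding $\agda{eta}{\times}$ (resp. $\agda{eta}{+}$) gives a term built from $\key{cons}$ and $\pi_i$ (resp. $\key{case}$, $\lambda$, $\key{inl}$, $\key{inr}$), variables, and casts formed with the cast operators \agda{fst}, \agda{snd} (resp. \agda{inl}, \agda{inr}). By Lemma~\ref{lem:blame-safety-cast-operators-lambda-C} these operators preserve blame-safety, so every cast inserted by the $\eta$-expansion is blame-safe for $\ell$; a routine structural traversal of the $\eta$-term then discharges $\agda{CastsAllSafe}$, using $\agda{CastsAllSafe}\app V \app \ell$ at the leaves that mention $V$. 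The main obstacle is just keeping the $\eta$-expansions straight; with those in hand the proof is a mechanical unfolding, which is why the corresponding lemmas for the sibling calculi have one-line proofs.
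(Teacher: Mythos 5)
Your proposal is correct and follows essentially the same route as the paper's proof, which is simply ``by cases on $\act{c}$ and inversion on $\agda{CastBlameSafe}\app c \app \ell$''; your four cases, the double application of (\textsc{Cast}) for $\mathtt{ActSeq}$, the canonical-forms/inversion argument for $\mathtt{ActProj}$, and the appeal to Lemma~\ref{lem:blame-safety-cast-operators-lambda-C} for the $\eta$-expanded cross casts are exactly the expected unfolding of that one-line sketch.
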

\begin{proof}
    The proof is by cases on $\act{c}$ and inversion on
    $\agda{CastBlameSafe}\app c \app \ell$.
\end{proof}

\begin{proposition}
  $\lambda\textup{\textsf{C}}'$ is an instance of the \agda{BlameSafe}
  structure.
\end{proposition}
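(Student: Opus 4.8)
The plan is to supply the three components that the \agda{BlameSafe} record demands. Since \agda{BlameSafe} extends \agda{PreBlameSafe} and \agda{CastStruct} with the single extra field \agda{applyCast\mhyphen{}pres\mhyphen{}allsafe}, I would first reuse Proposition~\ref{prop:lambda-C-preblamesafe} for the \agda{PreBlameSafe} part and the earlier proposition that $\lambda\textup{\textsf{C}}$ is an instance of \agda{CastStruct} for that part. Because $\lambda\textup{\textsf{C}}'$ differs from $\lambda\textup{\textsf{C}}$ only in replacing the term calculus $\CC$ by $\CC'$ while keeping the same cast representation, the same active/inert/cross categorization, the same cast operators \agda{dom}, \agda{cod}, \agda{fst}, \agda{snd}, \agda{inl}, \agda{inr}, and the same \agda{applyCast}, both of those instances transport to $\lambda\textup{\textsf{C}}'$ unchanged. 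The whole proposition is then a one-line appeal: pack these two instances together with the $\safefor{\cdot}{\ell}$-preservation fact for \agda{applyCast}.

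That \agda{applyCast\mhyphen{}pres\mhyphen{}allsafe} field is exactly the statement of the immediately preceding lemma (\agda{applyCast} preserves blame safety in $\lambda\textup{\textsf{C}}'$), so the real content lives there, and I would prove it by case analysis on the active-cast witness $a : \act{c}$, inverting $\agda{CastBlameSafe}\app c\app\ell$ where needed. For $\lambda\textup{\textsf{C}}$ there are five cases. The $\mathtt{ActId}$ case returns the value unchanged, so $\safefor{\cdot}{\ell}$ is immediate from the hypothesis. The $\mathtt{ActSeq}$ case for $\cseq{c}{d}$ produces $\cast{\cast{M}{c}}{d}$; inverting $\agda{CastBlameSafe}\app(\cseq{c}{d})\app\ell$ gives blame safety of both $c$ and $d$, so two uses of the (\textsc{Cast}) rule of $\safefor{\cdot}{\ell}$ finish it. The $\mathtt{ActProj}$ case at $H?^{\ell'}$ uses the canonical-forms lemma for type $\Unk$ to see that the argument value wraps an injection $G!$: the result is then either that inner term, which is safe by inverting the hypothesis, or $\blame{\ell'}{}$, which is safe for $\ell$ by the (\textsc{Blame}) rule using $\ell\neq\ell'$ obtained from inverting $\agda{CastBlameSafe}\app H?^{\ell'}\app\ell$. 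The two remaining cases, $\mathtt{Act}{\times}$ and $\mathtt{Act}{+}$, unfold $\agda{eta}{\times}$ and $\agda{eta}{+}$; here I would invoke Lemma~\ref{lem:blame-safety-cast-operators-lambda-C} to show the decomposed casts $\agda{fst}\app c\app x$, $\agda{snd}\app c\app x$ (resp.\ $\agda{inl}$, $\agda{inr}$) stay blame-safe, and then descend through the structure of $\agda{eta}{\otimes}\app M\app c\app x$ applying the congruence rules of $\safefor{\cdot}{\ell}$ --- (\textsc{Cons}), (\textsc{Fst}), (\textsc{Snd}), (\textsc{Cast}) for $\agda{eta}{\times}$ and (\textsc{Case}), (\textsc{Lam}), (\textsc{InL}), (\textsc{InR}), (\textsc{Var}), (\textsc{Cast}) for $\agda{eta}{+}$ --- using that a free variable is trivially safe.

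The main obstacle is not mathematical depth but keeping the layered structures consistent in Agda: \agda{BlameSafe} sits on top of \agda{PreBlameSafe} and \agda{CastStruct}, both of which sit on \agda{PreCastStruct}, so the development must thread the \emph{same} \agda{PreCastStruct} instance --- and hence the same cast operators and the same $\agda{eta}{\otimes}$ helpers from Section~\ref{sec:eta-cast-reduction} --- through every record used here. Once that plumbing is in place, every case above is routine; the only place genuine reasoning enters is the cross-cast cases, which reduce to Lemma~\ref{lem:blame-safety-cast-operators-lambda-C}. This mirrors the corresponding result for EDC$'$, the difference being that $\lambda\textup{\textsf{C}}'$ categorizes function casts as inert (so they never appear in \agda{applyCast}) but adds an active sequence coercion.
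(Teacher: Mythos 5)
Your proposal is correct and matches the paper's approach: the proposition is obtained by packaging the \agda{PreBlameSafe} and \agda{CastStruct} instances with the preceding lemma, and that lemma is proved, exactly as you describe, by cases on $\act{c}$ with inversion on $\agda{CastBlameSafe}\app c\app\ell$ (identity, sequence, projection via canonical forms at $\Unk$, and the two $\agda{eta}{\otimes}$ cases via Lemma~\ref{lem:blame-safety-cast-operators-lambda-C}). The only cosmetic slip is that in $\CC'$ the \key{case} branches bind a variable directly rather than taking $\lambda$-abstractions, so the congruence rules you list for the $\agda{eta}{+}$ case differ slightly in form, but this does not affect the argument.
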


We instantiate Theorem~\ref{thm:blame-subtyping} with this
\agda{BlameSafe} instance to obtain the Blame-Subtyping Theorem for
$\lambda\textup{\textsf{C}}'$.

\begin{corollary}[Blame-Subtyping Theorem for $\lambda\textup{\textsf{C}}'$]
  \label{thm:blame-subtyping-lambda-C}
  If $M : \Gamma \vdash A$ and $\agda{CastsAllSafe} \app M \app \ell$,
  then $\neg (M \longrightarrow^{*} \key{blame} \, \ell)$.
\end{corollary}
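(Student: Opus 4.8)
The plan is to obtain this corollary by pure instantiation: Theorem~\ref{thm:blame-subtyping} is a parameterized result that holds for any cast representation carrying a \agda{BlameSafe} structure, and the proposition immediately above supplies exactly such a structure for $\lambda\textup{\textsf{C}}'$, so the corollary follows with no fresh argument. Indeed the sentence preceding the statement already announces this instantiation; what a proof would spell out is the chain of dependencies that makes the \agda{BlameSafe} instance go through.

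First I would recall what that instance requires. The \agda{PreBlameSafe} part needs that the cast operators \agda{dom}, \agda{cod}, \agda{fst}, \agda{snd}, \agda{inl}, and \agda{inr} preserve \agda{CastBlameSafe}; this is Lemma~\ref{lem:blame-safety-cast-operators-lambda-C}, proved by inversion on $\cross{c}$ and on the (sequence-extended) \agda{CastBlameSafe} predicate. The \agda{BlameSafe} part additionally needs that applying a blame-safe active cast to a value whose casts are all safe for $\ell$ yields a term whose casts are all safe for $\ell$; for $\lambda\textup{\textsf{C}}'$ this is the lemma just above, proved by cases on $\act{c}$ and inversion on \agda{CastBlameSafe}. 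The only genuinely new case relative to EDC$'$ is the sequence coercion $\cseq{c}{d}$, whose \agda{applyCast} image $\cast{\cast{M}{c}}{d}$ is safe precisely because the sequence rule of \agda{CastBlameSafe} delivers safety of both $c$ and $d$; the projection case, where \agda{applyCast} on $H?^{\ell'}$ may emit $\blame{\ell'}{}$ when the ground types mismatch, never carries the tracked label $\ell$ because a \agda{CastBlameSafe}-at-$\ell$ projection $H?^{\ell'}$ has $\ell \neq \ell'$.

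With the \agda{BlameSafe} instance in hand I would simply feed it to Theorem~\ref{thm:blame-subtyping}. Internally that theorem iterates Lemma~\ref{lem:preserve-cas} (preservation of ``safe for'' along $\longrightarrow$) over the reduction sequence $M \longrightarrow^{*} \blame{\ell}{}$ to conclude $\safefor{(\blame{\ell}{})}{\ell}$, and then inverting the \textsc{Blame} rule of Figure~\ref{fig:castsallsafe} forces $\ell \neq \ell$, a contradiction; since \agda{CastsAllSafe} is the Agda name for the ``safe for'' predicate, this is the statement of the corollary. There is no real obstacle here --- all the substance was discharged in the preceding lemmas --- and the step carrying the most weight, were one to pick one, is the sequence-coercion case of the \agda{applyCast} blame-safety lemma, which the inductive shape of \agda{CastBlameSafe} keeps routine.
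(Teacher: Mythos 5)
Your proposal matches the paper exactly: the corollary is obtained by pure instantiation of Theorem~\ref{thm:blame-subtyping} with the \agda{BlameSafe} instance for $\lambda\textup{\textsf{C}}'$, whose requirements are discharged by Lemma~\ref{lem:blame-safety-cast-operators-lambda-C} and the preceding \agda{applyCast} blame-safety lemma (itself by cases on $\act{c}$, with the sequence coercion being the only new case relative to EDC$'$). The paper gives no further argument for the corollary itself, so your additional unpacking of the generic theorem's internals is consistent but not required.
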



\section{Space-Efficient Parameterized Cast Calculus}
\label{sec:EfficientParamCasts}

The cast calculi in Section~\ref{sec:CC-instances} all suffer from a
space-efficiency problem~\citep{Herman:2010aa}. When a cast is applied
to a higher-order value such as a function, these calculi either wrap
it inside another function or wrap the cast itself around the value.
Either way, the value grows larger. If a value goes through many
casts, it can grow larger and larger in an unbounded fashion. This
phenomenon can cause significant space and time overheads in real
programs, for example, changing the worse-cast time complexity of
quicksort from $O(n^2)$ to
$O(n^3)$~\citep{Takikawa:2016aa,Kuhlenschmidt:2019aa}.

\citet{Herman:2010aa} proposed solving this problem by replacing casts
with the coercions of \citet{Henglein:1994nz}. Any sequence of
coercions can normalize to just three coercions, thereby providing a
space-efficient representation. \citet{Siek:2015ab} define an
algorithm for efficiently normalizing coercions and use it to define
the $\lambda\textup{\textsf{S}}$ calculus.

\citet{Siek:2010ya} propose another approach that compresses a
sequence of casts into two casts where the middle type is the least
upper bound with respect to precision. The AGT methodology uses a
similar representation~\citep{Garcia:2016aa} and recently proved space
efficiency~\citet{Banados-Schwerter:2021aa}.

In this section we develop a space-efficient version of the
parameterized cast calculus, which we name $\SC(\CastOp)$.  As a
sanity check, prove that $\SC(\CastOp)$ is type safe, but more
importantly, we prove that $\SC(\CastOp)$ is indeed space-efficient
provided that the cast representation is an instance of the structures
defined later in this section.
In Section~\ref{sec:space-calculi} we instantiate $\SC(\CastOp)$ two
different ways to reproduce the $\lambda\textup{\textsf{S}}$ calculus
and to define a new calculus that more directly maps to a compact
bit-level encoding. We then instantiate the meta-theory for
$\SC(\CastOp)$ to produce proofs of type safety and space-efficiency
for both of these calculi.

\subsection{Space-Efficient Values}
\label{sec:EfficientParamCastAux}

This subsection is parameterized over the \agda{PreCastStruct}
structure.

To prepare for the definition of space-efficient cast calculi, we
define a notion of value that may be wrapped in at most one cast.  We
accomplish this by stratifying the non-cast values, that is the simple
values $S$, from the values $V$ that may be wrapped in a cast. \\[1ex]
\fbox{$\sval{} : (\Gamma \vdash A) \to \agda{Set}$}\\
\fbox{$\val{} : (\Gamma \vdash A) \to \agda{Set}$}
\begin{gather*}
    \mathtt{S}\lambda : 
    \inference{}
              {\sval{ (\lambda M)}}
    \quad
    \mathtt{Sconst} :
    \inference{}{\sval{ (\$k)}}
    \quad
    \mathtt{Spair} :
    \inference{\val{M} & \val{N}}
              {\sval{(\key{cons}\, M\, N)}}
    \\[1ex]
    \mathtt{Sinl} :
    \inference{\val{M}}
              {\sval{(\key{inl}[B]\, M)}}
    \quad
    \mathtt{Sinr} :
    \inference{\val{M}}
              {\sval{(\key{inr}[A]\, M)}}
    \\[1ex]
    \mathtt{Vsimp} :
    \inference{ \sval{M} }
              { \val{M} }
    \quad
    \mathtt{Vcast} :
    \inference{\sval{M}}
              { \val{(\cast{M}{c})} } \agda{Inert}\,c
\end{gather*}

\begin{lemma}\label{lem:simple-star}
   If\  $\sval{M}$ and $M : \Gamma \vdash A$, then $A \not\equiv \Unk$.
\end{lemma}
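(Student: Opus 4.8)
The plan is to proceed by case analysis on the derivation of $\sval{M}$, of which there are exactly five constructors ($\mathtt{S}\lambda$, $\mathtt{Sconst}$, $\mathtt{Spair}$, $\mathtt{Sinl}$, $\mathtt{Sinr}$). Since the terms are intrinsically typed, in each case the form of $M$ pins down the head shape of $A$, and I only need to observe that this shape is never $\Unk$.

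First, for $\mathtt{S}\lambda$ we have $M \equiv \lambda M'$, so by the typing rule for $\lambda$ the type $A$ is of the form $A_1 \tu A_2$, hence $A \not\equiv \Unk$. The cases $\mathtt{Spair}$, $\mathtt{Sinl}$, and $\mathtt{Sinr}$ are analogous: $M$ is respectively $\key{cons}\,M'\,N'$, $\key{inl}[B]\,M'$, or $\key{inr}[A']\,M'$, and the corresponding typing rules give $A$ the form $A_1 \times A_2$, $A_1 + B$, or $A' + A_2$. In every one of these four cases $A$ has a type constructor at its head, and none of $\tu$, $\times$, $+$ coincides with $\Unk$.

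The one case needing a small extra step is $\mathtt{Sconst}$, where $M \equiv \$k$ with $k : \rep{P}$ and $P : \mathbb{P}\,A$. Here the shape of $A$ is not visible from the typing rule for $\$k$ alone, so I would additionally invert on the proof $P : \mathbb{P}\,A$: its only constructors are $\key{PBase}[\Base]$, which forces $A \equiv \Base$, and $\key{PFun}[\Base]$, which forces $A$ to be a function type $\Base \tu A'$. In both sub-cases $A \not\equiv \Unk$.

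I do not anticipate any genuine obstacle; the lemma is a routine inversion, and in the Agda development it falls out of pattern matching on the \sval{} constructor (plus, in the constant case, one further match on $\mathbb{P}$). The only mild subtlety worth flagging is exactly that constant case: unlike the other four, the non-$\Unk$ shape of $A$ is recovered from the primitive-type predicate rather than read directly off the term's typing rule.
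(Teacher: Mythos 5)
Your proof is correct and is exactly the routine inversion the paper relies on (the paper states this lemma without written proof, deferring to the Agda mechanization, which pattern-matches on the $\sval{}$ constructor and, in the constant case, on the primitive-type predicate $\mathbb{P}\,A$ just as you describe). No issues.
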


\begin{lemma}[Canonical Form for type $\Unk$]
  \label{lem:eff-canonical-star}
  If\  $M : \Gamma \vdash \Unk$ and $\val{M}$,
  then $M \equiv \cast{M'}{c}$ where $M' : \Gamma \vdash A$,
  $c : \Cast{A}{\Unk}$, $\inert{c}$, and $A \not\equiv \Unk$.
\end{lemma}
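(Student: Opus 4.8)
The plan is to prove Lemma~\ref{lem:eff-canonical-star} by inversion on the derivation of $\val{M}$, using the stratified definition of values in Figure~\ref{fig:cc-values-frames}-style but for the space-efficient calculus. Since $M : \Gamma \vdash \Unk$, the only two value constructors that could possibly have produced $\val{M}$ are $\mathtt{Vsimp}$ and $\mathtt{Vcast}$, so I would do a two-way case split.

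First I would dispatch the $\mathtt{Vsimp}$ case. Here $\val{M}$ comes from a derivation of $\sval{M}$ with $M : \Gamma \vdash \Unk$. By Lemma~\ref{lem:simple-star}, a simple value never has type $\Unk$, so this would force $\Unk \not\equiv \Unk$, a contradiction; hence the case is vacuous. It then remains to handle the $\mathtt{Vcast}$ case, where $M \equiv \cast{M'}{c}$ with $\sval{M'}$ and $\agda{Inert}\,c$. Because $M : \Gamma \vdash \Unk$, the typing rule for cast application forces the target type of $c$ to be the type of $M$, so $M' : \Gamma \vdash A$ and $c : \Cast{A}{\Unk}$ for some $A$. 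This already supplies the decomposition and the inertness condition demanded by the lemma. For the remaining side condition $A \not\equiv \Unk$, I would apply Lemma~\ref{lem:simple-star} once more, now to the simple value $M'$ of type $A$, which yields exactly $A \not\equiv \Unk$.

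I do not anticipate any real obstacle: the argument is a single case split followed by at most two invocations of Lemma~\ref{lem:simple-star}, and in Agda amounts to pattern matching on the $\val{M}$ derivation. The only subtlety worth flagging is why the stronger conclusion $A \not\equiv \Unk$ holds here, whereas the non-stratified Lemma~\ref{lem:canonical-star} makes no such claim: it is precisely because the space-efficient $\mathtt{Vcast}$ rule requires its operand to be a \emph{simple} value rather than an arbitrary value, which rules out nested casts and lets Lemma~\ref{lem:simple-star} apply directly to $M'$.
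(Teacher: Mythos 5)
Your proof is correct and follows the same route the paper takes: the paper states Lemma~\ref{lem:simple-star} immediately beforehand precisely so that the $\mathtt{Vsimp}$ case is refuted and, in the $\mathtt{Vcast}$ case, the simple-value premise on $M'$ yields $A \not\equiv \Unk$. Your closing remark correctly identifies why the stratified definition gives the stronger conclusion than Lemma~\ref{lem:canonical-star}.
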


\begin{lemma}\label{lem:simple-base}
  If\  $\sval{M}$ and $M : \Gamma \vdash b$,
  then $M \equiv k$ for some $k : \rep{b}$
\end{lemma}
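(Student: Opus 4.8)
The plan is to argue by case analysis on the derivation of $\sval{M}$. There are five constructors for this judgment, and in four of them --- $\mathtt{S}\lambda$, $\mathtt{Spair}$, $\mathtt{Sinl}$, and $\mathtt{Sinr}$ --- the intrinsic type of $M$ is respectively a function type $A \to B$, a product type $A \times B$, or a sum type $A + B$. Since a base type $b$ is syntactically distinct from every function, product, and sum type, each of these four cases contradicts the hypothesis $M : \Gamma \vdash b$ and is therefore vacuous. In the Agda formalization this is immediate from dependent pattern matching: attempting to unify the type index $b$ with $A \to B$, $A \times B$, or $A + B$ fails, so Agda discharges the cases automatically.

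What remains is the $\mathtt{Sconst}$ case, where $M \equiv \$ k$ with $k : \rep{P}$ for some primitive-type witness $P : \mathbb{P}\,A$, and where intrinsic typing forces $A$ to be $b$, so $P : \mathbb{P}\,b$. I would then invert $P$: its only two constructors are $\key{PBase}$, which yields a witness of $\mathbb{P}\,b$ for a base type, and $\key{PFun}$, which yields a witness of $\mathbb{P}\,(b' \to A')$ --- a proof about a function type, which cannot have the shape $b$. Hence $P = \key{PBase}[b]$, and consequently $\rep{P} = \rep{\key{PBase}[b]} = \rep{b}$. Therefore $k : \rep{b}$ and $M \equiv \$ k$, which is exactly the claim (reading $k$ for the constant term $\$ k$ in the statement).

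There is no substantive obstacle here; the result is essentially a canonical-forms lemma at base type, the counterpart for simple values of Lemma~\ref{lem:eff-canonical-star} for type $\Unk$. The only two points needing attention are that the intrinsic type indices already rule out every non-constant simple value without further reasoning, and the short inversion of the primitive-type predicate on a base type that identifies the witness as $\key{PBase}$ so that the corresponding Agda type collapses to $\rep{b}$.
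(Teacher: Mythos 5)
Your proof is correct and matches the paper's (mechanized) argument: the lemma is proved by case analysis on $\sval{M}$, where the intrinsic type index $b$ immediately rules out the $\mathtt{S}\lambda$, $\mathtt{Spair}$, $\mathtt{Sinl}$, and $\mathtt{Sinr}$ cases, leaving only $\mathtt{Sconst}$, in which inversion of $\mathbb{P}\,b$ forces the witness to be $\key{PBase}[b]$ and hence $k : \rep{b}$. Nothing is missing.
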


\subsection{The \agda{ComposableCasts} Structure}
\label{sec:ComposableCasts}

The \agda{ComposableCasts} structure extends \agda{PreCastStruct}
with two more fields, one for applying a cast to a value (like
\agda{CastStruct}) and one for composing two casts into a single,
equivalent cast, for the purposes of achieving space efficiency. It
would seem reasonable to have this structure extend \agda{CastStruct}
instead of \agda{PreCastStruct}, but the problem is that the notion of
value is different. Here we use the definition of \agda{Value} from
Section~\ref{sec:EfficientParamCastAux}.

The two fields of the \agda{ComposableCasts} are:
\begin{description}
\item[$\agda{applyCast} : \forall \Gamma A B.\, (M : \Gamma \vdash A) \to \agda{SimpleValue}\app M\to (c : \Cast{A}{B}) \to \agda{Active}\app c \to \Gamma \vdash B$]
\item[$-\fatsemi- : \forall A B C.\, \Cast{A}{B} \to \Cast{B}{C} \to \Cast{A}{C}$]
\end{description}

\subsection{Reduction Semantics of $\SC(\CastOp)$}

This section is parameterized by a \agda{ComposableCasts} and
defines the Space-Efficient Parameterized Cast Calculus, written
$\SC(\CastOp)$. The syntax is the same as that of the Parameterized
Cast Calculus (Figure~\ref{fig:param-cc-terms}).

The frames of $\SC(\CastOp)$ and the $\itm{plug}$ function are
defined in Figure~\ref{fig:eff-frames}. The definitions are quite
similar to those of the Parameterized Cast Calculus
(Figure~\ref{fig:cc-values-frames}), with the notable omission of a
frame for casts, which are handled by special congruence rules.

\begin{figure}
  \fbox{$\Gamma \vdash \Frame{A }{ B}$}
  \begin{gather*}
    (\Box \app -):
    \inference{\Gamma \vdash A}
              {\Gamma \vdash \Frame{(A \to B) }{ B}}
    \qquad
    (- \app \Box):
    \inference{M : \Gamma \vdash (A \to B)}
              {\Gamma \vdash \Frame{A }{ B}} \val{M}
    \\[1ex]
    \key{if}\,\Box\,-\,- :
    \inference{\Gamma \vdash A & \Gamma \vdash A}
              {\Gamma \vdash \Frame{\Bool }{ A}}
    \\[1ex]
    \key{cons}\,{-}\,\Box :
    \inference{M : \Gamma \vdash A}
              {\Gamma \vdash \Frame{B }{ A \times B}}~\val{M}
   \qquad
    \key{cons}\,\Box\,- :
    \inference{\Gamma \vdash B}
              {\Gamma \vdash \Frame{A }{ A \times B}}
    \\[1ex]
    \pi_i\,\Box :
    \inference{}
              {\Gamma \vdash \Frame{A_1 \times A_2 }{ A_i}}
    \\[1ex]
    \key{inl}[B]\,\Box :
    \inference{}
              {\Gamma \vdash \Frame{A }{ A \times B}}
    \qquad
    \key{inr}[A]\,\Box :
    \inference{}
              {\Gamma \vdash \Frame{B }{ A \times B}}
    \\[1ex]
    \key{case}\,\Box\,-\,-:
    \inference{\Gamma \vdash \Frame{A }{ C} & \Gamma \vdash \Frame{B }{ C}}
              {\Gamma \vdash \Frame{A + B}{ C}}
  \end{gather*}
  \fbox{$\itm{plug} : \forall \Gamma A B.\, (\Gamma \vdash A) \to (\Gamma \vdash \Frame{A }{ B}) \to (\Gamma \vdash B)$}
  \begin{align*}
    \itm{plug}\app L \app (\Box \app M) &= (L \app M) \\
    \itm{plug}\app M \app (L \app \Box) &= (L \app M) \\
    \itm{plug}\app L \app (\key{if}\,\Box\, M\,N) &= \key{if}\,L\,M\,N\\
    \itm{plug}\app N (\key{cons}\,M\,\Box) &= \key{cons}\,M\,N\\
    \itm{plug}\app M (\key{cons}\,\Box\,N) &= \key{cons}\,M\,N\\
    \itm{plug}\app M (\pi_i\,\Box) &= \pi_i\,M \\
    \itm{plug}\app M (\key{inl}[B]\,\Box) &= \key{inl}[B]\,M \\
    \itm{plug}\app M (\key{inr}[A]\,\Box) &= \key{inr}[A]\,M \\
    \itm{plug}\app L (\key{case}\,\Box\,M\,N) &= \key{case}\,L\,M\,N 
  \end{align*}
  \caption{Frames of $\SC(\CastOp)$.}
  \label{fig:eff-frames}
\end{figure}

A space-efficient reduction semantics must compress adjacent casts to
prevent the growth of long sequences of them. To date, the way to
accomplish this in a reduction semantics has been to define evaluation
contexts in a subtle way, with two mutual
definitions~\citep{Herman:2006uq,Herman:2010aa,Siek:2010ya,Siek:2015ab}. Here
we take a different approach that we believe is simpler to understand
and that fits into using frames to control evaluation order. The idea
is to parameterize the reduction relation according to whether a
reduction rule can fire in any context or only in non-cast contexts,
that is, the immediately enclosing term cannot be a cast. We define
reduction context \agda{RedCtx} as follows, making it isomorphic to
the Booleans but with more specific names.\\[1ex]
\fbox{$\itm{ctx} : \agda{RedCtx}$}
\[
  \inference{}{\mathtt{Any} : \agda{RedCtx}}
  \qquad
  \inference{}{\mathtt{NonCast} : \agda{RedCtx}}
\]
So the reduction relation will take the form
\[
   \itm{ctx} \vdash M \longrightarrow N
\]

To prevent reducing under a sequence of two or more casts, the
congruence rule for casts, $\xi\mhyphen\mathtt{cast}$, requires a
non-cast context.  Further, the inner reduction must be OK with any
context (and not require a non-cast context).  The congruence rule for
all other language features, $\xi$, can fire in any context and the
inner reduction can require either any context or non-cast contexts.
The rule for composing two casts can only fire in a non-cast context,
which enforces an outside-in strategy for compressing sequences of
casts. For the same reason, the rule for applying a cast to a value
can only fire in a non-cast context.  All other reduction rules can
fire in any context.
The reduction semantics for $\SC(\CastOp)$ is defined in
Figure~\ref{fig:eff-param-cast-reduction}.

\begin{figure}[tbp]
  \fbox{$\itm{ctx} \vdash M \longrightarrow N$}
  \begin{gather}
    \inference{\itm{ctx} \vdash M \longrightarrow M'}
              {\mathtt{Any} \vdash \itm{plug}\app M \app F
                \longrightarrow \itm{plug}\app M' \app F}
      \tag{$\xi$}\label{eq:sc-xi} \\[1ex]
    \inference{\mathtt{Any} \vdash M \longrightarrow M'}
              {\mathtt{NonCast} \vdash \cast{M}{c} \longrightarrow \cast{M'}{c}}
      \tag{$\xi\mhyphen\mathtt{cast}$}\label{eq:sc-xi-cast}\\
    \inference{}
              {\mathtt{Any} \vdash \itm{plug}\app (\blame{\ell}{A}) \app F
                \longrightarrow \blame{\ell}{B}}
              \tag{$\xi\mhyphen\key{blame}$} \label{eq:sc-xi-blame}\\[1ex]
    \inference{}
              {\key{NonCast} \vdash \cast{(\blame{\ell}{})}{c}
                \longrightarrow \blame{\ell}{} }
              \tag{$\xi\mhyphen\key{cast}\mhyphen\key{blame}$}\label{eq:sc-xi-cast-blame} \\[1ex]
    \inference{}
      {\key{NonCast} \vdash \cast{S}{c} \longrightarrow \mathsf{applyCast}\app S \app c \app a}
      a : \act{c} \tag{$\key{cast}$} \label{eq:sc-cast} \\[1ex]
    \inference{}
              {\key{NonCast} \vdash \cast{\cast{M}{c}}{d} \longrightarrow
              \cast{M}{c \fatsemi d}}
              \tag{\key{compose}}\label{eq:sc-compose-casts}\\[1ex]
    \inference{}
      {\key{Any} \vdash \cast{V}{c} \app W \longrightarrow
      \cast{(V \app \cast{W}{\dom{c \app x}})}{\cod{c \app x}}}
      x : \agda{Cross}\app c
      \tag{$\key{fun \mhyphen cast}$} \label{eq:sc-fun-cast} \\[1ex]
    \inference{}
              {\mathtt{Any} \vdash \key{fst}\app (\cast{V}{c}) \longrightarrow
                \cast{(\key{fst}\app V)}{ \agda{fst}\app c \app x}}
              x : \agda{Cross}\app c
              \tag{$\key{fst \mhyphen cast}$}\label{eq:sc-fst-cast}\\[1ex]
    \inference{}
              {\mathtt{Any} \vdash \key{snd}\app (\cast{V}{c}) \longrightarrow
                \cast{(\key{snd}\app V)}{ \agda{snd}\app c \app x}}
              x : \agda{Cross}\app c
              \tag{$\key{snd \mhyphen cast}$}\label{eq:sc-snd-cast}\\
   \inference{}
             {\mathtt{Any} \vdash \key{case}\app (\cast{V}{c}) \app W_1 \app W_2 \longrightarrow
               \key{case}\app V \app W'_1 \app W'_2}
             \tag{$\key{case \mhyphen cast}$}\label{eq:sc-case-cast}\\
             \text{where }
             \begin{array}{l}
               x : \agda{Cross}\app c \\
               W'_1 = \lambda (\agda{rename}\app \key{S} \app W_1) \app (\cast{\key{Z}}{ \agda{inl}\app c\app x})\\
               W'_2 = \lambda (\agda{rename}\app \key{S} \app W_2) \app (\cast{\key{Z}}{ \agda{inr}\app c\app x})
             \end{array} \notag \\[1ex]
    \inference{}
              {\mathtt{Any} \vdash (\lambda M) \app V \longrightarrow M[V]}
              \tag{$\beta$}\label{eq:sc-beta}  \\[1ex]
    \inference{}
              {\mathtt{Any} \vdash \key{if}\app \$\key{true} \app M \app N \longrightarrow M}
              \tag{$\beta\mhyphen\key{true}$}\label{eq:sc-beta-true}  \\[1ex]
    \inference{}
              {\mathtt{Any} \vdash \key{if}\app \$\key{false} \app M \app N \longrightarrow N}
              \tag{$\beta\mhyphen\key{false}$}\label{eq:sc-beta-false}  \\[1ex]
    \inference{}
              {\mathtt{Any} \vdash \key{fst}\app (\key{cons}\app V \app W) \longrightarrow V}
              \tag{$\beta\mhyphen\key{fst}$}\label{eq:sc-beta-fst}  \\[1ex]
    \inference{}
              {\mathtt{Any} \vdash \key{snd}\app (\key{cons}\app V \app W) \longrightarrow W}
              \tag{$\beta\mhyphen\key{snd}$}\label{eq:sc-beta-snd}  \\[1ex]
    \inference{}
              {\mathtt{Any} \vdash \key{case}\app (\key{inl}\app V)\app L \app M \longrightarrow
                L \app V}
              \tag{$\beta\mhyphen\key{caseL}$}\label{eq:sc-beta-caseL}\\[1ex]
    \inference{}
              {\mathtt{Any} \vdash \key{case}\app (\key{inr}\app V)\app L \app M \longrightarrow
                M \app V} 
              \tag{$\beta\mhyphen\key{caseR}$}\label{eq:sc-beta-caseR}\\[1ex]
    \inference{}
              {\mathtt{Any} \vdash k \app k' \longrightarrow
              \llbracket k \rrbracket \app \llbracket k' \rrbracket}
              \tag{$\delta$}\label{eq:sc-delta}
  \end{gather}
  
  \caption{Reduction for the Space-Efficient Parameterized Cast Calculus
     $\SC(\CastOp)$.}
\label{fig:eff-param-cast-reduction}
\end{figure}

\subsection{Type Safety of $\SC(\CastOp)$}

Our terms are intrinsically typed, so the fact that Agda checked the
definition in Figure~\ref{fig:eff-param-cast-reduction} gives us
Preservation.

\begin{theorem}[Preservation]
  If $\Gamma \vdash M : A$  and $M \longrightarrow M'$,
  then $\Gamma \vdash M' : A$.
\end{theorem}

Next we prove Progress.
First we define the following predicate for identifying when a term is
a cast and prove a lemma about switching from \agda{NonCast} to
\agda{Any} when the redex is not a cast.

\fbox{$\agda{IsCast}\app M$}
\[
   \inference{}{\agda{IsCast}\app (\cast{M}{c})}
\]

\begin{lemma}\label{lem:switch-back}

  If $\agda{NonCast} \vdash M \longrightarrow M'$,
  then $\agda{IsCast}\app M$.
\end{lemma}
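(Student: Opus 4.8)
The plan is to prove the lemma by a direct case analysis on the derivation of $\agda{NonCast} \vdash M \longrightarrow M'$, exploiting the fact that the reduction relation of $\SC(\CastOp)$ is indexed by its context and that almost every reduction rule is forced to conclude with the $\mathtt{Any}$ tag. First I would observe that, among the rules in Figure~\ref{fig:eff-param-cast-reduction}, only four can produce a conclusion of the shape $\mathtt{NonCast} \vdash \cdot \longrightarrow \cdot$: the cast congruence rule \eqref{eq:sc-xi-cast}, the cast--blame rule \eqref{eq:sc-xi-cast-blame}, the cast application rule \eqref{eq:sc-cast}, and the cast composition rule \eqref{eq:sc-compose-casts}. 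All the other rules --- the general congruence rule \eqref{eq:sc-xi}, the blame-propagation rule \eqref{eq:sc-xi-blame}, the cast-elimination rules \eqref{eq:sc-fun-cast}, \eqref{eq:sc-fst-cast}, \eqref{eq:sc-snd-cast} and \eqref{eq:sc-case-cast}, the $\beta$ rules, and $\delta$ --- conclude with the $\mathtt{Any}$ tag, so none of them can be the last rule used in a derivation of $\agda{NonCast} \vdash M \longrightarrow M'$; in the Agda development these show up as absurd cases.

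The remaining step is to read off the shape of $M$ in each of the four relevant rules. In every one of them the left-hand side of $\longrightarrow$ is syntactically a cast application: it is $\cast{M}{c}$ in \eqref{eq:sc-xi-cast}, $\cast{(\blame{\ell}{})}{c}$ in \eqref{eq:sc-xi-cast-blame}, $\cast{S}{c}$ in \eqref{eq:sc-cast}, and $\cast{\cast{M}{c}}{d}$ in \eqref{eq:sc-compose-casts}. Hence the subject term matches the pattern $\cast{\cdot}{\cdot}$, and $\agda{IsCast}\app M$ follows immediately from the sole constructor of the $\agda{IsCast}$ predicate. Note that the premise of \eqref{eq:sc-xi-cast} is itself a reduction, but it plays no part in the argument, since the conclusion already displays a cast as its subject.

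There is essentially no obstacle here: the lemma is a syntactic inversion that the context-indexing discipline of the reduction relation makes almost automatic, and the only care needed is in confirming that no $\mathtt{Any}$-tagged rule can ever reappear with a $\mathtt{NonCast}$ conclusion. The payoff comes later, in the proof of Progress for $\SC(\CastOp)$: this lemma is exactly what licenses switching from reasoning about a $\mathtt{NonCast}$ context back to an $\mathtt{Any}$ context whenever the redex in question is not itself a cast.
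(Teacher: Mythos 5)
Your proof is correct and is exactly the argument the paper's Agda development carries out (the paper states the lemma without a prose proof): a case analysis on the reduction derivation, observing that only rules \eqref{eq:sc-xi-cast}, \eqref{eq:sc-xi-cast-blame}, \eqref{eq:sc-cast}, and \eqref{eq:sc-compose-casts} conclude in a $\mathtt{NonCast}$ context, and each has a cast application as its subject.
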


\begin{theorem}[Progress]\label{thm:sc-progress}
  If $\emptyset \vdash M : A$, then 
  \begin{enumerate}
  \item $\itm{ctx} \vdash M \longrightarrow M'$ for some $M'$ and
    $\itm{ctx}$,
  \item $\val{M}$, or
  \item $M \equiv \blame{\ell}{}$.
  \end{enumerate}
\end{theorem}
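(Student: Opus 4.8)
The plan is to prove Progress by structural induction on the intrinsically-typed term $M$ — equivalently, by cases on its outermost constructor, appealing to the induction hypotheses for the immediate subterms. Since $M$ is closed the variable case is vacuous, and the cases $\$k$, $\lambda M$, and $\blame{\ell}{}$ are immediate (the first two are simple values, hence values by $\mathtt{Vsimp}$; the last is alternative~3). Recall that the conclusion only asks for a step in \emph{some} context $\itm{ctx}$, so we are free to produce steps in either $\mathtt{Any}$ or $\mathtt{NonCast}$. Relative to the proof of Theorem~\ref{thm:cc-progress} for $\CC(\CastOp)$, the genuinely new ingredient is the reduction-context parameter: when an induction hypothesis reports that a subterm steps only in a $\mathtt{NonCast}$ context, lifting that step requires extra care.

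For the elimination forms $L\app M$, $\key{if}\,L\,M\,N$, $\pi_i\,M$, $\key{case}\,L\,M\,N$ and the introduction forms $\key{cons}\,M\,N$, $\key{inl}[B]\,M$, $\key{inr}[A]\,M$, I would proceed as in the $\CC$ proof. If an induction hypothesis says a subterm steps (in any context) or is $\blame{\ell}{}$, then rule~\eqref{eq:sc-xi} — which is polymorphic in the inner context — or rule~\eqref{eq:sc-xi-blame} fires in $\mathtt{Any}$ with the appropriate frame from Figure~\ref{fig:eff-frames}. If all relevant subterms are values, I case on the canonical form of the eliminee. First I would record the canonical-forms facts: a value of type $A_1 \to A_2$ is a $\lambda$, a function constant, or $\cast{S}{c}$ with $\inert{c}$, and analogously for $\times$, $+$, and base types. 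These follow from the $\sval{}/\val{}$ definition of Section~\ref{sec:EfficientParamCastAux} together with the \agda{baseNotInert} field — which rules out cast-wrapped values at base type, so that, via Lemma~\ref{lem:simple-base}, an argument at base type must be a literal — and the fields $\agda{InertCross}{\to}$, $\agda{InertCross}{\times}$, $\agda{InertCross}{+}$, which promote an inert cast at a composite target into a cross cast. The $\lambda$, literal/literal, $\key{cons}$, and $\key{inl}/\key{inr}$ subcases then fire the usual $\beta$ and $\delta$ rules, and the cast-wrapped subcases fire \eqref{eq:sc-fun-cast}, \eqref{eq:sc-fst-cast}, \eqref{eq:sc-snd-cast}, or \eqref{eq:sc-case-cast}, all in $\mathtt{Any}$.

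The crux, and the step I expect to be the main obstacle, is the cast case $\cast{M}{c}$. Apply the induction hypothesis to $M$. If $M \equiv \blame{\ell}{}$, rule~\eqref{eq:sc-xi-cast-blame} steps $\cast{M}{c}$ in $\mathtt{NonCast}$. If $M$ steps in the $\mathtt{Any}$ context, rule~\eqref{eq:sc-xi-cast} steps $\cast{M}{c} \longrightarrow \cast{M'}{c}$ in $\mathtt{NonCast}$. If instead $M$ steps only in a $\mathtt{NonCast}$ context, then Lemma~\ref{lem:switch-back} gives $\agda{IsCast}\app M$, i.e.\ $M \equiv \cast{M''}{d}$, and rule~\eqref{eq:sc-compose-casts} steps $\cast{\cast{M''}{d}}{c} \longrightarrow \cast{M''}{d \fatsemi c}$. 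Finally, if $M$ is a value $V$, case on the $\val{V}$ derivation: when $V$ is a simple value $S$ (rule $\mathtt{Vsimp}$), invoke \agda{ActiveOrInert} on $c$ — if $c$ is active, rule~\eqref{eq:sc-cast} steps to $\agda{applyCast}\app S\app c\app a$ in $\mathtt{NonCast}$; if $c$ is inert, $\cast{S}{c}$ is itself a value by $\mathtt{Vcast}$ (alternative~2). When $V \equiv \cast{S}{d}$ with $\inert{d}$ (rule $\mathtt{Vcast}$), the term $\cast{M}{c} \equiv \cast{\cast{S}{d}}{c}$ is not a value — the $\mathtt{Vcast}$ invariant forbids a cast directly beneath a cast — so rule~\eqref{eq:sc-compose-casts} steps it to $\cast{S}{d \fatsemi c}$ in $\mathtt{NonCast}$; this is precisely where the outside-in compression discipline and the necessity of the compose operator $d \fatsemi c$ of \agda{ComposableCasts} come into play. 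Type preservation is obtained for free from the intrinsic typing, and all of the routine bookkeeping above is checked by Agda.
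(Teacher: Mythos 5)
Your proposal is correct and follows essentially the same route as the paper: structural induction with the non-cast cases handled as in Theorem~\ref{thm:cc-progress}, and the cast case split via Lemma~\ref{lem:switch-back} (to reach rule \eqref{eq:sc-compose-casts} when the subterm only steps in a $\mathtt{NonCast}$ context), rule \eqref{eq:sc-xi-cast}, rule \eqref{eq:sc-xi-cast-blame}, and \agda{ActiveOrInert}. The only (harmless, arguably beneficial) difference is that in the value subcase you case on $\mathtt{Vsimp}$ versus $\mathtt{Vcast}$ \emph{before} invoking \agda{ActiveOrInert}, which makes explicit that \agda{applyCast} is only ever applied to a simple value and that a cast-wrapped value under another cast is discharged by \eqref{eq:sc-compose-casts} regardless of whether the outer cast is active or inert — a point the paper's sketch passes over more quickly.
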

\begin{proof}
  The proof is quite similar to that of Theorem~\ref{thm:cc-progress}
  except in the case for casts, so we explain just that case here.
  \begin{description}
  \item[Case $\cast{M}{c}$]
    The induction hypothesis for $M$ yields three sub cases.
    \begin{description}
    \item[Subcase $\itm{ctx} \vdash M \longrightarrow M'$.]
      Suppose $\itm{ctx} = \key{Any}$.
      By rule \eqref{eq:sc-xi-cast} we conclude that
      \[
      \key{Any} \vdash \cast{M}{c} \longrightarrow \cast{M'}{c}
      \]
      On the other hand, suppose $\itm{ctx} = \key{NonCast}$.
      By Lemma~\ref{lem:switch-back}, $\agda{IsCast}\app M$,
      so we have $M \equiv \cast{M_1}{d}$.
      By rule \eqref{eq:sc-compose-casts} we conclude that
      \[
      \key{NonCast} \vdash \cast{\cast{M_1}{d}}{c} \longrightarrow
        \cast{M_1}{d \fatsemi c}
      \]
    \item[Subcase $M \equiv \blame{\ell}{}$.]
      By rule \eqref{eq:sc-xi-cast-blame} we conclude that
      \[
      \key{NonCast}
      \vdash \cast{(\blame{\ell}{})}{c} \longrightarrow \blame{\ell}{}
      \]
    \item[Subcase $\val{M}$.]
      Here we use the \agda{ActiveOrInert} field of the
      \agda{PreCastStruct} on the cast $c$.
      Suppose $c$ is active, so we have $a : \act{c}$.
      By rule \eqref{eq:sc-cast}, using $\val{M}$, we conclude that
      \[
      \key{NonCast} \vdash \cast{M}{c}
       \longrightarrow \agda{applyCast} \app M \app c \app a
      \]
      Suppose $c$ is inert. From $\val{M}$ we know
      that $M$ is either a simple value or a cast.
      If $M$ is a simple value, then we conclude
      that $\val{\cast{M}{c}}$.
      Otherwise, $M \equiv \cast{M_1}{d}$ and
      we conclude by rule \eqref{eq:sc-compose-casts}.
      \[
      \key{NonCast} \vdash \cast{\cast{M_1}{d}}{c}
        \longrightarrow \cast{M_1}{d \fatsemi c}
      \]
    \end{description}
    
  \end{description}
  
\end{proof}

\subsection{Space Efficiency of $\SC(\CastOp)$}
\label{sec:space-CC}


We follow the space efficiency proof of \citet{Herman:2010aa}, but
refactor it into two parts: 1) generic lemmas about the reduction of
$\SC(\CastOp)$ that appear in this section and 2) lemmas about
specific casts and coercions, which appear in
Section~\ref{sec:space-calculi}. We clarify a misleading statement by
\citet{Herman:2010aa} and fill in details needed to mechanize the
proof in Agda.

The theorem we aim to prove is that, during execution, the program's
size is bounded above by the program's idealized size multiplied by a
constant. The idealized size does not include any of the casts. The
following are excerpts from the definitions of
\agda{real\mhyphen{}size} and \idealsize{}.
\begin{align*}
  \cdots & \\
  \agda{real\mhyphen{}size}(\cast{M}{c}) &= \agda{size}(c) + \agda{real\mhyphen{}size}(M)\\
  \cdots & \\
  \idealsize(\cast{M}{c}) &= \idealsize(M )
\end{align*}
We shall prove that the size of every cast is bounded above by a
constant, so we can simplify some of the technical development by
using the following alternative definition of \agda{size} that uses
$1$ as the size of each cast.
\begin{align*}
  \agda{size}(\cast{M}{c}) &= 1 + \agda{size}(M) \\
\end{align*}

Regarding reduction of $\SC(\CastOp)$, the key property is that the
reduction rules prevent the accumulation of long sequences of adjacent
casts. In their proof, \citet{Herman:2010aa} state that there is never
a coercion adjacent to another coercion.
\begin{quote}
  ``During evaluation, the [\textsc{E-CCast}] rule prevents nesting of
  adjacent coercions in any term in the evaluation context, redex, or
  store. Thus the number of coercions in the program state is
  proportional to the size of the program state.''
\end{quote}
Of course, for the rule [\textsc{E-CCast}] to apply in the first
place, there must be two adjacent coercions. So perhaps we could amend
the statement of \citet{Herman:2010aa} to instead say that there are
never more than two.  However, even that would be technically
incorrect. Consider the following example that begins with three
separated coercions but a $\beta$-reduction brings them together.
\[
\cast{((\lambda\; \cast{` 0}{\Int ?}) \app (\cast{\key{1}}{\Int!}))}{\Int!}
\longrightarrow
  \cast{\cast{\cast{\key{\$1}}{\Int!}}{\Int ?^\ell}}{\Int!}
\]
This turns out to be the worst-case scenario. In the following we
prove that the [\textsc{E-CCast}] rule, i.e. the
\eqref{eq:sc-compose-casts} rule in this article, together with rules
about the order of evaluation, prevent nesting of more than three
adjacent coercions.

We define the Size Predicate on terms in Figure~\ref{fig:sizeok} which
only includes terms with no more than three adjacent coercions. The
judgment is written $\sizeok{n}{b}{M}$ where $M$ is a term, $n$ counts
the number of cast application expressions at the top of the term, and
$b$ indicates with \itm{true} or \itm{false} whether this term is in a
delayed context, that is, inside a $\lambda$-abstraction or a branch
of a conditional expression.  The above example with three coercions
satisfies the predicate when outside of a delayed context.
\[
  \sizeok{3}{\itm{false}}{ \cast{\cast{\cast{\key{\$1}}{\Int!}}{\Int?^\ell}}{\Int!} }
\]
The rule (\textsc{SCast1}) for cast application expressions adds one to
the count of adjacent casts and makes sure that the count does not
exceed three.

For terms in a delayed context, the rule (\textsc{SCast2}) restricts
the number of adjacent casts to two instead of three. To see why,
consider the next example in which there are three adjacent casts
inside the $\lambda$-abstraction and a $\beta$-reduction yields a term
with four adjacent casts.
\[
  \cast{
    ((\lambda\;
    \cast{
      \cast{
        \cast{\$ 8}{\Int!}
      }{ \Int?^{\ell_1} }
    }{ \Int! })
    \app
    \key{1})
  }{ \Int?^{\ell_2} }  
  \longrightarrow
  \cast{
    \cast{
      \cast{
        \cast{\$ 8}{\Int!}
      }{ \Int?^{\ell_1} }
    }{ \Int! }
  }{ \Int?^{\ell_2} }  
\]

The rule (\textsc{SVar}) starts the count at one even though a
variable is obviously not a cast application. The reason is that a
value substituted for a variable may have one cast at the top.  If we
did not count variables as one, then a variable could be surrounded by
two casts inside of a $\lambda$-abstraction which could reduce to a
term with four adjacent casts as in the following example.
\[
\cast{((\lambda\; \cast{ \cast{` 0}{\Int ?^{\ell_1}} }{\Int!})
  \app
  (\cast{\key{1}}{\Int!}))}{\Int?^{\ell_2}}
\longrightarrow
\cast{
  \cast{
    \cast{ \cast{\key{1}}{\Int!} }{\Int ?^{\ell_1}} }{\Int!}
   }{\Int?^{\ell_2}}
\]

\begin{figure}[tbp]
  \fbox{$\sizeok{n}{b}{M}$}
  \begin{gather*}
    \inference[(\textsc{SCast1})]{\sizeok{n}{\itm{false}}{M} & n \leq 2}
              {\sizeok{n + 1}{\itm{false}}{\cast{M}{c}}}
    \qquad
    \inference[(\textsc{SCast2})]{\sizeok{n}{\itm{true}}{M} & n \leq 1}
              {\sizeok{n + 1}{\itm{true}}{\cast{M}{c}}}
    \\[1em]
    \inference[(\textsc{SVar})]{}
              {\sizeok{1}{b}{` x}}
    \qquad
    \inference{\sizeok{n}{\itm{true}}{N}}
              {\sizeok{0}{b}{\lambda N}}
    \qquad
    \inference{\sizeok{n}{b}{L} & \sizeok{m}{b}{M}}
              {\sizeok{0}{b}{L \app M}} \\[1em]
    \inference{}
              {\sizeok{0}{b}{\$ k}}
    \qquad
    \inference{\sizeok{n}{b}{L} & \sizeok{m}{\itm{true}}{M}
              & \sizeok{k}{\itm{true}}{N}}
              {\sizeok{0}{b}{\key{if}_\ell}\,L\, M\, N}\\[1ex]
    \inference{\sizeok{n}{b}{M} & \sizeok{m}{b}{N}}
              {\sizeok{0}{b}{\key{cons}\,M\,N}} \qquad
    \inference{\sizeok{n}{b}{M}}
              {\sizeok{0}{b}{\key{fst}\,M}} \qquad
    \inference{\sizeok{n}{b}{M}}
              {\sizeok{0}{b}{\key{snd}\,M}} \\[1em]
    \inference{\sizeok{n}{b}{M}}
              {\sizeok{0}{b}{\key{inl}[B]\,M}} \qquad
    \inference{\sizeok{n}{b}{M}}
              {\sizeok{0}{b}{\key{inr}[B]\,M}} \\[1em]
    \inference{\sizeok{n}{b}{L} & \sizeok{m}{\itm{true}}{M} & \sizeok{k}{\itm{true}}{N}}
              {\sizeok{0}{b}{\key{case}\,L\,M\,N}} \qquad
    \inference{}
              {\sizeok{0}{b}{\key{blame}\,\ell}}
  \end{gather*}
  \caption{The Size Predicate that limit the number of adjacent casts.}
  \label{fig:sizeok}
\end{figure}

We turn to the proof of the space consumption theorem, starting with
the necessary lemmas.

The Size Predicate guarantees that the number of adjacent casts is
less-than or equal to $3$.

\begin{lemma}[Maximum of 3 Adjacent Casts]
  \label{lem:OK3}
  If\ \ $\sizeok{n}{b}{M}$ then $n \le 3$.
\end{lemma}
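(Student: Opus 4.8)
The plan is to prove this by a single case analysis (inversion) on the last rule used in the derivation of $\sizeok{n}{b}{M}$; no appeal to an induction hypothesis is actually needed, because each rule of Figure~\ref{fig:sizeok} already constrains the count $n$ in its own conclusion tightly enough. First I would observe that the rules fall into three groups according to the count appearing in their conclusion: (i) the two cast rules (\textsc{SCast1}) and (\textsc{SCast2}), whose conclusions have count $n'+1$ for a subderivation count $n'$; (ii) the variable rule (\textsc{SVar}), whose conclusion has count $1$; and (iii) every other rule — those for $\lambda$, application, constants, \key{if}, \key{cons}, $\pi_i$, \key{inl}, \key{inr}, \key{case}, and \key{blame} — whose conclusion has count $0$.

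For group (iii) we are immediately done since $0 \le 3$, and for group (ii) we are done since $1 \le 3$. For group (i) the bound is read straight off the side conditions: rule (\textsc{SCast1}) carries the premise $n' \le 2$, so its conclusion count satisfies $n'+1 \le 3$; rule (\textsc{SCast2}) carries the stronger premise $n' \le 1$, so its conclusion count satisfies $n'+1 \le 2 \le 3$. In every case $n \le 3$, which completes the argument.

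I expect no genuine obstacle here: the constant $3$ is deliberately engineered into the side conditions of the cast rules, so the statement is in effect an inversion lemma rather than a real induction. The one point worth flagging in the write-up is precisely that — one should resist recursing on the subterm $M$ in the (\textsc{SCast1}) case, since the subderivation's count is already bounded locally by the side condition and the shape of $M$ is irrelevant to this bound. In the Agda development this amounts to a short pattern match on the \agda{SizeOK} derivation, with the numeric side conditions directly supplying the needed inequalities.
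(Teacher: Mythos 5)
Your proof is correct and is essentially the argument the paper relies on: the paper states Lemma~\ref{lem:OK3} without a written proof (deferring to the Agda mechanization), and the intended justification is exactly your inversion on the last rule of Figure~\ref{fig:sizeok}, where the side conditions $n \le 2$ in (\textsc{SCast1}) and $n \le 1$ in (\textsc{SCast2}) directly yield the bound and all other rules conclude with count $0$ or $1$. Your observation that no induction hypothesis is needed is also accurate.
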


The Size Predicate guarantees that a term's size is bounded above by
its ideal size multiplied by a constant, plus 3.

\begin{lemma}[Size Predicate and Ideal Size]\ \\
  \label{lem:size-OK}
  If\ \ $\sizeok{n}{b}{M}$ then $\agda{size}(M) \leq 10 \cdot \idealsize(M) + 3$.
\end{lemma}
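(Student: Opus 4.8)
The constant $3$ in the statement is not preserved by the cast rules (\textsc{SCast1}) and (\textsc{SCast2}) on its own, so the plan is to first prove a strengthened claim in which the slack is tracked by the counter:
\[
  \text{if } \sizeok{n}{b}{M} \text{ then } \agda{size}(M) \le 10 \cdot \idealsize(M) + n,
\]
by induction on the derivation of $\sizeok{n}{b}{M}$ (Figure~\ref{fig:sizeok}), and then to recover the lemma by specializing this with $n \le 3$ (Lemma~\ref{lem:OK3}).

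The cast cases will be immediate. Since $\agda{size}(\cast{M}{c}) = 1 + \agda{size}(M)$ while $\idealsize(\cast{M}{c}) = \idealsize(M)$, the extra $1$ is exactly offset by the increment of the counter from the premise to the conclusion, so the induction hypothesis $\agda{size}(M) \le 10\cdot\idealsize(M) + n$ gives $\agda{size}(\cast{M}{c}) \le 10\cdot\idealsize(\cast{M}{c}) + (n+1)$, which is what both (\textsc{SCast1}) and (\textsc{SCast2}) require. The rule (\textsc{SVar}) is trivial, since $\agda{size}$ and $\idealsize$ of a variable are both $1$ and $1 \le 10 + 1$.

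For every remaining rule the counter on the conclusion is $0$, the functions $\agda{size}$ and $\idealsize$ agree on the constructor node (each assigning it $1$ plus the corresponding sizes of the immediate subterms), and no term former has more than three immediate subterms $M_1, \dots, M_k$ with $k \le 3$. Combining the induction hypotheses $\agda{size}(M_i) \le 10\cdot\idealsize(M_i) + n_i$ with $n_i \le 3$ (Lemma~\ref{lem:OK3}),
\[
  \agda{size}(M) = 1 + \sum_{i} \agda{size}(M_i)
  \;\le\; \Bigl(1 + \sum_{i} n_i\Bigr) + \sum_{i} 10\cdot\idealsize(M_i)
  \;\le\; 10 \cdot \idealsize(M),
\]
because $1 + \sum_{i} n_i \le 1 + 3k \le 10$ while $10 \cdot \idealsize(M) = 10 + \sum_{i} 10\cdot\idealsize(M_i)$. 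Finally, applying the strengthened claim and bounding $n \le 3$ gives $\agda{size}(M) \le 10 \cdot \idealsize(M) + 3$.

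I expect the main obstacle to be choosing the coefficient correctly and recognizing that it is forced: it must satisfy $1 + (\text{maximum arity}) \times (\text{maximum number of adjacent casts}) = 1 + 3 \times 3 = 10$, and this is also what dictates strengthening the statement to carry the counter $n$ through the induction rather than attempting to prove the $+3$ form directly. A secondary, routine point is discharging the defining equations of $\agda{size}$ and $\idealsize$ (only sketched in the text) — specifically that the two functions coincide on all non-cast constructors, assign each node size $1$, and that every constructor has at most three subterms.
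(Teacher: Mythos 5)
Your proof is correct and is the argument the paper intends (the lemma is stated without a written proof, deferred to the Agda development): the strengthening to $\agda{size}(M) \leq 10 \cdot \idealsize(M) + n$, proved by induction on the derivation and using Lemma~\ref{lem:OK3} to bound the subterm counters in the constructor cases, is exactly the right move, and your cast, variable, and constructor cases all check out. One minor inaccuracy in your closing remark: the constant $10$ is not actually forced by $1 + 3 \times 3$, because in delayed contexts rule (\textsc{SCast2}) caps the counter at $2$, so the worst constructor case is $\key{if}/\key{case}$ with $1 + 3 + 2 + 2 = 8$; but your proof only needs $1 + \sum_i n_i \leq 10$, which holds, so nothing breaks.
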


\noindent The compilation of source programs (GTCL) to the cast
calculus produces terms that satisfy the Size Predicate.

\begin{lemma}[Cast Insertion Size]\ \\
  \label{lem:compile-efficient}
  If $M : \Gamma \vdash_G A$ then
  $\sizeok{n}{b}{\compile{}{M}}$ for some $n \leq 1$.
\end{lemma}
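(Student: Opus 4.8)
The plan is to prove a slightly strengthened statement by structural induction on the typing derivation $M : \Gamma \vdash_G A$: for \emph{every} $b$, there is some $n \le 1$ (the number of cast applications stacked at the head of the term, as counted by the Size Predicate) with $\sizeok{n}{b}{\compile{}{M}}$ --- in fact $n = 1$ when $M$ is a variable $` x$ and $n = 0$ otherwise. Quantifying over $b$ in the induction hypothesis is essential, because several equations of $\compile{}{-}$ deposit casts inside $\lambda$-abstractions and inside the branches of $\key{if}$ and $\key{case}$, which the Size Predicate treats as delayed (\itm{true}) contexts, so the recursive calls on those subterms must be instantiated at $b = \itm{true}$.

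The observation that drives every case is this: each defining equation of $\compile{}{-}$ wraps a subterm in \emph{at most one} cast, and a compiled term is never itself a cast application --- its head constructor is always a constant, a variable, a $\lambda$, an application, an $\key{if}$, a $\key{cons}$, a projection, an $\key{inl}$/$\key{inr}$, or a $\key{case}$. Hence for any subterm $N$ the induction hypothesis gives $\compile{}{N}$ head count at most $1$, so $\cast{\compile{}{N}}{c}$ has head count at most $2$; that is exactly the side condition $n \le 2$ of (\textsc{SCast1}) at a non-delayed occurrence, and at a delayed occurrence the side condition tightens to $n \le 1$ in (\textsc{SCast2}), which still holds since $\compile{}{N}$ has head count at most $1$. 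So no window of adjacent casts in $\compile{}{M}$ ever exceeds length $2$, well within the global limit $3$, and in particular the head count at the top of $\compile{}{M}$ is $\le 1$.

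The rest is bookkeeping. Constants and variables discharge directly via the $\$ k$ rule (head count $0$) and (\textsc{SVar}) (head count $1$). The cases for $\lambda$, $\key{cons}$, $\pi_i$, $\key{inl}$, and $\key{inr}$ apply the matching structural rule at head count $0$ and feed in the induction hypotheses for the subterms --- at $b = \itm{true}$ for a $\lambda$-body, at the ambient $b$ elsewhere --- with (\textsc{SCast1})/(\textsc{SCast2}) absorbing the inserted projection cast where one is present. The application case builds the application node at head count $0$ from $\cast{\compile{}{L}}{c}$ and $\cast{\compile{}{M}}{d}$, each handled as above. The $\key{if}$ and $\key{case}$ cases are the delicate point: their second and third arguments are cast-wrapped and must type in a \itm{true} context (and, for the $\CC$-targeting $\compile{}{-}$ used here, the $\key{case}$ branches additionally carry an enclosing $\lambda$ whose body is a compiled term over an extended context, so the induction hypothesis still applies), so one must confirm that (\textsc{SCast2}) with bound $1$ applies to those branch casts --- which it does precisely because the compiled branch bodies have head count at most $1$. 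I expect this $\key{if}$/$\key{case}$ bookkeeping, together with keeping straight which arguments sit in delayed contexts, to be the main (and essentially the only) obstacle: there is no conceptual difficulty once the invariant ``at most one inserted cast per node, never stacked on a head cast'' is in place, since it makes all the Size Predicate side conditions automatic.
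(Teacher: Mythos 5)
Your proposal is correct and is essentially the proof (the paper states Lemma~\ref{lem:compile-efficient} without a prose argument, deferring to the Agda development, where the proof is exactly this induction on the typing derivation). The two points you isolate --- generalizing the induction hypothesis over $b$ so that the $\lambda$-, $\key{if}$-, and $\key{case}$-branch subterms can be instantiated in a delayed context, and the invariant that compilation never places a cast at the head of a term so every inserted cast sits on a subterm of head count at most $1$ --- are precisely what makes the (\textsc{SCast1})/(\textsc{SCast2}) side conditions go through.
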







 

\noindent Reduction preserves the Size Predicate. The proof of this
lemma involves a number of technical lemmas about substitution,
evaluation contexts, and values, which can be found in the Agda
formalization.

\begin{lemma}[Size Preservation]
  \label{lem:preserve-ok}
  If\ $M : \Gamma \vdash A$ and $M' : \Gamma \vdash A$ and
  $\sizeok{n}{\itm{false}}{M}$ and $M \longrightarrow M'$, then
  $\sizeok{m}{\itm{false}}{M'}$ for some $m$.
\end{lemma}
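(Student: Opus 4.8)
The plan is to prove a generalization of the lemma, stated for an arbitrary reduction context, by induction on the derivation of $\itm{ctx} \vdash M \longrightarrow M'$ (I read the bare ``$M \longrightarrow M'$'' in the statement as ``$\itm{ctx}\vdash M\longrightarrow M'$ for some $\itm{ctx}$'', which is what Theorem~\ref{thm:sc-progress} hands us at the top level). The generalized claim I would prove is: for all $\itm{ctx}$, if $\sizeok{n}{\itm{false}}{M}$ and $\itm{ctx}\vdash M\longrightarrow M'$, then $\sizeok{m}{\itm{false}}{M'}$ for some $m$, and moreover $m\le 2$ whenever $\itm{ctx}=\mathtt{Any}$. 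The extra clause is exactly what makes the induction go through: in the cast-congruence rule \eqref{eq:sc-xi-cast}, reassembling $\cast{M'}{c}$ from $\cast{M}{c}$ via (\textsc{SCast1}) needs the reduct $M'$ of the nested step $\mathtt{Any}\vdash M\longrightarrow M'$ to have at most two adjacent casts, which inverting (\textsc{SCast1}) on the redex alone does not give. I also keep the delayed-context flag fixed at $\itm{false}$: reduction never fires under a $\lambda$ or in a conditional/$\key{case}$ branch (there are no such frames in Figure~\ref{fig:eff-frames}, and no cast frame either), so every active redex and its reduct occupy a flag-$\itm{false}$ position, while the flag $\itm{true}$ is needed only inside the auxiliary substitution/renaming lemmas below.

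First I would establish four auxiliary facts. (i) Renaming preserves the Size Predicate verbatim, by a structural induction like Lemma~\ref{lem:rename}; this is needed for the $\key{case}$-cast rules, which rename branch bodies. (ii) Weakening: $\sizeok{n}{\itm{true}}{M}\Rightarrow\sizeok{n}{\itm{false}}{M}$, again structural, using that (\textsc{SCast2}) is stricter than (\textsc{SCast1}). (iii) A substitution lemma analogous to Proposition~\ref{lem:subst}: if $\sizeok{n}{b}{M}$ and $\sigma$ maps every variable to a term of cast-count at most $1$, then $\sizeok{m}{b}{\subst{\sigma}{M}}$ for some $m\le\max(n,1)$ --- the point being that (\textsc{SVar}) already reserves a count of $1$ for a variable occurrence, so plugging in a value (which, by the definition of $\val{\cdot}$ together with Lemma~\ref{lem:simple-star}, has cast-count $0$ or $1$) does not overflow (\textsc{SCast1})/(\textsc{SCast2}); one must check that $\ext{\cdot}$ and $\exts{\cdot}$ preserve the invariant under a binder, where the flag flips to $\itm{true}$. (iv) The observation that $\agda{applyCast}$ applied to a \emph{simple} value returns a term satisfying the Size Predicate with a fixed bound on its cast-count. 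This last fact cannot be discharged from the $\SC(\CastOp)$ rules alone, so it becomes a requirement on the cast structure, to be verified for $\lambda\textup{\textsf{S}}$ and for the hypercoercion calculus in Section~\ref{sec:space-calculi} (using, e.g., that simple values at type $\Unk$ do not exist, so the projection/cross-cast cases of $\agda{applyCast}$ on simple values are essentially vacuous or return the value unchanged).

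With these in hand the induction is mostly routine. For \eqref{eq:sc-xi} I invert the Size Predicate on $\itm{plug}\app M\app F$ (every frame rule concludes count $0$ and puts no constraint on the plugged subterm, since there is no cast frame), apply the induction hypothesis to the nested step, and re-plug, getting count $0\le 2$. For \eqref{eq:sc-xi-blame} and \eqref{eq:sc-xi-cast-blame} the reduct is $\blame{\ell}{}$, count $0$. The $\mathtt{NonCast}$ rules act on a cast-headed term: \eqref{eq:sc-xi-cast} uses the $m\le 2$ clause of the induction hypothesis and re-wraps via (\textsc{SCast1}); \eqref{eq:sc-compose-casts}, $\cast{\cast{M}{c}}{d}\longrightarrow\cast{M}{c\fatsemi d}$, needs no assumption about $\fatsemi$ at all, since inverting (\textsc{SCast1}) twice together with Lemma~\ref{lem:OK3} forces $M$'s count to be at most $1$, so $\cast{M}{c\fatsemi d}$ has count at most $2$; and \eqref{eq:sc-cast} closes by fact (iv), noting the simple value $S$ has count $0$. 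The cast-elimination rules \eqref{eq:sc-fun-cast}--\eqref{eq:sc-case-cast} each produce a non-cast-headed term with at worst one fresh cast around a value (count $\le 1$, hence $\le 2$ overall), handled by inverting the predicate on the redex plus facts (i)--(iii); and the $\beta$-like and $\delta$ rules \eqref{eq:sc-beta}--\eqref{eq:sc-delta} either return a subterm of the redex or, for \eqref{eq:sc-beta}, the term $M[V]$, which by fact (iii) and weakening inherits the flag-$\itm{true}$ bound of the $\lambda$-body (at most two adjacent casts), again $\le 2$. Lemma~\ref{lem:preserve-ok} then follows by instantiating $\itm{ctx}$ with whichever context witnesses the given step.

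The hard part will be the substitution lemma (iii) and the tight accounting it forces. The examples in the text show that a single $\beta$-step genuinely raises the adjacent-cast count --- from one up to three --- so every boundary condition in the rules (\textsc{SCast1}), (\textsc{SCast2}), and (\textsc{SVar}) is exercised, and the lemma must be phrased with precisely the right invariant (images of $\sigma$ have count $\le 1$; the flag is unchanged at each substitution site; under a binder the flag becomes $\itm{true}$, where the count budget shrinks and $\ext{\cdot}$/$\exts{\cdot}$ must be shown to keep the invariant). A secondary but real difficulty is the context bookkeeping: one must prove that $\mathtt{Any}$-context reductions never act on a cast-headed term --- dual to Lemma~\ref{lem:switch-back} --- so that the $m\le 2$ clause is actually available where \eqref{eq:sc-xi-cast} needs it, and one must check that the mutual use of the two clauses of the generalized statement is well founded on the structure of the reduction derivation.
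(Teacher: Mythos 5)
The paper defers this proof entirely to its Agda development, but your reconstruction is sound and matches the scaffolding the paper does expose: your fact (iv) is precisely the $\agda{applyCastSize}$ field of \agda{CastHeight} (Figure~\ref{fig:eff-cast-height}), whose bound $m \le 2 + n$ confirms that the proof tracks the adjacent-cast count quantitatively through reduction, and your facts (i)--(iii) are the advertised ``technical lemmas about substitution, evaluation contexts, and values,'' with the strengthened induction hypothesis ($m \le 2$ for every $\mathtt{Any}$-context reduct) being exactly the generalization needed to re-wrap the reduct under \eqref{eq:sc-xi-cast} via (\textsc{SCast1}). The only places wanting slightly more care are that the substitution invariant must require the images of $\sigma$ to satisfy the Size Predicate at flag $\itm{true}$ (not merely to have top-level cast-count at most $1$), which holds because every cast inside a value wraps a simple value, and that for \eqref{eq:sc-beta-fst}/\eqref{eq:sc-beta-snd} the $\le 2$ bound on the reduct comes from its being a value rather than from inverting the predicate on the redex (which only gives $\le 3$) --- both points your setup already accommodates, and the worried-about ``dual'' of Lemma~\ref{lem:switch-back} is not actually needed, since the $m \le 2$ clause is discharged directly by casing on the $\mathtt{Any}$ rules.
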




The next piece needed for the space efficiency theorem is to place a
bound on the size of the casts. We require their size to be bounded by
their height, as in \citet{Herman:2010aa}, so it remains to show that
the heights of all the coercions does not grow during execution. To
prove this we must place further requirements on the specific cast
calculi, which we formulate as a structure named
\agda{CastHeight} in Figure~\ref{fig:eff-cast-height}
that extends the \agda{ComposableCasts} structure in
Section~\ref{sec:EfficientParamCasts}.

\begin{figure}[tp]

\begin{description}
\item[$\agda{height} : \forall A B.~ (c : \Cast{A}{B}) \to \mathbb{N}$]
\item[$\agda{size} : \forall A B.~ (c : \Cast{A}{B}) \to \mathbb{N}$]  
\item[$\agda{compose\mhyphen{}height} : \begin{array}{l}
     \forall A B C.~ (c : \Cast{A}{B}) \to (d : \Cast{B}{C}) \\
    \to \agda{height}(c \fatsemi d) \leq \max(\agda{height}(c),\agda{height}(d))
  \end{array}$]
\item[$\agda{applyCastSize} :
  \begin{array}{l}
    \forall \Gamma A B n.~ (M : \Gamma \vdash A) \to (c : \Cast{A}{B})\\
    \to \sizeok{n}{\itm{false}}{M} \to (v : \val{M})\\
    \to\exists m.~\sizeok{m}{\itm{false}}{(\agda{applyCast}~M~v~c)}\times m\leq 2 + n
  \end{array}$]
\item[$\agda{applyCastHeight} :
  \begin{array}{l}
    \forall \Gamma A B V.~ (v : \val{V}) \\
    \to \agda{height}(\agda{applyCast}~V~v~c)
      \leq \max(\agda{height}(V), \agda{height}(c))
  \end{array}$]
\item[$\agda{dom\mhyphen{}height} :
  \begin{array}{l}
    (c : \Cast{A\to B}{C \to D}) \to (x : \cross{c})\\
    \to \agda{height}(\dom{c\app x}) \leq \agda{height}(c)
  \end{array}$]  
\item[$\agda{cod\mhyphen{}height} :
  \begin{array}{l}
    (c : \Cast{A\to B}{C \to D}) \to (x : \cross{c})\\
    \to \agda{height}(\cod{c\app x}) \leq \agda{height}(c)
  \end{array}$]  
\item[$\agda{fst\mhyphen{}height} :
  \begin{array}{l}
    (c : \Cast{A\times B}{C \times D}) \to (x : \cross{c})\\
    \to \agda{height}(\agda{fst}\app c\app x) \leq \agda{height}(c)
  \end{array}$]  
\item[$\agda{snd\mhyphen{}height} :
  \begin{array}{l}
    (c : \Cast{A\times B}{C \times D}) \to (x : \cross{c})\\
    \to \agda{height}(\agda{snd}\app c\app x) \leq \agda{height}(c)
  \end{array}$]  
\item[$\agda{inl\mhyphen{}height} :
  \begin{array}{l}
    (c : \Cast{A+ B}{C+D}) \to (x : \cross{c})\\
    \to \agda{height}(\agda{inl}\app c\app x) \leq \agda{height}(c)
  \end{array}$]  
\item[$\agda{inr\mhyphen{}height} :
  \begin{array}{l}
    (c : \Cast{A+ B}{C+D}) \to (x : \cross{c})\\
    \to \agda{height}(\agda{inr}\app c\app x) \leq \agda{height}(c)
  \end{array}$]
\item[$\agda{size\mhyphen{}height} :
    \exists k_1 k_2.~ \forall A B.~(c : \Cast{A}{B}) \to \agda{size}(c) + k_1 \le k_2 * 2^{\agda{height}(c)}
    $]
\end{description}
\caption{\agda{CastHeight} extends
  \agda{ComposableCasts} (Section~\ref{sec:ComposableCasts}).}
\label{fig:eff-cast-height}
\end{figure}

We define the \emph{cast height} of a term, written
$\agda{c\mhyphen{}height}(M)$, to be the maximum of the heights of the
all the casts in term $M$. The cast-height of a term is monotonically
decreasing under reduction.

\begin{lemma}[Preserve Height]
  \label{lem:preserve-height}
  If ~$M : \Gamma \vdash A$, $M' : \Gamma \vdash A$,
  and $\itm{ctx} \vdash M \longrightarrow M'$,
  then $\agda{c\mhyphen{}height}(M') \leq \agda{c\mhyphen{}height}(M)$.
\end{lemma}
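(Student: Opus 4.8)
The plan is to prove Lemma~\ref{lem:preserve-height} by induction on the derivation of $\itm{ctx} \vdash M \longrightarrow M'$, after first recording three routine auxiliary facts about the cast-height function, which (as defined in the excerpt) is the maximum of the heights of all casts occurring in a term. \textbf{(P1) Plugging.} For every frame $F$, $\agda{c\mhyphen{}height}(\itm{plug}\app M\app F) = \max(\agda{c\mhyphen{}height}(M), h_F)$ where $h_F$ is the maximum height of the casts appearing in $F$; consequently $\agda{c\mhyphen{}height}(M') \le \agda{c\mhyphen{}height}(M)$ implies $\agda{c\mhyphen{}height}(\itm{plug}\app M'\app F) \le \agda{c\mhyphen{}height}(\itm{plug}\app M\app F)$, and since $\agda{c\mhyphen{}height}(\blame{\ell}{}) = 0$ we also get $\agda{c\mhyphen{}height}(\blame{\ell}{}) \le \agda{c\mhyphen{}height}(\itm{plug}\app (\blame{\ell}{})\app F)$. \textbf{(P2) Renaming.} $\agda{c\mhyphen{}height}(\rename{\rho}{M}) = \agda{c\mhyphen{}height}(M)$, since renaming only recurses through casts without creating or deleting them. \textbf{(P3) Substitution.} $\agda{c\mhyphen{}height}(\subst{\sigma}{M}) \le \max(\agda{c\mhyphen{}height}(M), H)$, where $H$ bounds the cast-heights of all terms in the image of $\sigma$; specializing to $\sigma = \substz{V}$ gives $\agda{c\mhyphen{}height}(M[V]) \le \max(\agda{c\mhyphen{}height}(M), \agda{c\mhyphen{}height}(V))$. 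Each of (P1)--(P3) is a short structural induction in the same PLFA-style threading already used for type preservation (Lemmas~\ref{lem:ext}--\ref{lem:substitution}).

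Given (P1)--(P3), the main induction is a case analysis on the reduction rule that fired. The congruence rules \eqref{eq:sc-xi} and \eqref{eq:sc-xi-cast} follow from the induction hypothesis, using (P1) for $\xi$ and the identity $\agda{c\mhyphen{}height}(\cast{M'}{c}) = \max(\agda{c\mhyphen{}height}(M'), \agda{height}(c))$ for $\xi\mhyphen\mathtt{cast}$. The blame-propagating rules \eqref{eq:sc-xi-blame} and \eqref{eq:sc-xi-cast-blame} produce a $\blame{\ell}{}$ term of cast-height $0$. Rule \eqref{eq:sc-cast} is precisely the $\agda{applyCastHeight}$ field of \agda{CastHeight}. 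Rule \eqref{eq:sc-compose-casts} reduces to the $\agda{compose\mhyphen{}height}$ field, since $\agda{c\mhyphen{}height}(\cast{M}{c \fatsemi d}) = \max(\agda{c\mhyphen{}height}(M), \agda{height}(c \fatsemi d)) \le \max(\agda{c\mhyphen{}height}(M), \agda{height}(c), \agda{height}(d)) = \agda{c\mhyphen{}height}(\cast{\cast{M}{c}}{d})$. For \eqref{eq:sc-fun-cast}, the reduct's casts are $\dom{c\app x}$, $\cod{c\app x}$, and those already in $V$ and $W$; by $\agda{dom\mhyphen{}height}$ and $\agda{cod\mhyphen{}height}$ the first two are bounded by $\agda{height}(c)$, so the reduct's cast-height is at most $\max(\agda{height}(c), \agda{c\mhyphen{}height}(V), \agda{c\mhyphen{}height}(W))$, which equals the cast-height of $\cast{V}{c}\app W$. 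Rules \eqref{eq:sc-fst-cast}, \eqref{eq:sc-snd-cast}, and \eqref{eq:sc-case-cast} are entirely analogous using $\agda{fst\mhyphen{}height}$/$\agda{snd\mhyphen{}height}$ and $\agda{inl\mhyphen{}height}$/$\agda{inr\mhyphen{}height}$; the case rule additionally appeals to (P2) for the $\rename{\key{S}}{W_i}$ wrappers in $W'_1,W'_2$. Rule \eqref{eq:sc-beta} is exactly where (P3) is used. The remaining $\beta$-rules reduce a term to (an application of) proper subterms, whose cast-height is trivially no larger, and \eqref{eq:sc-delta} replaces a constant application by the constant application $\rep{k}\app\rep{k'}$, which contains no casts at all.

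The hard part is not any individual case — each ultimately bottoms out in a field of \agda{CastHeight} or an elementary $\max$ inequality — but getting the substitution fact (P3) stated in exactly the form that the $\beta$ and \eqref{eq:sc-case-cast} cases consume. In the intrinsically-typed de~Bruijn setting this means pushing the bound through $\agda{exts}$, $\agda{rename}$, and $\agda{substZero}$, mirroring the chain Lemma~\ref{lem:ext}~$\to$~Proposition~\ref{lem:subst}~$\to$~Corollary~\ref{lem:substitution}. In Agda the whole argument is a single function defined by pattern matching on the reduction derivation, each clause being a short $\max$ computation; the only real bookkeeping is the collection of substitution and renaming lemmas on which the paper already relies elsewhere.
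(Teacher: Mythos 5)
Your proof is correct and follows essentially the same route as the paper's (Agda) proof: an induction on the reduction derivation in which each cast-related case is discharged by the corresponding field of \agda{CastHeight} ($\agda{applyCastHeight}$, $\agda{compose\mhyphen{}height}$, $\agda{dom\mhyphen{}height}$, etc.), supported by the routine renaming/substitution/plugging bounds on $\agda{c\mhyphen{}height}$ that mirror the paper's existing substitution lemmas. The paper gives no prose proof for this lemma, but the fields of \agda{CastHeight} exist precisely to make each of your cases go through, so there is nothing to add.
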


The size of any cast $c$ is bounded above by $k_2 \cdot
2^{\agda{height}(c)}$ (according to the $\agda{size\mhyphen{}height}$
member of \agda{CastHeight}), so the
\agda{real\mhyphen{}size} of a term is bounded above by $k_2 \cdot
2^{\agda{c\mhyphen{}height}(M)}$ multiplied by its \agda{size}.

\begin{lemma}[Real Size Bounded by Size]\ \\
  \label{lem:real-size-tsize}
  If $M : \Gamma \vdash A$, 
  then
  $\agda{real\mhyphen{}size}(M) \leq k_2 \cdot 2^{\agda{c\mhyphen{}height}(M)} \cdot \agda{size}(M)$.
\end{lemma}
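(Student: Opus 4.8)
The plan is to prove the bound by structural induction on the term $M$; note that this is a purely static statement about one term, so no reduction is involved (Lemma~\ref{lem:preserve-height} is what later links it to execution). First I would unpack the $\agda{size\mhyphen{}height}$ field of the \agda{CastHeight} structure to obtain constants $k_1,k_2$ such that $\agda{size}(c) + k_1 \le k_2 \cdot 2^{\agda{height}(c)}$ for every cast $c$; as a harmless preliminary I would take the lemma's witness to be $\max(k_2,1)$ rather than $k_2$ itself, so that $1 \le k_2 \cdot 2^{n}$ holds for all $n$ — this is needed to absorb the contributions of the non-cast constructors. Writing $h = \agda{c\mhyphen{}height}(M)$, the one arithmetic observation used everywhere is that any cast $c$ occurring in $M$ satisfies $\agda{height}(c) \le h$ by definition of $\agda{c\mhyphen{}height}$, hence $\agda{size}(c) \le k_2 \cdot 2^{\agda{height}(c)} \le k_2 \cdot 2^{h}$ by monotonicity of $2^{(-)}$; and dually, for any immediate subterm $M_i$ of $M$ we have $\agda{c\mhyphen{}height}(M_i) \le h$, so each induction hypothesis can be weakened to replace $2^{\agda{c\mhyphen{}height}(M_i)}$ by $2^{h}$ before the hypotheses are combined.

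For the induction: in the base cases ($` x$, $\$ k$, $\blame{\ell}{}$) both $\agda{real\mhyphen{}size}$ and $\agda{size}$ equal the same small constant and $h = 0$, so the goal reduces to $\agda{size}(M) \le k_2 \cdot \agda{size}(M)$, which holds since $k_2 \ge 1$. For a non-cast constructor with immediate subterms $M_1,\dots,M_j$ (application, $\lambda$, $\key{cons}$, $\pi_i$, $\key{if}$, $\key{inl}$/$\key{inr}$, $\key{case}$), $\agda{real\mhyphen{}size}$ and $\agda{size}$ contribute the same amount for the constructor node and recurse on the $M_i$, so using the (weakened) induction hypotheses $\agda{real\mhyphen{}size}(M_i) \le k_2 \cdot 2^{h} \cdot \agda{size}(M_i)$ together with $1 \le k_2 \cdot 2^{h}$ gives $\agda{real\mhyphen{}size}(M) \le k_2 \cdot 2^{h}\bigl(1 + \textstyle\sum_i \agda{size}(M_i)\bigr) = k_2 \cdot 2^{h} \cdot \agda{size}(M)$. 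The only genuinely gradual case is $M \equiv \cast{N}{c}$: here $\agda{real\mhyphen{}size}(M) = \agda{size}(c) + \agda{real\mhyphen{}size}(N)$, $\agda{size}(M) = 1 + \agda{size}(N)$, and $h = \max(\agda{height}(c),\agda{c\mhyphen{}height}(N))$, so combining the weakened induction hypothesis for $N$ with $\agda{size}(c) \le k_2 \cdot 2^{h}$ yields $\agda{real\mhyphen{}size}(M) \le k_2 \cdot 2^{h} + k_2 \cdot 2^{h}\cdot\agda{size}(N) = k_2 \cdot 2^{h}\cdot\agda{size}(M)$.

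None of the cases is individually hard; the friction in the mechanization will be the natural-number arithmetic, which in Agda needs explicit monotonicity lemmas ($m \le n \Rightarrow 2^{m} \le 2^{n}$, monotonicity of $\max$, and the distributivity/associativity rewrites for $k_2 \cdot 2^{h}\bigl(1 + \sum_i \agda{size}(M_i)\bigr)$). I expect the main obstacle to be bookkeeping rather than mathematics: keeping the single global height $h = \agda{c\mhyphen{}height}(M)$ threaded uniformly through all the term constructors and, in each inductive step, actually producing and applying the $\agda{c\mhyphen{}height}(M_i) \le \agda{c\mhyphen{}height}(M)$ facts so that every induction hypothesis is weakened to the common exponent $2^{h}$ before the additive combination. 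Everything else — the definitions of $\agda{real\mhyphen{}size}$ and $\agda{size}$ and the $\agda{size\mhyphen{}height}$ bound — is already available.
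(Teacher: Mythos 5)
Your proof is correct and matches the approach the paper itself sketches in the sentence preceding the lemma: bound each individual cast by $\agda{size}(c) \leq k_2 \cdot 2^{\agda{height}(c)} \leq k_2 \cdot 2^{\agda{c\mhyphen{}height}(M)}$ using the $\agda{size\mhyphen{}height}$ field, then push the common factor $k_2 \cdot 2^{\agda{c\mhyphen{}height}(M)}$ through a structural induction on $M$. Your observation that the constant must be at least $1$ to absorb the non-cast constructors (hence $\max(k_2,1)$) is a genuine detail the paper glosses over, but it is harmless since both instantiations take $k_2 = 9$.
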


With the above lemmas in place, we proceed with the main theorem.  The
size of the program (including the casts) is bounded by above by its
ideal size (not including casts) multiplied by a constant.

\begin{theorem}[Space Consumption]
  \label{thm:space-consumption}
  If ~$\Gamma \vdash M : A$, then there exists $c$ such that for any
  $M' : \Gamma \vdash A$ where $\itm{ctx} \vdash \compile{\Gamma}{M}
  \longrightarrow^{*} M'$, we have $\agda{real\mhyphen{}size}(M') \leq
  c \cdot \idealsize(M')$.
\end{theorem}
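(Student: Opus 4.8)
The plan is to chain together the technical lemmas already in place. The Size Predicate of Figure~\ref{fig:sizeok} gives a \emph{linear} bound of $\agda{size}$ in terms of $\idealsize$ (Lemma~\ref{lem:size-OK}); the cast-height machinery of Figure~\ref{fig:eff-cast-height} gives a bound of $\agda{real\mhyphen{}size}$ in terms of $\agda{size}$ with a factor $k_2\cdot 2^{\agda{c\mhyphen{}height}(-)}$ (Lemma~\ref{lem:real-size-tsize}); and compilation followed by reduction keeps both invariants in force (Lemmas~\ref{lem:compile-efficient}, \ref{lem:preserve-ok}, \ref{lem:preserve-height}). Assembling these yields the claimed constant, which depends only on the source term $M$ and the fixed calculus-level constants.

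First I would fix the starting term $\compile{}{M}$. By Lemma~\ref{lem:compile-efficient} we have $\sizeok{n}{b}{\compile{}{M}}$ for some $n\le 1$; using a trivial monotonicity lemma that $\sizeok{n}{\itm{true}}{N}$ implies $\sizeok{n}{\itm{false}}{N}$ (immediate by induction on the Size Predicate, since the only difference between the two flavors is the stricter cast rule $(\textsc{SCast2})$ in the delayed case), we may assume $\sizeok{n}{\itm{false}}{\compile{}{M}}$. Let $h = \agda{c\mhyphen{}height}(\compile{}{M})$, a finite number depending only on $M$, and let $k_1,k_2$ be the constants supplied by the $\agda{size\mhyphen{}height}$ field of \agda{CastHeight} (so $k_2\ge 1$).

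Now take any $M'$ with $\itm{ctx} \vdash \compile{}{M} \longrightarrow^{*} M'$ and argue by induction on the length of the reduction sequence. Lemma~\ref{lem:preserve-ok} (Size Preservation) propagates $\sizeok{-}{\itm{false}}{-}$ along each step, so $\sizeok{m}{\itm{false}}{M'}$ for some $m$; Lemma~\ref{lem:preserve-height} (Preserve Height) gives $\agda{c\mhyphen{}height}(M') \le \agda{c\mhyphen{}height}(\compile{}{M}) = h$. I then combine three inequalities: from $\sizeok{m}{\itm{false}}{M'}$ and Lemma~\ref{lem:size-OK}, $\agda{size}(M') \le 10\cdot\idealsize(M') + 3$; from Lemma~\ref{lem:real-size-tsize}, $\agda{real\mhyphen{}size}(M') \le k_2\cdot 2^{\agda{c\mhyphen{}height}(M')}\cdot\agda{size}(M') \le k_2\cdot 2^{h}\cdot\agda{size}(M')$; and $\idealsize(N)\ge 1$ for every term $N$ (the ideal-size function is at least $1$ on every syntactic form and casts leave it unchanged), so $10\cdot\idealsize(M') + 3 \le 13\cdot\idealsize(M')$. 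Putting these together gives $\agda{real\mhyphen{}size}(M') \le 13\,k_2\,2^{h}\cdot\idealsize(M')$, so $c = 13\,k_2\,2^{h}$ works; it depends only on $M$ (through $h$) and on $k_2$, not on $M'$.

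The real content lies not in this assembly but in the lemmas it rests on, and the main obstacle is Lemma~\ref{lem:preserve-ok} (Size Preservation): one must check that a single reduction step never creates a fourth adjacent cast. The delicate cases are the $\eqref{eq:sc-compose-casts}$ \key{compose} rule, the $\eqref{eq:sc-beta}$ $\beta$ rule (which substitutes a value possibly carrying a top-level cast into a delayed context — exactly why $(\textsc{SVar})$ starts the count at one and $(\textsc{SCast2})$ caps it at two), the $\eqref{eq:sc-cast}$ \key{cast} rule (where the $\agda{applyCastSize}$ field of \agda{CastHeight} bounds the casts introduced by \agda{applyCast}), and the $\eqref{eq:sc-fun-cast}$, $\eqref{eq:sc-fst-cast}$, $\eqref{eq:sc-snd-cast}$, $\eqref{eq:sc-case-cast}$ rules that wrap fresh casts around eliminated values. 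One should also verify that the $\itm{ctx}$ component of the reduction judgment does not interfere: Size Preservation and Preserve Height hold for reductions in any context, so iterating them along $\longrightarrow^{*}$ in the induction above is routine.
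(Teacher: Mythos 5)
Your proposal is correct and follows essentially the same route as the paper: compile, propagate the Size Predicate and cast-height invariants along the reduction sequence, and chain Lemmas~\ref{lem:real-size-tsize}, \ref{lem:size-OK}, and \ref{lem:preserve-height} to arrive at the same witness $c = 13\,k_2\cdot 2^{\agda{c\mhyphen{}height}(\compile{}{M})}$. The only divergence is cosmetic — your explicit monotonicity step from the $\itm{true}$ to the $\itm{false}$ flavor of the Size Predicate is unnecessary since Lemma~\ref{lem:compile-efficient} already quantifies over $b$.
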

\begin{proof}
  (The formal proof of this theorem is in the Agda development. Here
  we give a proof that is less formal but easier to read.)
  By Lemma~\ref{lem:compile-efficient} there is an $n$ such that
  \[
  \sizeok{n}{\itm{false}}{\compile{\Gamma}{M}}
  \]
  Using Lemma~\ref{lem:preserve-ok} and an induction on the reduction
  sequence $\itm{ctx} \vdash \compile{\Gamma}{M} \longrightarrow^{*}
  M'$, there is an $m$ such that
  \[
  \sizeok{m}{\itm{false}}{M}
  \]
  We establish the conclusion of the theorem by the following
  inequational reasoning.
  \begin{align*}
    \agda{real\mhyphen{}size}(M') & \leq k_2 \cdot 2^{\agda{c\mhyphen{}height}(M')} \cdot \agda{size}(M') & \text{by Lemma~\ref{lem:real-size-tsize}}\\
    & \leq (k_2 \cdot 2^{\agda{c\mhyphen{}height}(M')}) \cdot (3 + 10 \cdot \agda{ideal\mhyphen{}size}(M')) & \text{by Lemma~\ref{lem:size-OK}}\\
    & \leq 3 k_2 \cdot 2^{\agda{c\mhyphen{}height}(M')} +  10 k_2 \cdot 2^{\agda{c\mhyphen{}height}(M')} \cdot \agda{ideal\mhyphen{}size}(M')\\
    & \leq 3 k_2 \cdot 2^{\agda{c\mhyphen{}height}(\compile{}{M})} + 10 k_2 \cdot 2^{\agda{c\mhyphen{}height}(\compile{}{M})} \cdot \agda{ideal\mhyphen{}size}(M')
      & \text{by Lemma~\ref{lem:preserve-height}}\\
    & \leq 13 k_2 \cdot 2^{\agda{c\mhyphen{}height}(\compile{}{M})} \cdot \idealsize(M')
  \end{align*}
  So we choose $13 k_2 \cdot
  2^{\agda{c\mhyphen{}height}(\compile{}{M})}$ as the witness for $c$.
\end{proof}

We observe that the space consumption theorem of \citet{Herman:2010aa}
(as well as the above theorem) has a limitation in that it does not
prevent a cast calculus that uses the \agda{eta} cast reduction rules,
as discussed in Section~\ref{sec:eta-cast-reduction}, from consuming
an unbounded amount of space. To capture the space used by the
\agda{eta} cast rules, one would need to mark the terms introduced by
the \agda{eta} cast rules so that they can be excluded from the
\idealsize{} of term.  We do not pursue that direction at this time,
but note that the calculi that we introduce in the next section do not
use \agda{eta} cast rules.

\section{Space-Efficient Cast Calculi}
\label{sec:space-calculi}

We instantiate the Efficient Parameterized Calculus $\SC(\CastOp)$
with three different instances of the \agda{ComposableCasts} and
\agda{CastHeight} to obtain definitions and proofs of
type safety and space efficiency for two cast calculi:
$\lambda\textup{\textsf{S}}$ of \citet{Siek:2015ab} and the
hypercoercions of \citet{Lu:2019aa}.

\subsection{$\lambda\textup{\textsf{S}}$}
\label{sec:lambda-S}

The cast representation in $\lambda\textup{\textsf{S}}$ are coercions
in a particular canonical form, with a three-part grammar consisting
of top-level coercions, intermediate coercions, and ground coercions,
defined in Figure~\ref{fig:coercions-lambdaS}.  A top-level coercion
is an identity cast, a projection followed by an intermediate
coercion, or just an intermediate coercion.  An intermediate coercion
is a ground coercion followed by an injection, just a ground coercion,
or a failure coercion.  A ground coercion is an identity on base type
or a cross cast between function, pair, or sum types.

\begin{figure}[tp]
\fbox{$c,d : \Cast{A}{B}$}
\fbox{$i : \ICast{A}{B}$}
\fbox{$g,h : \GCast{A}{B}$}
\begin{gather*}
  \inference
      {}
      {\agda{id} : \Cast{\Unk}{\Unk}}
  \qquad
  \inference
      {i : \ICast{H}{B}}
      {(H?^\ell; i)  : \Cast{\Unk}{B}}
  \qquad
  \inference
      {i : \ICast{A}{B}}
      {i  : \Cast{A}{B}}
\\[1ex]  
  \inference
      {g : \GCast{A}{G}}
      {g; G! : \ICast{A}{\Unk}}
  \qquad
  \inference
      {g : \GCast{A}{B}}
      {g : \ICast{A}{B}}
  \qquad
  \inference
      {}
      {\cfail{\ell} : \ICast{A}{B}}
\\[1ex]
  \inference
      {}
      {\agda{id} : \GCast{b}{b}}
  \qquad
  \inference
      {c : \Cast{C}{A} & d : \Cast{B}{D} }
      {c \to d : \GCast{A \to B}{C \to D}}
\\[1ex]
  \inference
      {c : \Cast{A}{C} & d : \Cast{B}{D} }
      {c \times d : \GCast{A \times B}{C \times D}}
  \qquad
  \inference
      {c : \Cast{A}{C} & d : \Cast{B}{D} }
      {c + d : \GCast{A + B}{C + D}}
\end{gather*}
\fbox{$\coerce{A}{B}{\ell} = c, \coerce{A}{G}{\ell} =  g, \coerce{H}{A}{\ell} = g$}
\begin{align*}
  \coerce{\Unk}{\Unk}{\ell} &= \agda{id}\\
  \coerce{A}{\Unk}{\ell}&= \coerce{A}{G}{\ell}; G! 
    & \text{where } G = \agda{gnd} \app A, A \neq \Unk \\
  \coerce{\Unk}{A}{\ell}&= H?^\ell; \coerce{H}{A}{\ell}
    & \text{where } H = \agda{gnd} \app A, A \neq \Unk \\
  \coerce{b}{b}{\ell} &= \agda{id}\\
  \coerce{A \to B}{A' \to B'}{\ell}&=
      \coerce{A'}{A}{\ell} \to \coerce{B}{B'}{\ell} \\
  \coerce{A \times B}{A' \times B'}{\ell}&=
      \coerce{A'}{A}{\ell} \times \coerce{B}{B'}{\ell} \\
  \coerce{A + B}{A' + B'}{\ell}&= 
      \coerce{A'}{A}{\ell} + \coerce{B}{B'}{\ell} 
\end{align*}

\caption{Coercions of $\lambda\textup{\textsf{S}}$ and its cast constructor.}
\label{fig:coercions-lambdaS}
\end{figure}

The cast constructor is also defined in Figure~\ref{fig:coercions-lambdaS}.

\subsubsection{Reduction Semantics and Type Safety}

Casts between function, pair, and sum types are categorized as cross
casts. \\[1ex]
\fbox{$\cross{c}$}
\begin{gather*}
  \inference
      {}
      { \key{Cross}{\otimes} : \cross{(c \otimes d)} }
      ~ \otimes \in \{ \to, \times, + \}
\end{gather*}

The inert casts include casts between function types, injections, and
the failure coercion. \\[1ex]
\fbox{$\inert{c}$}
\begin{gather*}
  \inference
      {}
      { \inert{c \to d} }
  \qquad
  \inference
      {}
      { \inert{g; G!} }
  \qquad
  \inference
      {}
      { \inert{\cfail{\ell}} }
\end{gather*}

There are five kinds of active coercions: the identity on $\Unk$,
projections, failures, cross casts on pairs and sums, and identity on
base types. \\[1ex]
\fbox{$\act{c}$}
\begin{gather*}
  \inference
      {}
      {\key{Aid} : \act{ \agda{id} } }
  \qquad
  \inference
      {}
      {\key{Aproj} : \act{ (G?^\ell; i) } }
  \qquad
  \inference
      {}
      {\key{Afail} : \act{ \cfail{\ell} } }
  \\[1ex]
  \inference
      {\otimes \in \{ \times, +\} }
      {\key{A}{\otimes} : \act{ (c \otimes d) } }
  \qquad
  \inference
      {}
      {\key{Abase} : \act{ \agda{id} } }
\end{gather*}

\begin{lemma}
  \label{lem:lambda-S-active-or-inert}
  For any types $A$ and $B$, $c : \Cast{A}{B}$ is either an active or
  inert cast.
\end{lemma}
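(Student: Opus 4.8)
The plan is to prove this by a shallow, terminating case analysis on the syntactic form of the coercion $c$, following the three-part grammar of Figure~\ref{fig:coercions-lambdaS}: top-level coercions are built from intermediate coercions, which are in turn built from ground coercions. This mirrors the structure of the analogous lemmas for the earlier calculi, such as Lemma~\ref{lem:simple-active-or-inert} and Lemma~\ref{lem:EDC-active-or-inert}. Since the statement is exactly what is needed to populate the $\agda{ActiveOrInert}$ field of $\agda{PreCastStruct}$, which has type $\act{c} \uplus \inert{c}$, it suffices to exhibit, for each form of $c$, a proof of either $\act{c}$ or $\inert{c}$; there is no disjointness obligation.

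First I would split on the three top-level productions. If $c \equiv \agda{id} : \Cast{\Unk}{\Unk}$, then $c$ is active by $\key{Aid}$; if $c \equiv (H?^\ell; i)$, then $c$ is active by $\key{Aproj}$; and the remaining production merely promotes an intermediate coercion $i : \ICast{A}{B}$ to the top level, so I would recurse on $i$. For an intermediate coercion there are again three cases: $i \equiv (g; G!)$ is an injection, hence inert by $\inert{(g; G!)}$; $i \equiv \cfail{\ell}$ is inert by $\inert{\cfail{\ell}}$ (choosing $\key{Afail}$ would serve equally well); and the last case promotes a ground coercion $g : \GCast{A}{B}$, on which I would recurse once more. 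A ground coercion is either $\agda{id} : \GCast{\Base}{\Base}$ (active by $\key{Abase}$), a function coercion $c' \to d'$ (inert), or a product/sum coercion $c' \otimes d'$ with $\otimes \in \{\times, +\}$ (active by $\key{A}{\otimes}$). These three strata together exhaust the grammar, so the case analysis is complete.

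There is no genuine obstacle here: the proof is a finite enumeration, each leaf discharged by a single classification rule. The only points requiring care are bookkeeping ones. The constructor $\agda{id}$ is overloaded between the top-level coercion of type $\Cast{\Unk}{\Unk}$ and the ground coercion of type $\GCast{\Base}{\Base}$; these are distinguished by their type indices ($\Unk$ versus a base type $\Base$), so the two occurrences are dispatched to $\key{Aid}$ and $\key{Abase}$ respectively without ambiguity. And the failure coercion $\cfail{\ell}$ admits both an active and an inert classification, which is harmless since $\agda{ActiveOrInert}$ returns a non-exclusive coproduct; the mechanization simply commits to one of the two. This lemma then feeds, together with the usual inert-cross and base-not-inert facts, into the $\agda{PreCastStruct}$ instance for $\lambda\textup{\textsf{S}}$.
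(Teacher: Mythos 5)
Your proof is correct and is exactly the intended argument: the paper states this lemma without a written proof (deferring to the Agda development), and the evident mechanized proof is precisely your three-level exhaustive case analysis on the grammar of $\lambda\textup{\textsf{S}}$ coercions, discharging each leaf with the corresponding $\agda{Active}$ or $\agda{Inert}$ constructor. Your side remarks are also accurate: the two occurrences of $\agda{id}$ are disambiguated by their type indices, and the overlap at $\cfail{\ell}$ (which the paper classifies as both active and inert) is harmless because $\agda{ActiveOrInert}$ only demands an element of the sum $\act{c} \uplus \inert{c}$, not exclusivity.
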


\begin{lemma}\label{lem:lambda-S-inert-cross}
  If $c : \Cast{A}{(B \otimes C)}$ and $\inert{c}$,
  then $\cross{c}$ and $A \equiv D \otimes E$ for some $D$ and $E$.
\end{lemma}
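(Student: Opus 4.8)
The plan is to prove the lemma by inversion on the inertness witness $\inert{c}$, in each case comparing the shape that the coercion grammar forces on the target of $c$ against the given product, sum, or function type $B \otimes C$. In $\lambda\textup{\textsf{S}}$ the inert coercions have one of three shapes: a function coercion $c_1 \to c_2$, an injection $g; G!$, or a failure $\cfail{\ell}$. The identity and projection forms are categorized as active ($\key{Aid}$, $\key{Abase}$, $\key{Aproj}$), as are the product and sum cross coercions ($\key{A}{\times}$, $\key{A}{+}$), so none of them can supply an inert derivation and they need not be considered.

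I would dispatch the two principal cases as follows. If $c$ is an injection $g; G!$, then by the intermediate-coercion typing its target is $\Unk$, which is distinct from every function, product, or sum type; hence this sub-case contradicts the hypothesis that the target is $B \otimes C$ and is discharged vacuously. If $c$ is a function coercion $c_1 \to c_2$, then by the ground-coercion typing rule its target is a function type and its source is a function type $D \to E$. This is exactly the situation $\otimes = {\to}$: the target $B \otimes C$ is then a function type, $\cross{c}$ holds by $\key{Cross}{\to}$, and $A \equiv D \otimes E$ as required. When instead $\otimes \in \{\times, +\}$, the target $B \times C$ or $B + C$ cannot unify with a function type, so this sub-case is again contradictory; this simultaneously shows that for product and sum targets the lemma is vacuously true, since no inert form has a product or sum target.

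The one delicate point, and the main obstacle, is the failure coercion $\cfail{\ell}$, whose typing $\cfail{\ell} : \ICast{A}{B}$ constrains neither its source nor its target. A failure coercion admits no cross derivation, so if a failure with target $B \otimes C$ were genuinely inert the conclusion $\cross{c}$ would be unobtainable. I would therefore verify in the mechanization that the $\agda{ActiveOrInert}$ field routes every failure coercion to the active categorization (rule $\key{Afail}$), so that inverting a cast that is actually inert never yields a failure; equivalently, the inert forms relevant to this lemma are exhausted by injections and function coercions. Pinning down this interaction between $\cfail{\ell}$ and the active/inert split is where the real care lies. Once it is settled, the remaining cases follow immediately by inversion on the coercion grammar, and the companion Lemma~\ref{lem:lambda-S-active-or-inert} guarantees that the categorization leaves no case unaccounted for.
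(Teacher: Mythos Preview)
Your case analysis on the inertness witness is the right structure, and the injection and function-coercion cases are handled correctly. The gap is in how you dispose of the failure case. You propose to check that $\agda{ActiveOrInert}$ routes $\cfail{\ell}$ to $\key{Afail}$ and conclude from this that ``inverting a cast that is actually inert never yields a failure.'' That inference is invalid: $\agda{Active}$ and $\agda{Inert}$ are independent inductive predicates, and the $\agda{ActiveOrInert}$ field merely witnesses that at least one of them is inhabited for each cast. Nothing about that choice function prevents the \emph{other} predicate from also being inhabited. The lemma's hypothesis is an arbitrary derivation of $\inert{c}$, so if the $\agda{Inert}$ datatype has a constructor for $\cfail{\ell}$, case analysis on that derivation must confront it, regardless of what $\agda{ActiveOrInert}$ returns.

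In fact the paper's displayed rules for $\lambda\textup{\textsf{S}}$ \emph{do} list $\inert{\cfail{\ell}}$, which blocks your planned escape route as written. Taken literally, that rule would falsify both this lemma and Lemma~\ref{lem:lambda-S-base-not-inert}, since $\cfail{\ell}$ has an unconstrained target type yet admits no $\agda{Cross}$ derivation. Comparing with the hypercoercion instance in Section~\ref{sec:hypercoercions}, where only injections and function coercions are inert and failure is purely active, strongly suggests the displayed $\inert{\cfail{\ell}}$ rule is a presentation slip; the mechanization evidently omits it, or the two lemmas just cited could not type-check. The correct argument is therefore ``the $\agda{Inert}$ datatype has no constructor whose target can be $B\otimes C$ other than the function cross cast,'' after which your two remaining cases go through exactly as you describe. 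But the justification must live at the level of the $\agda{Inert}$ definition itself, not in how $\agda{ActiveOrInert}$ happens to resolve the overlap.
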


The definition of $\agda{dom}$, etc. for $\lambda\textup{\textsf{S}}$
is given below.
\begin{align*}
  \dom{(c \to d) \app \key{Cross}{\to}} &= c \\
  \cod{(c \to d) \app \key{Cross}{\to}} &= d \\
  \agda{fst}\app (c \times d) \app \key{Cross}{\times} &= c \\
  \agda{snd} \app (c \times d) \app \key{Cross}{\times} &= d \\
  \agda{inl} \app (c + d) \app \key{Cross}{+} &= c \\
  \agda{inr} \app (c + d) \app \key{Cross}{+} &= d
\end{align*}

\begin{lemma}\label{lem:lambda-S-base-not-inert}
   A cast $c : \Cast{A}{b}$ is not inert.
\end{lemma}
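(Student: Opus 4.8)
The plan is to dispatch Lemma~\ref{lem:lambda-S-base-not-inert} exactly as we did for the analogous statements in the earlier coercion-based calculi (Lemma~\ref{lem:EDC-base-not-inert}, Lemma~\ref{lem:lazyd-coercion-base-not-inert}, Lemma~\ref{lem:lambda-C-base-not-inert}): by inversion on the derivation of $\inert{c}$. The guiding observation is that in $\lambda\textup{\textsf{S}}$ every coercion placed in the inert category has a target type whose head is \emph{not} a base type, so it cannot be a coercion whose target is the base type $b$ (note in particular that $b \neq \Unk$). Since the terms are intrinsically typed and the inert predicate is an inductively defined family, in Agda this will amount to a short pattern match on the inert witness.

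Concretely I would first unfold the stratified grammar of Figure~\ref{fig:coercions-lambdaS}. A top-level coercion $c : \Cast{A}{b}$ is either of the projection shape $H?^\ell; i$ or a bare intermediate coercion $i : \ICast{A}{b}$. The projection shape is categorized as active ($\key{Aproj}$), so if $c$ has that shape there is nothing to prove; it is excluded outright. For the remaining case $i : \ICast{A}{b}$ I then case-split on the proof that $c$ is inert: the function cross-cast form $c = c_1 \to c_2$ is impossible because its target is a function type $C \to D$, not the base type $b$; the injection form $c = g; G!$ is impossible because its target is $\Unk$, and $b \neq \Unk$; and the failure-coercion form $\cfail{\ell}$ (to the extent it appears in the inert category) is ruled out by the type constraint recorded in that form, which does not admit a base-type target in the position at issue here. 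Every surviving case contradicts the fact that the target is the base type $b$, so no $c : \Cast{A}{b}$ can be inert.

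The main point needing a moment's care — rather than a genuine obstacle — is simply navigating the three-layer top-level / intermediate / ground coercion grammar: one has to notice that the only coercion with a base-type target reachable through that grammar at all is the ground identity coercion $\agda{id} : \GCast{b}{b}$, which is categorized as active ($\key{Abase}$) and never as inert, so every path to a base target lands outside the inert family. Beyond that the proof is as routine as its siblings for EDC, LDC, and $\lambda\textup{\textsf{C}}$.
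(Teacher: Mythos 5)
Your overall strategy --- inversion on the derivation of $\inert{c}$, showing that each constructor forces a non-base target --- is the right one (the paper states this lemma without proof, but its siblings for EDA/EDI are argued exactly this way), and your function-coercion and injection cases are correct: $C \to D$ is not a base type, and $\Unk$ is not a base type. The genuine gap is the failure-coercion case. You dismiss $\cfail{\ell}$ as ``ruled out by the type constraint recorded in that form,'' but no such constraint exists: Figure~\ref{fig:coercions-lambdaS} gives $\cfail{\ell} : \ICast{A}{B}$ for \emph{arbitrary} $A$ and $B$, so $\cfail{\ell} : \Cast{A}{b}$ is a perfectly well-formed coercion with a base-type target, and the inert rules for $\lambda\textup{\textsf{S}}$ as printed include $\inert{\cfail{\ell}}$ unconditionally. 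As the definitions stand on the page, this case is a potential counterexample rather than a vacuous case; the lemma only goes through if the failure coercion is in fact excluded from the $\agda{Inert}$ family (which is what the $\key{Afail}$ rule and the $\agda{applyCast}$ clause sending $\cfail{\ell}$ to $\blame{\ell}{}$ suggest is intended). Either way, you cannot discharge the case by appealing to a type restriction the grammar does not impose; you must either observe that no $\agda{Inert}$ constructor for $\cfail{\ell}$ actually exists in the formalization, or flag the inconsistency. Your closing claim that ``the only coercion with a base-type target reachable through that grammar at all is the ground identity'' fails for the same reason.

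A smaller point: in the top-level case you argue that the projection form $H?^\ell; i$ ``is categorized as active, so \ldots it is excluded outright.'' Being active does not entail being non-inert in this framework: $\agda{Active}$ and $\agda{Inert}$ are independent predicates, and $\agda{ActiveOrInert}$ only asserts that their union is total --- indeed the paper's own $\lambda\textup{\textsf{S}}$ rules place $\cfail{\ell}$ in both. The correct reason the projection form is not inert is simply that no $\agda{Inert}$ constructor matches it; in the intrinsically typed Agda development this is exactly what the pattern match on the inert witness gives you, so the conclusion is right but the stated justification is not.
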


\begin{proposition}\label{prop:lambda-S-active-pre-cast-struct}
$\lambda\textup{\textsf{S}}$ is an instance of the
  $\agda{PreCastStruct}$ structure.
\end{proposition}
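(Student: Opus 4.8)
The plan is to build the \agda{PreCastStruct} dependent record for $\lambda\textup{\textsf{S}}$ field by field, reusing the definitions of Figure~\ref{fig:coercions-lambdaS} and the three supporting lemmas just proved. The bookkeeping fields are immediate: the cast-type parameter $-\CastOp-$ is the top-level coercion judgment $\Cast{A}{B}$; the predicates \agda{Inert}, \agda{Active}, and \agda{Cross} are the three predicates defined above; \agda{ActiveOrInert} is exactly Lemma~\ref{lem:lambda-S-active-or-inert}; \agda{baseNotInert} is Lemma~\ref{lem:lambda-S-base-not-inert}; and the six decomposition operators \agda{dom}, \agda{cod}, \agda{fst}, \agda{snd}, \agda{inl}, \agda{inr} are the functions displayed above. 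For these operators the key point to check is that they are total and have the declared (contravariant, in the domain case) types: matching on the cross-cast witness $x : \cross{c}$ forces $c$ to have the shape $c' \otimes d'$, whence, e.g., $\agda{dom}\,(c' \to d')\,x = c' : \Cast{C}{A}$ as required, and likewise for the other five.

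First I would handle the three remaining fields $\agda{InertCross}{\to}$, $\agda{InertCross}{\times}$, and $\agda{InertCross}{+}$, all of which are instances of Lemma~\ref{lem:lambda-S-inert-cross}: from ``$c$ is inert with target $B \tu C$'' (resp.\ $B \times C$, $B + C$) that lemma simultaneously delivers the \agda{Cross} evidence for $c$ and the witnesses $A_1, A_2$ together with the equation $A \equiv A_1 \tu A_2$ (resp.\ $\times$, $+$). So essentially all the mathematical content of the proposition lives in the three lemmas, each discharged by a routine case analysis: Lemma~\ref{lem:lambda-S-active-or-inert} by induction over the coercion, sending each top-level, intermediate, and ground form to either \agda{Active} or \agda{Inert}; Lemma~\ref{lem:lambda-S-inert-cross} by inversion on the \agda{Inert} derivation followed by a peek at the coercion's shape; and Lemma~\ref{lem:lambda-S-base-not-inert} by noting that no inert form can have a base target. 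Once the record is assembled, Agda checks the whole thing mechanically, just as in the instantiations for EDA, EDI, $\lambda\textup{\textsf{B}}$, EDC, LDC, and $\lambda\textup{\textsf{C}}$ above.

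The hard part will be Lemma~\ref{lem:lambda-S-inert-cross}, because the $\lambda\textup{\textsf{S}}$ coercions are presented in three mutually recursive strata (top-level $\Cast{}{}$, intermediate $\ICast{}{}$, ground $\GCast{}{}$), so inverting an inert coercion with target $B \otimes C$ means threading through the embeddings $\ICast{A}{B} \hookrightarrow \Cast{A}{B}$ and $\GCast{A}{B} \hookrightarrow \ICast{A}{B}$ to reach a ground cross coercion $c' \otimes d'$, while ruling out the injection form $g; G!$ (whose target is $\Unk$) and the failure coercion $\cfail{\ell}$ (which is not a \agda{Cross} cast); the care here is in confirming that the canonical-forms grammar genuinely forbids an inert coercion from having a non-$\Unk$ target of mismatched shape. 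Everything else is record assembly that I expect to go through without incident.
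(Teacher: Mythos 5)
Your proposal coincides with the paper's (implicit) proof: the proposition is stated there with no further argument because its entire content is exactly the record assembly you describe, with Lemma~\ref{lem:lambda-S-active-or-inert} supplying \agda{ActiveOrInert}, Lemma~\ref{lem:lambda-S-inert-cross} supplying the three $\agda{InertCross}$ fields, Lemma~\ref{lem:lambda-S-base-not-inert} supplying \agda{baseNotInert}, and the displayed \agda{dom}/\agda{cod}/\agda{fst}/\agda{snd}/\agda{inl}/\agda{inr} definitions filling the remaining slots. One caution on the step you single out as hard: the paper's $\inert{c}$ figure for $\lambda\textup{\textsf{S}}$ also lists $\cfail{\ell}$ as inert, and since $\cfail{\ell} : \ICast{A}{B}$ places no constraint on its target type, dismissing that case ``because it is not a \agda{Cross} cast'' is circular as phrased --- the case is only dischargeable if the \agda{Inert} predicate (as in the mechanization) genuinely excludes the failure coercion, so you should make that exclusion explicit rather than infer it from the conclusion you are trying to prove.
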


To support space efficiency, we define a composition operator for the
coercions of $\lambda\textup{\textsf{S}}$. The operator uses two
auxiliary versions of the operator for intermediate and ground
coercions. The operator that composes an intermediate coercion with a
coercion always yields an intermediate coercion. The operator that
composes two ground coercions always returns a ground coercion.  Agda
does not automatically prove termination for this set of mutually
recursive functions, so we manually prove termination, using the sum
of the sizes of the two coercions as the measure. \\[1ex]
\fbox{$c \fatsemi d$} \fbox{$i \fatsemi d$} \fbox{$g \fatsemi h$}
\begin{align*}
  \agda{id} \fatsemi d &= d \\
  (G?^\ell; i) \fatsemi d &= G?^\ell ; (i \fatsemi d) \\
  \\
  (g; G!) \fatsemi \agda{id} &= g; G! \\
  g \fatsemi (h; H!) &= (g \fatsemi h) ; H! \\
  (g; G!) \fatsemi (G?^\ell ; i) &= g \fatsemi i \\
  (g; G!) \fatsemi (H?^\ell ; i) &= \cfail{\ell} & \text{if } G \neq H \\
  \cfail{\ell} \fatsemi d &= \cfail{\ell} \\
  g \fatsemi \cfail{\ell} &= \cfail{\ell} \\
  \\
  \agda{id} \fatsemi \agda{id} &= \agda{id} \\
  (c_1 \to d_1) \fatsemi (c_2 \to d_2) &= (c_2 \fatsemi c_1) \to (d_1 \fatsemi d_2) \\
  (c_1 \times d_1) \fatsemi (c_2 \times d_2) &= (c_1 \fatsemi c_2) \times (d_1 \fatsemi d_2) \\
  (c_1 + d_1) \fatsemi (c_2 + d_2) &= (c_1 \fatsemi c_2) + (d_1 \fatsemi d_2) 
\end{align*}

We define \agda{applyCast} for $\lambda\textup{\textsf{S}}$ by cases on
the coercion.
\[
  \agda{applyCast} : \forall \Gamma A B.\, (M : \Gamma \vdash A) \to \agda{SimpleValue}\app M\to (c : \Cast{A}{B}) \to \agda{Active}\app c \to \Gamma \vdash B
\]
\[
\begin{array}{lllllll}
  \agda{applyCast} & M & v & \agda{id} & a &=& M \\
  \agda{applyCast} & M & v & \cfail{\ell} & a &=& \blame{\ell}{} \\
  \agda{applyCast} & \cast{M}{c} & v & (G?^\ell; i) & a &=& 
     \cast{M}{c \fatsemi (G?^\ell; i)} \\
  \agda{applyCast} & (\key{cons}\app V_1 \app V_2) & v & c \times d & a &=&
     \key{cons}\app (\cast{V_1}{c}) \app (\cast{V_2}{d}) \\
  \agda{applyCast} & (\key{inl}\app V) & v & c + d & a &=&
     \key{inl}\app (\cast{V}{c})\\
  \agda{applyCast} & (\key{inr}\app V) & v & c + d & a &=&
     \key{inr}\app (\cast{V}{d})
\end{array}
\]

\begin{proposition}
$\lambda\textup{\textsf{S}}$ is an instance of the
  $\agda{ComposableCasts}$ structure.
\end{proposition}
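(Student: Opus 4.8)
The \agda{ComposableCasts} record adds to \agda{PreCastStruct} only the two operational fields \agda{applyCast} and $\fatsemi$, with no further equational obligations (the coherence laws relating composition to sizes and heights enter later, in \agda{CastHeight}). Since Proposition~\ref{prop:lambda-S-active-pre-cast-struct} already supplies the \agda{PreCastStruct} instance for $\lambda\textup{\textsf{S}}$, the plan is to verify that the \agda{applyCast} function and the composition operator $\fatsemi$ displayed above are exhaustive and intrinsically well-typed, and that $\fatsemi$ terminates; once Agda accepts the definitions, the record is complete.

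For \agda{applyCast} I would do case analysis on the witness $a : \act{c}$. The identity cases $\key{Aid}$ and $\key{Abase}$ return the value unchanged, which is well-typed because the coercion's source and target coincide; $\key{Afail}$ returns $\blame{\ell}{}$, which inhabits every target type. For the cross-cast cases $\key{A}{\times}$ and $\key{A}{+}$, canonical-form reasoning at product and sum type (the analogue of Lemma~\ref{lem:simple-base}) forces the value to be $\key{cons}\,V_1\,V_2$, $\key{inl}\,V$, or $\key{inr}\,V$, and we push the component coercions read off the $c \otimes d$ formation rule inside, which typechecks by those sub-coercions' types. For the projection case $\key{Aproj}$ the coercion $(G?^\ell; i)$ has source $\Unk$, so Lemma~\ref{lem:eff-canonical-star} forces the value into the shape $\cast{M}{c'}$ with $c'$ inert; we then return $\cast{M}{c' \fatsemi (G?^\ell; i)}$, and the types line up because composition has type $\Cast{A}{\Unk} \to \Cast{\Unk}{B} \to \Cast{A}{B}$ at that instance.

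For $\fatsemi$ the typing of each clause is routine except for the variance bookkeeping of the function-coercion clause, where the domains are composed in the opposite order: if $(c_1 \to d_1) : \Cast{(A \to B)}{(C \to D)}$ and $(c_2 \to d_2) : \Cast{(C \to D)}{(E \to F)}$, unfolding the $\to$-formation rule gives $c_1 : \Cast{C}{A}$, $d_1 : \Cast{B}{D}$, $c_2 : \Cast{E}{C}$, $d_2 : \Cast{D}{F}$, so $c_2 \fatsemi c_1 : \Cast{E}{A}$ and $d_1 \fatsemi d_2 : \Cast{B}{F}$, which is exactly what $(c_2 \fatsemi c_1) \to (d_1 \fatsemi d_2) : \Cast{(A \to B)}{(E \to F)}$ requires. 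Exhaustiveness needs a short case analysis confirming that, for any composable pair, the second argument is $\agda{id}$, a projection-headed coercion $H?^\ell; i$, a ground coercion with injection $g; H!$, a bare ground coercion, or a failure coercion, and that the listed clauses cover every combination; the overlapping embeddings of ground into intermediate into top-level coercions from Figure~\ref{fig:coercions-lambdaS} are handled exactly by Agda's coverage checker.

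The one genuinely delicate step is termination of the three mutually recursive functions $c \fatsemi d$, $i \fatsemi d$, and $g \fatsemi h$. Structural recursion does not apply, because the function-coercion clause recurses via $c_2 \fatsemi c_1$ and $d_1 \fatsemi d_2$ on \emph{top-level} coercions nested inside \emph{ground} coercions, so no argument decreases structurally. I would instead use well-founded recursion on the single measure $\agda{size}(c) + \agda{size}(d)$ shared by all three functions, and verify that it strictly decreases at every recursive call: in $(G?^\ell; i) \fatsemi d$ and $g \fatsemi (h; H!)$ one argument sheds an outermost constructor, in $(g; G!) \fatsemi (G?^\ell; i)$ both arguments shrink, and in each $\otimes$-coercion clause the two outermost $\otimes$ constructors are discarded before recursing on the components. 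Packaging this as an accessibility argument (equivalently, threading a fuel bound derived from the measure) is the main obstacle; everything else is bookkeeping that Agda discharges once the definitions are in place.
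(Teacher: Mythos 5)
Your proposal is correct and matches the paper's approach: the paper establishes this proposition simply by exhibiting the two fields --- the $\fatsemi$ operator (in its three mutually recursive layers for top-level, intermediate, and ground coercions) and \agda{applyCast} by cases on the active coercion --- and explicitly notes that termination of $\fatsemi$ must be proved manually using the sum of the sizes of the two coercions as the measure, exactly the well-founded-recursion argument you give. Your elaboration of the contravariant typing of the function-coercion clause and of why structural recursion fails is a faithful expansion of what the paper leaves implicit.
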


We import and instantiate the reduction semantics and proof of type
safety from Section~\ref{sec:EfficientParamCasts} to obtain the
following definitions and results for $\lambda\textup{\textsf{S}}$.

\begin{definition}[Reduction]
  \label{def:lambdaS-reduction}
  The reduction relation $\itm{ctx} \vdash M \longrightarrow_S N$ of
  $\lambda\textup{\textsf{S}}$ is the reduction relation of
  $\SC(\CastOp)$ instantiated with $\lambda\textup{\textsf{S}}$'s
  instance of the $\agda{ComposableCasts}$ structure.
\end{definition}

\begin{corollary}[Preservation for $\lambda\textup{\textsf{S}}$]
  If $\Gamma \vdash M : A$  and $M \longrightarrow_S M'$,
  then $\Gamma \vdash M' : A$.
\end{corollary}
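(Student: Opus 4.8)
The plan is to obtain this statement as a direct instance of the generic Preservation theorem for the Space-Efficient Parameterized Cast Calculus $\SC(\CastOp)$ from Section~\ref{sec:EfficientParamCasts}. That theorem requires no real proof: the terms of $\SC(\CastOp)$ are intrinsically typed and the reduction relation of Figure~\ref{fig:eff-param-cast-reduction} is declared with its type-preservation property built into its Agda type, so the mere fact that Agda accepts the definition of $\longrightarrow$ witnesses that every step maps a term of type $A$ to a term of type $A$. By Definition~\ref{def:lambdaS-reduction}, $\longrightarrow_S$ \emph{is} the reduction relation of $\SC(\CastOp)$ instantiated with the $\lambda\textup{\textsf{S}}$ implementation of $\agda{ComposableCasts}$, and the generic theorem is parameterized over exactly such a structure, so it specializes immediately once we have supplied that structure --- namely the proposition that $\lambda\textup{\textsf{S}}$ is an instance of $\agda{ComposableCasts}$.

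The one thing that must actually hold for this to go through --- and that was already discharged when we verified that $\lambda\textup{\textsf{S}}$ is a $\agda{ComposableCasts}$ --- is that the two term-level fields of that structure respect types. Concretely, $\agda{applyCast}$ must send $(M : \Gamma \vdash A)$ with $c : \Cast{A}{B}$ active to a term in $\Gamma \vdash B$ in every branch of its case analysis on the coercion, and $-\fatsemi-$ must send $\Cast{A}{B}$ and $\Cast{B}{C}$ to $\Cast{A}{C}$. Both are immediate from the definitions given for $\lambda\textup{\textsf{S}}$: each defining equation of $\agda{applyCast}$ uses only the supplied subterms and subcoercions at their stated types --- appealing to Lemma~\ref{lem:eff-canonical-star} and Lemma~\ref{lem:simple-base} in the projection and identity cases to learn the shape of the simple value being cast --- and the mutually recursive equations for $\fatsemi$ follow the three-layer grammar of top-level, intermediate, and ground coercions of Figure~\ref{fig:coercions-lambdaS} so that each result inhabits the intended judgment. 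Rule \eqref{eq:sc-compose-casts} then plugs $c \fatsemi d$ into the position where a cast from $A$ to $C$ is expected, preserving the type.

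There is no substantive obstacle here; the statement is a packaging corollary. The genuinely delicate point in defining $\lambda\textup{\textsf{S}}$'s $\agda{ComposableCasts}$ instance --- that $\fatsemi$ terminates, which Agda cannot see automatically --- was dealt with separately via the explicit measure on the sum of the two coercions' sizes and is orthogonal to type preservation, which concerns the \emph{result} type of $\fatsemi$ rather than its well-definedness.
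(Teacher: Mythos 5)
Your proposal matches the paper's argument exactly: the corollary is obtained by instantiating the generic intrinsically-typed Preservation theorem for $\SC(\CastOp)$ with the $\lambda\textup{\textsf{S}}$ instance of $\agda{ComposableCasts}$, and the only content is that $\agda{applyCast}$ and $\fatsemi$ have the declared types, which Agda checks. Correct, and no further elaboration is needed.
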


\begin{corollary}[Progress for $\lambda\textup{\textsf{S}}$]
  If $\emptyset \vdash M : A$, then 
  \begin{enumerate}
  \item $M \longrightarrow_S M'$ for some $M'$,
  \item $\val{M}$, or
  \item $M \equiv \blame{\ell}{}$.
  \end{enumerate}
\end{corollary}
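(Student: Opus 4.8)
The plan is to obtain this corollary as a direct instantiation of Theorem~\ref{thm:sc-progress}, the Progress theorem for the space-efficient parameterized calculus $\SC(\CastOp)$. That theorem is parameterized by the \agda{ComposableCasts} structure, which extends \agda{PreCastStruct}, so everything it needs is already in hand: the \agda{PreCastStruct} instance for $\lambda\textup{\textsf{S}}$ (Proposition~\ref{prop:lambda-S-active-pre-cast-struct}) and the \agda{ComposableCasts} instance for $\lambda\textup{\textsf{S}}$ established just above. First I would invoke Theorem~\ref{thm:sc-progress} with these instances; by Definition~\ref{def:lambdaS-reduction} the instantiated $\SC(\CastOp)$ reduction relation is exactly $\longrightarrow_S$, so the three alternatives returned are precisely those claimed here.

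There is only one cosmetic point to reconcile. The generic statement phrases its first alternative as ``$\itm{ctx} \vdash M \longrightarrow M'$ for some $M'$ and $\itm{ctx}$'', whereas the corollary writes ``$M \longrightarrow_S M'$ for some $M'$''. Since $\longrightarrow_S$ is the ternary judgement $\itm{ctx} \vdash M \longrightarrow_S N$, reading ``for some $M'$'' as existentially quantifying the context as well makes the two statements identical; no additional argument is needed.

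Consequently the real content --- such as it is --- lives in the hypotheses that Theorem~\ref{thm:sc-progress} consumes, all of which have already been discharged in the excerpt. The generic Progress proof uses, case by case on $M$: \agda{ActiveOrInert} (here Lemma~\ref{lem:lambda-S-active-or-inert}, which must range over the three-level top-level/intermediate/ground grammar of $\lambda\textup{\textsf{S}}$ coercions), \agda{baseNotInert} (Lemma~\ref{lem:lambda-S-base-not-inert}, so that a value at a base type is a constant), and the \agda{InertCross} fields (Lemma~\ref{lem:lambda-S-inert-cross}, so that an inert cast into a function, pair, or sum type is a cross cast whose source has the matching shape), together with Lemma~\ref{lem:switch-back} for the switch from a \agda{NonCast} to an \agda{Any} reduction context. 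If any obstacle remains, it is only in confirming totality and well-typedness of the composition operator $\fatsemi$ and of \agda{applyCast}, the sole nontrivial subpoint being the termination of the mutually recursive $\fatsemi$ definition, which is settled by taking the sum of the sizes of the two coercions as the decreasing measure. With all these facts in place, the corollary is immediate.
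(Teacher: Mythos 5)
Your proposal is correct and matches the paper's own treatment: the corollary is obtained by instantiating Theorem~\ref{thm:sc-progress} with $\lambda\textup{\textsf{S}}$'s \agda{ComposableCasts} instance, with the reduction relation identified as $\longrightarrow_S$ via Definition~\ref{def:lambdaS-reduction}. Your observation that the existential over the reduction context $\itm{ctx}$ is implicit in the corollary's phrasing is an accurate and harmless reconciliation.
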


\subsubsection{Space Efficiency}

Next we establish that $\lambda\textup{\textsf{S}}$ is an instance of
the $\agda{CastHeight}$ structure so that we can apply
Theorem~\ref{thm:space-consumption} (Space Consumption) to obtain
space efficiency for $\lambda\textup{\textsf{S}}$.

We define the height of a coercion as follows.
\begin{align*}
  \agda{height}(\agda{id}) &= 0 \\
  \agda{height}(\cfail{\ell}) &= 0 \\
  \agda{height}(G?^\ell; i) &= \agda{height}(i) \\
  \agda{height}(g; G!) &= \agda{height}(g) \\
  \agda{height}(c \to d) &= 1 + \max(\agda{height}(c), \agda{height}(d)) \\
  \agda{height}(c \times d) &= 1 + \max(\agda{height}(c), \agda{height}(d)) \\
  \agda{height}(c + d) &= 1 + \max(\agda{height}(c), \agda{height}(d)) 
\end{align*}
The size of a coercion is given by the following definition.
\begin{align*}
  \agda{size}(\agda{id}) &= 0 \\
  \agda{size}(\cfail{\ell}) &= 0 \\
  \agda{size}(G?^\ell; i) &= 2 + \agda{size}(i) \\
  \agda{size}(g; G!) &= 2 + \agda{size}(g) \\
  \agda{size}(c \to d) &= 1 + \agda{size}(c) + \agda{size}(d) \\
  \agda{size}(c \times d) &= 1 + \agda{size}(c) + \agda{size}(d) \\
  \agda{size}(c + d) &= 1 + \agda{size}(c) + \agda{size}(d) 
\end{align*}

The cast height of the result of \agda{applyCast} applied to a simple
value $S$ and coercion $c$ is less than the max of the cast height of
$S$ and the height of $c$.
\begin{lemma}
  \label{lem:applyCast-height}
  $\agda{c\mhyphen{}height}(\agda{applyCast}\app S\app c) \leq \max(\agda{height}(S), \agda{height}(c))$
\end{lemma}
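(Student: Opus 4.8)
The plan is to prove the bound by a non-inductive case analysis on the active-cast witness $a : \act{c}$, which runs in parallel with the clauses of the definition of \agda{applyCast} for $\lambda\textup{\textsf{S}}$. Note first that this lemma is precisely what is needed to discharge the $\agda{applyCastHeight}$ field of the \agda{CastHeight} structure (Figure~\ref{fig:eff-cast-height}) for $\lambda\textup{\textsf{S}}$. Throughout, $\agda{height}(S)$ abbreviates the cast-height $\agda{c\mhyphen{}height}(S)$ of the value term $S$.

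The identity witnesses (for $c = \agda{id}$, at base type or at $\Unk$) and the failure witness (for $c = \cfail{\ell}$) are immediate: \agda{applyCast} returns $S$ unchanged in the first case, with $\agda{height}(\agda{id}) = 0$, and returns $\blame{\ell}{}$ — which contains no casts, hence has cast-height $0$ — in the second, so the left-hand side is bounded by $\max(\agda{height}(S), \agda{height}(c))$ trivially. For the cross-cast witnesses on pairs and sums, inversion on the definition of simple values forces $S$ to be $\key{cons}\app V_1 \app V_2$ (resp.\ $\key{inl}\app V$ or $\key{inr}\app V$) with $c$ of the form $c_1 \times c_2$ (resp.\ $c_1 + c_2$), and \agda{applyCast} pushes the sub-coercions onto the components, for instance returning $\key{cons}\app (\cast{V_1}{c_1})\app (\cast{V_2}{c_2})$. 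Here I would unfold $\agda{c\mhyphen{}height}$ of the result, unfold $\agda{height}(c_1 \times c_2) = 1 + \max(\agda{height}(c_1), \agda{height}(c_2))$ together with the defining equation for the cast-height of $\key{cons}$, and observe that $\agda{height}(c_j) \le \agda{height}(c)$ and $\agda{c\mhyphen{}height}(V_j) \le \agda{c\mhyphen{}height}(S)$; the inequality then follows by monotonicity of $\max$.

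The one clause that genuinely exercises the \agda{CastHeight} structure is the projection witness $\key{Aproj}$, where $c = G?^\ell; i$ with source type $\Unk$. Since no simple value has type $\Unk$ (Lemma~\ref{lem:simple-star}), the value being cast is itself wrapped in a cast; by the canonical-form lemma for $\Unk$ (Lemma~\ref{lem:eff-canonical-star}) we may write it as $\cast{M}{d}$ with $d$ inert, and \agda{applyCast} returns $\cast{M}{d \fatsemi (G?^\ell; i)}$. Unfolding, the cast-height of this result is $\max(\agda{c\mhyphen{}height}(M), \agda{height}(d \fatsemi (G?^\ell; i)))$; the $\agda{compose\mhyphen{}height}$ field of \agda{CastHeight} bounds the second component by $\max(\agda{height}(d), \agda{height}(G?^\ell; i))$, and since $\agda{c\mhyphen{}height}(\cast{M}{d}) = \max(\agda{c\mhyphen{}height}(M), \agda{height}(d))$, the target bound $\le \max(\agda{height}(S), \agda{height}(c))$ drops out. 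I expect this projection/composition case to be the main obstacle: unlike the others it is not purely local, relying on $\agda{compose\mhyphen{}height}$ — and hence on the precise shape of the composition operator $\fatsemi$ for $\lambda\textup{\textsf{S}}$ — and on making the canonical-form decomposition of the cast value line up with the stratified definition of values. The remaining cases amount to routine arithmetic with $\max$.
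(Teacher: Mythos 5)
Your proof is correct and takes the only sensible route, which is surely what the Agda proof does: a clause-by-clause case analysis on the active witness mirroring the definition of \agda{applyCast} for $\lambda\textup{\textsf{S}}$ (the paper states this lemma with no prose proof). Two caveats are worth flagging. First, in the $\key{Aproj}$ case your own premise undercuts your argument: the \agda{ComposableCasts} signature gives \agda{applyCast} a \emph{simple} value, and Lemma~\ref{lem:simple-star} says no simple value has type $\Unk$, so under a literal reading that case is discharged as impossible rather than by the canonical-forms decomposition of Lemma~\ref{lem:eff-canonical-star} (which is about \agda{Value}, not \agda{SimpleValue}); your argument instead tracks the clause the paper actually writes for $\cast{M}{c}$, and the bound holds on either reading, but you should pick one. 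Second, you cannot "use the $\agda{compose\mhyphen{}height}$ field of \agda{CastHeight}" while you are in the middle of building the \agda{CastHeight} instance for $\lambda\textup{\textsf{S}}$; what you need is the concrete fact $\agda{height}(c \fatsemi d) \le \max(\agda{height}(c),\agda{height}(d))$ for $\lambda\textup{\textsf{S}}$'s composition, which is a sibling proof obligation established by the mutual induction that mirrors the three-layer definition of $\fatsemi$ --- so the overall development is not ``non-inductive,'' only your top-level case split is.
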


The \agda{dom}, \agda{cod}, \agda{fst}, \agda{snd}, \agda{inl}, and
\agda{inr} operators on coercions all return coercions of equal or
lesser height than their input.

\begin{lemma}\ 
  \label{lem:dom-height}
  \begin{enumerate}
  \item $\agda{height}(\agda{dom}\app c \app x) \leq \agda{height}(c)$
  \item $\agda{height}(\agda{cod}\app c \app x)) \leq \agda{height}(c)$
  \item $\agda{height}(\agda{fst}\app c \app x) \leq \agda{height}(c)$
  \item $\agda{height}(\agda{snd}\app c \app x) \leq \agda{height}(c)$
  \item $\agda{height}(\agda{inl}\app c \app x) \leq \agda{height}(c)$
  \item $\agda{height}(\agda{inr}\app c \app x) \leq \agda{height}(c)$
  \end{enumerate}
\end{lemma}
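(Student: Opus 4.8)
The plan is to prove all six inequalities simultaneously by a single uniform case analysis, exploiting the fact that the six operators and their height bounds are perfectly symmetric. The key observation is that each of $\agda{dom}$, $\agda{cod}$, $\agda{fst}$, $\agda{snd}$, $\agda{inl}$, and $\agda{inr}$ takes as its second argument $x$ the evidence that the input coercion $c$ is a cross cast, and that for $\lambda\textup{\textsf{S}}$ the predicate $\cross{c}$ has exactly one inhabitant $\key{Cross}{\otimes}$ for each $\otimes \in \{ \to, \times, + \}$, each of which witnesses that $c$ has the syntactic shape $c_1 \otimes d_1$. So the first step is to invert on $x : \cross{c}$, which immediately forces $c \equiv c_1 \otimes d_1$.

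Once $c$ is known to have this shape, the result falls out by unfolding two definitions. The defining clauses of the operators give, for instance, $\dom{c \app x} = c_1$ and $\cod{c \app x} = d_1$ when $\otimes = {\to}$, and analogously $\agda{fst}\app c \app x = c_1$, $\agda{snd}\app c \app x = d_1$ for products and $\agda{inl}\app c \app x = c_1$, $\agda{inr}\app c \app x = d_1$ for sums. The height of a cross cast is, uniformly across the three constructors, $\agda{height}(c_1 \otimes d_1) = 1 + \max(\agda{height}(c_1), \agda{height}(d_1))$. Hence each of the six statements reduces to the elementary chain $\agda{height}(c_1) \leq \max(\agda{height}(c_1), \agda{height}(d_1)) \leq 1 + \max(\agda{height}(c_1), \agda{height}(d_1)) = \agda{height}(c)$ and the symmetric one for $d_1$, which in Agda are discharged by the standard-library facts $n \leq \max(n,m)$, $m \leq \max(n,m)$, $n \leq 1 + n$, and transitivity of $\leq$.

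There is essentially no obstacle here; the proof is six copies of a two-line argument. The only mild subtlety is bookkeeping: in the mechanization $\lambda\textup{\textsf{S}}$'s coercions form a three-level grammar (top-level, intermediate, ground), and a cross cast $c_1 \otimes d_1$ lives at the ground level but is coerced up to a top-level coercion in the operators' signatures. Since $\agda{height}$ is defined directly on the seven syntactic shapes irrespective of level, the inversion on $\cross{c}$ still bottoms out at the $c_1 \otimes d_1$ clause with the stated height, so no additional reasoning about the level-coercions is needed.
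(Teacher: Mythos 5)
Your proof is correct and is essentially the argument the paper leaves implicit (the lemma is stated without an explicit proof): inverting the $\cross{c}$ evidence forces $c \equiv c_1 \otimes d_1$, the operators project out $c_1$ or $d_1$, and $\agda{height}(c_i) \leq 1 + \max(\agda{height}(c_1),\agda{height}(d_1)) = \agda{height}(c)$ finishes each case. Your remark about the three-level grammar is also accurate --- $\agda{height}$ is defined uniformly on the syntactic shapes, so the level inclusions introduce no extra reasoning.
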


The size of a coercion $c$ is bounded by $9 \cdot
2^{\agda{height}(c)}$. We prove this by simultaneously proving the
following three facts about the three kinds of coercions.

\begin{lemma}\ 
  \label{lem:size-height}
  \begin{enumerate}
  \item $\agda{size}(c) + 5 \le 9 \cdot 2^{\agda{height}(c)}$
  \item $\agda{size}(i) + 7 \le 9 \cdot 2^{\agda{height}(i)}$
  \item $\agda{size}(g) + 9\le 9 \cdot 2^{\agda{height}(g)}$
  \end{enumerate}
\end{lemma}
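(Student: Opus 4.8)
The plan is to prove the three inequalities simultaneously by mutual structural induction, following the three mutually-recursive grammars of $\lambda\textup{\textsf{S}}$ coercions (Figure~\ref{fig:coercions-lambdaS}): top-level coercions $c : \Cast{A}{B}$, intermediate coercions $i : \ICast{A}{B}$, and ground coercions $g : \GCast{A}{B}$. The mutual induction is well-founded because the nesting of the grammars descends strictly: the component of a top-level projection $H?^{\ell}\mathbin{;}i$ is an intermediate coercion, the component of an intermediate $g\mathbin{;}G!$ is a ground coercion, and the components of a ground $c \otimes d$ are top-level coercions, all structurally smaller. The additive constants $5$, $7$, $9$ are chosen precisely so that each wrapper coercion ($H?^{\ell}\mathbin{;}-$ taking an intermediate coercion to a top-level one, and $-\mathbin{;}G!$ taking a ground coercion to an intermediate one), which adds $2$ to $\agda{size}$ but leaves $\agda{height}$ unchanged, is exactly paid for by the corresponding drop of $2$ in the additive slack, while the binary-connective cases consume the doubling of $2^{\agda{height}}$ exactly. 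The only arithmetic fact needed beyond elementary manipulation is $2^{a} + 2^{b} \le 2^{\max(a,b)+1}$, which I would first state as a one-line helper lemma.

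For ground coercions (statement 3) I would case on $g$. If $g = \agda{id}$ on a base type, then $\agda{size}(g) + 9 = 9 = 9 \cdot 2^{0}$. If $g = c \otimes d$ with $\otimes \in \{\to,\times,+\}$, then $\agda{size}(g) + 9 = \agda{size}(c) + \agda{size}(d) + 10$, and applying the top-level induction hypothesis to $c$ and $d$ bounds this by $9 \cdot 2^{\agda{height}(c)} + 9 \cdot 2^{\agda{height}(d)} \le 9 \cdot 2^{\max(\agda{height}(c),\agda{height}(d))+1} = 9 \cdot 2^{\agda{height}(g)}$ via the helper lemma. For intermediate coercions (statement 2): for $i = g\mathbin{;}G!$ we get $\agda{size}(i) + 7 = \agda{size}(g) + 9 \le 9 \cdot 2^{\agda{height}(g)} = 9 \cdot 2^{\agda{height}(i)}$ by the ground hypothesis; for $i = g$ reused at intermediate level we use $\agda{size}(i) + 7 \le \agda{size}(g) + 9$ and the same hypothesis, noting $\agda{height}(i) = \agda{height}(g)$; for $i = \cfail{\ell}$, $\agda{size}(i) + 7 = 7 \le 9$. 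The top-level case (statement 1) is symmetric: $\agda{id} : \Cast{\Unk}{\Unk}$ gives $5 \le 9$; $c = H?^{\ell}\mathbin{;}i$ gives $\agda{size}(c) + 5 = \agda{size}(i) + 7 \le 9 \cdot 2^{\agda{height}(i)} = 9 \cdot 2^{\agda{height}(c)}$; and an intermediate coercion reused at top level uses $\agda{size}(c) + 5 \le \agda{size}(i) + 7$.

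There is no real obstacle here; in Agda this is a single function defined by mutual recursion over the three coercion datatypes that mirrors the case analysis above. The only subtlety worth flagging is that the $\otimes$ cases are tight --- there is no slack in them --- so the three additive constants cannot be improved in isolation, which is exactly why the lemma bundles all three statements and why the induction closes only when they are proved together. This lemma then supplies the $\agda{size\mhyphen{}height}$ field of the $\agda{CastHeight}$ structure for $\lambda\textup{\textsf{S}}$ (taking $k_1 = 5$ and $k_2 = 9$), and hence, via Theorem~\ref{thm:space-consumption}, the space-efficiency result for $\lambda\textup{\textsf{S}}$.
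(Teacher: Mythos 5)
Your proof is correct and takes exactly the approach the paper indicates (the paper only says the three facts are proved simultaneously by induction on the three kinds of coercions, deferring details to the Agda development); your case analysis and arithmetic check out, with the constants $5,7,9$ absorbing the $+2$ from each wrapper and the helper fact $2^{a}+2^{b}\le 2^{\max(a,b)+1}$ handling the $\otimes$ cases.
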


The above lemmas and definitions establish the following.

\begin{proposition}
$\lambda\textup{\textsf{S}}$ is an instance of the
  $\agda{CastHeight}$ structure.
\end{proposition}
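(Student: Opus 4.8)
The plan is to build the \agda{CastHeight} instance (Figure~\ref{fig:eff-cast-height}) on top of the \agda{ComposableCasts} instance already established for $\lambda\textup{\textsf{S}}$, supplying each additional field from the definitions and lemmas just given and discharging the two obligations not yet recorded as lemmas, \agda{compose\mhyphen{}height} and \agda{applyCastSize}. The \agda{height} and \agda{size} fields are exactly the two recursive definitions stated above on the three-part grammar of coercions. The six operator-height fields \agda{dom\mhyphen{}height}, \agda{cod\mhyphen{}height}, \agda{fst\mhyphen{}height}, \agda{snd\mhyphen{}height}, \agda{inl\mhyphen{}height}, \agda{inr\mhyphen{}height} are Lemma~\ref{lem:dom-height}; each is in fact immediate from the definitions, since, for example, $\dom{(c \to d)\app x} = c$ while $\agda{height}(c \to d) = 1 + \max(\agda{height}(c), \agda{height}(d)) \ge \agda{height}(c)$. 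The field \agda{applyCastHeight} is Lemma~\ref{lem:applyCast-height}, reading the $\agda{height}$ of a term as its cast-height. Finally, \agda{size\mhyphen{}height} follows from part~(1) of Lemma~\ref{lem:size-height} by choosing the witnesses $k_1 = 5$ and $k_2 = 9$, so that $\agda{size}(c) + 5 \le 9 \cdot 2^{\agda{height}(c)}$ for every coercion $c$ (parts~(2) and~(3) of that lemma are the accompanying invariants needed for the simultaneous induction over intermediate and ground coercions).

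The one substantive obligation is \agda{compose\mhyphen{}height}: $\agda{height}(c \fatsemi d) \le \max(\agda{height}(c), \agda{height}(d))$. I would prove it by a simultaneous induction mirroring the clause structure of the three mutually recursive composition operators on top-level, intermediate, and ground coercions — with the termination measure, as noted when $\fatsemi$ was introduced, being the sum of the two coercions' sizes. The clauses returning one argument or a piece of it ($\agda{id} \fatsemi d$, $(g; G!) \fatsemi \agda{id}$, $g \fatsemi \cfail{\ell}$, $\cfail{\ell} \fatsemi d$, $\agda{id} \fatsemi \agda{id}$) are trivial, using that a failure coercion has height $0$. The injection/projection clauses $(G?^\ell; i) \fatsemi d = G?^\ell ; (i \fatsemi d)$, $g \fatsemi (h; H!) = (g \fatsemi h); H!$, and $(g; G!) \fatsemi (G?^\ell; i) = g \fatsemi i$ reduce directly to the induction hypothesis because $\agda{height}$ ignores the injection and projection markers. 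The genuinely structural clauses are the three for $\otimes \in \{\to, \times, +\}$; for example $\agda{height}((c_1 \to d_1) \fatsemi (c_2 \to d_2)) = 1 + \max(\agda{height}(c_2 \fatsemi c_1), \agda{height}(d_1 \fatsemi d_2))$, which by the induction hypotheses is at most $1 + \max(\max(\agda{height}(c_1), \agda{height}(c_2)), \max(\agda{height}(d_1), \agda{height}(d_2)))$, and associativity and commutativity of $\max$ turn this into $\max(\agda{height}(c_1 \to d_1), \agda{height}(c_2 \to d_2))$; the contravariant swap of the domains is harmless since $\max$ is symmetric. I expect this to be the main obstacle — not because any single case is hard but because the proof must track every clause of a large mutually recursive definition.

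For \agda{applyCastSize} I would proceed by case analysis on the clauses of \agda{applyCast} for $\lambda\textup{\textsf{S}}$: from a hypothesis $\sizeok{n}{\itm{false}}{M}$ on the argument $M$ I must produce $\sizeok{m}{\itm{false}}{\agda{applyCast}\app M\app c}$ with $m \le n + 2$. The $\agda{id}$ clause returns $M$, so $m = n$; the $\cfail{\ell}$ clause returns $\blame{\ell}{}$, which has count $0$. The projection clause replaces the single cast $c$ on $\cast{M_1}{c}$ by $c \fatsemi (G?^\ell; i)$ — or is vacuous, since by Lemma~\ref{lem:simple-star} no simple value has type $\Unk$ — so it keeps exactly one cast layer over the same body and leaves the count unchanged. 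In the $\times$ and $+$ clauses the argument is a $\key{cons}$, $\key{inl}$, or $\key{inr}$ of values, hence has count $0$; pushing the component coercions in yields $\key{cons}\app(\cast{V_1}{c})\app(\cast{V_2}{d})$ and the analogues for injections, which again have count $0$, using the routine lemma (in the Agda development) that a value satisfies the Size Predicate with count at most one so that the freshly introduced single casts are admissible. In every case $m \le n + 2$. Assembling all of these fields gives the \agda{CastHeight} instance, and feeding it to Theorem~\ref{thm:space-consumption} yields space efficiency for $\lambda\textup{\textsf{S}}$; the fully formal argument is in the Agda development.
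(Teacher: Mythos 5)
Your proposal is correct and takes essentially the same route as the paper, which obtains this proposition directly from the stated definitions of $\agda{height}$ and $\agda{size}$ together with Lemmas~\ref{lem:applyCast-height}, \ref{lem:dom-height}, and \ref{lem:size-height}. The only difference is that you explicitly discharge the two fields the paper leaves implicit ($\agda{compose\mhyphen{}height}$ by induction over the clauses of the three mutually recursive composition operators, and $\agda{applyCastSize}$ by cases on $\agda{applyCast}$), and both of those case analyses check out.
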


We apply Theorem~\ref{thm:space-consumption-lambdaS} (Space
Consumption) to obtain the following result for
$\lambda\textup{\textsf{S}}$.

\begin{corollary}[Space Consumption for $\lambda\textup{\textsf{S}}$]
  \label{thm:space-consumption-lambdaS}
  If ~$\Gamma \vdash M : A$, then there exists $c$ such that for any
  $M' : \Gamma \vdash A$ where $\itm{ctx} \vdash \compile{\Gamma}{M}
  \longrightarrow_S^{*} M'$, we have $\agda{real\mhyphen{}size}(M') \leq
  c \cdot \idealsize(M')$.
\end{corollary}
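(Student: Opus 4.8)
The plan is to obtain this corollary as a direct instantiation of Theorem~\ref{thm:space-consumption}, which already establishes the space bound $\agda{real\mhyphen{}size}(M') \leq c \cdot \idealsize(M')$ for \emph{any} cast representation that forms an instance of the \agda{CastHeight} structure (Figure~\ref{fig:eff-cast-height}). Consequently all that remains is to exhibit $\lambda\textup{\textsf{S}}$ as such an instance; once that instance is supplied, the conclusion follows verbatim from the generic proof, with the witness $c$ being the one computed there ($13 k_2 \cdot 2^{\agda{c\mhyphen{}height}(\compile{}{M})}$, where $k_1,k_2$ come from the $\agda{size\mhyphen{}height}$ field).

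First I would record the two measures \agda{height} and \agda{size} on $\lambda\textup{\textsf{S}}$ coercions as defined in this section and note that they are total (the mutual recursion over top-level, intermediate, and ground coercions is structural). Next I would discharge the per-operator obligations $\agda{dom\mhyphen{}height}$, $\agda{cod\mhyphen{}height}$, $\agda{fst\mhyphen{}height}$, $\agda{snd\mhyphen{}height}$, $\agda{inl\mhyphen{}height}$, $\agda{inr\mhyphen{}height}$, which are immediate since each operator simply projects a sub-coercion out of a ground cross coercion; these are collected in Lemma~\ref{lem:dom-height}. The field $\agda{applyCastHeight}$ is Lemma~\ref{lem:applyCast-height}, proved by cases on the active coercion: identity and failure are trivial, the cross cases on pairs and sums merely re-wrap the already-value components, and the projection case composes the incoming coercion with the coercion on the value (so its height is controlled by $\agda{compose\mhyphen{}height}$, below). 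The companion field $\agda{applyCastSize}$ is a similar case analysis showing the number of adjacent casts grows by at most $2+n$. Finally the exponential bound $\agda{size\mhyphen{}height}$ is Lemma~\ref{lem:size-height}: proved by a simultaneous induction over the three-part grammar with the strengthened constants $5$, $7$, $9$, whose slack absorbs the $+2$ contributed by each projection or injection layer. Assembling these makes $\lambda\textup{\textsf{S}}$ an instance of \agda{CastHeight}, and applying Theorem~\ref{thm:space-consumption} to that instance gives the corollary.

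The step I expect to be the main obstacle is the $\agda{compose\mhyphen{}height}$ field, $\agda{height}(c \fatsemi d) \leq \max(\agda{height}(c),\agda{height}(d))$. The composition operator $\fatsemi$ is itself three mutually recursive functions (on top-level, intermediate, and ground coercions), and Agda does not accept its termination automatically, so one must thread the explicit measure given by the sum of the sizes of the two arguments; the height bound must then be proved by an induction whose shape mirrors that recursion. The delicate cases are the ground ones $c_1 \to d_1 \fatsemi c_2 \to d_2$, $c_1\times d_1 \fatsemi c_2\times d_2$, and $c_1 + d_1 \fatsemi c_2 + d_2$, where one must check $1+\max$ of the recursively bounded component heights does not exceed $1+\max$ of the inputs (being careful that the domain position of $\to$ is contravariant, so the sides swap), together with the cases that collapse an injection against a matching projection or produce $\cfail{\ell}$, which must be shown only to decrease height. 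The difficulty is bookkeeping over these cases rather than any conceptual hurdle.
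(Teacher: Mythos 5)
Your proposal matches the paper's proof exactly: the corollary is obtained by instantiating the generic Space Consumption theorem (Theorem~\ref{thm:space-consumption}) with the proposition that $\lambda\textup{\textsf{S}}$ is an instance of \agda{CastHeight}, whose fields are discharged by the height/size definitions together with Lemmas~\ref{lem:applyCast-height}, \ref{lem:dom-height}, and \ref{lem:size-height}. Your additional remarks on the $\agda{compose\mhyphen{}height}$ obligation and the contravariant function case correctly identify the one field the paper leaves implicit, but this is elaboration rather than a different route.
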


\subsection{Hypercoercions}
\label{sec:hypercoercions}

This section develops an alternative formulation of
$\lambda\textup{\textsf{S}}$ using a new representation, called
hypercoercions and written $\lambda\textup{\textsf{H}}$, that more
directly maps to a compact bit-level encoding. We presented
hypercoercions at the Workshop on Gradual
Typing~\citep{Lu:2019aa}. Hypercoercions are inspired by the
supercoercions of \citet{Garcia:2013fk}.

The idea behind hypercoercions is to choose a canonical representation
in which a coercion always has three parts, a beginning $p$, middle
$m$, and end $i$. The beginning part $p$ may be a projection or an
identity. The middle part $m$ is a cross cast or an identity cast at
base type.  The end part $i$ may be an injection, failure, or
identity. The definition of hypercoercions, given in
Figure~\ref{fig:hypercoercions}, factors the definition into these
three parts.

\begin{figure}[tb]
\fbox{$c,d : \Cast{A}{B}$}
\fbox{$p : \Proj{A}{ B}$}
\fbox{$m : \Middle{ A}{ B}$}
\fbox{$i : \Inj{ A}{ B}$}\\
\begin{gather*}
  \inference{}
            {\agda{id} : \Cast{\Unk}{\Unk}}
  \qquad
  \inference
      {p : \Proj{ A}{ B} & m : \Middle{ B}{ C} & i : \Inj{ C}{ D} }
      {p ; m ; i : \Cast{A}{D}}
  \\[1ex]
  \inference
      {}
      {\agda{id} : \Proj{ A}{ A} }
  \qquad
  \inference
      {}
      {H?^\ell : \Proj{\Unk}{H}}
  \\[1ex]
  \inference
      {}
      {\agda{id} : \Middle{a}{a}}
  \qquad
  \inference
      {c : \Cast{C}{A} & d : \Cast{B}{D}}
      {c \to d : \Middle{A \to B}{C \to D}}
  \\[1ex]
  \inference
      {c : \Cast{A}{C} & d : \Cast{B}{D}}
      {c \times d : \Middle{A \times B}{C \times D}}
  \qquad
  \inference
      {c : \Cast{A}{C} & d : \Cast{B}{D}}
      {c + d : \Middle{A + B}{C + D}}
  \\[1ex]
  \inference
      {}
      {\agda{id} : \Inj{A}{A}}
  \qquad
  \inference
      {}
      {G! : \Inj{G}{\Unk}}
  \qquad
  \inference
      {}
      {\cfail{\ell} : \Inj{A}{B}}
\end{gather*}
\fbox{$\coerce{A}{B}{\ell} = c,
  \coerce{A \otimes B}{A' \otimes B'}{\ell}_m = m$}
\begin{align*}
  \coerce{\Unk}{\Unk}{\ell} &=  \agda{id} \\
  \coerce{A}{\Unk}{\ell} &= \agda{id}; \coerce{A}{G}{\ell}_m; G!
    & \text{where } G = \agda{gnd}\app A \\
  \coerce{\Unk}{A}{\ell} &=  H?^\ell; \coerce{H}{A}{\ell}_m; \agda{id}
    & \text{where } H = \agda{gnd}\app A \\
  \coerce{b}{b}{\ell} &= \agda{id}; \agda{id}; \agda{id} \\
  \coerce{A \otimes B}{A' \otimes B'}{\ell}&=
      \agda{id}; \coerce{A \otimes B}{A' \otimes B'}{\ell}_m ; \agda{id} \\
  \coerce{A \to B}{A' \to B'}{\ell}_m & =
    \coerce{A'}{A}{\ell} \to \coerce{B}{B'}{\ell} \\
  \coerce{A \times B}{A' \times B'}{\ell}_m &=
      \coerce{A}{A'}{\ell} \times \coerce{B}{B'}{\ell} \\
  \coerce{A + B}{A' + B'}{\ell}_m &= 
      \coerce{A}{A'}{\ell} + \coerce{B}{B'}{\ell}
\end{align*}

\caption{Hypercoercions}
\label{fig:hypercoercions}
\end{figure}

The cast constructor is also defined in Figure~\ref{fig:hypercoercions}.

\subsubsection{Reduction Semantics and Type Safety}

A hypercoercion whose middle is a cast between function, pair, or sum
types, is a cross cast, provided the hypercoercion begins and ends
with the identity.\\[1ex]
\fbox{$\cross{c}$}
\begin{gather*}
  \inference
      {}
      {\key{Cross}{\otimes} : \cross{(\agda{id} ; c \otimes d ; \agda{id}})}
      \otimes \in \{ \to, \times, + \}
\end{gather*}

A hypercoercion that begins with identity and ends with
an injection to $\Unk$ is inert.
A hypercoercion that begins and ends with identity, but whose middle
is a cast between function types, is also insert. \\[1ex]
\fbox{$\inert{c}$}
\begin{gather*}
  \inference
      {}
      {\inert{(\agda{id} ; m ; G!)}}
  \qquad
  \inference
      {}
      {\inert{(\agda{id} ; c \to d; \agda{id})}}
\end{gather*}

There are four kinds of active hypercoercions.  The identity
hypercoercion is active, as is a hypercoercion that begins with a
projection from $\Unk$ or ends with a failure coercion.  Furthermore,
if the hypercoercion begins and ends with identity, and the middle is
either the identity or a cast between pair or sum types, then it is
active. \\[1ex]
\fbox{$\act{c}$}
\begin{gather*}
  \inference
      {}
      {\key{Aid} : \act{ \agda{id} } }
  \quad
  \inference
      {}
      {\key{Aproj} :  \act{(H?^\ell ; m ; i)}}
  \quad
  \inference
      {}
      {\key{Afail} : \act{(p ; m ; \cfail{\ell})}}
  \\[1ex]
  \inference
      {\otimes \in \{ \times, + \}}
      {\key{A}\otimes : \act{(\agda{id} ; c \otimes d ; \agda{id})}} 
  \quad
  \inference
      {}
      {\key{Abase} : \act{(\agda{id} ; \agda{id} ; \agda{id})}} 
\end{gather*}

\begin{lemma}
  \label{lem:hypercoercion-active-or-inert}
  For any types $A$ and $B$, $c : \Cast{A}{B}$ is either an active or
  inert cast.
\end{lemma}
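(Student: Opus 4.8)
The plan is to prove Lemma~\ref{lem:hypercoercion-active-or-inert} by a straightforward but careful case analysis on the three-part structure of a hypercoercion $c : \Cast{A}{B}$, following exactly the structure of the definitions of $\act{\cdot}$ and $\inert{\cdot}$ given just above. In Agda this is the \agda{ActiveOrInert} field of \agda{PreCastStruct}, i.e.\ a total function producing an element of $\act{c} \uplus \inert{c}$, so the real content is simply checking that the case split is exhaustive. First I would dispatch the case $c = \agda{id}$ (the top-level identity on $\Cast{\Unk}{\Unk}$), which is active by $\key{Aid}$. In the remaining case $c = p ; m ; i$ with $p : \Proj{A}{B'}$, $m : \Middle{B'}{C}$, and $i : \Inj{C}{D}$, I would then recurse on the beginning coercion $p$.

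If $p = H?^\ell$ is a projection, the hypercoercion is active by $\key{Aproj}$, regardless of $m$ and $i$. If $p = \agda{id}$, I would next case on the ending coercion $i$: if $i = \cfail{\ell}$, the cast is active by $\key{Afail}$; if $i = G!$ is an injection, then $c$ has the form $\agda{id}; m; G!$ and is inert by the first inert rule. The last sub-case is $i = \agda{id}$, where I would finally case on the middle coercion $m$: if $m = \agda{id}$ (the identity at an atomic type), the cast is $\agda{id}; \agda{id}; \agda{id}$ and is active by $\key{Abase}$; if $m = c_1 \to c_2$, the cast is $\agda{id}; c_1 \to c_2; \agda{id}$ and is inert by the second inert rule; and if $m = c_1 \times c_2$ or $m = c_1 + c_2$, the cast is active by $\key{A}\otimes$. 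These cases are exhaustive, so every hypercoercion is classified.

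There is no genuine obstacle here; the only points requiring care are bookkeeping. One must keep the two syntactically distinct identity hypercoercions apart --- the top-level $\agda{id} : \Cast{\Unk}{\Unk}$ handled by $\key{Aid}$ versus the three-part $\agda{id}; \agda{id}; \agda{id}$ handled by $\key{Abase}$ --- and note that the atomic type $a$ appearing in $\agda{id} : \Middle{a}{a}$ may itself be $\Unk$, which is still covered by $\key{Abase}$. Since the split is driven purely by the constructors of the intrinsically-typed hypercoercion datatype, Agda's coverage checker confirms exhaustiveness once each leaf is resolved, and the proof is a direct pattern match. This lemma is then the first of the routine instance-checking lemmas (alongside Lemma~\ref{lem:hypercoercion-active-or-inert}'s companions on inert-cross and base-not-inert) needed to show that $\lambda\textup{\textsf{H}}$ is an instance of \agda{PreCastStruct}.
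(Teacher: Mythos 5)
Your case analysis is correct and exhaustive: it covers the top-level $\agda{id}$, then splits $p;m;i$ on $p$, then $i$, then $m$, and each leaf lands squarely in one of the $\act{\cdot}$ or $\inert{\cdot}$ rules as you claim. The paper gives no prose proof for this lemma (it is discharged by a direct pattern match in the Agda development), and your argument is exactly that pattern match, so it matches the intended proof.
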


\begin{lemma}\label{lem:hypercoercion-inert-cross}
  If $c : \Cast{A}{(B \otimes C)}$ and $\inert{c}$,
  then $\cross{c}$ and $A \equiv D \otimes E$ for some $D$ and $E$.
\end{lemma}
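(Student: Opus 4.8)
The plan is to prove both conclusions simultaneously by inversion on the derivation of $\inert{c}$, exactly as in the proofs of Lemma~\ref{lem:pedi-inert-cross} and Lemma~\ref{lem:lambda-S-inert-cross}. For hypercoercions there are precisely two rules that can conclude $\inert{c}$, so we get two cases. In the first case, $c$ has the form $\agda{id} ; m ; G!$ for some middle coercion $m$ and ground type $G$. The end part $G! : \Inj{G}{\Unk}$ forces the target type of $c$ to be $\Unk$. But the hypothesis states that the target is $B \otimes C$ with $\otimes \in \{\to, \times, +\}$, and $\Unk$ is distinct from every such type, so this case is vacuous; in Agda it is discharged by an absurd pattern on the equation $\Unk \equiv B \otimes C$.

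In the second case, $c$ has the form $\agda{id} ; (c_1 \to d_1) ; \agda{id}$, where $c_1 : \Cast{C_1}{A_1}$ and $d_1 : \Cast{B_1}{D_1}$, so the middle part $c_1 \to d_1 : \Middle{A_1 \to B_1}{C_1 \to D_1}$ has source $A_1 \to B_1$ and target $C_1 \to D_1$. Since both the beginning projection and the ending injection are identities, the source type of $c$ is $A_1 \to B_1$ and the target type of $c$ is $C_1 \to D_1$. The hypothesis that the target equals $B \otimes C$ then pins down $\otimes$ to be $\to$ (with $B \equiv C_1$ and $C \equiv D_1$), and consequently the source $A \equiv A_1 \to B_1$ is of the required shape $D \otimes E$ with $D \equiv A_1$ and $E \equiv B_1$. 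Finally, $c$ already matches the pattern $\agda{id} ; (c_1 \to d_1) ; \agda{id}$, so the rule $\key{Cross}{\to}$ directly supplies $\cross{c}$, completing the case.

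There is no substantive mathematical difficulty; the only real care needed is in the index bookkeeping. Because the terms are intrinsically typed, the equation $B \otimes C \equiv C_1 \to D_1$ arising in the second case must be resolved by dependent pattern matching, which simultaneously forces $\otimes = \to$ and rewrites the goal so that the witnesses $D$ and $E$ can be read off from the source of the middle coercion. Lemma~\ref{lem:hypercoercion-active-or-inert} is not used here: the argument is purely by inversion on $\inert{c}$.
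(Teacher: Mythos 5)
Your proof is correct and follows the same route the paper uses for the analogous lemmas (e.g.\ Lemma~\ref{lem:pedi-inert-cross} for EDI): case analysis on the derivation of $\inert{c}$, with the injection case $\agda{id};m;G!$ dismissed because its target is $\Unk$, and the function-wrap case $\agda{id};(c_1 \to d_1);\agda{id}$ yielding $\cross{c}$ by $\key{Cross}{\to}$ and forcing the source to have the matching arrow shape. The paper omits the proof for the hypercoercion instance, but your argument is exactly the intended one, including the observation that for $\otimes \in \{\times,+\}$ both cases are vacuous.
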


The definition of $\agda{dom}$, etc. for hypercoercions is given below.
\begin{align*}
  \dom{(\agda{id} ; c \to d ; \agda{id}) \app \key{Cross}{\to}} &= c \\
  \cod{(\agda{id} ; c \to d ; \agda{id}) \app \key{Cross}{\to}} &= d \\
  \agda{fst}\app (\agda{id}; c \times d; \agda{id}) \app \key{Cross}{\times} &= c \\
  \agda{snd} \app (\agda{id}; c \times d; \agda{id}) \app \key{Cross}{\times} &= d \\
  \agda{inl} \app (\agda{id} ; c + d; \agda{id}) \app \key{Cross}{+} &= c \\
  \agda{inr} \app (\agda{id} ; c + d; \agda{id}) \app \key{Cross}{+} &= d
\end{align*}

\begin{lemma}\label{lem:hypercoercion-base-not-inert}
   A cast $c : \Cast{A}{b}$ is not inert.
\end{lemma}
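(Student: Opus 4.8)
The plan is to prove this by contradiction: assume a derivation $d : \inert{c}$ for some hypercoercion $c : \Cast{A}{b}$ and show that this is impossible by inversion on $d$. The definition of $\inert{-}$ for hypercoercions (given just above) has exactly two constructors, so there are two cases, and in each the shape of the derivation pins down the target type to something that is manifestly not a base type.

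First I would handle the case where $d$ is the injection rule, so $c \equiv \agda{id} ; m ; G!$ for some middle coercion $m$ and ground type $G$. By the typing of hypercoercions, the target of $p; m; i$ is the target of the end coercion $i$, and the injection end $G!$ has target $\Unk$. Hence the target of $c$ is $\Unk$, but we assumed it is a base type $b$, and $b \not\equiv \Unk$ since they are distinct syntactic forms of types. So this case cannot occur.

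Second I would handle the case where $d$ is the function-cast rule, so $c \equiv \agda{id} ; c_1 \to d_1 ; \agda{id}$. Here the middle is a cast between function types, and since both ends are the identity, the target of $c$ has the form $C \to D$. Again this contradicts the assumption that the target is a base type, so this case is also vacuous. With both cases eliminated, there is no inert derivation for a cast into a base type, which is exactly the statement.

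There is no real obstacle here: this lemma is analogous to Lemma~\ref{lem:lambda-S-base-not-inert} and the \agda{baseNotInert} instances established for the earlier calculi, and in Agda it is discharged almost for free---pattern matching on the \agda{Inert} proof forces the target index to be $\Unk$ or a function type, and an absurd pattern closes each branch. The only point to keep in mind is that the index arithmetic of the $p;m;i$ constructor must be checked so that the overall target really is the target of the end coercion; since the injection end has target $\Unk$ and the identity end preserves the middle's target, the indices line up and no genuine reasoning beyond the two case splits is needed.
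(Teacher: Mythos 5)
Your proof is correct and matches the paper's (implicit) argument exactly: the two inert constructors for hypercoercions force the target type to be either $\Unk$ (via the injection end $G!$) or a function type (via the middle $c \to d$ with identity ends), and neither is a base type, so inversion closes both cases vacuously. This is the same reasoning the paper spells out for the analogous EDI lemma and discharges by absurd pattern match in Agda here.
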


\begin{proposition}\label{prop:hypercoercion-active-pre-cast-struct}
Hypercoercions are an instance of the $\agda{PreCastStruct}$
structure.
\end{proposition}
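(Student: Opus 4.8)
The plan is to construct the \agda{PreCastStruct} record of Section~\ref{sec:precaststruct} field by field; nearly everything needed has already been established in the preceding paragraphs, so the task is largely one of assembly. The carrier $-\CastOp-$ is the family of hypercoercions from Figure~\ref{fig:hypercoercions}, and \agda{Inert}, \agda{Active}, and \agda{Cross} are the predicates defined immediately above the proposition. The \agda{ActiveOrInert} field is exactly Lemma~\ref{lem:hypercoercion-active-or-inert}, \agda{baseNotInert} is Lemma~\ref{lem:hypercoercion-base-not-inert}, and the three fields $\agda{InertCross}{\to}$, $\agda{InertCross}{\times}$, $\agda{InertCross}{+}$ are the three instances (one for each $\otimes \in \{\to,\times,+\}$) of Lemma~\ref{lem:hypercoercion-inert-cross}: its conclusion, namely $\cross{c}$ together with $A \equiv D \otimes E$ for some $D,E$, is precisely the pair that each \agda{InertCross} field is required to produce.

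The remaining six fields, \agda{dom}, \agda{cod}, \agda{fst}, \agda{snd}, \agda{inl}, and \agda{inr}, are supplied by the functions displayed just before the proposition. The point worth checking is that each is total on its declared domain: the cross evidence argument $x$ is pattern-matched on its sole constructor $\key{Cross}{\otimes}$, which refines the hypercoercion to the shape $\agda{id} ; c' \otimes d' ; \agda{id}$ with the matching head connective, so all other shapes are ruled out and Agda's coverage checker accepts the clauses as written. Once these components are collected, the proof concludes by packaging them into the \agda{PreCastStruct} record.

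I expect the only real difficulty to be mechanical rather than conceptual: lining up the implicit type indices so that case analysis on the \agda{Cross} and \agda{Inert} evidence reduces cleanly --- for instance so that the leading identity projection of an inert function hypercoercion makes its source type definitionally a function type, as $\agda{InertCross}{\to}$ demands --- and discharging the $\agda{gnd}$-related side conditions that are implicit in the injection and projection forms. No new mathematical ideas are involved; given Lemmas~\ref{lem:hypercoercion-active-or-inert}, \ref{lem:hypercoercion-inert-cross}, and \ref{lem:hypercoercion-base-not-inert} and the decomposition functions, the proposition follows by assembling the record.
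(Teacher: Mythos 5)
Your proposal matches the paper's (implicit) proof exactly: the proposition is established by packaging the hypercoercion representation, the \agda{Inert}/\agda{Active}/\agda{Cross} predicates, Lemmas~\ref{lem:hypercoercion-active-or-inert}, \ref{lem:hypercoercion-inert-cross}, and \ref{lem:hypercoercion-base-not-inert}, and the displayed \agda{dom}/\agda{cod}/\agda{fst}/\agda{snd}/\agda{inl}/\agda{inr} functions into the \agda{PreCastStruct} record. Your field-by-field accounting, including the observation that the \agda{Cross} evidence refines the hypercoercion to the shape $\agda{id};c\otimes d;\agda{id}$, is precisely how the Agda instantiation goes.
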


To support space efficiency, we define the composition operator on
hypercoercions in Figure~\ref{fig:hyper-composition}. It is a mutually
recursive with the definition of composition on the middle
parts. Thankfully, Agda's termination checker approves of this
definition even though the contravariance in function coercions means
that it is not technically structurally recursive.

\begin{figure}[tbp]
\fbox{$c \fatsemi d$} \fbox{$m_1 \fatsemi m_2$}
\begin{align*}
  c \fatsemi \agda{id} &= c \\
  \agda{id} \fatsemi (p;m;i)&= p;m;i \\
  (p_1;m_1;\agda{id}) \fatsemi (\agda{id};m_2;i_2) &= p; (m_1 \fatsemi m_2);i_2\\
  (p_1;m_1;G!)\fatsemi(G?^\ell;m_2;i_2)&= p_1; (m_1\fatsemi m_2); i_2\\
  (p_1;m_1;G!)\fatsemi(H?^\ell;m_2;i_2)&=
      p_1; m_1; \cfail{\ell} & \text{if } G \neq H \\
  (p_1; m_1; \cfail{\ell}) \fatsemi (p_2; m_2; i_2) &= p_1; m_1; \cfail{\ell}\\
\\
\agda{id} \fatsemi \agda{id} &= \agda{id} \\
(c_1 \to d_1) \fatsemi (c_2  \to d_2) &= (c_2 \fatsemi c_1) \to (d_1 \fatsemi d_2) \\
(c_1 \times d_1) \fatsemi (c_2  \times d_2) &= (c_1 \fatsemi c_2) \times (d_1 \fatsemi d_2) \\
(c_1 + d_1) \fatsemi (c_2  + d_2) &= (c_1 \fatsemi c_2) + (d_1 \fatsemi d_2) 
\end{align*}
\caption{Composition of Hypercoercions}
\label{fig:hyper-composition}
\end{figure}

We define the \agda{applyCast} function for hypercoercions by cases on
$\act{c}$.
\[
  \agda{applyCast} : \forall \Gamma A B.\, (M : \Gamma \vdash A) \to \agda{SimpleValue}\app M\to (c : \Cast{A}{B}) \to \agda{Active}\app c \to \Gamma \vdash B
\]
\[
\begin{array}{lllllll}
  \agda{applyCast} & M & v & \agda{id} & \key{Aid} &=& M \\
  \agda{applyCast} & \cast{M}{c} & v & (H?^\ell; m; i) & \key{Aproj} &=&
      \cast{M}{c \fatsemi (H?^\ell; m; i)}\\
  \agda{applyCast} & M & v & (\agda{id};m;\cfail{\ell}) & \key{Afail} &=& \blame{\ell}{} \\
  \agda{applyCast} & (\key{cons}\app V_1 \app V_2) & v & (\agda{id};c \times d;\agda{id})  & \key{A}{\times} &=&
     \key{cons}\app (\cast{V_1}{c}) \app (\cast{V_2}{d}) \\
  \agda{applyCast} & (\key{inl}\app V) & v & (\agda{id};c + d;\agda{id}) & \key{A}{+} &=&
     \key{inl}\app (\cast{V}{c})\\
  \agda{applyCast} & (\key{inr}\app V) & v & (\agda{id};c + d;\agda{id}) & \key{A}{+} &=&
     \key{inr}\app (\cast{V}{d}) \\
  \agda{applyCast} & M & v & (\agda{id}; \agda{id}; \agda{id}) & \key{Abase} &=& M
\end{array}
\]

\begin{proposition}
Hypercoercions are an instance of the $\agda{ComposableCasts}$ structure.
\end{proposition}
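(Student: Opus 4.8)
The plan is to extend the \agda{PreCastStruct} instance for hypercoercions obtained in Proposition~\ref{prop:hypercoercion-active-pre-cast-struct} with the two remaining fields of \agda{ComposableCasts}: the function \agda{applyCast} and the composition operator $\fatsemi$, both displayed above. Because the cast calculus is intrinsically typed, establishing the proposition reduces to checking that these two definitions have their declared types; the record value is then assembled by bundling the new fields with the \agda{PreCastStruct} fields, with no further proof obligations.

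First I would verify \agda{applyCast} clause by clause, proceeding by cases on the proof $a : \act{c}$ as in the displayed definition. The identity clauses ($\key{Aid}$ and $\key{Abase}$) return the argument unchanged; the $\key{Afail}$ clause returns $\blame{\ell}{}$, which inhabits every type; the cross-cast clauses ($\key{A}\otimes$) destructure the accompanying simple value — which, having a product or sum source type, must be a $\key{cons}$, $\key{inl}$, or $\key{inr}$ — and push the component coercions inward, the component types matching the sources and targets of those coercions by construction. The one clause that appeals to an earlier result is the projection clause ($\key{Aproj}$), which uses the canonical-forms lemma at type $\Unk$ (Lemma~\ref{lem:eff-canonical-star}) together with the composition operator $\fatsemi$; this is where $\fatsemi$ is needed inside \agda{applyCast} and not only in the \key{compose} reduction rule. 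Type preservation in each clause is discharged by Agda against the field signature $(M : \Gamma \vdash A) \to \agda{SimpleValue}\app M \to (c : \Cast{A}{B}) \to \agda{Active}\app c \to \Gamma \vdash B$.

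The remaining, more delicate obligation concerns the composition operator of Figure~\ref{fig:hyper-composition}, defined mutually with composition on the middle parts. Type agreement is routine: the interesting clauses are those that simultaneously inspect the end part of the left hypercoercion and the beginning part of the right one, where agreeing ground types make the middle parts composable and disagreeing ground types ($G \neq H$) force the result to be the failure coercion $\cfail{\ell}$, which can be assigned whatever source and target are required. The step I expect to be the main obstacle is termination: the clause $(c_1 \to d_1) \fatsemi (c_2 \to d_2) = (c_2 \fatsemi c_1) \to (d_1 \fatsemi d_2)$ swaps its arguments because function coercions are contravariant, so the recursion is not structurally decreasing in the naive sense. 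As the remark preceding Figure~\ref{fig:hyper-composition} notes, Agda's termination checker nonetheless accepts the definition, since every recursive call removes at least one constructor from the pair of coercions being composed; should that heuristic fail, a well-founded recursion on the sum of the syntactic sizes of the two hypercoercions works, as each recursive call strictly decreases that measure. With \agda{applyCast} and $\fatsemi$ both accepted at their types, hypercoercions form an instance of \agda{ComposableCasts}.
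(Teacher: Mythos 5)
Your proposal is correct and matches the paper's treatment: the paper offers no separate proof text for this proposition, establishing it exactly as you do by exhibiting the two new fields (\agda{applyCast} defined by cases on $\act{c}$, and the mutually recursive composition operator $\fatsemi$) and letting Agda check them against the declared field types, with the only delicate point being termination of $\fatsemi$ in the contravariant function clause — which, as you and the paper both note, Agda's termination checker accepts directly for hypercoercions (the size-measure fallback you describe is what the paper actually resorts to for $\lambda\textup{\textsf{S}}$, where the checker fails).
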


We import and instantiate the reduction semantics and proof of type
safety from Section~\ref{sec:EfficientParamCasts}.

\begin{definition}[Reduction]
  \label{def:hyper-reduction}
  The reduction relation $\itm{ctx} \vdash M \longrightarrow_H N$ of
  $\lambda\textup{\textsf{H}}$ is the reduction relation of
  $\SC(\CastOp)$ instantiated with $\lambda\textup{\textsf{H}}$'s
  instance of the $\agda{ComposableCasts}$ structure.
\end{definition}

\begin{corollary}[Preservation for $\lambda\textup{\textsf{H}}$]
  If $\Gamma \vdash M : A$  and $M \longrightarrow_H M'$,\\
  then $\Gamma \vdash M' : A$.
\end{corollary}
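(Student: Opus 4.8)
The plan is to obtain this corollary directly from the generic Preservation theorem for $\SC(\CastOp)$ established earlier (in the subsection on type safety of $\SC(\CastOp)$). Recall that the reduction relation of $\SC(\CastOp)$ in Figure~\ref{fig:eff-param-cast-reduction} is intrinsically typed: every rule $\itm{ctx} \vdash M \longrightarrow N$ relates two terms $M$ and $N$ that inhabit the same judgment $\Gamma \vdash A$, and this invariant is enforced by Agda's type checker at the point where the reduction rules are declared. By Definition~\ref{def:hyper-reduction}, the relation $\itm{ctx} \vdash M \longrightarrow_H N$ is precisely that reduction relation with $\CastOp$ instantiated to hypercoercions and with the \agda{applyCast} and $\fatsemi$ fields supplied by the hypercoercion \agda{ComposableCasts} instance. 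Hence, from $\Gamma \vdash M : A$ and $M \longrightarrow_H M'$ we read off $\Gamma \vdash M' : A$ immediately; in the mechanization no separate proof term is written at all, since the very fact that the reduction relation type-checks \emph{is} the proof.

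What makes this one-line instantiation legitimate is the immediately preceding proposition, that hypercoercions are an instance of \agda{ComposableCasts}. First I would rely on that proposition, whose content is: (1) the \agda{applyCast} function, defined by cases on $\act{c}$, whose declared type $\forall \Gamma A B.\, (M : \Gamma \vdash A) \to \agda{SimpleValue}\app M \to (c : \Cast{A}{B}) \to \agda{Active}\app c \to \Gamma \vdash B$ already guarantees it is type preserving; and (2) the composition operator $c \fatsemi d : \Cast{A}{C}$ of Figure~\ref{fig:hyper-composition}, defined for $c : \Cast{A}{B}$ and $d : \Cast{B}{C}$, together with its mutually recursive auxiliary on middle parts. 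With those fields in hand, the generic Preservation theorem applies verbatim to $\longrightarrow_H$.

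I expect there to be no genuine obstacle here; the single delicate point — totality of the mutually recursive definition of $\fatsemi$ despite the contravariant recursive occurrences $(c_2 \fatsemi c_1) \to (d_1 \fatsemi d_2)$ in the function-coercion case — is already discharged when the \agda{ComposableCasts} proposition is stated (the text notes that Agda's termination checker accepts the definition), not in this corollary. So the entire proof reduces to: invoke the \agda{ComposableCasts} instance for hypercoercions, invoke Definition~\ref{def:hyper-reduction}, and apply the generic Preservation theorem for $\SC(\CastOp)$.
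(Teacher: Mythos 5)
Your proposal is correct and matches the paper's approach exactly: the corollary is obtained by instantiating the generic Preservation theorem for $\SC(\CastOp)$ (which holds by the intrinsic typing of the reduction relation) with the hypercoercion instance of \agda{ComposableCasts} via Definition~\ref{def:hyper-reduction}. No further argument is needed.
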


\begin{corollary}[Progress for $\lambda\textup{\textsf{H}}$]
  If $\emptyset \vdash M : A$, then 
  \begin{enumerate}
  \item $M \longrightarrow_H M'$ for some $M'$,
  \item $\val{M}$, or
  \item $M \equiv \blame{\ell}{}$.
  \end{enumerate}
\end{corollary}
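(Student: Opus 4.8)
The plan is to obtain this corollary for free by instantiation, exactly as the analogous Progress corollaries were obtained for the non-space-efficient calculi (EDA, EDI, EDC, LDC, $\lambda\textup{\textsf{B}}$, $\lambda\textup{\textsf{C}}$) and for $\lambda\textup{\textsf{S}}$. The generic Progress result, Theorem~\ref{thm:sc-progress}, is stated for $\SC(\CastOp)$ parameterized over an arbitrary \agda{ComposableCasts} structure, and its proof has already done all of the real work: the case analysis on term constructors, the appeal to \agda{ActiveOrInert} in the cast case, the use of \agda{baseNotInert} and the \agda{InertCross} fields of \agda{PreCastStruct} to rule out ill-formed canonical forms at the elimination forms, and the use of Lemma~\ref{lem:switch-back} together with rule \eqref{eq:sc-compose-casts} to handle a cast redex sitting in a \agda{NonCast} context.

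First I would invoke the immediately preceding Proposition, which establishes that hypercoercions (with the \agda{applyCast} and $\fatsemi$ definitions of this section) form an instance of \agda{ComposableCasts}; this in turn rests on the hypercoercion instance of \agda{PreCastStruct} (Proposition~\ref{prop:hypercoercion-active-pre-cast-struct}) and on Lemmas~\ref{lem:hypercoercion-active-or-inert}, \ref{lem:hypercoercion-inert-cross}, and \ref{lem:hypercoercion-base-not-inert}. Next I would recall Definition~\ref{def:hyper-reduction}, which defines $\itm{ctx} \vdash M \longrightarrow_H N$ to be precisely the $\SC(\CastOp)$ reduction relation instantiated with that structure. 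Instantiating Theorem~\ref{thm:sc-progress} at this structure then yields, for any closed well-typed $M$, that either $\itm{ctx} \vdash M \longrightarrow_H M'$ for some $M'$ and $\itm{ctx}$, or $\val{M}$, or $M \equiv \blame{\ell}{}$ --- which is the statement, reading the existential over $\itm{ctx}$ into the $\longrightarrow_H$ notation exactly as in the $\lambda\textup{\textsf{S}}$ corollary. (Preservation for $\lambda\textup{\textsf{H}}$ is delivered the same way, directly from the intrinsically typed reduction relation.)

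There is no genuine mathematical obstacle remaining at this point; the content has been front-loaded into verifying the structure instance. If I had to name a hard part, it would be the construction already completed above the statement: checking that the active/inert categorization of hypercoercions is exhaustive and mutually exclusive (Lemma~\ref{lem:hypercoercion-active-or-inert}) and that inert hypercoercions into a product or sum type are cross casts with a matching source shape (Lemma~\ref{lem:hypercoercion-inert-cross}), together with --- in the mechanization --- getting Agda's termination checker to accept the mutually recursive composition operator $\fatsemi$ on hypercoercions and its middle-part companion despite the contravariance in the function-coercion case. Once the \agda{ComposableCasts} instance type-checks, Progress follows from the generic development with no reasoning specific to $\lambda\textup{\textsf{H}}$.
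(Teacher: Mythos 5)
Your proposal matches the paper exactly: the corollary is obtained by instantiating the generic Progress theorem for $\SC(\CastOp)$ (Theorem~\ref{thm:sc-progress}) with the hypercoercion instance of \agda{ComposableCasts}, whose verification rests on Proposition~\ref{prop:hypercoercion-active-pre-cast-struct} and Lemmas~\ref{lem:hypercoercion-active-or-inert}--\ref{lem:hypercoercion-base-not-inert}, with the reduction relation $\longrightarrow_H$ given by Definition~\ref{def:hyper-reduction}. No further argument specific to $\lambda\textup{\textsf{H}}$ is needed, which is precisely how the paper presents it.
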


\subsubsection{Space Efficiency}

Next we establish that $\lambda\textup{\textsf{H}}$ is an instance of
the $\agda{CastHeight}$ structure so that we can apply
Theorem~\ref{thm:space-consumption} (Space Consumption) to obtain
space efficiency for $\lambda\textup{\textsf{H}}$.

We define the height of a hypercoercion to be the height of its middle
part.
\begin{align*}
  \agda{height}(\agda{id}) &= 0 \\
  \agda{height}(p;m;i) &=  \agda{height}(m)\\
  \agda{height}(c \to d) &= 1 + \max(\agda{height}(c), \agda{height}(d))  \\
  \agda{height}(c \times d) &= 1 + \max(\agda{height}(c), \agda{height}(d))  \\
  \agda{height}(c + d) &= 1 + \max(\agda{height}(c), \agda{height}(d))  
\end{align*}

The size of a hypercoercion is given by the following definition.
\begin{align*}
  \agda{size}(\agda{id}) &= 0 \\
  \agda{size}(\cfail{\ell}) &= 0 \\
  \agda{size}(G!)&= 1\\
  \agda{size}(G?^\ell)&= 1\\
  \agda{size}(p; m; i) &= 2 + \agda{size}(p) + \agda{size}(m) + \agda{size}(i)\\
  \agda{size}(c \to d) &= 1 + \agda{size}(c) + \agda{size}(d) \\
  \agda{size}(c \times d) &= 1 + \agda{size}(c) + \agda{size}(d) \\
  \agda{size}(c + d) &= 1 + \agda{size}(c) + \agda{size}(d) 
\end{align*}

The cast height of the result of \agda{applyCast} applied to a simple
value $S$ and coercion $c$ is less than the max of the cast height of
$S$ and the height of $c$.
\begin{lemma}
  \label{lem:applyCast-height-hyper}
  $\agda{c\mhyphen{}height}(\agda{applyCast}\app S\app c) \leq \max(\agda{height}(S), \agda{height}(c))$
\end{lemma}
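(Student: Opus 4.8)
The plan is to prove this exactly as for the $\lambda\textup{\textsf{S}}$ analogue (Lemma~\ref{lem:applyCast-height}), by case analysis on the proof that $c$ is active, i.e.\ on $\agda{Active}\app c$. For hypercoercions this proof has five shapes: $\key{Aid}$, $\key{Aproj}$, $\key{Afail}$, $\key{A}\otimes$ for $\otimes\in\{\times,+\}$, and $\key{Abase}$. Because $S$ is a simple value, Lemma~\ref{lem:simple-star} tells us that the type of $S$ --- which is the source type of $c$ --- is not $\Unk$, so the two cases whose coercion has source $\Unk$ are dismissed by contradiction: $\key{Aid}$, whose coercion is $\agda{id} : \Cast{\Unk}{\Unk}$, and $\key{Aproj}$, whose coercion is $H?^\ell ; m ; i : \Cast{\Unk}{H}$.

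The cases $\key{Afail}$ and $\key{Abase}$ are immediate. For $\key{Abase}$ we have $\agda{applyCast}\,S\,c = S$ and $c = (\agda{id};\agda{id};\agda{id})$ with $\agda{height}(c) = 0$, so both sides of the inequality are $\agda{c\mhyphen{}height}(S)$. For $\key{Afail}$ we have $\agda{applyCast}\,S\,c = \blame{\ell}{}$, which contains no casts, so its cast height is $0$, which is below the right-hand side.

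The two substantive cases are $\key{A}\times$ and $\key{A}+$. In the $\key{A}\times$ case $S \equiv \key{cons}\,V_1\,V_2$, $c \equiv (\agda{id};c_1 \times c_2;\agda{id})$, and $\agda{applyCast}\,S\,c = \key{cons}\,(\cast{V_1}{c_1})\,(\cast{V_2}{c_2})$. Unfolding the recursive definition of $\agda{c\mhyphen{}height}$ (the maximum over all casts occurring in a term), the left-hand side equals $\max(\agda{height}(c_1),\agda{c\mhyphen{}height}(V_1),\agda{height}(c_2),\agda{c\mhyphen{}height}(V_2))$, while the right-hand side equals $\max(\agda{c\mhyphen{}height}(V_1),\agda{c\mhyphen{}height}(V_2),\agda{height}(c))$ with $\agda{height}(c) = 1 + \max(\agda{height}(c_1),\agda{height}(c_2))$. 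The inequality then follows from monotonicity of $\max$ and the bound $\agda{height}(c_i) \le 1 + \max(\agda{height}(c_1),\agda{height}(c_2))$. The $\key{A}+$ case is the same, re-wrapping only a single component, $\key{inl}\,(\cast{V}{c_1})$ or $\key{inr}\,(\cast{V}{c_2})$.

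I do not expect a conceptual obstacle; the only care needed is the $\max$-arithmetic in the $\key{A}\times$ and $\key{A}+$ cases, where one must expand the clauses of $\agda{c\mhyphen{}height}$ over subterms and of $\agda{height}$ over $\otimes$-coercions and discharge the resulting inequalities over $\mathbb{N}$ --- in Agda, a handful of small monotonicity lemmas for $\max$ suffice. The genuinely delicate ingredients of the space-efficiency development, such as the $\agda{compose\mhyphen{}height}$ field and the height-preservation Lemma~\ref{lem:preserve-height}, are exercised elsewhere; this lemma is one of the easy leaves of that argument.
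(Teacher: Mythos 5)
Your proof is correct and is essentially the expected argument: the paper defers this lemma to the Agda development, and a case analysis on $\act{c}$ with the $\max$-monotonicity bookkeeping in the $\key{A}\times$ and $\key{A}+$ cases is exactly what is needed, since those are the only cases that introduce new casts into the result. The one point to reconcile with the formalization is $\key{Aproj}$: the paper's displayed clause for \agda{applyCast} there returns $\cast{M}{c \fatsemi (H?^{\ell};m;i)}$ rather than being vacuous, so if that clause is actually realized (instead of being refuted via Lemma~\ref{lem:simple-star} as you do), you would need the $\agda{compose\mhyphen{}height}$ bound at that point --- but either route yields the stated inequality.
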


The \agda{dom}, \agda{cod}, \agda{fst}, \agda{snd}, \agda{inl}, and
\agda{inr} operators on coercions all return coercions of equal or
lesser height than their input.

\begin{lemma}\ 
  \label{lem:dom-height-hyper}
  \begin{enumerate}
  \item $\agda{height}(\agda{dom}\app c \app x) \leq \agda{height}(c)$
  \item $\agda{height}(\agda{cod}\app c \app x)) \leq \agda{height}(c)$
  \item $\agda{height}(\agda{fst}\app c \app x) \leq \agda{height}(c)$
  \item $\agda{height}(\agda{snd}\app c \app x) \leq \agda{height}(c)$
  \item $\agda{height}(\agda{inl}\app c \app x) \leq \agda{height}(c)$
  \item $\agda{height}(\agda{inr}\app c \app x) \leq \agda{height}(c)$
  \end{enumerate}
\end{lemma}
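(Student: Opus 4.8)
The plan is to prove all six parts of Lemma~\ref{lem:dom-height-hyper} by one routine argument, exactly as in the analogous Lemma~\ref{lem:dom-height} for $\lambda\textup{\textsf{S}}$, so I would spell out the reasoning once for $\agda{dom}$ and remark that the other five cases differ only in the type constructor and in which component is selected.

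First I would invert on the cross-cast evidence $x : \cross{c}$. For hypercoercions $\cross{}$ has a single introduction rule, $\key{Cross}{\otimes} : \cross{(\agda{id}; c_1 \otimes c_2; \agda{id})}$ with $\otimes \in \{\to,\times,+\}$, so this inversion forces $c$ to have the shape $\agda{id}; (c_1 \to c_2); \agda{id}$ with $x = \key{Cross}{\to}$; unfolding the definition of $\agda{dom}$ on this shape then gives $\agda{dom}\app c\app x = c_1$. In the mechanization this inversion is just a pattern match, and it is the only substantive step.

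Next I would compute $\agda{height}(c)$: the clause $\agda{height}(p;m;i)=\agda{height}(m)$ reduces it to $\agda{height}(c_1 \to c_2)$, which by the function-middle clause equals $1 + \max(\agda{height}(c_1),\agda{height}(c_2))$. Since $\max$ dominates each of its arguments, $\agda{height}(\agda{dom}\app c\app x) = \agda{height}(c_1) \leq 1 + \max(\agda{height}(c_1),\agda{height}(c_2)) = \agda{height}(c)$, which is the claimed inequality. For $\agda{cod}$ I would pick $c_2$ in place of $c_1$; for $\agda{fst}$, $\agda{snd}$, $\agda{inl}$, and $\agda{inr}$ I would replace $\to$ by $\times$ or $+$ and use the corresponding middle clause, each of which has the same $1 + \max(\cdot,\cdot)$ shape, so a single $\max$-domination step closes every case.

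I do not expect any real obstacle: the cross-cast predicate pins the coercion down to the form $\agda{id};(c_1 \otimes c_2);\agda{id}$, and the ``$1+$'' that the height function charges for a composite middle coercion already exceeds the height of either component. The lemma therefore reduces to an inversion followed by an elementary inequality, repeated six times, and in Agda it can be discharged with a short case split plus the standard facts about $\max$ and $\leq$.
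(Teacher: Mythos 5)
Your proposal is correct and is essentially the proof the paper intends (the paper states the lemma and defers the proof to the Agda development): inversion on $\cross{c}$ pins the hypercoercion to the shape $\agda{id};(c_1\otimes c_2);\agda{id}$, the operator returns one component, and $\agda{height}(c)=1+\max(\agda{height}(c_1),\agda{height}(c_2))$ dominates the height of either component. Nothing is missing.
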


The size of a hypercoercion $c$ is bounded by $9 \cdot
2^{\agda{height}(c)}$. We prove this by simultaneously proving the
following three facts about the four kinds of hypercoercions.

\begin{lemma}\ 
  \label{lem:size-height-hyper}
  \begin{enumerate}
  \item $\agda{size}(c) + 5 \le 9 \cdot 2^{\agda{height}(c)}$
  \item $\agda{size}(p) \le 1$
  \item $\agda{size}(i) \le 1$
  \item $\agda{size}(m) + 9 \le 9 \cdot 2^{\agda{height}(m)}$
  \end{enumerate}
\end{lemma}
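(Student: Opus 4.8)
The plan is to prove all four inequalities simultaneously by mutual structural induction on hypercoercions, following the three‑level grammar of Figure~\ref{fig:hypercoercions}. The constants $5$, $9$, and the multiplicative factor $9$ are tuned precisely so that the induction closes; the real content is arithmetic bookkeeping rather than any deep idea.

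First I would dispatch parts (2) and (3), which need no induction. A projection $p$ is either $\agda{id}$, with $\agda{size}(p)=0$, or $H?^\ell$, with $\agda{size}(p)=1$; either way $\agda{size}(p)\le 1$. Likewise an injection $i$ is $\agda{id}$ (size $0$), $G!$ (size $1$), or $\cfail{\ell}$ (size $0$), so $\agda{size}(i)\le 1$.

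Next, part (4) for middle coercions $m$. If $m=\agda{id}$, then $\agda{size}(m)=0$ and $\agda{height}(m)=0$, so $\agda{size}(m)+9 = 9 \le 9\cdot 2^{0}$. If $m = c_1 \otimes c_2$ for $\otimes\in\{\to,\times,+\}$, then writing $h_j=\agda{height}(c_j)$ and $H=\max(h_1,h_2)$, the induction hypothesis for part (1) on $c_1,c_2$ gives $\agda{size}(c_j)+5 \le 9\cdot 2^{h_j} \le 9\cdot 2^{H}$, whence $\agda{size}(m)+9 = \bigl(\agda{size}(c_1)+5\bigr)+\bigl(\agda{size}(c_2)+5\bigr) \le 9\cdot 2^{H}+9\cdot 2^{H} = 9\cdot 2^{H+1} = 9\cdot 2^{\agda{height}(m)}$. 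Finally, part (1) for full hypercoercions $c$: if $c=\agda{id}$ then $\agda{size}(c)+5 = 5 \le 9 = 9\cdot 2^{0}$; if $c = p;m;i$ then $\agda{height}(c)=\agda{height}(m)$, and combining parts (2), (3) with part (4) yields $\agda{size}(c)+5 = 7 + \agda{size}(p)+\agda{size}(m)+\agda{size}(i) \le 9 + \agda{size}(m) \le 9\cdot 2^{\agda{height}(m)} = 9\cdot 2^{\agda{height}(c)}$.

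The only point worth flagging is well‑foundedness of the mutual recursion: the clauses for $\Middle{A}{B}$ refer to $\Cast{\cdot}{\cdot}$ only at strictly smaller subterms (the domain and codomain coercions inside $c\to d$, $c\times d$, $c+d$), so this is ordinary structural induction on the syntax — the contravariant swap in the domain position is irrelevant because we never recurse on types. I expect the main, if mild, obstacle to be simply getting the offsets to dovetail across the three levels; with $+5$ at the top, $+9$ in the middle, and $\le 1$ on the leaves, the top‑level overhead $2+5+1+1 = 9$ matches the middle‑level offset, and the doubling $2^{H}+2^{H}=2^{H+1}$ absorbs the two recursive calls, so every clause fits.
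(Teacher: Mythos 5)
Your proof is correct and matches the paper's approach: the lemma is established by simultaneous structural induction over the four syntactic categories of hypercoercions, with parts (2) and (3) immediate and the arithmetic offsets chosen exactly so that the top-level overhead $2+5+1+1=9$ meets the middle-level constant and the doubling $2^{H}+2^{H}=2^{H+1}$ absorbs the two recursive calls. Your verification of each clause, including the base cases $c=\agda{id}$ and $m=\agda{id}$, checks out.
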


The above lemmas and definitions establish the following.

\begin{proposition}
$\lambda\textup{\textsf{H}}$ is an instance of the
  $\agda{CastHeight}$ structure.
\end{proposition}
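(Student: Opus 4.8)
The plan is to discharge, one by one, every field of the \agda{CastHeight} structure (Figure~\ref{fig:eff-cast-height}) for the hypercoercion representation, reusing the lemmas already established in this section. The \agda{height} and \agda{size} members are exactly the two functions defined just above; the six operator-height members \agda{dom\mhyphen{}height}, \agda{cod\mhyphen{}height}, \agda{fst\mhyphen{}height}, \agda{snd\mhyphen{}height}, \agda{inl\mhyphen{}height}, and \agda{inr\mhyphen{}height} are precisely the six parts of Lemma~\ref{lem:dom-height-hyper}; \agda{applyCastHeight} is Lemma~\ref{lem:applyCast-height-hyper}; and \agda{size\mhyphen{}height} is part~(1) of Lemma~\ref{lem:size-height-hyper}, witnessed by $k_1 = 5$ and $k_2 = 9$. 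So the only genuinely new obligations are \agda{compose\mhyphen{}height} and \agda{applyCastSize}.

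For \agda{compose\mhyphen{}height}, i.e. $\agda{height}(c \fatsemi d) \leq \max(\agda{height}(c), \agda{height}(d))$, I would proceed by induction on the derivations of $c$ and $d$, following the case split of $\fatsemi$ in Figure~\ref{fig:hyper-composition}, mutually with the companion claim $\agda{height}(m_1 \fatsemi m_2) \leq \max(\agda{height}(m_1), \agda{height}(m_2))$ for middle parts. Since the height of a full hypercoercion $p;m;i$ is by definition $\agda{height}(m)$, and since every clause of $\fatsemi$ on hypercoercions produces a result whose middle is either a copy of $m_1$ or $m_2$, or $m_1 \fatsemi m_2$, or a failure (height $0$), the full-hypercoercion cases reduce immediately to the middle case or are trivial. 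For the middle case the identity clauses are immediate, and in each of the three structural clauses $(c_1 \to d_1)\fatsemi(c_2 \to d_2) = (c_2 \fatsemi c_1) \to (d_1 \fatsemi d_2)$ and the analogous $\times$ and $+$ clauses, one unfolds \agda{height}, applies the induction hypothesis to the component compositions, and uses associativity and commutativity of $\max$; the contravariant swapping of the domain coercions in the function clause is harmless because $\max$ is symmetric.

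For \agda{applyCastSize} I would do case analysis on $a : \act{c}$, mirroring the clauses of \agda{applyCast} for hypercoercions. In the \key{Aid} and \key{Abase} clauses the output is the input $M$ itself, so $m = n \leq 2 + n$. In the \key{Afail} clause the output is $\blame{\ell}{}$, which satisfies the Size Predicate with $m = 0$. In the \key{A}$\times$ clause the output $\key{cons}\app (\cast{V_1}{c})\app (\cast{V_2}{d})$ has top-level count $0$ by the \key{cons} rule of Figure~\ref{fig:sizeok}, its sub-casts satisfying the predicate because the sub-values are values and hence, by inversion on $\sizeok{n}{\itm{false}}{M}$, size-ok; the two \key{A}$+$ clauses are similar. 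The only clause retaining a top-level cast is \key{Aproj}, where $\cast{M}{c}$ reduces to $\cast{M}{c \fatsemi (H?^\ell;m;i)}$; here the input has inner size-ok count at most two, so the output's count is bounded by $2 + n$, using rule~(\textsc{SCast1}).

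The main obstacle will be \agda{compose\mhyphen{}height}: the composition operator on hypercoercions is mutually recursive with composition on middles and is not structurally recursive (the contravariant recursive call in the function clause), so the simultaneous induction must be phrased against the same well-founded measure — the sum of the sizes of the two (hyper)coercions — that Agda's termination checker was coaxed into accepting for $\fatsemi$ itself. Lining up the inductive statement with that measure, and threading the $\max$ inequalities through the nested clauses (in particular distinguishing the two projection-then-injection sub-cases, one of which collapses to a failure coercion), is the delicate part; \agda{applyCastSize}, by contrast, is a routine case check once the Size Predicate inversions for pairs, sums, and cast-wrapped values are in hand.
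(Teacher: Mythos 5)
Your proposal is correct and takes essentially the same approach as the paper: the proposition is discharged field-by-field from the definitions of $\agda{height}$ and $\agda{size}$ together with Lemmas~\ref{lem:applyCast-height-hyper}, \ref{lem:dom-height-hyper}, and \ref{lem:size-height-hyper} (with $k_1 = 5$, $k_2 = 9$), while the two remaining obligations, $\agda{compose\mhyphen{}height}$ and $\agda{applyCastSize}$, are left to the Agda development and are proved there exactly as you sketch them. One minor correction: for hypercoercions the paper notes that Agda's termination checker accepts $\fatsemi$ directly despite the contravariant call (the manual size measure was needed only for $\lambda\textup{\textsf{S}}$), so your induction for $\agda{compose\mhyphen{}height}$ can simply mirror the recursion structure of $\fatsemi$ rather than being rephrased against an explicit well-founded measure.
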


We apply Theorem~\ref{thm:space-consumption-lambdaS} (Space
Consumption) to obtain the following result for
$\lambda\textup{\textsf{H}}$.

\begin{corollary}[Space Consumption for $\lambda\textup{\textsf{H}}$]
  \label{thm:space-consumption-hyper}
  If ~$\Gamma \vdash M : A$, then there exists $c$ such that for any
  $M' : \Gamma \vdash A$ where $\itm{ctx} \vdash \compile{\Gamma}{M}
  \longrightarrow_H^{*} M'$, we have $\agda{real\mhyphen{}size}(M') \leq
  c \cdot \idealsize(M')$.
\end{corollary}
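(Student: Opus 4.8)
The plan is to obtain this Corollary as a direct instantiation of Theorem~\ref{thm:space-consumption} (Space Consumption for $\SC(\CastOp)$), which is parameterized over the \agda{CastHeight} structure of Figure~\ref{fig:eff-cast-height}. By Definition~\ref{def:hyper-reduction} the reduction relation $\longrightarrow_H$ of $\lambda\textup{\textsf{H}}$ is just the reduction relation of $\SC(\CastOp)$ instantiated with the hypercoercion \agda{ComposableCasts} instance, and by the preceding Proposition hypercoercions also form a \agda{CastHeight} instance, so the conclusion follows immediately: the witness $c$ is $13 k_2 \cdot 2^{\agda{c\mhyphen{}height}(\compile{}{M})}$, exactly as produced in the proof of Theorem~\ref{thm:space-consumption}, where $k_1,k_2$ come from the \agda{size\mhyphen{}height} field. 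The generic lemmas invoked in that proof (Lemma~\ref{lem:compile-efficient}, Lemma~\ref{lem:preserve-ok}, Lemma~\ref{lem:size-OK}, Lemma~\ref{lem:preserve-height}, Lemma~\ref{lem:real-size-tsize}) are stated for $\SC(\CastOp)$ in general and so apply verbatim once the structure instance is in hand.

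Thus the real content is the preceding Proposition, namely checking that every field of \agda{CastHeight} is populated for hypercoercions. First I would supply \agda{height} and \agda{size} from the two explicit definitions given just above. The fields \agda{applyCastHeight}, the six monotonicity fields \agda{dom\mhyphen{}height} through \agda{inr\mhyphen{}height}, and \agda{size\mhyphen{}height} are precisely Lemma~\ref{lem:applyCast-height-hyper}, Lemma~\ref{lem:dom-height-hyper}, and part~1 of Lemma~\ref{lem:size-height-hyper} (taking $k_1 = 5$ and $k_2 = 9$); parts 2--4 of Lemma~\ref{lem:size-height-hyper} are the auxiliary invariants on projections, injections, and middle coercions needed to make that simultaneous induction go through. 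The remaining two fields, \agda{compose\mhyphen{}height} and \agda{applyCastSize}, are not stated as separate lemmas in this section and must be proved directly. For \agda{compose\mhyphen{}height} I would do an induction mirroring the mutual recursion of $\fatsemi$ in Figure~\ref{fig:hyper-composition}: in each clause the result's middle part is either copied, replaced by a smaller-or-equal one, or formed by composing two middles, so its height never exceeds the max of the two input heights; the contravariant swap in the function-coercion case is harmless because it only permutes arguments of $\max$. For \agda{applyCastSize} I would case on $\act{c}$ following the hypercoercion \agda{applyCast} definition: the identity and base cases return $M$ unchanged, the failure case returns $\blame{\ell}{}$, the projection case prepends a single composed cast, and the pair/sum cases push casts one level inward; in every case the count of adjacent top-level casts grows by at most $2$, matching the bound $m \leq 2 + n$.

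I expect \agda{compose\mhyphen{}height} to be the main obstacle. As the text already notes for the termination of $\fatsemi$, the composition is not structurally recursive, so the induction must be set up on a well-founded measure such as the sum of the two coercion sizes, and the two mutually recursive statements (one about $c \fatsemi d$, one about $m_1 \fatsemi m_2$) have to be strengthened and proved together. Once \agda{compose\mhyphen{}height} and \agda{applyCastSize} are established the \agda{CastHeight} instance is complete and the Corollary follows with no further calculation beyond the instantiation.
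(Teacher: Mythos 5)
Your proposal is correct and matches the paper's approach exactly: the corollary is obtained by instantiating the generic Space Consumption theorem for $\SC(\CastOp)$ with the hypercoercion instance of the \agda{CastHeight} structure established in the preceding Proposition, with no further argument needed. Your additional discussion of how the individual fields (including \agda{compose\mhyphen{}height} and \agda{applyCastSize}) are discharged is really the content of that Proposition rather than of the corollary, but it is accurate and consistent with the paper's development.
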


\section{Conclusion}
\label{sec:conclusion}

In this paper we present two parameterized cast calculi,
$\CC(\CastOp)$ and its space-efficient partner $\SC(\CastOp)$.  We
prove type safety, blame safety, and the gradual guarantee for the
former. We prove type safety and space-efficiency for the later.  We
instantiate $\CC(\CastOp)$ a half-dozen ways to reproduce some results
from the literature but to also fill in many gaps.  We instantiate
$\SC(\CastOp)$ two different ways to reproduce
$\lambda\textup{\textsf{S}}$ and to create a new space-efficient
calculus based hypercoercions. All of this is formalized in Agda.



\section{Conflicts of Interest}

The authors are employed at Indiana University.

\pagebreak

\bibliographystyle{abbrvnat} \bibliography{all}

\end{document}